\def\paperversion{final}
\newif\ifsinglecolumn\singlecolumnfalse
\newif\ifwidemargins\widemarginsfalse
\newif\ifwarning\warningfalse
\newif\ifshowcomments\showcommentsfalse
\newif\ifblinded\blindedfalse
\newif\ifreport\reportfalse
\newif\ifcopyrightspace\copyrightspacefalse
\newif\ifacknowledgments\acknowledgmentsfalse
\newif\ifshowpagenumbers\showpagenumberstrue
\newif\iffinalformat\finalformatfalse
\newif\ifweb\webfalse
\def\xxversion{\csname xx\paperversion\endcsname}
\newif\ifsawversion\sawversionfalse
\let\xxversion=\undefined
\def\anonymoussubmission{1}
\newif\ifarxiv\arxivtrue
    \crefname{figure}{Figure}{Figures}
    \crefname{section}{Section}{Sections}
\pgfplotsset{compat=1.3}
\upshape\color{ForestGreen},
\it\color{gray},
\lstdefinestyle{small}{
	basicstyle=\footnotesize
}
\newcommand\ourtitle{A Calculus for Flow-Limited Authorization}
\theoremstyle{plain}
\newtheorem{theorem}{Theorem}
\newtheorem{proposition}{Proposition}
\newtheorem{lemma}{Lemma}
\newtheorem{corollary}{Corollary}
\theoremstyle{definition}
\newtheorem{definition}{Definition}
\theoremstyle{remark}
\begin{document}

\author[a]{Owen Arden}
\author[b]{Anitha Gollamudi}
\author[c]{Ethan Cecchetti}
\author[b]{Stephen Chong}
\author[c]{Andrew C. Myers}
\affil[a]{UC Santa Cruz, Santa Cruz, CA, USA \authorcr \hspace{.05in} \textit{Email:} \url{owen@soe.ucsc.edu}}
\affil[b]{Harvard University, Cambridge, MA, USA \authorcr \hspace{.05in} \textit{Email:} \url{agollamudi@g.harvard.edu}, \url{chong@seas.harvard.edu}}
\affil[c]{Cornell University, Ithaca, NY, USA \authorcr \hspace{.05in} \textit{Email:} \url{ethan@cs.cornell.edu}, \url{andru@cs.cornell.edu}}

\title{\ourtitle}
\maketitle

\begin{abstract}
Real-world applications routinely make authorization decisions
based on dynamic computation.  Reasoning about dynamically computed
authority is challenging. Integrity of the system might be compromised
if attackers can improperly influence the authorizing computation.
Confidentiality can also be compromised by authorization, since
authorization decisions are often based on sensitive
data such as membership lists and passwords.  Previous formal models for
authorization do not fully address the security implications of
permitting trust relationships to change, which limits their ability
to reason about authority that derives from dynamic computation.
Our goal is an approach to constructing dynamic authorization mechanisms that do not
violate confidentiality or integrity.

The Flow-Limited Authorization Calculus (FLAC) is
a simple, expressive model for reasoning about dynamic
authorization as well as an information flow control language for 
securely implementing various authorization mechanisms. 
FLAC combines the insights of two previous models: it extends the Dependency
Core Calculus with features made possible by the Flow-Limited Authorization Model.
FLAC provides strong end-to-end information security guarantees even
for programs that incorporate and implement rich dynamic authorization 
mechanisms. These guarantees include noninterference and robust
declassification, which prevent attackers from influencing information
disclosures in unauthorized ways.
We prove these security properties formally for all FLAC programs and explore the
expressiveness of FLAC with several examples.

\end{abstract}

\section{Introduction}
Authorization mechanisms are critical components in all
distributed systems.  The policies enforced by these mechanisms constrain what
computation may be safely executed, and therefore an expressive policy language is
important. Expressive mechanisms for authorization have been an active
research area. A variety of approaches have been developed, including authorization
logics~\cite{labw91,abadi06,nal}, often implemented with cryptographic mechanisms~\cite{SPKI-SDSIb,macaroons};
role-based access control (RBAC)~\cite{rbac}; and
trust management~\cite{li2002design,shtz06,RTI}.

However, the security guarantees of authorization mechanisms are usually
analyzed using formal models that abstract away the computation and
communication performed by the system. Developers must take great care to
faithfully preserve the (often implicit)
assumptions of the model, not only when implementing
authorization mechanisms, but also when employing them.  Simplifying
abstractions can help 
extract formal security guarantees, but
abstractions can also obscure the challenges of implementing and using an abstraction securely.
This disconnect between abstraction and implementation 
can lead to vulnerabilities and covert channels that allow attackers to 
leak or corrupt information.

A common blind spot in many authorization models is confidentiality. Most models
cannot express authorization policies that are confidential or are based on
confidential data.  Real systems, however, use confidential data for
authorization all the time: users on social networks receive access to photos
based on friend lists, frequent fliers receive tickets based on credit card
purchase histories, and doctors exchange patient data while keeping
doctor--patient relationships confidential.  While many models can ensure, for
instance, that only friends are permitted to access a photo, few can say
anything about the secondary goal of preserving the confidentiality of the
friend list. Such authorization schemes may fundamentally require _some_ information to
be disclosed, but failing to detect these disclosures can lead to unintentional
leaks.

Authorization without integrity is meaningless, so formal authorization models are typically
better at enforcing integrity.
However, many formal models
make unreasonable or unintuitive assumptions about integrity.  For instance, in
many models (e.g., \cite{labw91}, \cite{abadi06}, \cite{li2002design})
authorization policies either do not change or change only when modified by a
trusted administrator.  This is a reasonable assumption in centralized systems
where such an administrator will always exist, but in decentralized systems,
there may be no single entity that is trusted by all other entities.

Even in centralized systems, administrators must be careful when performing
updates based on partially trusted information, since malicious users may 
try to confuse or mislead the administrator into carrying out an attack on their behalf.
Unfortunately, existing authorization models offer little help to administrators that
need to reason about how attackers may have influenced security-critical update
operations.

Developers need a better programming model for implementing expressive dynamic
authorization mechanisms. Errors that undermine the security of these mechanisms
are common~\cite{cwe25}, so we want to be able to verify their security.
We argue that information flow control (IFC) is a lightweight, useful tool for
building secure authorization mechanisms
since it offers compositional, end-to-end security guarantees.  However,
applying IFC to authorization mechanisms in a meaningful
way requires building on a theory that integrates authority and information security.
In this work, we show how to embed
such a theory into a programming model, so that dynamic authorization
mechanisms—as well as the programs that employ them—can be statically
verified.  

Approaching the verification of dynamic authorization mechanisms from this
perspective is attractive for two reasons.  First,
it gives a model for building
secure authorization mechanisms by construction rather than verifying them after
the fact. This model offers programmers insight into the subtle
interaction between information flow and authorization, and helps programmers address
problems early, during the design process.  Second, it addresses a core
weakness lurking at the heart of existing language-based security schemes:
that the underlying policies may change in a way that breaks security.  By statically verifying 
the information security of dynamic
authorization mechanisms, we expand the real-world scenarios in which language-based 
information flow control is useful and strengthen its security guarantees.

We demonstrate that such an embedding is possible by presenting a core language for
authorization and information flow control, called the Flow-Limited
Authorization Calculus (FLAC).  FLAC is a functional language for designing
and verifying decentralized authorization protocols. FLAC is inspired by the
Polymorphic Dependency Core Calculus~\cite{abadi06} (DCC).\footnote{DCC was first presented in~\cite{ccd99}. 
We use the abbreviation DCC to refer
to the extension to polymorphic types in~\cite{abadi06}.}  Abadi develops DCC as an
authorization logic,
but DCC is limited to static trust relationships defined
externally to DCC programs by a lattice of principals. FLAC supports dynamic
authorization by building on the Flow-Limited Authorization Model
(FLAM)~\cite{flam}, which unifies reasoning about authority, confidentiality,
and integrity.
Furthermore, FLAC is a language for information flow control.  It uses FLAM's
principal model and FLAM's logical reasoning rules to define an operational model
and type system for authorization computations that preserve information
security.
George~\cite{mdg-thesis} also extends DCC's monadic approach to model information flow
in authorization mechanisms, with more explicit modeling of a distributed execution environment.

The types in a FLAC program can be considered
propositions~\cite{wadler2015propositions} in an authorization logic, and the
programs can be considered proofs that the proposition holds.  Well-typed FLAC programs are both
proofs of secure information flow and proofs of authorization, 
ensuring the confidentiality and integrity of not only data, but also authorization policies.

FLAC is useful from a logical perspective, but also serves as a core
programming model for real language implementations.  Since FLAC programs can
dynamically authorize computation and flows of information, FLAC applies to more
realistic settings than previous authorization logics. Thus FLAC offers more than 
a language and type system for proving propositions—FLAC programs do useful computation.

This paper makes the following contributions.
\begin{itemize}
  \item We define FLAC, a language, type system, and semantics for dynamic authorization
   mechanisms with strong information security:
  \begin{itemize}
  \item Programs in low-integrity contexts exhibit _noninterference_, 
    ensuring attackers cannot leak or corrupt information, and cannot subvert authorization mechanisms.
  \item Programs in higher-integrity contexts exhibit _robust declassification_, 
     ensuring attackers cannot influence authorized disclosures of information.
  \end{itemize}
  \item We present two authorization mechanisms implemented in FLAC, commitment
schemes and bearer credentials, and demonstrate that FLAC ensures
the programs that use these mechanisms preserve the desired
confidentiality and integrity properties.
\end{itemize}

We have organized our discussion of FLAC as follows.  
Section~\ref{sec:ex} introduces commitment schemes and bearer credentials, two examples of 
dynamic authorization mechanisms we use to explore the features of FLAC.
Section~\ref{sec:flam} reviews the FLAM principal lattice~\cite{flam}, and
Section~\ref{sec:flac} defines the FLAC language and type system. 
FLAC implementations of the dynamic authorization examples are
presented in Section~\ref{sec:exre}, and their properties are examined.
Section~\ref{sec:pftheory} explores aspects of FLAC's proof theory,
and Section~\ref{sec:props} discusses semantic security guarantees
of FLAC programs, including noninterference and robust declassification.
We explore related work in Section~\ref{sec:relwork} and conclude in Section~\ref{sec:concl}.

Many of the contributions of this paper were previously published by
Arden and Myers~\cite{flac}.  This article expands upon and
strengthens the formal results, as well as corrects several technical
errors in that work. This includes a more detailed treatment of how FLAC
constrains delegations and a more general noninterference theorem 
on a new formal result regarding the compartmentalization of delegations.
Most of the semantic security proofs in Section~\ref{sec:props}
are either new to this work or
were redeveloped from scratch to account for changes made to the 
semantics and type system.

Furthermore, subsequent work based on FLAC by
Cecchetti {\it et al.}~\cite{nmifc} and Gollamudi {\it et
al.}~\cite{dflate} exposed alternate design decisions that improve the
connection between FLAC and cryptographic implementations of its
protection abstractions.  
We feel some of these changes are objectively better
than the original abstractions, and so incorporate them into the core 
FLAC formalism so that future work based on FLAC may benefit.
Significant departures from the original formalization are footnoted, 
but minor changes and corrections are included without comment.

\section{Dynamic authorization mechanisms} 
\label{sec:ex}
Dynamic authorization is challenging to
implement correctly since authority, confidentiality, and integrity
interact in subtle ways. FLAC helps programmers
securely implement both authorization mechanisms and the code that uses them.
FLAC types support the definition of compositional security
abstractions, and vulnerabilities in the implementations of these abstractions are
caught statically. Further, the guarantees offered by FLAC simplify
reasoning about the security properties of these abstractions.

We illustrate the usefulness and expressive power of FLAC using two
important security mechanisms: commitment schemes and bearer
credentials. We show in Section~\ref{sec:exre} that these
mechanisms can be implemented using FLAC, and that their security goals
are easily verified in the context of FLAC.

\subsection{Commitment schemes} 
\label{sec:ex1}
 
A commitment scheme~\cite{commitment} allows one party to give another party a
``commitment'' to a secret value without revealing the value.  The committing
party may later reveal the secret in a way that convinces the receiver that the
revealed value is the value originally committed.

Commitment schemes provide three essential operations: "commit",
"reveal", and "open".\footnote{This commitment scheme example
differs from the one presented in Arden and Myers~\cite{flac}, which is not 
compatible with changes to the type system presented here.  
Furthermore, we feel the API presented here is a better representation of the operations present in
cryptographic commitment schemes.}
Suppose $p$ wants to commit to a value to principal $q$. First, $p$
applies "commit" to the value and provides the result to $q$ without revealing  
the committed value to $q$.  When $p$ wishes to reveal the value, it gives $q$ a "reveal" operation
$q$ can use to open the previously sent commitment.  
Then $q$ uses "reveal" to "open" the committed value, finally revealing it.  
In a cryptographic implementation,
the "reveal" operation might correspond to the secret $p$ used to encrypt the commitment, and the "open" 
operation might correspond to $q$ using that secret to decrypt the commitment.

A commitment scheme must have several properties in order to be secure.  First,
$q$ should not be able to open a value that $p$ has not committed to,
since this could allow $q$ to manipulate $p$ to "open" a value it had not
committed to.  Second, $q$ should not be able to learn a secret of $p$ that has not been
committed to or revealed by $p$.  Third, $p$ should not be able to modify the committed value 
after it is received by $q$.  Specifically, the "reveal" operation should only reveal the 
committed value and not modify it in any way.

One might wonder why a programmer would bother to create high-level
_implementations_ of operations like "commit", "reveal", and "open".  Why not
simply treat these as primitive operations and give them type signatures so that
programs using them can be type-checked with respect to those signatures?  The
answer is that an error in a type signature could lead to a serious
vulnerability. Therefore, we want more assurance that the type signatures are
correct.  Modeling such operations in FLAC is often easy and ensures that
the type signature is consistent with a set of assumptions about existing trust
relationships and the information flow context the operations are used within.
These FLAC-based implementations serve as language-based specifications of the security
properties required by implementations that use cryptography or trusted third parties.

\subsection{Bearer credentials with caveats} 
\label{sec:ex2} A bearer credential is a capability
that grants authority to any entity that possesses it.  Many authorization
mechanisms used in distributed systems 
employ bearer credentials in some form.  Browser cookies that store
session tokens are one example: after a website authenticates a user's
identity, it gives the user a token to use in subsequent interactions.
Since it is infeasible for attackers to guess the token, the website grants the
authority of the user to any requests that include the token.

Bearer credentials create an information security conundrum for authorization
mechanisms. Though they efficiently control access to restricted
resources, they create vulnerabilities and introduce
covert channels when used incorrectly. For example, suppose Alice shares a remotely hosted photo with
her friends by giving them a credential to access the photo.  Giving a
friend such a credential doesn't disclose their friendship, but each friend
that accesses the photo implicitly discloses the friendship to the hosting
service.  Such covert channels are pervasive, both in classic distributed
authorization mechanisms like SPKI/SDSI~\cite{SPKI-SDSIb}, as well as in
more recent ones like Macaroons~\cite{macaroons}.  

Bearer credentials can also lead to vulnerabilities if they are leaked.  If an
attacker obtains a credential, it can exploit the authority of the credential.
Thus, to limit the authority of a credential, approaches like SPKI/SDSI and Macaroons
provide _constrained delegation_ in which a newly issued credential
attenuates the authority of an existing one by adding _caveats_.  Caveats require
additional properties to hold for the bearer to be granted authority.
Session tokens, for example, might have a caveat that restricts the source IP address
or encodes an expiration time.  As pointed out by Birgisson et
al.~\cite{macaroons}, caveats themselves can introduce covert channels 
if the properties reveal sensitive information.

FLAC is an effective framework for reasoning about bearer credentials with caveats since
it captures the flow of credentials in programs as well as the sensitivity
of the information the credentials and caveats derive from.  We can reason about credentials
and the programs that use them in FLAC with an approach similar to that used for
commitment schemes.
That we can do so in a straightforward way is somewhat
remarkable: prior formalizations of credential mechanisms (e.g., \cite{macaroons,hk00spki,SecPAL}) 
usually do not consider confidentiality nor provide 
end-to-end guarantees about credential propagation.  

\section{The FLAM Principal Lattice}

\label{sec:flam} Like many models, FLAM uses _principals_ to represent the
authority of all entities relevant to a system. However, FLAM's principals and
their algebraic properties are richer than in most models, so we briefly review
the FLAM principal model and notation. Further details are found in
the earlier paper~\cite{flam}.

Primitive principals such as "Alice", "Bob", etc., are represented as elements
$n$ of a (potentially infinite) set of names $\mathcal N$.\footnote{Using
$\mathcal N$ as the set of all names is convenient in our formal calculus, but a
general-purpose language based on FLAC may wish to dynamically allocate names at
runtime.  Since knowing or using a principal's name holds no special privilege
in FLAC, this presents no fundamental difficulties. To use dynamically allocated 
principals in type signatures, however, the language's type system 
should support types in which principal names may be existentially quantified.}
In addition to these names, FLAM uses $⊤$ to
represent a universally trusted primitive principal and $⊥$ to represent a universally
untrusted primitive principal. The combined authority of two principals, $p$ and $q$, is
represented by the authority conjunction $p ∧ q$, whereas the authority of either $p$ or
$q$ is the disjunction $p ∨ q$.

Unlike principals in other models,
FLAM principals also represent information flow
policies.  The confidentiality of principal $p$ is represented by the principal
$p^{→}$, called $p$'s confidentiality projection.  It denotes the authority necessary to _learn_ anything
$p$ can learn.
The integrity of principal $p$ is represented by
$p^{←}$, called $p$'s integrity projection. It denotes the authority to _influence_ anything $p$ can influence. 

These projections, conjunctions, and disjunctions allow us to construct the set of all principals
from any set of names $\mathcal N$.
The set $\mathcal N \cup \{⊤, ⊥\}$ under the
syntax operators\footnote{FLAM defines an additional set of operators called _ownership
projections_, which we omit here to simplify our presentation.} $∧, ∨, ←, →$
forms a set $\mathcal L$.  The equivalence classes of this set under the 
_acts-for relation_\footnote{FLAM's acts-for relation 
  is inspired by the acts-for relation defined by the Decentralized Label Model~\cite{ml-ssp-98} (DLM).
  In the DLM, the principal with the most authority is referred to as $⊤$, and the upper bound of the authority  
  of two principals is written $∧$, and the lower bound with $∨$. Unfortunately, this departs from the usual
  notational conventions for lattices, where $∧$ is used for lower bounds (meets), and $∨$ is used for upper bounds.
  We have stubbornly held on to the DLM-based acts-for notation, but this sometimes creates confusion for those more familiar with the standard lattice notation.
} $≽$ defined in Figure~\ref{fig:static}\footnote{The original FLAC formalization
  presented a set of static rules derived from FLAM's principal algebra, but still relied on
  FLAM's algebraic identities for completeness. Here we present a complete set of static
  rules and have proven their equivalence to FLAM's principal algebra (without ownership projections) in Coq~\cite{SAFproof}.
 } %
form a complete distributive lattice.
 Using these rules we can derive the equivalences from FLAM's principal algebra.  
 Two principals are considered equivalent if they act for each other:
 $$
      p ≡ q \triangleq \stafjudge{\L}{p}{q} \text{ and } \stafjudge{\L}{q}{p}
 $$
We have proven in Coq~\cite{SAFproof} that this definition is equivalent to the
algebraic definition (minus ownership) used in the FLAM Coq formalization~\cite{flamtr}.
We write operators $←,→$ with higher precedence than $∧,∨$; for instance,
$p∧q^{←}$ is the same as writing $p∧(q^{←})$. 
Projections distribute over $∧$ and $∨$ 
(rules \ruleref{ProjDistConj} and \ruleref{ProjDistDisj})
so, for example, $(p∧q)^{←} ≡ (p^{←}∧q^{←})$.

All authority may be represented as some combination of confidentiality and integrity. 
Any principal $p$ is equivalent to, via rules \ruleref{ProjR} and \ruleref{ConjBasis},
the conjunction of its confidentiality and integrity authority: $p^{→} ∧ p^{←}$. In fact, any
principal can be normalized~\cite{flam} to $q^{→} ∧ r^{←}$ for some $q$ and $r$.
For example, $\Alice^{→} ∧ \Bob$ is equivalent to $(\Alice ∧ \Bob)^{→} ∧ Bob^{←}$.
The confidentiality and integrity authority of
principals are disjoint (rule \ruleref{ProjBasis}), so the confidentiality projection of an
integrity projection is $⊥$ and vice-versa: $(p^{←})^{→} ≡ ⊥ ≡ (p^{→})^{←}$.

\begin{figure}
  {\footnotesize
\begin{flalign*}
& \boxed{\stafjudge{\L}{p}{q}} &
\end{flalign*}
    \begin{mathpar}
    \stafrule[\L]{Bot}{}{p}{⊥}
    \and
    \stafrule[\L]{Top}{}{⊤}{p}
    \and
    \stafrule[\L]{Refl}{}{p}{p}
    \and
    \stafrule[\L]{Trans}{
      \stafjudge{\L}{\!p\!}{\!q} \\
      \stafjudge{\L}{\!q\!}{\!r}
    }{\!p\!}{\!r}
    
     \\ \\
    \stafrule[\L]{Proj}{\stafjudge{\L}{p}{q}}{p^{π}}{q^{π}}
    \and
    \stafrule[\L]{ProjR}{}{p}{p^{π}}
    \and
    \stafrule[\L]{ProjIdemp}{}{(p^{π})^{π}}{p^{π}}
    \and
    \stafrule[\L]{ProjBasis}{π ≠ π'}{⊥}{(p^{π})^{π'}}
    \and
    \stafrule[\L]{ProjDistConj}{}{p^{π} ∧ q^{π}}{(p ∧ q)^{π}} 
    \and
    \stafrule[\L]{ProjDistDisj}{}{(p ∨ q)^{π}}{p^{π} ∨ q^{π}} 

    \\\\
    
    \stafrule[\L]{ConjL}{
      \stafjudge{\L}{p_k}{p} \\\\
      k ∈ \{1,2\}
    }{p_1 ∧ p_2}{p}
    \and
    \stafrule[\L]{ConjR}{
      \stafjudge{\L}{p}{p_1} \\\\
      \stafjudge{\L}{p}{p_2}
    }{p}{p_1 ∧ p_2}
    \and
    \stafrule[\L]{ConjBasis}{}{p^{→} ∧ p^{←}}{p} 
    \and
    \stafrule[\L]{ConjDistDisjL}{}{(p ∧ q) ∨ (p ∧ r)}{p ∧ (q ∨ r)}
    \and
    \stafrule[\L]{ConjDistDisjR}{}{p ∧ (q ∨ r)}{(p ∧ q) ∨ (p ∧ r)} 

    \\\\
    \stafrule[\L]{DisjL}{
      \stafjudge{\L}{p_1}{p} \\\\
      \stafjudge{\L}{p_2}{p}
    }{p_1 ∨ p_2}{p}
    \and
    \stafrule[\L]{DisjR}{
      \stafjudge{\L}{p}{p_k} \\\\
      k ∈ \{1,2\}
    }{p}{p_1 ∨ p_2}
    \and
    \stafrule[\L]{DisjBasis}{}{⊥}{p^{→} ∨ q^{←}}
    \and
    \stafrule[\L]{DisjDistConjL}{}{(p ∨ q) ∧ (p ∨ r)}{p ∨ (q ∧ r)}
    \and
    \stafrule[\L]{DisjDistConjR}{}{p ∨ (q ∧ r)}{(p ∨ q) ∧ (p ∨ r)} 
    
    \hfill
    \end{mathpar}
  }
\caption{Static principal lattice rules. The projection \protect{$\pi$} may be either 
  confidentiality (\protect{$\rightarrow$}) or integrity (\protect{$\leftarrow$}).
  Adapted from (and equivalent to) the non-ownership fragment of FLAM's
  principal algebra~\cite{flam}.
}
  \label{fig:static}
\end{figure}

An advantage of this model is that secure information flow can be defined in
terms of authority.  An information flow policy $q$ is at least as _restrictive_
as a policy $p$ if $q$ has at least the confidentiality authority $p^{→}$ and $p$
has at least the integrity authority $q^{←}$. This relationship between the confidentiality
and integrity of $p$ and $q$ reflects the usual duality seen in
information flow control~\cite{integrity}. As in~\cite{flam}, we use the
following shorthand for relating principals by policy restrictiveness:
\begin{align*}
    p ⊑ q &≜ (p^{←}∧q^{→}) ≽ (q^{←}∧ p^{→}) \\
    p ⊔ q &≜(p ∧ q)^{→} ∧ (p ∨ q)^{←} \\
    p ⊓ q &≜  (p ∨ q)^{→} ∧ (p ∧ q)^{←} 
\end{align*}
Thus, $p ⊑ q$ indicates the direction of secure information flow: from $p$ to
$q$.  The information flow join $p ⊔ q$ is the least restrictive principal that
both $p$ and $q$ flow to, and the information flow meet $p ⊓ q$ is the most
restrictive principal that flows to both $p$ and $q$.

An interesting feature of this definition is that the equivalence classes 
under $≽$ and $⊑$ are the same.
\begin{align*}
  p ≡ q &= \stafjudge{\L}{p}{q} \text{ and } \stafjudge{\L}{q}{p} \\
     &= \stafjudge{\L}{p^{→} ∧ p^{←}}{q^{→} ∧ q^{←}} \text{ and } \stafjudge{\L}{q^{→} ∧ q^{←}}{p^{→} ∧ p^{←}} \\
     &= \stafjudge{\L}{p^{→}}{q^{→}}  \text{ and } \stafjudge{\L}{q^{→}}{p^{→}}\\
     & \quad \text{ and } \stafjudge{\L}{p^{←}}{q^{←}}  \text{ and } \stafjudge{\L}{q^{←}}{p^{←}}\\
     &= \stafjudge{\L}{p^{←} ∧ q^{→}}{q^{←} ∧ p^{→}} \text{ and } \stafjudge{\L}{q^{←} ∧ p^{→}}{p^{←} ∧ q^{→}} \\
     & =\stflowjudge{\L}{p}{q} \text{ and }\stflowjudge{\L}{q}{p} 
\end{align*}
These equivalence classes also form a (separate) complete distributive lattice
ordered by $⊑$ where $⊔$ is a join and $⊓$ is a meet. In this lattice, the _secret and untrusted_
principal $⊤^{→} ∧ ⊥^{←}$ is the top element since it is the most restrictive information flow policy.
Likewise, the _public and trusted_ principal $⊥^{→} ∧ ⊤^{←}$ is the bottom element since it is the
least restrictive policy.  Because of this tight relationship between $≽$ and $⊑$ we can
use the inference rules in Figure~\ref{fig:static} to reason about the relationship
between principals in both the authority lattice and the information flow lattice.
We often write projected principals by themselves, but 
these principals are equivalent to the conjunction of themselves
with the bottom element of the missing projection: e.g., $p^{→} ≡ p^{→} ∧ ⊥^{←}$.

In FLAM, the ability to ``speak for'' another principal is an integrity
relationship between principals.  This makes sense intuitively, because speaking
for another principal influences that principal's trust relationships and
information flow policies.  FLAM defines the _voice_ of a principal
$p$, written $\voice{p}$,
as the integrity necessary to speak for that
principal. Given a principal expressed in normal form\footnote{In
normal form, a principal is the conjunction of a confidentiality
principal and an integrity principal. See~\cite{flam} for details.} as $\confid q
\tjoin \integ r$, the voice of that principal is
\[
    \voice{\confid q \tjoin \integ r} \triangleq \integ q \tjoin \integ r
\] For example, the voice of $"Alice"$, $∇("Alice")$,  is $"Alice"^{←}$. The
voice of $"Alice"$'s confidentiality $∇("Alice"^{→})$ is also $"Alice"^{←}$.

All primitive principals speak for themselves: e.g., $\stafjudge{\L}{\Alice}{∇(Alice)}$,
but principals with asymmetric confidentiality and integrity authority may not:
\begin{align*}
& \notstafjudge{\L}{\Alice^{→}∧\Bob^{←}}{∇(\Alice^{→}∧\Bob^{←})} & \\
& \notstafjudge{\L}{\Alice^{→}∧\Bob^{←}}{\Alice^{←}∧\Bob^{←}} &
\end{align*}

\section{Flow-Limited Authorization Calculus}
\label{sec:flac}

FLAC uses information flow to reason about the security implications of dynamically computed authority.
Like previous information-flow type systems~\cite{sm-jsac},
FLAC incorporates types for
reasoning about information flow, but FLAC's type system goes further
by using Flow-Limited Authorization~\cite{flam}
to ensure that principals cannot use FLAC programs to exceed their
authority, or to leak or corrupt information.
FLAC is based on DCC~\cite{abadi06}, but unlike DCC, \lang supports reasoning
about authority that derives from the evaluation of \lang terms.  In contrast, all
authority in DCC derives from trust relationships defined by a fixed, external lattice
of principals.  Thus, using an approach based on DCC in systems where trust
relationships change dynamically could introduce vulnerabilities like
delegation loopholes, probing and poaching attacks, and authorization side channels~\cite{flam}.

\begin{figure}
\[
\small
\begin{array}{rcl}
\multicolumn{3}{l}{ n \in \mathcal N \text{  (principal names)}} \\
\multicolumn{3}{l}{ x \in \mathcal V \text{  (variable names)}} \\
\\
p,ℓ,\pc &::=&  n \sep \top \sep \bot \sep \confid{p} \sep \integ{p} \sep p ∧ p\sep p ∨ p  \\[0.4em] %
τ &::=&  \aftype{p}{p} \sep \voidtype \sep \sumtype{τ}{τ} \sep \prodtype{τ}{τ} \\[0.4em]  
  & &\sep  \func{τ}{\pc}{τ} \sep \says{ℓ}{τ} \sep X \sep \tfuncpc{X}{\pc}{τ} \\[0.4em]  
v &::=&  () \sep \pair{w}{w} \sep \delexp{p}{p} \sep \returng{ℓ}{w} \sep \inji{w}  \\[0.4em]
  &&  \sep \lamc{x}{τ}{\pc}{e} \; \textsf{(closed)} \sep \tlam{X}{\pc}{e} \; \textsf{(closed)}\\[0.4em]  
w &::=& v \sep \where{w}{v} \\[0.4em]
e &::=&  x \sep w \sep  e~e \sep \pair{e}{e} \sep \return{ℓ}{e} \\[0.4em]
   && \sep e~τ \sep \proji{e} \sep \inji{e} \\[0.4em]
  &&  \sep \lamc{x}{τ}{\pc}{e}  \sep \tlam{X}{\pc}{e} \\[0.4em]  
   && \sep \casexp{e}{x}{e}{e} \\[0.4em]  
   && \sep \bind{x}{e}{e} \sep \assert{e}{e} \\[0.4em] 
  && \sep \where{e}{v} \\[0.4em] 
\end{array}
\hfill
\]
\caption{\lang syntax. Terms using \texttt{where} are syntactically prohibited
in the source language and are produced only during evaluation.}
\label{fig:syntax}
\end{figure}

\begin{figure}
{\footnotesize
\begin{flalign*}
& \boxed{
e \stepsone e'
} &
\end{flalign*}
\begin{mathpar}
\erule{E-App}{}{(\lamc{x}{τ}{\pc}{e})~w }{\subst{e}{x}{w}}{}

\erule{E-TApp}{}{(\tlam{X}{\pc}{e})~τ}{\subst{e}{X}{τ}}{}

\erule{E-Unpair}{}{\proji{\pair{w₁}{w₂}}}{wᵢ}{}

\erule{E-Case}{}{(\casexp{(\inji{w})}{x}{e_1}{e_2}) }{ \subst{eᵢ}{x}{w}}{}

\erule{E-BindM}{}{\bind{x}{\returng{ℓ}{w}}{e}}{\subst{e}{x}{w}}{}

\erule{E-Assume}{}{\assume{\delexp{p}{q}}{e}}{\where{e}{\delexp{p}{q}}}{}

\erule{E-UnitM}{}{\return{\ell}{w}}{\returng{\ell}{w}}{}

\erule{E-Eval}{e \stepsone e'}{E[e]}{E[e']}{} 
\\
\begin{array}{rcl}
  E & ::= &  [\cdot] \sep E~e \sep w~E  \sep E~τ  \sep \pair{E}{e} \sep \pair{w}{E} \sep \proji{E} \sep \inji{E} \sep \return{ℓ}{E} \\[0.4em]
  & \sep &  \bind{x}{E}{e} \sep \assume{E}{e}   \sep \casexp{E}{x}{e}{e}  \sep \where{E}{v} \\[0.4em]
\end{array}
\hfill   
\end{mathpar}
}
\caption{\lang operational semantics}
\label{fig:semantics}
\end{figure}

Figure~\ref{fig:syntax} defines the \lang syntax.  The core \lang operational
semantics and evaluation contexts~\cite{wf94} in
Figure~\ref{fig:semantics} are mostly standard except for \ruleref{E-Assume} and \ruleref{E-UnitM},
which we discuss below, along with additional rules that handle the propagation of the "where" terms introduced by \ruleref{E-Assume}.

\ifreport
\begin{figure}[h]
\else
\begin{figure}
\fi
{\footnotesize
\begin{flalign*}
& \boxed{\TValGpc{e}{τ}} &
\end{flalign*}
\begin{mathpar}
\Rule{Var}{}{\TValP{Γ,x:τ,Γ';\pc}{x}{τ}}{x ∉ \text{dom}~Γ'}

\Rule{Unit}{}{\TValGpc{\void}{\voidtype}}

\Rule{Del}{}{\TValGpc{\delexp{p}{q}}{\aftype{p}{q}}}

\Rule{Lam}{\TValP{Γ \cont x\ty \tau_1;\pc'}{e}{τ_2}}
     {\TValGpc{\lamc{x}{τ_1}{\pc'}{e}}
               {(\func{τ_1}{\pc'}{τ_2})}}

\Rule{TLam}{\TValP{Γ,X;\pc'}{e}{τ}}
{\TValGpc{\tlam{X}{\pc'}{e}}{\tfuncpc{X}{\pc'}{τ}}}

\Rule{App}{
      \TValGpc{e}{(\func{τ_1}{\pc'}{τ_2})} \\\\
      \TValGpc{e'}{τ_1} \\
      \rflowjudge{Π}{\pc}{\pc'}
     } 
     {\TValGpc{(e~e')}{τ_2}}

\Rule{TApp}{\TValGpc{e}{\tfuncpc{X}{\pc'}{τ}} \\\\
      \rflowjudge{Π}{\pc}{\pc' }}
{\TValGpc{(e~τ')}{\subst{τ}{X}{τ'}}}{τ' \text{ well-formed in } Γ}

\Rule{Pair}{
  \TValGpc{e_1}{τ_1} \\ \TValGpc{e_2}{τ_2}}
     {\TValGpc{\pair{e_1}{e_2}}{\prodtype{τ_1}{τ_2}}}

\Rule{Unpair}{
  \TValGpc{e}{\prodtype{τ_1}{τ_2}}} 
   {\TValGpc{(\proji{e})}{τ_i}}

\Rule{Inj}{\TValGpc{e}{τᵢ} \\ i \in \{1, 2\} }
   {\TValGpc{(\inji{e})}{\sumtype{τ_1}{τ_2}}}

\Rule{Case}{
  \TValGpc{e}{\sumtype{τ_1}{τ_2}} \\
  \protjudge{Π}{\pc}{τ} \\\\
  \TValP{Γ,x:\tau_1;\pc}{e_1}{τ} \\
  \TValP{Γ,x:\tau_2;\pc}{e_2}{τ}} 
   {\TValGpc{\casexp{e}{x}{e_1}{e_2}}{τ}}

\Rule{UnitM}{
  \TValGpc{e}{τ} \\
  \rflowjudge{Π}{\pc}{\ell}
  }
   {\TValGpc{\return{ℓ}{e}}{\says{ℓ}{τ}}}

\Rule{Sealed}{
  \TValGpc{v}{τ}}
   {\TValGpc{\returng{ℓ}{v}}{\says{ℓ}{τ}}}

\Rule{BindM}{
  \TValGpc{e}{\says{ℓ}{τ'}} \\
  \TValP{Γ,x:τ';\pc ⊔ ℓ}{e'}{τ} \\\\
  \protjudge*{\pc ⊔ ℓ}{τ}}
   {\TValGpc{\bind{x}{e}{e'}}{τ}}

\Rule{Assume}{
   \TValGpc{e}{\aftype{p}{q}} \\\\
   \rafjudge{Π}{\pc}{\voice{q}} \\
   \rafjudge{Π}{\voice{\confid{p}}}{\voice{\confid{q}}}  \\\\ 
  \TVal{Π,\langle \aftypep{p}{q}\rangle; Γ;\pc}{e'}{τ}}
   {\TValGpc{\assert{e}{e'}}{τ}}

\Rule{Where}{
  \TValGpc{v}{\aftype{p}{q}} \\
   \rafjudge{Π}{\pcmost}{\voice{q}} \\
   \rafjudge{Π}{\voice{\confid{p}}}{\voice{\confid{q}}}  \\\\ 
  \TVal{Π,\langle \aftypep{p}{q}\rangle;Γ;\pc}{e}{τ} \\
}
{\TValGpc{(\where{e}{v})}{τ}}{}
  \hfill 
\end{mathpar} 
}
\caption{\lang type system.}
  \label{fig:ts}
\end{figure}

The core FLAC type system is presented in Figure~\ref{fig:ts}.  FLAC typing
judgments have the form $\TValGpc{e}{τ}$. The _delegation context_, $\Pi$,
contains a set of dynamic trust relationships $⟨p≽q⟩$ where $p ≽ q$
(read as ``$p$ acts for $q$'') is a delegation from $q$ to $p$.
The _typing context_, $Γ$, is a
associates variables to types, and $\pc$ is the _program counter label_, a FLAM
principal representing the confidentiality and integrity of control flow.  The
type system makes frequent use of judgments of the form $\rafjudge{\Pi}{p}{q}$ and
$\rflowjudge{\Pi}{p}{q}$ for comparisons between principals.  The derivation rules for
these judgments are presented in Figure~\ref{fig:robrules}, and are adapted from FLAM's
inference rules~\cite{flam}.\footnote{FLAM's rules~\cite{flam} also include {\em query} and {\em result} labels as 
part of the judgment context that represent the
confidentiality and integrity of a FLAM query context and result, respectively. The FLAM
delegation context also includes labels for delegations, whereas FLAC's delegation context does not.  
These labels are unnecessary in FLAC because we use FLAM judgments only in the type
system—these ``queries'' only occur at compile time and do not create
information flows about which delegations are in effect.  We formalize the connection 
between FLAC and FLAM in Appendix~\ref{app:proofs}.}%
These derivation rules are presented in terms of the acts-for ordering, but 
recall that since $⊑$ is defined in terms of $≽$, it makes sense to use either of these symbols 
to represent a derivation.
The rules are mostly straightforward except for \ruleref{R-Assume}, which allows delegations from
the context $Π$ to be used in derivations.  The premise \rafjudge{\delegcontext}{\voice{\confid{p}}}{\voice{\confid{q}}} 
enforces a well-formedness invariant on $Π$ that ensures that delegations of confidentiality are consistent
with the ability to speak for those principals.  This premise is related to FLAM's \textsc{Lift} rule. 
The \ruleref{Assume} typing rule ensures that all delegations added to $Π$ satisfy this invariant, and initial delegations
in $Π$ that fail to satisfy it cannot be used in derivations.

We also use judgments of the form $\protjudge*{p}{τ}$ to denote that type $τ$ is at least as
restrictive as the principal $p$. The derivation rules for these judgments are presented in
Figure~\ref{fig:protect} and discussed in more detail below.

\begin{figure}
{\footnotesize
\begin{mathpar}
	\Rule{R-Static}
	{\stafjudge{\L}{p}{q}}
	{\rafjudge{\delegcontext}{p}{q}} 

	\Rule{R-Assume}
	{
		\delexp{p}{q} \in \delegcontext \\
		\rafjudge{\delegcontext}{\voice{\confid{p}}}{\voice{\confid{q}}} 
	}
	{\rafjudge{\delegcontext}{p}{q}} 

	\Rule{R-ConjL}
	{ \rafjudge{\delegcontext}{p_{k}}{q} \\ k ∈ \{1,2\}}
	{	\rafjudge{\delegcontext}{p₁ ∧ p₂}{q_2}}

	\Rule{R-ConjR}
	{ \rafjudge{\delegcontext}{p}{q_1} \\
		\rafjudge{\delegcontext}{p}{q_2} 
	} 
	{\rafjudge{\delegcontext}{p}{q_1 \wedge q_2}} 

	\Rule{R-DisjL}
	{
		\rafjudge{\delegcontext}{p_1}{q} \\
		\rafjudge{\delegcontext}{p_2}{q} 
	} 
	{\rafjudge{\delegcontext}{p_1 \vee p_2}{q}} 

	\Rule{R-DisjR}
	{ \rafjudge{\delegcontext}{p}{q_{k}} \\ k ∈ \{1,2\}}
	{	\rafjudge{\delegcontext}{p}{q_1 ∨ q_2}}

	\Rule{R-Trans}
	{
		\rafjudge{\delegcontext}{p}{q} \\
		\rafjudge{\delegcontext}{q}{r} %
	} 
	{\rafjudge{\delegcontext}{p}{r}}

\end{mathpar}
}
  \caption{Inference rules for robust assumption, derived from FLAM~\cite{flam}.}
  \label{fig:robrules}
\end{figure}

The \ruleref{Del} rule types delegation expressions as singletons: for each delegation type there
is a unique delegation expression.
Rules \ruleref{Lam} and \ruleref{TLam} require the body of the abstraction to be well typed at
the annotated $\pc$. Rules \ruleref{App} and \ruleref{TApp} ensure functions may only be applied
in contexts which are no more restrictive than the annotation.  Rule \ruleref{TApp} additionally requires
that $τ'$ be well formed with respect to $Γ$. Specifically, $τ'$ may not have any free type variables
not bound by $Γ$.

Since FLAC is a pure functional language, it might seem odd for FLAC to have a
label for the program counter; such labels are usually used to control implicit
flows through assignments~(e.g., in \cite{ps03,myers-popl99}).  The purpose of
FLAC's $\pc$ label is to control a different kind of side effect: changes to the
delegation context, $Π$.\footnote{DFLATE~\cite{dflate}, an extension of FLAC for modeling distributed
applications with Trusted Execution Environments, 
uses the same {\pc} label to control implicit flows due to communication side-effects. Extensions
of FLAC to support mutable references or other effects could control implicit flows similarly.}

In order to control what information can influence the creation of a new
trust relationship in a delegation context, the type system tracks
the confidentiality and security of control flow.
Viewed as an authorization logic, FLAC's type system has the unique
feature that it expresses deduction constrained by an information flow
context.\footnote{FLAFOL~\cite{flafol} further develops the idea of constraining
  logical deduction with information flow constraints in a first-order logic.}
For instance, if we have $\func{φ}{p^{←}}{ψ}$ and $φ$, then (via \ruleref{App})
we may derive $ψ$ in a context with integrity $p^{←}$, but not in contexts that
don't flow to $p^{←}$. This feature offers needed control over how principals 
may apply existing facts to derive new facts.

Many FLAC terms are standard, such as pairs $\pair{e₁}{e₂}$, projections
$\proji{e}$, variants $\inji{e}$, 
and case expressions.  Function abstraction, $\lamc{x}{τ}{\pc}{e}$ and polymorphic type abstraction, $\tlam{X}{\pc}{e}$, include a $\pc$ _label_ that constrains the information flow context in which
the function may be applied. 
The rule \ruleref{App} ensures that
function application respects these policies, requiring that the robust FLAM
judgment $\rflowjudge{\Pi}{\pc}{\pc'}$ holds.  This judgment ensures
that the current program counter label, $\pc$, flows to the function label, $\pc'$. 

Branching occurs in "case" expressions, which conditionally evaluate one of two
expressions. The rule \ruleref{Case} ensures that both expressions have the same
type and thus the same protection level.  The premise $\protjudge*{\pc}{τ}$
ensures that this type protects the current $\pc$ label.
 
Like DCC, FLAC uses monadic operators to track
dependencies.  The monadic unit term $\return{ℓ}{w}$ (\ruleref{UnitM}) says that a value $w$ of type $τ$ is
_protected at level_ $ℓ$. This protected value has the type $\says{ℓ}{τ}$, meaning that it has the confidentiality and integrity
of principal $ℓ$. Because $w$ could implicitly reveal information about the dependencies 
of the computation that produced it, \ruleref{UnitM} requires that \rflowjudge{Π}{\pc}{\ell}.\footnote{Neither DCC~\cite{ccd99} nor
the original FLAC formalization~\cite{flac} included this premise.  DCC does not maintain a \pc label at all. FLAC originally
used a version of the DCC rule, but Cecchetti et al.~\cite{nmifc} and Gollamudi et al.~\cite{dflate} added the \pc restriction in
support of a non-commutative "says". See Section~\ref{sec:says} for additional details.}
When a monadic term $\return{ℓ}{w}$ steps to $\returng{ℓ}{w}$ we call it _sealed_ since all free values
have been substituted and the expression will not capture any additional information from its context.
Sealed terms type under the rule \ruleref{Sealed} which is more permissive since the $\pc$ premise is
unnecessary.

Computation on protected values must occur in a
protected context (``in the monad''), expressed using a monadic bind term.  
The typing rule $\ruleref{BindM}$ ensures that the result of the computation
protects the confidentiality and integrity of protected values. For instance,
the expression $\bind{x}{\return{ℓ}{v}}{\return{ℓ'}{x}}$ is only well-typed if
$ℓ'$ protects values with confidentiality and integrity $ℓ$. 
Since "case" expressions may use the variable $x$ for branching,
$\ruleref{BindM}$ raises the $\pc$ label to $\pc ⊔ ℓ$ to conservatively reflect the
control-flow dependency.

Protection levels are defined by the set of inference rules in
Figure~\ref{fig:protect}, adapted from~\cite{tz04}.  Expressions with unit type
(\ruleref{P-Unit}) do not propagate any information, so they protect
information at any ℓ.  Product types protect information at ℓ if both components
do (\ruleref{P-Pair}).  Function types protect information at ℓ if the return
type and function label does (\ruleref{P-Fun}), and polymorphic types protect information at whatever level
the abstracted type and type function label does (\ruleref{P-TFun}).
Finally, if $ℓ$ flows to $ℓ'$,
then $\says{ℓ'}{τ}$ protects information at ℓ (\ruleref{P-Lbl}).\footnote{DCC~\cite{ccd99} and the
original FLAC formalization included an additional protection rule that considered $ℓ$ to be protected by
$\says{ℓ'}{τ}$ if $τ$ protects $ℓ$ (even if $ℓ'$ does not).  This rule was removed by Cecchetti et al. and 
Gollamudi et al.~\cite{dflate} to make "says" non-commutative. Some variants of DCC~\cite{ccd08}
treat the "says" modality similarly.}
There are no protection rules for sum types or type variables since they do not protect information:
inspecting the constructor of a sum type value reveals information, and type variables may be instantiated
with types that offer different levels of protection or none at all. Because delegation expressions are singletons,
a protection rules for $\aftype{p}{q}$ types similar to the protection rule for $\voidtype$ would in principle
be admissible, but our examples and results did not require this permissiveness, and we have not explored
its consequences.

Occasionally it is more convenient to write protection relations in terms of only confidentiality
or integrity, so we also define a notation for authority projections on types in Figure~\ref{fig:typeproj}.
\begin{figure}
\begin{align*}
(\func{τ}{\pc}{τ})^{π}     &= (\func{τ}{\pc^{π}}{τ^{π}}) \\
(\says{ℓ}{τ})^{π}             &= \says{ℓ^{π}}{τ} \\
(\prodtype{τ}{τ})^{π}      &= (\prodtype{τ^{π}}{τ^{π}})  \\
(\tfuncpc{X}{\pc}{τ})^{π} &= (\tfuncpc{X}{\pc^{π} }{τ^{π}}) \\
\text{ otherwise } τ^{π} &= τ
\end{align*}
\caption{Authority projections on types}
\label{fig:typeproj}
\end{figure}

\begin{proposition}
  \[
  \protjudge*{ℓ^{π}}{τ^{π}} ⇔ \left\{\begin{array}{ll}
                                                          \protjudge*{ℓ^{→} ∧ ⊤^{←}}{τ} & \text{ for } π=→ \\
                                                          \protjudge*{ℓ^{←}}{τ} & \text{ for } π=← 
                                                        \end{array}\right.
   \]                                                  
\end{proposition}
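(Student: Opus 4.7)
The plan is to proceed by structural induction on the type $\tau$. The argument has two main components: a lattice-level lemma about how the flow ordering $\sqsubseteq$ interacts with the principal projections $(-)^{\pi}$, and an inductive argument that threads this lemma through each clause of the definition in Figure~\ref{fig:typeproj}.

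First I would establish the lattice lemma: $\ell^{\to} \sqsubseteq (\ell')^{\to}$ iff $\ell^{\to} \wedge \top^{\leftarrow} \sqsubseteq \ell'$, and $\ell^{\leftarrow} \sqsubseteq (\ell')^{\leftarrow}$ iff $\ell^{\leftarrow} \sqsubseteq \ell'$. Each direction unfolds $\sqsubseteq$ into the acts-for inequality $(p^{\leftarrow} \wedge q^{\to}) \succeq (q^{\leftarrow} \wedge p^{\to})$ from Section~\ref{sec:flam} and is then discharged by mechanical application of the static rules in Figure~\ref{fig:static}---principally \ruleref{ProjBasis} (to collapse mismatched projections like $(p^{\to})^{\leftarrow}$ to $\bot$), \ruleref{ProjIdemp}, \ruleref{ConjR}, \ruleref{Top}, and \ruleref{Bot}. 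The observation that drives the lemma is that a flow judgment on a pure-confidentiality principal depends only on the $\to$-projections of its endpoints, and symmetrically for pure-integrity.

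With the lemma in hand, the induction on $\tau$ is routine. For $\tau = \voidtype$, projection is the identity and \ruleref{P-Unit} makes both sides hold. For $\tau = X$, a sum type, or a delegation type $\aftype{p}{q}$, projection is likewise the identity and no protection rule applies, so both sides fail vacuously. For $\tau = \prodtype{\tau_1}{\tau_2}$, Figure~\ref{fig:typeproj} gives $\tau^{\pi} = \prodtype{\tau_1^{\pi}}{\tau_2^{\pi}}$, and \ruleref{P-Pair} splits the judgment over the two components, which are handled by the IH. For $\tau = \func{\tau_1}{\pc}{\tau_2}$, we have $\tau^{\pi} = \func{\tau_1}{\pc^{\pi}}{\tau_2^{\pi}}$, and \ruleref{P-Fun} reduces the LHS to $\ell^{\pi} \sqsubseteq \pc^{\pi}$ together with $\protjudge*{\ell^{\pi}}{\tau_2^{\pi}}$; the first conjunct is discharged by the lattice lemma and the second by the IH. The polymorphic case $\tfuncpc{X}{\pc}{\tau'}$ is identical via \ruleref{P-TFun}. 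Finally, for $\tau = \says{\ell'}{\tau'}$, the inner $\tau'$ is untouched by projection and \ruleref{P-Lbl} reduces both sides of the biconditional to a single flow comparison handled directly by the lattice lemma.

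The main obstacle is getting the lattice lemma right, particularly the asymmetry between the $\pi = \to$ and $\pi = \leftarrow$ cases. Recall from the paper that a bare projected principal like $\ell^{\leftarrow}$ implicitly abbreviates $\ell^{\leftarrow} \wedge \bot^{\to}$: the missing projection is filled with the $\bot$ of the ambient authority ordering. However, the correct ``unit'' on the right-hand side of the biconditional for the confidentiality case is $\top^{\leftarrow}$---the bottom of the integrity \emph{flow} sub-lattice---rather than $\bot^{\leftarrow}$. The confidentiality clause therefore must write $\ell^{\to} \wedge \top^{\leftarrow}$ explicitly, while the integrity clause can rely on the abbreviation convention. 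Verifying that this one asymmetric choice makes the flow inequalities match up is where the proof spends most of its effort; everything else is bookkeeping on the type constructors.
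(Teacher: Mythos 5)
Your proof is correct and takes essentially the same route as the paper's: the paper also argues by induction on the structure of $\tau$ (phrasing the reverse direction as induction on the protection derivation, which coincides with structural induction here because the protection rules are syntax-directed). The lattice lemma you isolate—reducing $\ell^{\pi}\sqsubseteq(\ell')^{\pi}$ to $\ell^{\to}\wedge\top^{\leftarrow}\sqsubseteq\ell'$ for confidentiality and $\ell^{\leftarrow}\sqsubseteq\ell'$ for integrity—is exactly the computation the paper leaves implicit, and your explanation of why the confidentiality side must write $\top^{\leftarrow}$ explicitly while the integrity side can rely on the default $\bot^{\to}$ completion is right.
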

\begin{proof} In the forward direction, by induction on the structure
of $τ$.  In the reverse direction, by induction on the derivation of
$\protjudge*{ℓ^{→} ∧ ⊤^{←}}{τ}$ (for $π=→$) and
$\protjudge*{ℓ^{←}}{τ}$ (for $π=←$).
\end{proof}

Most of the novelty of FLAC
 lies in its delegation values and "assume" terms.  These terms enable
expressive reasoning about authority and information flow control.  A delegation
value serves as evidence of trust. For instance, the term $\delexp{p}{q}$, read ``$p$ acts for $q$'',
 is evidence that $q$ trusts $p$.  Delegation values
have _acts-for types_; $\delexp{p}{q}$ has type $\aftype{p}{q}$.
\footnote{This correspondence 
with delegation values makes acts-for types a kind of singleton type~\cite{singletons}.}
The "assume" term enables programs to use evidence securely to create new flows between
protection levels.  In the typing context
$∅;x\ty\says{p^{←}}{τ};q^{←}$ (i.e., $Π=∅$, $Γ=x\ty\says{p^{←}}{τ}$, and $\pc=q^{←}$), the following
expression is not well typed:
\[
\bind{x'}{x}{\returnp{q^{←}}{x'}}
\]
since $p^{←}$ does not flow to $q^{←}$, as required by the premise $\protjudge*{ℓ}{τ}$ in rule \ruleref{BindM}.
Specifically, we cannot derive $\protjudge*{p^{←}}{\says{q^{←}}{τ}}$ since \ruleref{P-Lbl} requires the 
FLAM judgment $\rflowjudge{Π}{p^{←}}{q^{←}}$ to hold. 
  
However, the following expression is well typed:
\[
\assert{\delexp{p^{←}}{q^{←}}}{\bind{x'}{x}{\returnp{q^{←}}{x'}}}
\]
The difference is that the "assume" term adds a trust relationship, represented
by an expression with an acts-for type, to the delegation context.  In this case,
the expression $\delexp{p^{←}}{q^{←}}$ adds a trust relationship
that allows $p^{←}$ to flow to $q^{←}$.  
This is secure since $\pc=q^{←}$, meaning that only principals with integrity
$q^{←}$ have influenced the computation. With $⟨p^{←} ≽ q^{←}⟩$
in the delegation
context, added via the \ruleref{Assume} rule, the premises of \ruleref{BindM} are now satisfied, so the expression type-checks. 

Creating a delegation value requires no special privilege because the type
system ensures only high-integrity delegations are used as
evidence for enabling new flows.
Using low-integrity evidence for authorization would be
insecure since attackers could use delegation values to create new flows that reveal secrets or 
corrupt data.  The premises of the \ruleref{Assume} rule ensure the integrity of
dynamic authorization computations that produce values like
$\delexp{p^{←}}{q^{←}}$ in the example above.\footnote{These premises are
related to the robust FLAM rule \textsc{Lift}.}  The second premise,
$\rafjudge{Π}{\pc}{\voice{q}}$, requires that the $\pc$ has enough integrity
to be trusted by $q$, the principal whose security is affected. For instance, to
make the assumption $\aftypep{p}{q}$, the evidence represented by the term $e$
must have at least the integrity of the voice of $q$, written $∇(q)$.  Since the $\pc$
bounds the restrictiveness of the dependencies of $e$, 
this ensures that only information with integrity $∇(q)$ or higher may
influence the evaluation of $e$.  The third premise,
$\rafjudge{Π}{\voice{\confid{p}}}{\voice{\confid{q}}}$, ensures that
principal $p$ has sufficient integrity to be trusted to enforce $q$'s
confidentiality, $q^{→}$. This premise means that $q$ permits data to be
relabeled from $q^{→}$ to $p^{→}$.\footnote{More precisely, it means that the
voice of $q$'s confidentiality, $\voice{\confid{q}}$, permits data to be
relabeled from $q^{→}$ to $p^{→}$. Recall that
$\voice{\confid{\Alice}}$ is just $\Alice$'s integrity projection: $\Alice^{←}$.}

The $\pc$ constraints on function application ensure that functions containing "assume" terms can
only be applied in high-integrity contexts. For example, the following function declassifies one of
Alice's secrets to Bob. The first assume establishes that Bob is trusted to speak for Alice, and the second delegates Alice's confidentiality authority to Bob.  The "bind" term then relabels Alice's secret to a label observable by Bob.
\begin{align*}
  &"declassify" ::\;  \func{(\says{\Alice^{→}}{τ})}{\pc}{(\says{\Bob^{→}}{τ})} & \\ 
  &"declassify" = \lamc{x}{\says{\Alice^{→}}{τ}}{\pc}{} & \\
  & \qquad  \qquad\qquad    \qquad \assume{\delexp{\Bob^{←}}{\Alice^{←}}}{} & \\
  & \qquad  \qquad\qquad    \qquad \quad  \assume{\delexp{\Bob^{→}}{\Alice^{→}}}{} & \\
  & \qquad  \qquad\qquad    \qquad \qquad \bind{x'}{x}{\returnp{\Bob^{→}}{x'}} &
\end{align*}
If Bob could apply this function arbitrarily, then he could declassify
all of Alice's secrets—not just the ones she intended to release.
However, since the "assume" terms delegate Alice's confidentiality and integrity authority,
this function is only well typed if $\pc$ speaks for Alice, or $\rafjudge{Π}{\pc}{\Alice^{←}}$.
Otherwise the "assume" terms are rejected
by the type system.  The constraints in the \ruleref{App} rule
then ensure that this function can only be applied in a context that
flows to $\pc$.

Assumption terms evaluate to "where" expressions (rule \ruleref{E-Assume}).
These expressions are a purely formal bookkeeping mechanism (i.e., they would be
unnecessary in a FLAC-based implementation) to ensure that source-level terms
that were well-typed because of an "assume" term remain well-typed during evaluation. 
This helps us distinguish insecure FLAC terms from terms whose policies have been
legitimately downgraded.
These "where" terms record
and maintain the authorization evidence used to justify new flows of information
during evaluation. 
They are not part of the source language and 
generated only by the evaluation rules. 
The term $\where{e}{\delexp{p}{q}}$ records
that $e$ is typed in a context that includes the delegation $\delexp{p}{q}$.
 
The rule \ruleref{Where} gives a typing rule for "where" terms;
though similar to \ruleref{Assume}, it requires only that $∇(q)$ delegate to 
the distinguished label $\pcmost$, which is a fixed parameter of the type system.
The use of $\pcmost$ is purely technical:
our proofs in Section~\ref{sec:props} use $\pcmost$ to 
help reason about what new flows may have created by "assume" terms.
The only requirement is that $\pcmost$ be 
as trusted as the $\pc$ label used to 
type-check the source program (or programs) that generated the "where" term.  
Since the $\pc$ increases 
monotonically when typing subexpressions,
In our formal results, we choose $\pcmost$ to be $⊤^{←}$ since it is always valid.
Selecting a more restrictive label could offer
finer-grained reasoning about what downgrades may occur in non-source-level terms
since it restricts which "where"-terms are well-typed.

\begin{figure*}
{\footnotesize
\begin{flalign*}
& \boxed{
 e \stepsone e'
} &
\end{flalign*}
\begin{mathpar}
\erule*{W-App}{}{(\where{w}{v})~e}{\where{(w~e)}{v}}{}

\erule*{W-TApp}{}{(\where{w}{v})~τ}{\where{(w~τ)}{v}}{}

\erule*{W-UnPair}{}{\proj{i}{(\where{w}{v})}}{\where{(\proj{i}{w})}{v}}{}

\erule*{W-Case}{}{(\casexp{(\where{w}{v})}{x}{e_1}{e_2})}{\where{(\casexp{w}{x}{e_1}{e_2})}{v}}{}

\erule*{W-BindM}{}{\bind{x}{(\where{w}{v})}{e}}{\where{(\bind{x}{w}{e})}{v}}{}

\erule*{W-Assume}{}{\assume{(\where{w}{v})}{e}}{\where{(\assume{w}{e})}{v}}{}

\end{mathpar}
}
\caption{Propagation of "where" terms}
\label{fig:whererules}
\end{figure*}

Figure~\ref{fig:whererules} presents evaluation rules for where terms. 
The rules are designed to treat "where" values like the value they enclose.  
For instance, applying a "where" term (rule \ruleref{W-App}) simply moves
the value it is applied to inside the where term. If the "where" term was
wrapping a lambda expression, then it may now be applied via \ruleref{App}.
Otherwise, further reduction steps via \ruleref{W-App} may be necessary.
We use the syntactic category $w$ (see Figure~\ref{fig:syntax}) to refer to fully-evaluated
"where" terms, or "where" _values_.  In other words, 
a "where" value $w$ is an expression consisting of a value $v$ enclosed by one or more 
"where" clauses.  A "where" value usually behaves like a value, 
but it is occasionally convenient to distinguish them.\footnote{The original FLAC formalization
did not distinguish "where" values and values, and did not include the rules in Figure~\ref{fig:whererules}.
Unfortunately, this resulted in stuck terms when "where" terms were not propagated appropriately.
We have proven the above rules eliminate stuck terms for well-typed programs (Lemma~\ref{lemma:prog}).
}

\begin{figure}
{\footnotesize
\begin{flalign*}
& \boxed{\protjudge*{ℓ}{τ}} &
\end{flalign*}
\begin{mathpar}
	\Rule{P-Unit}{}
	{\protjudge{\delegcontext}{\ell}{\voidtype}}

	\Rule{P-Pair}
	{\protjudge{\delegcontext}{\ell}{\tau_1} \\
		\protjudge{\delegcontext}{\ell}{\tau_2} 
	}
	{\protjudge{\delegcontext}{\ell}{\prodtype{\tau_1}{\tau_2}}} 

     \protrule{P-Fun}{
            \protjudge{\delegcontext}{ℓ}{\tau_2} \\
            \protjudge{\delegcontext}{ℓ}{\pc'}}
          {ℓ}{\func{τ_1}{\pc'}{\tau_2}}

     \protrule{P-TFun}{
       \protjudge{\delegcontext}{ℓ}{\tau} \\
       \protjudge{\delegcontext}{ℓ}{\pc'}}{ℓ}{\tfuncpc{X}{\pc'}{\tau}}

	\Rule{P-Lbl}
	{\rflowjudge{\delegcontext}{\ell}{\ell'} 
	}
	{\protjudge{\delegcontext}{\ell}{\says{\ell'}{\tau}}}

\end{mathpar}
}
\caption{Type protection levels}
\label{fig:protect}
\end{figure}

\section{FLAC Proof theory}
\label{sec:pftheory}
\subsection{Properties of \texttt{says}} 
\label{sec:says}
FLAC's type system constrains how
principals apply existing facts to derive new facts.  For instance, a
property of "says" in other authorization logics (e.g., Lampson et
al.~\cite{labw91} and Abadi~\cite{abadi06}) is that implications that hold for
top-level propositions also hold for propositions of any principal ℓ:
\[
⊢ \func{(\func{\tau_1}{}{\tau_2})}{}{(\func{\says{ℓ}{\tau_1}}{}{\says{ℓ}{\tau_2}})}
\] 
The $\pc$ annotations on FLAC function types refine this property.
Each implication (in other words, each function) in FLAC is annotated with an upper
bound on the information flow context it may be invoked within.  To
lift such an implication to operate on propositions protected at label ℓ,
the label ℓ must
flow to the $\pc$ of the implication. Thus, for all $ℓ$ and $τᵢ$,
\begin{align*}
&⊢ \func{(\func{\tau_1}{ℓ}{\tau_2})}{ℓ}{(\func{\says{ℓ}{\tau_1}}{ℓ}{\says{ℓ}{\tau_2}})} &
\end{align*}
This judgment is a FLAC typing judgment in _logical form_, where terms have been 
omitted.  We write such judgments with an empty typing context (as above) when the judgment
is valid for any Π, Γ, and \pc. A judgment in logical form is valid if a _proof term_
exists for the specified type, proving the type is inhabited.
The above type has proof term
\begin{align*}
  & \lamc{f}{(\func{\tau_1}{ℓ}{\tau_2})}{ℓ}{} &\\
  & \quad\lamc{x}{\says{ℓ}{\tau_1}}{ℓ}{} \bind{x'}{x}{\return{ℓ}{f~x'}} &
\end{align*}
In order to apply $f$, we must first "bind" $x$, so according to
rules \ruleref{BindM} and \ruleref{App}, the function $f$ must have a label at least as
restrictive as $ℓ$,  and \ruleref{UnitM} requires the label of the returned value must also
be as restrictive as $ℓ$. We can actually prove a slightly more general version of the above theorem:
\begin{align*}
& \func{(\func{\tau_1}{\pc ⊔ ℓ}{\tau_2})}{ℓ}{(\func{\says{ℓ}{\tau_1}}{\pc}{\says{\pc ⊔ ℓ}{\tau_2}})} &
\end{align*}
This version permits using the implications in more restrictive
contexts, but doesn't map as well to a DCC theorem since the principal
of the return type differs from the argument type.

These refinements of DCC's theorems are crucial for supporting
applications like commitment schemes and bearer credentials. Our FLAC
implementations, presented in detail in Sections~\ref{sec:commit}
and~\ref{sec:bearer}, rely in part on restricting the $\pc$ to a
specific principal's integrity.  Without such refinements, principal
$q$ could open principal $p$'s commitments using "open", or create
credentials with $p$'s authority: $\spkfor{p^{→}}{\pc}{p^{←}}$.  With
these refinements, we can express privileged implications (functions)
that only trusted principals may apply.

Consider a DCC version of the
"declassify" function type from Section~\ref{sec:flac}:
\begin{align*}
  &\texttt{dcc\_declassify} ::\; \func{(\says{\Alice}{τ})}{}{(\says{\Bob}{τ})} & 
 \end{align*}
In DCC, functions are not annotated with $\pc$ labels and may be applied in any context. 
Therefore, _any_ principal could use \texttt{dcc\_declassify} to relabel Alice's information
to Bob—including Bob.

Other properties of "says" common to DCC and other logics 
(cf.~\cite{abadi03} for examples) are similarly refined by $\pc$
bounds. Two examples are: 
$ ⊢ \func{τ}{ℓ}{\says{ℓ}{τ}}$ which has proof term:
$\lamc{x}{τ}{ℓ}{\return{ℓ}{τ}}$ 
and 
\begin{align*}
⊢ \func{\says{ℓ}{(\func{\tau_1}{ℓ}{\tau_2})}}{ℓ}{(\func{\says{ℓ}{\tau_1}}{ℓ}{\says{ℓ}{\tau_2}})}
\end{align*}
with proof term:
\begin{align*}
  & \lamc{f}{\says{ℓ}{(\func{\tau_1}{ℓ}{\tau_2})}}{ℓ}{\bind{f'}{f}{}} &\\
  & \qquad\lamc{y}{\says{ℓ}{\tau_1}}{ℓ}{} \bind{y'}{y}{\return{ℓ}{f'~y'}} &
\end{align*}

Some theorems of DCC cannot be obtained in FLAC, due to the \pc restriction 
on \ruleref{UnitM} as well as the more restrictive protection relation.
For example, chains of "says" are not commutative in FLAC.  Given $ℓ₁$, $ℓ₂$, and $\pc$, 
\begin{align*}
& ⊬\func{\says{ℓ₁}{\says{ℓ₂}{τ}}}{\pc}{\says{ℓ₂}{\says{ℓ₁}{τ}}} &
\end{align*}
unless $\rflowjudge{Π}{ℓ₁⊔\pc}{ℓ₂}$ and $\rflowjudge{Π}{ℓ₂⊔\pc}{ℓ₁}$, which implies $ℓ₁$, $ℓ₂$, and $\pc$
must be equivalent in $Π$.
CCD~\cite{ccd08}, a logic related to DCC, is also non-commutative with respect to "says", but does not 
have an associated term language.

Distinguishing the nesting order of "says" types is attractive for
authorization settings since it encodes the provenance of statements.
It also enables modeling of cryptographic mechanisms in FLAC
(\emph{cf.}~\cite{dflate}) where \says{ℓ}{τ} is interpreted as a value
of type $τ$ protected by encryption key $ℓ^{→}$ and signing key
$ℓ^{←}$.  Preserving the order of nested types in this context is
useful for modeling decryption and verification of protected values.

\subsection{Dynamic Hand-off} 
\label{sec:handoff} 
Many authorization logics support delegation using a ``hand-off'' axiom.  In DCC, 
this axiom is a provable theorem:
\[
⊢ (q~"says"~(p ⇒q)) → (p ⇒ q)
\] 
where $p⇒q$ is shorthand for $$∀{X}.~(\func{\says{p}{X}}{}{\says{q}{X}})$$
However, $p⇒q$ is only inhabited if $p⊑q$ is derivable in the security lattice.  Thus, DCC
can reason about the consequences of an assumption that $p⊑q$ holds (whether it is true for the lattice
or not), but a DCC program cannot produce a term of type $p⇒q$ unless $p⊑q$.

FLAC programs, on the other hand, can create new trust relationships from
delegation expressions using "assume" terms. The type analogous to $p ⇒ q$ in
FLAC is $$\tfuncpc{X}{\pc}{(\func{\says{p}{X}}{\pc}{\says{q}{X}})}$$ which we write 
as $\spkfor{p}{\pc}{q}$.
FLAC programs construct terms of this type from proofs of authority, represented by
terms with acts-for types.  This feature enables a more general form of
hand-off, which we state formally below.
\begin{proposition}[Dynamic hand-off]
  \label{prop:handoff}
{
For all $ℓ$ and $\pc'$,
let  $\pc=ℓ^{→}∧∇(p^{→})∧q^{←}∧∇(pc')$ 
\begin{align*}
 & \func{\aftype{\voice{\confid{q}}}{\voice{\confid{p}}}}{\pc}{} \func{\flowtype{p}{q}}{\pc}{ \func{\flowtype{pc'}{q}}{\pc}{}} &\\ 
 & \qquad \tfuncpc{X}{\pc'}{(\func{\says{p}{X}}{\pc'}{\says{q}{X}})} & 
\end{align*}
}
\end{proposition}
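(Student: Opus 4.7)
The plan is to exhibit an explicit closed \lang witness inhabiting the displayed type and then walk through its type derivation. The natural candidate takes the three delegation arguments in sequence, introduces each into the delegation context $\Pi$ via \ruleref{Assume}, and finishes with a type abstraction whose body is a trivial monadic coercion. Concretely, let $\tau_1 = \aftype{\voice{\confid{q}}}{\voice{\confid{p}}}$, $\tau_2 = \flowtype{p}{q}$, $\tau_3 = \flowtype{pc'}{q}$, and $e_b = \assume{d_1}{\assume{d_2}{\assume{d_3}{\tlam{X}{\pc'}{\lamc{y}{\says{p}{X}}{\pc'}{\bind{z}{y}{\return{q}{z}}}}}}}$; the witness is then $\lamc{d_1}{\tau_1}{\pc}{\lamc{d_2}{\tau_2}{\pc}{\lamc{d_3}{\tau_3}{\pc}{e_b}}}$. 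The three outer \ruleref{Lam} steps are immediate because the body remains typed at the same $\pc$ through each binding.

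The core work is discharging the two side-conditions of each \ruleref{Assume}, namely $\pc ≽ \voice{q_i}$ and $\voice{\confid{p_i}} ≽ \voice{\confid{q_i}}$ in the current $\Pi$. Using the static identities $\nabla(r^{\rightarrow}) \equiv r^{\leftarrow}$ and $(r^{\pi})^{\pi'} \equiv \bot$ for $\pi \neq \pi'$, these reduce as follows. For $d_1 : \aftype{q^{\leftarrow}}{p^{\leftarrow}}$ the pc premise becomes $\pc ≽ p^{\leftarrow}$, which holds because $\nabla(p^{\rightarrow}) \equiv p^{\leftarrow}$ is a conjunct of $\pc$; the voice premise collapses to $\bot ≽ \bot$. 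For $d_2$, unfolded as $\aftype{p^{\leftarrow}\wedge q^{\rightarrow}}{q^{\leftarrow}\wedge p^{\rightarrow}}$, the pc premise demands $\pc ≽ p^{\leftarrow}\wedge q^{\leftarrow}$ (both are conjuncts of $\pc$), and the voice premise becomes $q^{\leftarrow} ≽ p^{\leftarrow}$, which $d_1$ has just placed into $\Pi$ and is usable via \ruleref{R-Assume} since its own voice precondition $\bot ≽ \bot$ is trivial. For $d_3$ the pc premise reduces to $\pc ≽ pc'^{\leftarrow}\wedge q^{\leftarrow}$, discharged because $\nabla(pc') ≽ pc'^{\leftarrow}$ (by \ruleref{ProjR}) and $q^{\leftarrow}$ are both in $\pc$.

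The residual voice premise for $d_3$, namely $q^{\leftarrow} ≽ pc'^{\leftarrow}$, is the main technical obstacle. Unlike for $d_2$, no earlier assume has explicitly placed this fact in $\Pi$, so it must be extracted from the combined context using \ruleref{R-Trans} chained with projections of $d_3$ itself via \ruleref{Proj}, exploiting the fact that the proposition's $\pc$ is deliberately engineered to carry both $q^{\leftarrow}$ and $\nabla(pc')$. Once this is settled and all three \ruleref{Assume} steps succeed, $\Pi$ contains witnesses of both $p ⊑ q$ and $pc' ⊑ q$. The remainder is routine: \ruleref{TLam} resets the pc to $\pc'$; \ruleref{Lam} introduces $y$ at $\pc'$; \ruleref{BindM} raises the pc to $\pc' ⊔ p$ for the continuation $\return{q}{z}$; \ruleref{UnitM} requires $\rflowjudge{\Pi}{\pc' ⊔ p}{q}$, which follows from the two flow delegations just installed in $\Pi$; and the \ruleref{BindM} protection side-condition on $\says{q}{X}$ is discharged by the same flow via \ruleref{P-Lbl}.
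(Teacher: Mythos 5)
Your witness is, up to where the \texttt{assume}s sit, the same term the paper gives: the paper interleaves each \texttt{assume} directly under the lambda that binds its evidence, whereas you bind all three arguments first and then assume them in order; since \ruleref{Assume} is transparent to the type of its body, both arrangements yield the displayed type and generate the same side conditions in the same order. Your discharge of the premises for $d_1$ and $d_2$ is correct, including the key observation that $d_2$'s voice premise $\rafjudge{Π}{\voice{\confid{(p^{←}∧q^{→})}}}{\voice{\confid{(q^{←}∧p^{→})}}}$, which simplifies to $q^{←} ≽ p^{←}$, is exactly the delegation installed by $d_1$ --- this is why the order of the assumes matters.

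The gap is the voice premise for $d_3$. Unfolding $\flowtype{pc'}{q}$ as $\aftype{(pc'^{←}∧q^{→})}{(q^{←}∧pc'^{→})}$, rule \ruleref{Assume} demands $\voice{\confid{(pc'^{←}∧q^{→})}} ≽ \voice{\confid{(q^{←}∧pc'^{→})}}$, which reduces to $q^{←} ≽ ∇(pc'^{→}) ≡ pc'^{←}$. None of the three devices you invoke can produce this judgment. First, \ruleref{R-Assume} only draws on delegations already present in $Π$; at this point $Π$ contains only $⟨q^{←} ≽ p^{←}⟩$ and $⟨p^{←}∧q^{→} ≽ q^{←}∧p^{→}⟩$, neither of which mentions $pc'$, and $d_3$'s own delegation cannot be used to justify its own admission. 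Second, \ruleref{Proj} and \ruleref{R-Trans} applied to those two delegations can only relate $p$ and $q$. Third, the $\pc$ is irrelevant to this premise: the fact that $q^{←}$ and $∇(pc')$ are conjuncts of $\pc$ establishes what $\pc$ acts for, not what $q^{←}$ acts for. For arbitrary $pc'$ (say $pc' = \Alice^{←}$ and $q = \Bob$ with an empty initial $Π$) the judgment $q^{←} ≽ pc'^{←}$ is simply false, so the step as you describe it fails. Since the paper itself only exhibits the proof term without spelling out this derivation, you have landed on the one side condition that is not routine; to complete the argument you must either give a genuine derivation of $q^{←} ≽ ∇(pc'^{→})$ or make explicit the additional hypothesis (e.g., a constraint tying $pc'$'s integrity to $q$) under which it holds, rather than asserting that it can be ``extracted from the combined context.''
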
 
\noindent
{
\textit{Proof term.}
\begin{align*}
& \lamc{\pf_1}{\aftype{∇(q^{→})}{∇(p^{→})}}{\pc}{\assume{\pf_1}{}} &\\
& \quad \lamc{\pf_2}{\flowtype{p}{q}}{\pc} {\assume{\pf_2}{}} &\\
& \qquad \lamc{\pf_3}{\flowtype{pc'}{q}}{\pc} {\assume{\pf_3}{}} &\\
& \quad \qquad\tlam{X}{\pc'}{}\lamc{x}{\says{p}{X}}{\pc'} {\bind{x'}{x}{\return{q}{x'}}} &
\end{align*}
}
The principal $\pc=ℓ^{→}∧∇(p^{→})∧q^{←}∧∇(pc')$ restricts delegation
(hand-off) to contexts with sufficient integrity to authorize the
delegations made by the "assume" terms.  In other words, the context
that creates these delegations must be authorized by the combined
authority of $∇(p^{→})$, $q^{←}$, and $∇(pc')$.

The three arguments are proofs of authority with acts-for types: a proof of
$\aftypep{∇(q^{→})}{∇(p^{→})}$, a proof of $\flowtypep{p}{q}$, and a proof of $\flowtypep{pc'}{q}$. 
The $\pc$ ensures that the proofs have sufficient
integrity to be used in "assume" terms since it has the integrity of both 
 $∇(p^{→})$ and $q^{←}$.  Note that low-integrity or confidential delegation values
must first be bound via "bind" before the above term may be applied.  
Thus the $\pc$ would reflect the protection level of both arguments. 
Principals $ℓ^{→}$ and $\pc'$ are unconstrained, but the third proof argument ensures that flows from $pc'$ 
to $q$ are authorized, since a principal with access to $q$'s secrets could infer something 
about the context (protected at $pc'$) in which the hand-off function is applied. Other dynamic hand-off formulations 
are possible, Proposition~\ref{prop:handoff} simply has the fewest assumptions. Other formulations can eliminate 
the need for proofs $pf₁$, $pf₂$, and/or $pf₃$ if these relationships already exist in the context defining the hand-off.

Dynamic hand-off terms give FLAC programs a level of expressiveness and security
not offered by other authorization logics.  Observe that $\pc'$ may be chosen
independently of the other principals.  This means that although the $\pc$
prevents low-integrity principals from creating hand-off terms, a high-integrity
principal may create a hand-off term and provide it to an
arbitrary principal.  Hand-off terms in FLAC, then, are similar to capabilities
since even untrusted principals may use them to change the protection level of
values.  Unlike in most capability systems, however, the propagation of hand-off
terms can be constrained using information flow policies.

Terms that have types of the form in Proposition~\ref{prop:handoff} illustrate a
subtlety of enforcing information flow in an authorization mechanism.  Because
these terms relabel information from one protection level to another protection
level, the transformed information implicitly depends on the proofs of
authorization. FLAC ensures that the information security of
these proofs is protected—like that of all other values—even as the policies of
other information are being modified. Hence, authorization proofs cannot
be used as a side channel to leak information.

For example, if \Alice's trust relationship with \Bob is secret, she
might protect it at confidentiality $\Alice^{→}$.  If \Alice wants to
delegate trust to \Bob using an approach like
Proposition~\ref{prop:handoff}, the delegation protected at
$\Alice^{→}$ would first have to be bound:
\begin{align*}
& \bind{d}{\returnp{\Alice}{\delexp{\Bob}{\Alice}}}{...} &
\end{align*}
This would imply (by the \ruleref{BindM} typing rule) that in order to create a term with type
$\spkfor{\Bob}{\pc}{\Alice}$ in the body of the "bind", it must be the case that
$\protjudge{\Pi}{\Alice}{(\spkfor{\Alice}{\pc}{\Bob})}$, which in turn requires
that $\rflowjudge{\Pi}{\Alice}{\pc}$ and $\rflowjudge{\Pi}{\Alice}{\Bob}$.

Of course, if $\rflowjudge{\Pi}{\Alice}{\Bob}$ already holds, then there is no
need to delegate confidentiality authority to Bob.
Therefore, a typical approach would first declassify the delegation
to Bob (e.g., relabel it from $\says{\Alice}{\aftype{\Bob}{\Alice}}$
to $\says{\Bob^{→}∧\Alice{←}}{\aftype{\Bob}{\Alice}}$), before handing
off authority.  This requirement ensures that the disclosure of the
secret trust relationship is intentional, and is an example of a more
general principle in FLAC: that restricted terms cannot be used to
``downgrade themselves.''  We formalize this idea in
Section~\ref{sec:dcdi} with the Delegation Compartmentalization lemma.

\section{Semantic security properties of FLAC}
\label{sec:props} 

In this section, we formalize our semantic security guarantees.  Our
results are based on a bracketed semantics in the style of Pottier and
Simonet~\cite{ps03} extended to a trace-based semantic model.  The
observability of trace elements is defined by an erasure function and
our noninterference and robust declassification theorems are stated in
terms of indistinguishability on traces.
Because FLAC supports downgrading via "assume", the usual bracketed
semantic approach to noninterference via type preservation is
insufficient, so we develop an approach to characterize what
downgrades are possible in well-typed FLAC programs.
Finally, with the necessary infrastructure in place, we state and prove our
noninterference and robust declassification theorems.

\subsection{Trace indistinguishability}

We express our semantic security results in terms of the _traces_ of a program observable to an attacker. 
FLAC traces are simply the sequence of terms under the $\stepsone$ relation.  That is, each evaluation
step of the form $e \stepsone e'$ generates a new trace element $e'$.  We write the trace generated by 
taking $n$ steps from $e$ to $e'$ as $e \stepstot{t} e'$, where $t[0]=e$ and $t[n]=e'$.

FLAC traces are not fully observable to an attacker.  Trace elements
generated by protected information such as sealed values or protected
contexts such as within a "bind" or lambda term are hidden from the
attacker. This approach models scenarios where an attacker has a
limited ability to observe program values, for example when sealed
values are protected by a trustworthy a runtime or are signed and
encrypted.  For these scenarios, the \ruleref{UnitM} typing rule and an observability function
model how computation on sealed values,
and thus the observability of intermediate values, is limited to principals with
necessary permissions or cryptographic keys to access inputs and
produce outputs.\footnote{DFLATE~\cite{dflate} explores this connection 
with cryptographic enforcement more explicitly in a distributed extension of FLAC.}
An insecure program in this model allows an attacker to learn information by 
allowing protected information to flow to sealed values or contexts that are
observable by the attacker.

Before formally defining what portions of program trace are available to an
attack, we must first add some additional bookkeeping to our semantic
rules.  Specifically, we need additional notation for evaluating some intermediate expressions.
The notation $\octx{ℓ}{e}$ denotes an intermediate expression
$e$ evaluated in a context protected at $ℓ$.

\begin{figure}
\begin{mathpar}
\text{\underline{Syntax}} \hfill \\
e ::= \dots \sep \octx{ℓ}{e} \hfill
\\
\text{\underline{Evaluation contexts}}\hfill  \\
E ::= ...  \sep \octx{ℓ}{E} \hfill
\\
\text{\underline{Evaluation rules}} \hfill \\
\erule{E-App*}{}{(\lamc{x}{τ}{\pc}{e})~w }{\octx{pc}{\subst{e}{x}{w}}}{}

\erule{E-TApp*}{}{(\tlam{X}{\pc}{e})~τ}{\octx{pc}{\subst{e}{X}{τ}}}{}

\erule{E-BindM*}{}{\bind{x}{\returng{ℓ}{w}}{e}}{\octx{ℓ}{\subst{e}{x}{w}}}{}

\erule{O-Ctx}{}{\octx{ℓ}{w}}{w}{}

\hfill
\\
\text{\underline{Typing rules}} \hfill \\
\Rule{Ctx}{\TValP{Γ;\pc'}{e}{τ} \\
  \rflowjudge{Π}{\pc}{\pc'}
} {\TValGpc{\octx{\pc'}{e}}{τ}}
\hfill
\end{mathpar}

\caption{Extensions to support protection contexts.}
\label{fig:octx}
\end{figure}

Figure~\ref{fig:octx} presents syntax and evaluation rules for
introducing and eliminating protected contexts.  Rules
\ruleref{E-App*}, \ruleref{E-TApp*}, and \ruleref{E-BindM*} replace
their counterparts in Figure~\ref{fig:semantics} with rules that
introduce a protected context based on the relevant label.
Rule~\ruleref{O-Ctx} eliminates the protected context when the
intermediate expression is fully evaluated.  The typing rule
\ruleref{Ctx} ensures that expressions inside protected contexts are
well-typed at the annotated label, and that the $\pc$ flows to the
annotated label.

\begin{figure*}
  {\small
\begin{subfigure}{\textwidth}
\[
\begin{array}{l l  l}
                                    \observef{x}{\delegcontext}{p} & = &   x \\
                                    \observef{\delexp{a}{b}}{\delegcontext}{p} & = & \delexp{a}{b}  \\  %
                                    \observef{()}{\delegcontext}{p} & = &   () \\
                                     \observef{\return{\ell}{e}}{\delegcontext}{p} & = & \return{\ell}{\observef{e}{\delegcontext}{p}} \\
                                    \observef{\returng{\ell}{w}}{\delegcontext}{p} & = & 
                                               \begin{cases}
                                                    \returng{\ell}{\observef{w}{\delegcontext}{p}} & \text{if } \rflowjudge{\delegcontext}{\ell^{π}}{p^{π}} \\
                                                    \circ &   \text{otherwise}
                                               \end{cases}\\[1.25em]
                                    \observef{\lamc{x}{\tau}{\pc}{e}}{\delegcontext}{p} & = &
                                               \begin{cases}
                                                   \lamc{x}{\tau}{\pc}{\observef{e}{\delegcontext}{p}} & \text{if }  \rflowjudge{\delegcontext}{\pc^{π}}{p^{π}} \\
                                                   \circ &   \text{otherwise}
                                               \end{cases}\\[1.25em]
                                    \observef{\tlam{X}{\pc}{e}}{\delegcontext}{p} & = & 
                                               \begin{cases}
                                                   \tlam{X}{\pc}{\observef{e}{\delegcontext}{p}} & \text{if }  \rflowjudge{\delegcontext}{\pc^{π}}{p^{π}} \\
                                                   \circ  &  \text{otherwise}
                                               \end{cases}\\[1.25em]
                                    \observef{\octx{\ell}{e}}{\delegcontext}{p} & = & 
                                               \begin{cases}
                                                    \octx{\ell}{\observef{e}{\delegcontext}{p}} & \text{if }  \rflowjudge{\delegcontext}{\ell^{π}}{p^{π}} \\
                                                    \circ &   \text{otherwise}
                                               \end{cases}\\[1.25em]
                                    \observef{e₁~e₂}{\delegcontext}{p} & = & 
                                               \begin{cases}
                                                    \circ & \text{if } \observef{e_i}{\delegcontext}{p} = \circ \\
                                                    \observef{e_1}{\delegcontext}{p}~\observef{e_2}{\delegcontext}{p} & \text{otherwise}
                                                \end{cases} \\[1.25em]
                                    \observef{\pair{e_1}{e_2}}{\delegcontext}{p} & = & 
                                               \begin{cases}
                                                    \circ & \text{if } \observef{e_i}{\delegcontext}{p} = \circ \\
                                                    \pair{\observef{e_1}{\delegcontext}{p}}{\observef{e_2}{\delegcontext}{p}} & \text{otherwise}
                                                \end{cases} \\
                                    \observef{\proji{e}}{\delegcontext}{p} & = & \proji{\observef{e}{\delegcontext}{p}} \\
                                    \observef{\inji{e}}{\delegcontext}{p} & = &  
                                               \begin{cases}
                                                    \circ & \text{if } \observef{e}{\delegcontext}{p} = \circ \\
                                                   \inji{\observef{e}{\delegcontext}{p}} & \text{otherwise}
                                                \end{cases} \\
                                    \observef{\casexp{e}{x}{e_1}{e_2}}{\delegcontext}{p} & = & \casexpa{\observef{e}{\delegcontext}{p}} \\ 
                                                                                                                       && \quad \phantom{\mid}~\casexpb{x}{\observef{e_1}{\delegcontext}{p}} \\ 
                                                                                                                       && \quad \casexpc{x}{\observef{e_2}{\delegcontext}{p}} \\ 
                                    \observef{\bind{x}{e}{e'}}{\delegcontext}{p} & = & \bind{x}{\observef{e}{\delegcontext}{p}}{\observef{e'}{\delegcontext}{p}} \\
                                    \observef{\assume{e}{e'}}{\delegcontext}{p} & = &
                                    \begin{cases}
                                      \circ & \text{if } \observef{e}{\delegcontext}{p} =  \circ \text{ and } \\
                                      & \qquad \observef{e'}{\delegcontext}{p} =  \circ \\
                                      \assume{\observef{e}{\delegcontext}{p}}{\observef{e'}{\delegcontext}{p}} & \text{otherwise}
                                      \end{cases} \\
                                     \observef{\where{e}{v}}{\delegcontext}{p} & = & \observef{e}{\delegcontext}{p}\\
                                     
\end{array}
\]
\caption{Observation function for intermediate FLAC terms.}
\end{subfigure}
\begin{subfigure}{\textwidth}
\[
\begin{array}{l l  l}
                                    \observef{[e]}{Π}{p} & = & [\observef{e}{Π}{p}]\\
                                    \observef{[e;e']}{Π}{p} & = & \begin{cases}
                                                                 [\observef{e}{Π}{p}]  &\text{if }  \observef{e}{Π}{p} = \observef{e'}{Π}{p} \\
                                                                 [\observef{e}{Π}{p}; \observef{e'}{Π}{p} ]  & \text{otherwise}
                                                                 \end{cases} \\[1.1em] 
                                    \observef{[e;e']·t}{Π}{p} & = &\observef{\observef{[e;e']}{Π}{p}·t}{Π}{p}
\end{array}
  \]
\caption{Observation function for traces.}
 \end{subfigure}
}
\caption{Observation function definitions}
\label{fig:observe}
\end{figure*}

The portion of a FLAC program observable to an attacker is formally defined by an
observation function, $\observef{e}{Π}{p}$, defined in Figure~\ref{fig:observe}.
An expression $e$ is observable by principal $p$ in delegation context $Π$ depending on the
authority of $p$ relative to the protected terms in $e$. The projection $π$ specifies 
whether to consider the confidentiality of protected terms ($π=→$) or the integrity ($π=←$).
For example, sealed values such as $\returng{ℓ}{w}$ are observable by principal $p$ if
$p$ acts for $ℓ$, otherwise they are erased.  Protected contexts like $\octx{ℓ}{e}$ 
are treated similarly.  To simplify our proofs, we collapse terms whose subterms 
have been erased.  For example $⟨∘,∘⟩$ is collapsed to $∘$.

Erasing delegations in the context of "where" terms essentially hides the
delegations that justify a value's flow from the attacker.
This is consistent with the idea that attackers may learn _implicit_ information 
from flows that introduce new delegations. In other words, by observing differences in
outputs the attacker may infer the existence of secret delegations, but these delegations are
not _explicit_ in the output.  By contrast, these delegations are explicit in the output of
DFLATE~\cite{dflate} programs, modeling values that carry certified justifications of why their
flow was authorized by the program. 

The observability of trace elements is defined in Figure~\ref{fig:observe} in terms of 
the observability function $\observe$ for FLAC terms. We also lift the observability of
trace elements to traces in a natural way.  Note
that duplicate entries (which may occur due to evaluation in protected
contexts) are removed.  Deduplication avoids unintentional sensitivity to the number
of steps taken in unobservable contexts.
                                  
We now can define _trace indistinguishability_. Two traces $t$ and $t'$ are indistinguishable
to a principal $p^{π}$ in delegation context $Π$ if the observable elements of $t$ and $t'$ 
are equal.
\begin{definition}[Trace indistinguishability]
  \[
  \begin{array}{rcl}
  t ≈^{Π}_{p^{π}} t'  &\overset{Δ}{\iff} &  \observef{t}{Π}{p} = \observef{t'}{Π}{p} 
  \end{array}
 \]
 \end{definition}
\subsection{Bracketed semantics}

Our noninterference proof is based on the bracketed semantics approach
used by Pottier and Simonet~\cite{ps03}. This approach extends FLAC
with _bracketed expressions_ which represent two executions of a
program, and allows us to reason about noninterference in FLAC, a
_2-safety hyperproperty_~\cite{cs08}, as _type safety_ in the extended
language.  Any two FLAC terms $e₁$ and $e₂$ in the unbracketed
language can be combined into a term $\bracket{e₁}{e₂}$ in the
bracketed language.  For any term in the bracketed language, a
projection function $\outproj{·}{i}$, for $i ∈ \{1 , 2\}$, extracts
the term from each execution.  Specifically,
$\outproj{\bracket{e₁}{e₂}}{i} = eᵢ$ and projections are homomorphic
on other expressions; for example $\outproj{\lamc{x}{τ}{\pc}{e}}{i} =
\lamc{x}{τ}{\pc}{\outproj{e}{i}}$.  Since each element of the trace defined by evaluation 
of $e \stepstot{t} e'$ is an intermediate FLAC term, we define projections
on traces $\outproj{t}{i}$ as the sequence of projected terms
$\outproj{t[0]}{i}, ..., \outproj{t[n]}{i}$ where $\outproj{t[0]}{i} = \outproj{e}{i}$
and $\outproj{t[n]}{i} = \outproj{e'}{i}$. 

\begin{figure*}
{\small
\begin{mathpar}
\text{\underline{Syntax}} \hfill \\
\begin{array}{rcl}
  w &::=& \dots \sep \bracket{w}{w} \\[0.4em]
  e  &::=& \dots \sep \bracket{e}{e}
\end{array}\hfill \\

\text{\underline{Evaluation rules}} \hfill \\
    \berule{B-Step}{e_i \stepsone e'_i \\ e'_j =e_j \\ \{i, j \} = \{1, 2\} }{\bracket{e_1}{e_2}}{\bracket{e'_1}{e'_2}}{}

    \berule*{B-App}{}{\bracket{w₁}{w₂}~w}{\bracket{w₁~\outproj{w}{1}}{w₂~\outproj{w}{2}}}{}

    \berule*{B-TApp}{}{\bracket{w}{w'}~τ}{\bracket{w~τ}{w'~τ}}{}

    \berule*{B-UnPair}{}{\proj{i}{\bracket{\pair{w_{11}}{w_{12}}}{\pair{w_{21}}{w_{22}}}}}{\bracket{w_{1i}}{w_{2i}}}{}

    \berule*{B-BindM}{}{\bind{x}{\bracket{w}{w'}}{e}}{\bracket{\bind{x}{w}{\outproj{e}{1}}}{\bind{x}{w'}{\outproj{e}{2}}}}{}

    \berule*{B-Case}{ \{i, j\} = \{1, 2 \}}{\casexp{\bracket{w}{w'}}{x}{e_1}{e_2}\\}{\bracket{\casexp{w}{x}{\outproj{e_1}{1}}{\outproj{e_2}{1}}}{\casexp{w'}{x}{\outproj{e_1}{2}}{\outproj{e_2}{2}}}}{} 

    \berule*{B-Assume}{} 
           {\assume{\bracket{w}{w'}}{e}}{\bracket{\assume{w}{\outproj{e}{1}}}{\assume{w'}{\outproj{e}{2}}}}{}
     \hfill      
     \\\\
\text{\underline{Typing rules}} \hfill \\
    \Rule{Bracket}
	 {
	   \rflowjudge{\delegcontext}{(H^\pi \sqcup \pc^\pi)}{{\pc'^\pi}} \\
           \TValP{Γ;\pc'}{e_1}{\tau} \\
	   \TValP{Γ;\pc'}{e_2}{\tau} \\
	   \protjudge{\delegcontext}{H^{\pi}}{\tau^\pi}
           }
	 {\TValGpc{\bracket{e_1}{e_2}}{\tau}}

  \Rule{Bracket-Values}
         {
           \protjudge{\delegcontext}{H^\pi}{\tau^\pi} \\
           \TValGpc{w_1}{\tau} \\
           \TValGpc{w_2}{\tau}
         }
         {\TValGpc{\bracket{w_1}{w_2}}{\tau}}
   \\\\
\text{\underline{Observation function}} \hfill \\
  \begin{array}{l l  l}
    \observef{\bracket{e_1}{e_2}}{\delegcontext}{p} & = &
               \begin{cases}
                    \circ & \observef{e_i}{\delegcontext}{p} = \circ \\
                    \bracket{\observef{e_1}{\delegcontext}{p}}{\observef{e_2}{\delegcontext}{p}} & \text{otherwise}
                \end{cases}
  \end{array}
\hfill
\end{mathpar}
}
\caption{Extensions for bracketed semantics}
\label{fig:brackets}
\end{figure*}

Figure~\ref{fig:brackets} presents the bracket extensions for FLAC.
Where-values $w$ and expressions $e$ may be bracketed.
\ruleref{B-Step} evaluates expressions inside of brackets.
The remaining \text{B-*}
evaluation rules propagate brackets out of subexpressions.  
Note that projections are applied as the scope of brackets expand so that brackets can
never become nested.

The bracketed evaluation rules are designed to ensure
bracketed terms do not get stuck unless the unbracketed terms do.  We verify this with the
following lemma. Informally, if the bracketed term is stuck then either left or right execution is also stuck.
\begin{restatable}[Stuck expressions]{lemma}{stuck}
\label{lemma:stuck}
	If  $e$ gets stuck then $\outproj{e}{i}$ is stuck for some $i \in \{1,2\}$.
\end{restatable}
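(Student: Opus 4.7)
The plan is to prove the contrapositive by structural induction on $e$: if both $\outproj{e}{1}$ and $\outproj{e}{2}$ fail to be stuck (i.e., each is either a where-value or can take a step), then $e$ is not stuck either. The atomic base cases (variables, $\void$, $\delexp{p}{q}$) project homomorphically, so stuckness coincides directly between $e$ and its projections.

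The central inductive case is a top-level bracket $e = \bracket{e_1}{e_2}$. Since projections strip the outer bracket, $\outproj{e}{i} = e_i$, so the hypothesis gives that each $e_i$ is either a where-value or can step. If either side can step, \ruleref{B-Step} allows $e$ to step as well; if both sides are where-values, then $e$ itself is a where-value by the extended $w$ grammar of Figure~\ref{fig:brackets}. For each elimination form (application, type application, case, pair projection, bind, assume), I would decompose $e = E[e_r]$ via an evaluation context and analyze the head $e_r$. If the head is not a fully evaluated where-value or bracket, the inductive hypothesis applied to the head yields a step in one projection, which lifts to $e$ via \ruleref{E-Eval}. If the head is a where-value of non-bracket form, the matching \text{W-*} rule fires; if it is a top-level bracket of where-values, the matching \text{B-*} rule fires; and if it is a value of the correct shape (lambda, pair, injection, sealed value, delegation), the standard \text{E-*} rule fires. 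The remaining possibility is that the head is a fully evaluated value of incompatible shape (for instance, a unit applied as a function), in which case both projections inherit the same stuck redex, contradicting the hypothesis that neither projection is stuck.

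The main obstacle is ensuring the head-case analysis is exhaustive for every elimination form: one must verify that the combination of \ruleref{E-Eval} propagation, the \text{W-*} rules propagating where clauses outward, and the \text{B-*} rules propagating brackets outward leaves no intermediate ``mixed'' configuration blocked while both projections can still step. In particular, a shape such as a where clause enclosing a bracket of where-values in the head position must first reduce via the \text{W-*} rule that pushes the outer where past the elimination, exposing the bracket for the corresponding \text{B-*} rule at the next step. This is a straightforward but lengthy enumeration over each eliminator in the grammar of $E$, and it is where the bookkeeping added in Figure~\ref{fig:whererules} is essential: without the \text{W-*} rules, the induction would fail at precisely these nested configurations, which is the source of the stuck-term bug noted in the original FLAC formalization.
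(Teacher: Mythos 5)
Your plan is, at bottom, the same proof the paper gives---structural induction on $e$ with a case analysis on the head of each eliminator, using the homomorphism of projection to handle ill-shaped redexes and the W-* and B-* rules to handle the mixed where/bracket configurations---merely contraposed. The contraposition is where your sketch has a concrete hole. You write that, when the head is not a fully evaluated where-value or bracket, ``the inductive hypothesis applied to the head yields a step in one projection, which lifts to $e$ via \ruleref{E-Eval}.'' That inference is invalid: a step in one projection of a subterm does not lift to a step of that subterm. For example, for the head $e_h = \bind{x}{\bracket{\returng{\ell}{()}}{y}}{e'}$ with $y$ a variable, $\outproj{e_h}{1}$ steps by \ruleref{E-BindM}, yet $e_h$ itself is stuck: the scrutinee $\bracket{\returng{\ell}{()}}{y}$ is not of the shape $\bracket{w}{w'}$ required by \ruleref{B-BindM} (a variable is not a $w$), and neither component steps, so \ruleref{B-Step} does not apply either. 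What the contraposed inductive hypothesis actually delivers is that the head \emph{itself} is not stuck---but only once you have discharged its premise, namely that \emph{both} projections of the head are non-stuck. That premise does not come for free; it requires an inversion on how each $\outproj{e}{i}$ manages to take a step (by a top-level rule, which forces the projected head to be a value or where-value, or by \ruleref{E-Eval} in the head or argument position, which forces it to step). With that inversion supplied, the argument closes: the IH gives that $e_h$ is a where-value or steps, a step of $e_h$ itself lifts by \ruleref{E-Eval}, and the where-value subcases proceed as you describe.

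This extra inversion obligation is exactly what contraposition costs you and what the paper's direct formulation avoids: assuming $e$ is stuck immediately tells you which rules fail to apply, hence what shape the head must (not) have, and the paper never needs to transport non-stuckness from projections of $e$ down to projections of its subterms. The rest of your sketch---the atomic base cases, the top-level bracket case via \ruleref{B-Step} and the extended $w$ grammar, and the observation that a configuration such as $(\where{\bracket{w_1}{w_2}}{v})~w'$ is unblocked by \ruleref{W-App} followed by \ruleref{B-App} under the evaluation context $\where{E}{v}$---is sound and matches the paper's reasoning.
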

\begin{proof}
	By induction on the structure of $e$. See Appendix~\ref{app:proofs} for complete proof.
\end{proof}

The bracketed type system is parameterized by a fixed principal $H$
which specifies which policies are considered secret or untrusted,
and an authority projection $π$, depending on whether the type system
is verifying confidentiality ($→$) or integrity ($←$).
The typing rules \ruleref{Bracket} and \ruleref{Bracket-Values} illustrate the primary
purpose of the bracketed semantics: to link distinguishable
evaluations of an expression to the expression's type.
\ruleref{Bracket} requires that the bracketed expressions $e₁$ and
$e₂$ are typable at a $\pc'$ that protects $H^{π} ⊔ \pc^{π}$ and
$\pc'^{π}$ and have a type $τ^{π}$ protects
$H^{π}$. \ruleref{Bracket-Values} relaxes the restriction on $\pc$ for
where-values.
Some of the bracketed evaluation rules are necessary for completeness with respect to the unbracketed semantics,
but are not actually necessary for well typed programs.  Specifically \ruleref{B-Case} and \ruleref{B-Assume}
step on expressions that are never well typed since types of the form $\sumtype{τ}{τ'}$ and $\aftype{p}{q}$ 
cannot protect any instantiation of $H$.

Following Pottier and Simonet~\cite{ps03}, our first result on the bracketed semantics is that they are sound and complete with 
respect to the unbracketed semantics. 
By soundness, we mean that given a step in the bracketed execution, then at least one of the left or right projections take a step such that they are in relation with the bracketed execution. By completeness, we mean that given a left and right execution, we can construct a corresponding bracketed execution.
\begin{lemma}[Soundness]\label{lemma:sound}
	If  $e \stepsone e'$ then
        $\outproj{e}{k} \stepsto \outproj{e'}{k}$ for $k ∈ \{1,2\}$.
 \end{lemma}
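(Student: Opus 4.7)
I plan to proceed by induction on the derivation of $e \stepsone e'$, using Lemma~\ref{lemma:stuck}'s companion observation that the projection function is homomorphic on every term former except brackets, where it selects the appropriate component. The key preliminary step is to establish, by a straightforward induction on evaluation contexts, that projection distributes over evaluation contexts: $\outproj{E[e]}{k} = \outproj{E}{k}[\outproj{e}{k}]$. With that in hand, I can reduce the case \ruleref{E-Eval} to the inductive hypothesis applied to the hole.

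The remaining cases split into three qualitatively different groups. First, the bracket-congruence rule \ruleref{B-Step} takes one step in component $i$ while leaving component $j$ unchanged. By the inductive hypothesis applied to $e_i \stepsone e'_i$, we get $\outproj{e}{i} \stepsto \outproj{e'}{i}$; since $e_j = e'_j$, the other projection takes zero steps. Second, the bracket-propagation rules \ruleref{B-App}, \ruleref{B-TApp}, \ruleref{B-UnPair}, \ruleref{B-BindM}, \ruleref{B-Case}, and \ruleref{B-Assume} are precisely designed so that each projection of the left-hand side equals the corresponding projection of the right-hand side syntactically, so zero steps suffice on both sides. For instance, for \ruleref{B-App}, $\outproj{\bracket{w_1}{w_2}\,w}{k} = w_k\,\outproj{w}{k} = \outproj{\bracket{w_1\,\outproj{w}{1}}{w_2\,\outproj{w}{2}}}{k}$. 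Third, the ordinary reduction rules (\ruleref{E-App*}, \ruleref{E-TApp*}, \ruleref{E-Unpair}, \ruleref{E-Case}, \ruleref{E-BindM*}, \ruleref{E-Assume}, \ruleref{E-UnitM}, and the \ruleref{W-*} propagation rules from Figure~\ref{fig:whererules}) project to instances of the same rule in the unbracketed semantics, giving a single step in each projection. For the substitution-based rules, I would use a standard substitution lemma $\outproj{\subst{e}{x}{w}}{k} = \subst{\outproj{e}{k}}{x}{\outproj{w}{k}}$ (similarly for type substitution), easily proved by induction on $e$.

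The main obstacle is the bookkeeping for the bracket-propagation rules: one must verify carefully that the projections of the right-hand sides really collapse to the projections of the left-hand sides without any reduction, which depends on the decision to eagerly apply $\outproj{e}{i}$ to subterms placed outside the newly created brackets. A closely related subtlety is that \ruleref{B-Case} and \ruleref{B-Assume} can fire even though they would never occur on well-typed bracketed terms (since $\sumtype{\tau}{\tau'}$ and $\aftype{p}{q}$ cannot protect $H^{\pi}$); soundness must nevertheless hold untyped, so I would treat these cases by the same syntactic-equality argument as the other bracket-propagation rules. Finally, the \ruleref{W-*} rules reduce to the observation that "where" binders are transparent under projection (as they are wrapped identically around both components), so projecting preserves the redex structure and each projection takes a single step.
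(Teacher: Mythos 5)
Your proposal is correct and follows essentially the same route as the paper: induction on the evaluation derivation, observing that every bracket rule except \ruleref{B-Step} leaves both projections syntactically unchanged, that \ruleref{B-Step} advances exactly one projection, and that the remaining rules project to single steps of the unbracketed semantics. The paper's proof is just a terser version of yours; the auxiliary facts you make explicit (projection commuting with evaluation contexts and with substitution) are exactly the bookkeeping the paper leaves implicit.
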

\begin{proof}
  By induction on the evaluation of $e$.  Observe that all bracketed rules in Figure~\ref{fig:brackets} except \ruleref{B-Step} only expand brackets, so 
$\outproj{e}{k} = \outproj{e'}{k}$ for $k ∈ \{1,2\}$. For \ruleref{B-Step}, $\outproj{e}{i} \stepsone \outproj{e'}{i}$  and $\outproj{e}{j} = \outproj{e'}{j}$ .
\end{proof}

\begin{lemma}[Completeness]
  \label{lemma:complete}
  If $\bone{e} \stepsto w_1 $ and $\btwo{e} \stepsto w_2$, then there exists some $w$ such that $e \stepsto w$ 
and $\outproj{w}{i} = wᵢ$ for $i ∈ \{1,2\}$.
\end{lemma}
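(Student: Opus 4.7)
The plan is to prove completeness by induction, mirroring the structure of the soundness lemma (Lemma~\ref{lemma:sound}) in the opposite direction. Given that $\bone{e}$ reduces to $w_1$ in some number $n_1$ of steps and $\btwo{e}$ reduces to $w_2$ in $n_2$ steps, a natural measure is $n_1 + n_2$. However, because the bracket-propagation rules (\ruleref{B-App}, \ruleref{B-TApp}, \ruleref{B-UnPair}, \ruleref{B-BindM}, \ruleref{B-Case}, \ruleref{B-Assume}) leave both projections unchanged, the correct measure is a lexicographic pair $(n_1 + n_2,\;\mu(e))$, where $\mu(e)$ counts the number of bracket occurrences in $e$ that sit beneath a non-bracket constructor and are therefore candidates for propagation outward.

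In the base case $n_1 = n_2 = 0$ and $\mu(e) = 0$, so $\bone{e} = w_1$ and $\btwo{e} = w_2$ and $e$ has no brackets that could still propagate. A structural induction on $e$ then shows that $e$ itself must be a where-value in the extended language (which admits bracketed where-values as values), so we may take $w = e$. For the inductive step, at least one of the measures is positive, so $e$ is not yet fully evaluated. Since neither $\outproj{e}{k}$ is stuck, the contrapositive of Lemma~\ref{lemma:stuck} yields $e \stepsone e'$ for some $e'$. I would then case-split on the rule applied. An ordinary reduction corresponds, by determinism of the unbracketed semantics together with Lemma~\ref{lemma:sound}, to a single step of each projection, strictly decreasing $n_1 + n_2$. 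A \ruleref{B-Step} advances exactly one projection and leaves the other fixed, again decreasing $n_1 + n_2$. A bracket-propagation rule keeps both projections unchanged but strictly decreases $\mu(e)$. In every case the induction hypothesis applies to $e'$, yielding $e' \stepsto w$ with $\outproj{w}{i} = w_i$, and prepending $e \stepsone e'$ completes the argument.

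The principal obstacle is that a naive induction on $n_1 + n_2$ alone does not close: bracket propagation preserves both projections, so it contributes no decrease in the number of steps remaining in either projected execution. The lexicographic measure above addresses this, but $\mu$ must be defined carefully so that each of the B-propagation rules is guaranteed to reduce it. A secondary subtlety is that the step $e \stepsone e'$ is not uniquely determined when multiple bracketed subterms are simultaneously ready to step, but any choice consistent with the two given projected reductions will suffice, because we only require the existence of a completing $w$. Finally, aligning one step of $e$ with a single step of each projection relies on determinism of the non-bracketed semantics, which follows from a routine inspection of the evaluation contexts and reduction rules in Figure~\ref{fig:semantics}.
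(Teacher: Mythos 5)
Your proposal is correct and rests on exactly the same ingredients as the paper's proof---Lemma~\ref{lemma:sound}, Lemma~\ref{lemma:stuck}, and the observation that the bracket-propagation rules only move brackets outward and hence can fire only finitely often. The only difference is presentational: where the paper argues by contradiction that $e$ can neither diverge nor get stuck, you package the same termination facts into an explicit lexicographic measure, which is arguably a cleaner way to make the induction go through.
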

\begin{proof} 
  \OA{XXX: Should we address this comment?}
  \OA{Reviewer 2: 
In the proof of [Completeness], you implicitly perform a case analysis on
the fact that a program either terminates with a value, or gets
stuck, or diverges. Please make this reasoning explicit in your
proof, to ease the understanding of your proof sketch. Why don't you
perform an induction on the structure of e, as you did in the proof
of Lemma 2?
}    
  Assume $\bone{e} \stepsto w_1$ and $\btwo{e} \stepsto w_2$.  The extended set
of rules in Figure~\ref{fig:brackets} always move brackets out of subterms, and
therefore can only be applied a finite number of times.  Therefore, by
Lemma~\ref{lemma:sound}, if $e$ diverges, either $\bone{e}$ or $\btwo{e}$
diverge; this contradicts our assumption.

Furthermore, by Lemma~\ref{lemma:stuck}, if the evaluation of $e$ gets stuck, either
$\bone{e}$ or $\btwo{e}$ gets stuck. Therefore, since we assumed $\bgeti{e} \stepsto w_i$, then 
$e$ must terminate, so $e \stepsto w$. 
Finally, by induction on the number of evaluation steps and Lemma~\ref{lemma:sound}, $\outproj{w}{i} = wᵢ$ for $i ∈ \{1,2\}$. 
\end{proof}

Using the soundness and completeness results, we can relate properties
of programs executed under the bracketed semantics to executions under
the non-bracketed semantics.  In particular, since bracketed
expressions represent distinguishable executions of a program, and may
only have types that protect $H^{→}$ (or $H^{←}$), it is important to
establish that the type of a FLAC expression is preserved by
evaluation. Lemma~\ref{lemma:subjred} states this formally. This lemma
helps us reason about whether an expression is bracketed based 
on its type.

\begin{restatable}[Subject Reduction]{lemma}{subjred}
\label{lemma:subjred}
  Let  \TValGpc{e}{τ}.  If $e \stepsone e'$ then \TValGpc{e'}{τ}. 
\end{restatable}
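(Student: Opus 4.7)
The plan is to prove this by induction on the derivation of $e \stepsone e'$, proceeding by case analysis on the reduction rule applied. For the evaluation-in-context rule \ruleref{E-Eval}, a standard replacement lemma (itself a side induction on the structure of the context $E$) reduces the case to the redex, so the real work is checking that each primitive reduction preserves typing. Several cases, namely \ruleref{E-App}, \ruleref{E-TApp}, \ruleref{E-Unpair}, \ruleref{E-Case}, and \ruleref{E-BindM}, are standard and will need two auxiliary substitution lemmas: a term substitution lemma of the shape $\TValP{\Pi;\Gamma,x{:}\tau';\pc}{e}{\tau} \land \TValP{\Pi;\Gamma;\pc}{w}{\tau'} \Rightarrow \TValP{\Pi;\Gamma;\pc}{\subst{e}{x}{w}}{\tau}$, and a type substitution lemma for \ruleref{E-TApp}. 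For \ruleref{E-BindM} one must also observe that since $\returng{\ell}{w}$ types only via \ruleref{Sealed} (no $\pc$ premise), the raised program counter $\pc \sqcup \ell$ required by \ruleref{BindM} is exactly what is needed to re-type the substituted body. The case \ruleref{E-UnitM} is immediate: from $\TValGpc{\return{\ell}{w}}{\says{\ell}{\tau}}$ inverted via \ruleref{UnitM} we obtain $\TValGpc{w}{\tau}$ and then apply \ruleref{Sealed}.

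The interesting case is \ruleref{E-Assume}, where $\assume{\delexp{p}{q}}{e}$ steps to $\where{e}{\delexp{p}{q}}$. Inverting \ruleref{Assume} yields exactly the second and third premises of \ruleref{Where} ($\rafjudge{\Pi}{\voice{\confid{p}}}{\voice{\confid{q}}}$ and the typing of the body under the extended $\Pi$), together with $\rafjudge{\Pi}{\pc}{\voice{q}}$. The \ruleref{Where} rule instead demands $\rafjudge{\Pi}{\pcmost}{\voice{q}}$, so I will appeal to the design choice that $\pcmost$ is at least as trusted as any $\pc$ arising during typing of the source program (concretely, $\pcmost = \top^{\leftarrow}$ dominates every $\pc$), and to monotonicity of acts-for: from $\rafjudge{\Pi}{\pcmost}{\pc}$ and $\rafjudge{\Pi}{\pc}{\voice{q}}$ we get the needed premise by \ruleref{R-Trans}.

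The main obstacle, and the case that most needs care, is the family of \ruleref{W-*} rules from Figure~\ref{fig:whererules} that propagate a $\texttt{where}$ clause out of a surrounding elimination form. For each such rule I must show that if the left-hand side is typable, then pushing the $\texttt{where}$ outward still yields a well-typed term. Concretely, for \ruleref{W-App}, from $\TValGpc{(\where{w}{v})\,e}{\tau_2}$ one inverts \ruleref{App} and then \ruleref{Where}, obtaining a derivation of $w$ under $\Pi,\langle p\succcurlyeq q\rangle$; a weakening lemma (arguing that adding a delegation to $\Pi$ preserves all existing $\rafjudge{\cdot}{\cdot}{\cdot}$ and $\rflowjudge{\cdot}{\cdot}{\cdot}$ judgments, hence all typing derivations) lets me re-derive the typing of $e$ in the extended context, apply \ruleref{App} there, and then reapply \ruleref{Where}. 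The same weakening-then-rewrap pattern handles \ruleref{W-TApp}, \ruleref{W-UnPair}, \ruleref{W-Case}, \ruleref{W-BindM}, and \ruleref{W-Assume}; the premises of \ruleref{Where} themselves are inherited unchanged because $\Pi$ only grows.

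Finally, I need to cover the extensions. For the starred rules \ruleref{E-App*}, \ruleref{E-TApp*}, \ruleref{E-BindM*} and \ruleref{O-Ctx} of Figure~\ref{fig:octx}, preservation uses the new typing rule \ruleref{Ctx}: the substituted body type-checks at the annotated protection label (the $\pc'$ of the abstraction or the $\ell$ of the sealed value), and the required $\rflowjudge{\Pi}{\pc}{\pc'}$ premise is exactly what \ruleref{App}/\ruleref{TApp}/\ruleref{BindM} already provided; \ruleref{O-Ctx} just discards the wrapper on a fully-evaluated $w$, which still types at $\tau$ by \ruleref{Ctx}'s premise. For the bracketed rules of Figure~\ref{fig:brackets}, \ruleref{B-Step} follows immediately by the induction hypothesis applied to the stepping side together with \ruleref{Bracket-Values} or \ruleref{Bracket}; the remaining $\text{B-*}$ rules preserve typing because the projection function $\outproj{\cdot}{i}$ is type-preserving and the protection premise $\protjudge{\Pi}{H^{\pi}}{\tau^{\pi}}$ is stable under the structural rearrangements these rules perform. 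With these pieces, each reduction case closes, completing the induction.
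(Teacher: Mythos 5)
Your overall strategy—induction on the step derivation, substitution lemmas for the $\beta$-like redexes, a delegation-context weakening lemma for the where-propagation rules of Figure~\ref{fig:whererules}, and direct inversion for \ruleref{E-Assume}—is exactly the decomposition the paper uses. However, two of your steps do not close as written. First, for the unstarred \ruleref{E-App} and \ruleref{E-BindM} cases, your substitution lemma only yields the contractum well-typed at the \emph{raised} program counter ($\pc'$ for the lambda body, $\pc \sqcup \ell$ for the bind body), whereas the statement demands typability at the original $\pc$. You need an additional ``PC reduction'' lemma (the paper's Lemma~\ref{lemma:pcred}): if $\TValGpc{e}{\tau}$ and $\rflowjudge{\Pi}{\pc''}{\pc}$, then $\TValP{\Gamma;\pc''}{e}{\tau}$. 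That is a routine but separate induction over the typing rules; it is not a consequence of substitution, and the premises $\rflowjudge{\Pi}{\pc}{\pc'}$ of \ruleref{App} and $\ell \sqsubseteq \pc \sqcup \ell$ are exactly what make it applicable.

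Second, your bridge from \ruleref{Assume} to \ruleref{Where} relies on $\rafjudge{\Pi}{\pcmost}{\pc}$, i.e.\ that $\top^{\leftarrow}$ acts for an arbitrary $\pc$. That judgment is false whenever $\pc$ carries confidentiality authority (e.g.\ $\pc = \Alice$), since $(\top^{\leftarrow})^{\rightarrow} \equiv \bot$, so the appeal to \ruleref{R-Trans} through $\pc$ does not go through. The premise you actually need, $\rafjudge{\Pi}{\pcmost}{\voice{q}}$, nevertheless holds outright: $\voice{q}$ is always a pure integrity principal, and $\top^{\leftarrow}$ acts for every integrity principal by \ruleref{Top} and \ruleref{Proj}, with no detour through $\pc$ required. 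A smaller point: \ruleref{B-Case} and \ruleref{B-Assume} are vacuous rather than type-preserving—sum and acts-for types satisfy no protection judgment, so their left-hand sides are never well-typed under \ruleref{Bracket-Values}—so ``stability of the protection premise'' is not the right justification there. With these repairs your argument coincides with the paper's proof.
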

\begin{proof}
By induction on the evaluation of $e$. See Appendix~\ref{app:proofs} 
for proof and supporting lemmas.
\end{proof}

\begin{restatable}[Progress]{lemma}{prog}
  \label{lemma:prog}
If  $\TVal{\Pi;\emptyset;\pc}{e}{τ}$, then either $e \stepsone e'$ or $e$ is a where value.
\end{restatable}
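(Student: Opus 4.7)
The plan is to proceed by induction on the typing derivation of $\TVal{\Pi;\emptyset;\pc}{e}{\tau}$. The \ruleref{Var} case is vacuous since the variable context is empty, and the value-forming cases (\ruleref{Unit}, \ruleref{Del}, \ruleref{Lam}, \ruleref{TLam}, \ruleref{Sealed}, \ruleref{Where}) either yield that $e$ is itself a value $v$, and hence a where value, or in the \ruleref{Where} case reduce to the subcase described below. For every elimination case and every non-value introduction case, I first apply the induction hypothesis to each well-typed immediate subterm and split on whether it steps or is a where value; when a subterm steps, I lift the step to the whole term via \ruleref{E-Eval} using the appropriate evaluation context $E$ from Figure~\ref{fig:semantics}.

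The heart of the proof is then a canonical forms lemma I would state and prove first: for any closed where value $w$ with $\TVal{\Pi;\emptyset;\pc}{w}{\tau}$, the innermost value $v$ inside the (possibly empty) chain of \texttt{where} clauses has shape determined by $\tau$. Concretely, at $\func{\tau_1}{\pc'}{\tau_2}$ the innermost value is a $\lambda$-abstraction, at $\tfuncpc{X}{\pc'}{\tau}$ a type abstraction, at $\aftype{p}{q}$ the singleton $\delexp{p}{q}$, at $\says{\ell}{\tau'}$ a sealed value $\returng{\ell}{v'}$, at $\prodtype{\tau_1}{\tau_2}$ a pair, at $\sumtype{\tau_1}{\tau_2}$ an injection, and at $\voidtype$ the unit. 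This is proved by induction on the derivation of $w$'s type, inverting \ruleref{Where} to peel off each \texttt{where} layer while preserving the type.

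Armed with canonical forms, each elimination case splits into two subcases. If the scrutinee is a bare value $v$, the relevant rule from Figure~\ref{fig:semantics} fires directly — \ruleref{E-App}, \ruleref{E-TApp}, \ruleref{E-Unpair}, \ruleref{E-Case}, \ruleref{E-BindM}, or \ruleref{E-Assume}. If it has the form $\where{w'}{v'}$, the corresponding propagation rule from Figure~\ref{fig:whererules} fires — \ruleref{W-App}, \ruleref{W-TApp}, \ruleref{W-UnPair}, \ruleref{W-Case}, \ruleref{W-BindM}, or \ruleref{W-Assume}. The introduction rule \ruleref{UnitM} is similar: if its subexpression is a where value, \ruleref{E-UnitM} steps it to a sealed form. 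For \ruleref{Pair} and \ruleref{Inj}, if all subterms are where values then the whole term is a where value, and otherwise a subterm steps. For the \ruleref{Where} case $\where{e}{v}$, the induction hypothesis on the (still closed) subterm $e$ either gives a step, lifted through the evaluation context $\where{[\cdot]}{v}$, or gives that $e$ is a where value $w$, in which case $\where{w}{v}$ is itself a where value by the syntactic definition.

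The main obstacle I anticipate is getting the canonical forms lemma right in the presence of arbitrarily deep \texttt{where} nesting: the inversion must preserve the typing of the innermost value while discarding the accumulated delegation context, and must apply uniformly to every type constructor. A secondary subtlety is verifying that Figure~\ref{fig:whererules} is complete in the sense that every elimination position that can be filled by a where value has a matching W-rule; a quick syntactic check against the elimination forms in Figure~\ref{fig:syntax} confirms this, which is exactly the footnote's claim that these rules eliminate the stuck terms present in the original formulation.
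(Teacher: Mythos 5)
Your proposal is correct and follows essentially the same route as the paper's proof: induction on the typing derivation, lifting subterm steps through \ruleref{E-Eval}, and in each elimination case splitting on whether the evaluated position is a bare value (so an \textsc{E-*} rule fires) or a \texttt{where}-wrapped value (so a \textsc{W-*} rule from Figure~\ref{fig:whererules} fires). The only difference is presentational — you factor the shape analysis into an explicit canonical forms lemma, whereas the paper inlines the same inversion reasoning ("from the well-typedness of $e$, we have that $e = \pair{w_1}{w_2}$ or $e = \where{\pair{w_1}{w_2}}{v}$") in each case.
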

\begin{proof} 
  By induction on the derivation of $\TVal{\Pi;\emptyset;\pc}{e}{τ}$. See Appendix~\ref{app:proofs}
  for complete proof.
\end{proof}

\subsection{Delegation Compartmentalization and Invariance}
\label{sec:dcdi}

FLAC programs dynamically extend trust relationships, enabling new flows of
information.  Nevertheless, well-typed programs have end-to-end semantic
properties that enforce strong information security.  These properties derive
primarily from FLAC's control of the delegation context.  The
\ruleref{Assume} rule ensures that only high-integrity proofs of
authorization can extend the delegation context, and furthermore that
such extensions occur only in high-integrity, lexically-scoped contexts.

That low-integrity contexts cannot extend the delegation
context turns out to be a crucial property.  This property allows us to state a
useful invariant about the evaluation of FLAC programs.  Recall that "assume"
terms evaluate to "where" terms in the FLAC semantics.  Thus, FLAC
programs typically compute values containing a hierarchy of nested "where" terms.  The
terms record the values whose types were used to extend the delegation
context during type checking.

For a well-typed FLAC program, we can prove that certain trust relationships
could not have been added by the program. Characterizing these
relationships requires a concept of the minimal authority required to
cause one principal to act for another.  Although similar, this idea is distinct
from the voice of a principal.  Consider the relationship between $a$ and $a ∧
b$.  The voice of $a ∧ b$, $∇(a ∧ b)$, is sufficient integrity to add a
delegation $a ∧ b$ to $a$ so that $a ≽ a ∧ b$. Alternatively, having only the
integrity of $∇(b)$ is also sufficient to add a delegation $a ≽ b$, which also
results in $a ≽ a ∧ b$.

In our theorems, we need to be able to characterize what flows might become
enabled by a program.  For instance, if we want to reason about whether 
$a ≽ a ∧ b$ in some scope (and delegation context) of a program, 
we need to identify the _minimal necessary integrity_ $∇(b)$ 
that can close the gap in authority between 
the pair of principals $a$ and $a ∧ b$.  The following definitions are in
service of this goal.

The first definition formalizes the idea that two principals are considered
equivalent in a given context if they act for each other.
\begin{definition}[Principal Equivalence]
  \label{def:prineq}
  We say that two principals $p$ and $q$ are _equivalent_ in $Π$,
  denoted $\reqjudge{\Pi}{p}{q}$, if
  $$\rafjudge{\Pi}{p}{q} \text{ and } \rafjudge{\Pi}{q}{p}.$$
\end{definition}

Next, we define the _factorization_ of two principals in a given context.  For
two principals, $p$ and $q$, their factorization involves representing $q$ as
the conjunction of two principals $q_s ∧ q_d$ such that $p ≽ q_s$ in the desired
context.  
Note that $p$ need not act for $q_d$, and that 
factorizations always exists. For example, the factorization $q_d=q$ and $q_s=⊥$
is valid for any $p$, $q$, and $Π$.
\begin{definition}[Factorization]
  \label{def:factorize}
A _$Π$-factorization_ of an ordered pair of principals $(p, q)$
is a tuple $(p, q_s, q_d)$ such that
$\reqjudge{\Pi}{q}{q_s ∧ q_d}$ and $\rafjudge{\Pi}{p}{q_s}$.
\end{definition}

Factorization lets us split $q$'s authority into a portion ($q_{s}$) that delegates to $p$, 
and a portion ($q_{d}$) that may or may not. A minimal factorization makes $q_{d}$ as small as possible.
Specifically, a minimal factorization of $p$ and $q$ is a $q_s$ and $q_d$ 
such that $q_s$ has greater authority and $q_d$ has less authority than any
other factorization of $p$ and $q$ in the same context.
\begin{definition}[Minimal Factorization]
  \label{def:min-factorize}
A $Π$-factorization $(p, q_s, q_d)$ of $(p, q)$ is _minimal_ if for any
$Π$-factorization $(p, q_s', q_d')$ of $(p, q)$,
\[
\rafjudge{\Pi}{q_{s}}{q'_s} \text{ and } \rafjudge{\Pi}{q'_{d}}{q_d}
\]
\end{definition}

A minimal factorization $(p, q_s, q_d)$ of $p$ and $q$ for a given $Π$ and
$\pc$ identifies the authority necessary to cause $p$ to
act for $q$. Because $q_s$ is the principal with the greatest authority such
that $p ≽ q_s$ and $q ≡ q_s ∧ q_d$, then speaking for $q_d$ is sufficient
authority to cause $p$ to act for $q$ since adding the delegation $p ≽ q_d$
would imply that $p ≽ q$.  This intuition also matches with the fact that
$\rafjudge{Π}{p}{q_d}$ if and only if $q_d = ⊥$, which is the case if and only if
$\rafjudge{Π}{p}{q}$.  

Observe that minimal $Π$-factorizations are also trivially unique up
to equivalence.
\begin{proposition}[Subtraction equivalence]
  Let $(p, q_s, q_d)$ and $(p, q'_s, q'_d)$ be minimal factorizations of $p$ and
$q$ in $Π$.  Then $\reqjudge{\Pi}{q_s}{q'_s}$ and  $\reqjudge{\Pi}{q_d}{q'_d}$. 
\end{proposition}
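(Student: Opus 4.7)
The proof is essentially a direct application of the minimality condition in both directions, so my plan is to unfold the definition of minimal factorization twice and combine the resulting acts-for judgments into equivalences.

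First, I would invoke the minimality of $(p, q_s, q_d)$ against the competing factorization $(p, q'_s, q'_d)$. Since $(p, q'_s, q'_d)$ is (by hypothesis) a valid $\Pi$-factorization of $(p,q)$, Definition~\ref{def:min-factorize} applied to $(p, q_s, q_d)$ gives both $\rafjudge{\Pi}{q_s}{q'_s}$ and $\rafjudge{\Pi}{q'_d}{q_d}$.

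Next, I would swap the roles and apply the minimality of $(p, q'_s, q'_d)$ against the factorization $(p, q_s, q_d)$. This symmetrically yields $\rafjudge{\Pi}{q'_s}{q_s}$ and $\rafjudge{\Pi}{q_d}{q'_d}$. Pairing these judgments with those from the first step and appealing to Definition~\ref{def:prineq} immediately gives $\reqjudge{\Pi}{q_s}{q'_s}$ and $\reqjudge{\Pi}{q_d}{q'_d}$, which is exactly the conclusion.

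There is no real obstacle here; the proof is a two-line definitional unfolding. The only minor subtlety worth noting explicitly (though not strictly required by the statement) is that the argument never needed the factorization identities $\reqjudge{\Pi}{q}{q_s \wedge q_d}$ or $\rafjudge{\Pi}{p}{q_s}$ from Definition~\ref{def:factorize} beyond asserting that the two tuples in question are indeed $\Pi$-factorizations, so that they are admissible as inputs to the minimality clause of the other. Hence the proposition holds purely by the antisymmetry of $\succcurlyeq$ modulo the equivalence $\equiv$ induced on principals in context $\Pi$.
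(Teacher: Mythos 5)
Your proof is correct and matches the paper's own argument exactly: both apply the minimality clause of Definition~\ref{def:min-factorize} symmetrically to each factorization against the other and combine the resulting acts-for judgments via Definition~\ref{def:prineq}. No gaps.
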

\begin{proof}
By Definition~\ref{def:min-factorize}, \rafjudge{\Pi}{q_{s}}{q'_s} and 
 \rafjudge{\Pi}{q'_{s}}{q_s}. Therefore, $\reqjudge{\Pi}{q_s}{q'_s}$.
 Likewise, \rafjudge{\Pi}{q_{d}}{q'_d} and 
 \rafjudge{\Pi}{q'_{d}}{q_d}, so we have $\reqjudge{\Pi}{q_s}{q'_s}$. 
\end{proof}

Since the $q_d$ component of minimal factorization can be thought of as the
``gap'' in authority between two principals, we use $q_d$ to define the notion
of principal _subtraction_.
\begin{definition}[Principal Subtraction]
  \label{def:subtract}
  Let $(p, q_s, q_d)$ be a minimal $Π$-factorization of $(p, q)$.
  We define $q - p$ in $Π$ to be $q_d$.
  That is, $\reqjudge{\Pi}{q - p}{q_d}$.
  Note that $q - p$ is not defined outside of a judgement context.
\end{definition}
Since $q_d$ is unique up to equivalence in $Π$, $q - p$ is also unique 
for a given $\Pi$. 

To further illustrate principal subtraction as an authority gap, consider the
following equivalence:  If a principal acts for the authority gap
between it and any other principal, then it also acts for that principal.
\begin{lemma}[Authority Gap Identity]\label{lemma:agi}
For any $p$ and $q$, $\rafjudge{Π}{p}{q - p} ⇔ \rafjudge{Π}{p}{q}$ 
\end{lemma}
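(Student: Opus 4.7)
The plan is to prove the two directions of the biconditional separately, using the definition of subtraction via minimal factorization together with the uniqueness result (Subtraction equivalence) just established.

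For the forward direction, assume $\rafjudge{\Pi}{p}{q - p}$. Unfolding the definition, fix a minimal $\Pi$-factorization $(p, q_s, q_d)$ of $(p,q)$, so $\reqjudge{\Pi}{q - p}{q_d}$, $\rafjudge{\Pi}{p}{q_s}$, and $\reqjudge{\Pi}{q}{q_s \wedge q_d}$. From the assumption we have $\rafjudge{\Pi}{p}{q_d}$, and combined with $\rafjudge{\Pi}{p}{q_s}$, rule \ruleref{R-ConjR} yields $\rafjudge{\Pi}{p}{q_s \wedge q_d}$. The equivalence $\reqjudge{\Pi}{q}{q_s \wedge q_d}$ gives $\rafjudge{\Pi}{q_s \wedge q_d}{q}$, and one application of \ruleref{R-Trans} concludes $\rafjudge{\Pi}{p}{q}$.

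For the reverse direction, assume $\rafjudge{\Pi}{p}{q}$. The idea is to exhibit a specific minimal factorization whose $q_d$ component is $\bot$, so that $q - p \equiv \bot$ and the goal $\rafjudge{\Pi}{p}{q - p}$ reduces to \ruleref{Bot}. Consider the candidate $(p, q, \bot)$. This is a valid $\Pi$-factorization because $\reqjudge{\Pi}{q}{q \wedge \bot}$ (since $\bot$ is the bottom of the acts-for lattice, derivable from \ruleref{Bot} and \ruleref{ConjL}/\ruleref{ConjR}) and $\rafjudge{\Pi}{p}{q}$ by assumption. To check minimality, let $(p, q_s', q_d')$ be any other factorization; then $\reqjudge{\Pi}{q}{q_s' \wedge q_d'}$ gives $\rafjudge{\Pi}{q}{q_s'}$ via \ruleref{ConjL} combined with the equivalence, and $\rafjudge{\Pi}{q_d'}{\bot}$ holds trivially by \ruleref{Bot}. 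Hence $(p, q, \bot)$ is minimal, so by the Subtraction equivalence proposition $\reqjudge{\Pi}{q - p}{\bot}$, and therefore $\rafjudge{\Pi}{p}{q - p}$ follows from \ruleref{Bot} and \ruleref{R-Trans}.

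The only subtle step is checking minimality of $(p, q, \bot)$ in the reverse direction; the manipulation relies on the lattice identities for $\bot$ and on reading the minimality inequalities in the correct direction (with $q_s$ maximal and $q_d$ minimal under $\succcurlyeq$). Once this is settled, uniqueness of subtraction does the remaining work and no further structural reasoning is required.
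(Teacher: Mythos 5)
Your proof is correct, and it is actually more complete than the paper's: the paper's own argument establishes only the forward direction (and does so by contradiction, assuming $\notrafjudge{\Pi}{p}{q}$, deducing $\notrafjudge{\Pi}{p}{q_d}$ from $\rafjudge{\Pi}{p}{q_s}$ and $\reqjudge{\Pi}{q}{q_s \wedge q_d}$, and then contradicting the hypothesis $\rafjudge{\Pi}{p}{q-p}$), while the reverse direction is left implicit. Your forward direction carries the same mathematical content but runs it directly — $\rafjudge{\Pi}{p}{q_d}$ plus $\rafjudge{\Pi}{p}{q_s}$ gives $\rafjudge{\Pi}{p}{q_s \wedge q_d}$ by \ruleref{R-ConjR}, then \ruleref{R-Trans} through the equivalence with $q$ — which is cleaner than the contrapositive detour. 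For the reverse direction you take a genuinely different (and valid) route: you exhibit $(p, q, \bot)$ as a minimal $\Pi$-factorization and invoke the Subtraction Equivalence proposition to conclude $\reqjudge{\Pi}{q-p}{\bot}$; the minimality check is right, since $q_d' \succcurlyeq \bot$ is trivial and $q \succcurlyeq q_s'$ follows from $\reqjudge{\Pi}{q}{q_s' \wedge q_d'}$ and \ruleref{ConjL}. A slightly more economical alternative, closer in spirit to the paper's setup, is to note that for \emph{any} minimal factorization $\rafjudge{\Pi}{q}{q_d}$ holds (from $\reqjudge{\Pi}{q}{q_s \wedge q_d}$ and \ruleref{ConjL}), so $\rafjudge{\Pi}{p}{q}$ gives $\rafjudge{\Pi}{p}{q_d} \equiv \rafjudge{\Pi}{p}{q-p}$ by transitivity, without constructing a particular factorization; but your uniqueness-based argument buys the additional observation that $\rafjudge{\Pi}{p}{q}$ iff $q - p \equiv \bot$, which the paper also remarks on informally.
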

\begin{proof}
Let the minimal factorization of $p$ and $q$ in $Π$ be $(p, q_{s}, q_{d})$ where 
$\reqjudge{Π}{q}{q_{s} ∧ q_{d}}$ and $\rafjudge{Π}{p}{q_{s}}$.
In the forward direction, assume $\rafjudge{Π}{p}{q - p}$.   
For contradiction, assume $\notrafjudge{Π}{p}{q}$.
Then it must be the case that $\notrafjudge{Π}{p}{q_{d}}$.
But by Definition~\ref{def:subtract} $q - p$ is defined as
$q_{d}$, so this implies $\notrafjudge{Π}{p}{q - p}$,  a contradiction.
\end{proof}

Lemma~\ref{lem:minfactex} proves that minimal factorizations exist for all
contexts and principals, so principal subtraction is well defined.
\begin{lemma}[Minimal Factorizations Exist]
  \label{lem:minfactex}
  For any context $Π$ and principals $p, q$, there exists a minimal
  $Π$-factorization of $(p, q)$.
\end{lemma}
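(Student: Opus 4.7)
The plan is to prove existence by exhibiting the minimal factorization explicitly within the complete distributive lattice of principals modulo $\Pi$-equivalence induced by the rules in Figure~\ref{fig:static}. Recall that in the $\succeq$-ordering, $\wedge$ is the join and $\vee$ is the meet, so a $\Pi$-factorization decomposes $q$ as a join ($\wedge$) of an element $q_s$ that is $\succeq$-below both $p$ and $q$, together with a residual $q_d$.

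First I would set $q_s \triangleq p \vee q$. By \ruleref{DisjR}, this satisfies $p \succeq q_s$ and $q \succeq q_s$, and $q_s$ is in fact the greatest (in $\succeq$) element with those two properties, since $\vee$ is the $\succeq$-meet. This already secures half of minimality: for any competing factorization $(p, q'_s, q'_d)$, \ruleref{ConjL} applied to $\reqjudge{\Pi}{q}{q'_s \wedge q'_d}$ yields $\rafjudge{\Pi}{q}{q'_s}$, and $\rafjudge{\Pi}{p}{q'_s}$ is a factorization premise, so $\rafjudge{\Pi}{q_s}{q'_s}$ follows from the universal property of $\vee$.

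Next I would consider $T = \{r : \reqjudge{\Pi}{q_s \wedge r}{q}\}$. This set is nonempty since $q \in T$ (from $q \succeq q_s$ we get $q_s \wedge q \equiv q$), and by the distributivity rules \ruleref{ConjDistDisjL} and \ruleref{ConjDistDisjR} it is closed under binary $\vee$:
\[
q_s \wedge (r_1 \vee r_2) \equiv (q_s \wedge r_1) \vee (q_s \wedge r_2) \equiv q \vee q \equiv q.
\]
Using completeness, I would let $q_d$ be the $\succeq$-infimum of $T$. The remaining half of minimality then reduces to showing $q'_d \in T$ for any competing factorization: monotonicity of $\wedge$ together with $\rafjudge{\Pi}{q_s}{q'_s}$ gives $\rafjudge{\Pi}{q_s \wedge q'_d}{q'_s \wedge q'_d} \equiv q$, while \ruleref{ConjR} applied to $\rafjudge{\Pi}{q}{q_s}$ and (via \ruleref{ConjL}) $\rafjudge{\Pi}{q}{q'_d}$ yields $\rafjudge{\Pi}{q}{q_s \wedge q'_d}$. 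Thus $q_s \wedge q'_d \equiv q$, placing $q'_d \in T$ and yielding $\rafjudge{\Pi}{q'_d}{q_d}$ from the infimum property.

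The main obstacle is verifying $q_d \in T$ itself, i.e., that $q_s \wedge \bigvee T \equiv q$; this is an infinite-distributivity statement that does not follow directly from the finite rules of Figure~\ref{fig:static}. I plan to close this gap using FLAM's normal form: every principal normalizes to $c^{\to} \wedge i^{\leftarrow}$, and the confidentiality and integrity components live in the free distributive lattice generated by $\mathcal{N}$, in which arbitrary joins distribute over meets. Treating confidentiality and integrity independently reduces the infinite meet $\bigvee T$ to a concrete computation over normal forms, after which $q_s \wedge q_d \equiv q$ is immediate and the construction is complete.
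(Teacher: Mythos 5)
Your construction matches the paper's proof essentially step for step: the same choice $q_s = p \vee q$, the same residual set (the paper's $D$ is your $T$), the same $q_d = \bigvee$ of that set, the same argument that any competing $q'_d$ lands in the set, and the same appeal to FLAM normal forms to reduce everything to a finite setting where the needed distributivity holds. You are in fact more explicit than the paper about the one delicate point---verifying that $q_d$ itself lies in $T$, i.e., that $(p, q_s, q_d)$ really is a factorization---which the paper handles only implicitly through its observation that $D$ is finite up to equivalence.
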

\begin{proof}
  Given $(p, q)$, we first let $q_s = p∨q$.
  By definition, $\rafjudge{\Pi}{p}{p∨q}$, and for all factorizations $(p, q_s', q_d')$,
  $\rafjudge{\Pi}{p}{q_s'}$ and $\rafjudge{\Pi}{q}{q_s'}$, so $\rafjudge{\Pi}{q_s}{q_s'}$.

  Now let $D = \{ r ∈ \L \mid \reqjudge{\Pi}{q}{q_s ∧ r} \}$.
  Using FLAM normal form~\cite{flam}, all principals in $\L$ can be represented
  as a finite set of meets and
  joins of elements in $\N ∪ \{⊤,⊥\}$, so $q$ and $q_s$ are finite.
  $Π$ is also finite, adding only finitely-many dynamic equivalences, so $D$ is
  finite up to equivalence. Moreover, since \reqjudge{\Pi}{q}{(p ∨ q) ∧ q} (by absorption) we have $q ∈ D$. Therefore $D$ is
  always non-empty and we can define $q_d = \bigvee D$.

  Now let $(p, q_s', q_d')$ be any $Π$-factorization of $(p, q)$. We must show that $\rafjudge{\Pi}{q_d'}{q_d}$.

  First, see that $\reqjudge{\Pi}{q}{q_s ∧ q_d'}$. for one direction,
observe that $\rafjudge{\Pi}{q}{q_s}$ and $\rafjudge{\Pi}{q}{q_d}$ (by
Definition~\ref{def:prineq}). For the other direction, since
$\rafjudge{\Pi}{q_s}{q_s'}$, we have $\rafjudge{\Pi}{q_s ∧ q_d'}{q_s'
∧ q_d'}$, so $\rafjudge{\Pi}{q_s ∧ q_d'}{q}$.

  Therefore, by the definition of $D$, we know $q_d' ∈ D$, so by the definition of $∨$ and
  $q_d$, $\rafjudge{\Pi}{q_d'}{q_d}$.
  Thus $(p, q_s, q_d)$ is a minimal $Π$-factorization of $(p, q)$.
\end{proof}

We can now state precisely which trust relationships may change in a
given information flow context.\footnote{The original delegation
invariance lemma~\cite{flac} was flawed due to a case where a minor
delegation could have a cascading effect that enabled new delegations,
breaking the desired invariant. This new (slightly more restrictive)
formulation, stated in terms of principal subtraction, addresses more
precisely the connection between the $\pc$ and the invariant trust relationships.}

\begin{lemma}[Delegation Invariance] \label{lemma:di}
  Suppose $\rafjudge{\delegcontext}{\pc}{\voice{t}}$. For all principals $p$ and $q$,
  if \rafjudge{\delegcontext,\delexp{r}{t}}{p}{q}, then either
  \rafjudge{\delegcontext}{p}{q} or $\rafjudge{\delegcontext}{\pc}{\voice{{q} - {p}}}$.  
\end{lemma}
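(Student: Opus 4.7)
The plan is to proceed by induction on the derivation of $\rafjudge{\delegcontext,\delexp{r}{t}}{p}{q}$, showing by case analysis on the last rule used that either $\rafjudge{\delegcontext}{p}{q}$ holds (the delegation was already derivable without the new assumption) or $\rafjudge{\delegcontext}{\pc}{\voice{q - p}}$ holds. The intuition is that the single new delegation $\delexp{r}{t}$ can close an authority gap of at most $t$, so any newly-enabled delegation has a gap $q - p$ whose voice is bounded by $\voice{t}$, which $\pc$ already dominates by hypothesis.

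For the base cases, \ruleref{R-Static} lands in the first disjunct since it is independent of the delegation context. For \ruleref{R-Assume} using a delegation $\delexp{p'}{q'} \in \delegcontext$, the well-formedness invariant maintained by the typing rule \ruleref{Assume} ensures the side condition $\rafjudge{\delegcontext}{\voice{p'^\rightarrow}}{\voice{q'^\rightarrow}}$ is derivable in $\delegcontext$ alone, so we can re-apply \ruleref{R-Assume}. For \ruleref{R-Assume} using $\delexp{r}{t}$ itself, $p \equiv r$ and $q \equiv t$; by Definition~\ref{def:subtract}, $q - p \equiv t_d$ for the minimal factorization $(r, t_s, t_d)$, so $t \equiv t_s \wedge t_d$. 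Since $\voice{\cdot}$ distributes over $\wedge$, $\voice{t} \equiv \voice{t_s} \wedge \voice{t_d}$, and \ruleref{R-ConjL} combined with the hypothesis yields $\rafjudge{\delegcontext}{\pc}{\voice{q - p}}$.

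The inductive lattice cases (\ruleref{R-ConjL}, \ruleref{R-ConjR}, \ruleref{R-DisjL}, \ruleref{R-DisjR}, \ruleref{R-Proj}) hinge on two monotonicity properties of subtraction, each proved by exhibiting factorizations and invoking minimality: strengthening $p$ shrinks the gap ($p_1 \succeq p_2 \Rightarrow \rafjudge{\delegcontext}{q - p_2}{q - p_1}$) and weakening $q$ shrinks the gap ($q_1 \succeq q_2 \Rightarrow \rafjudge{\delegcontext}{q_1 - p}{q_2 - p}$). Together with monotonicity of $\voice{\cdot}$ and the bound $(q_1 \wedge q_2) - p \preceq (q_1 - p) \wedge (q_2 - p)$ (with the dual for disjunctions via distributivity), these let us combine the inductive hypotheses, converting first-disjunct conclusions into trivial second-disjunct conclusions ($q_i - p \equiv \bot$) when mixing the two cases.

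The main obstacle is \ruleref{R-Trans}: from derivations of $\rafjudge{\delegcontext,\delexp{r}{t}}{p}{q'}$ and $\rafjudge{\delegcontext,\delexp{r}{t}}{q'}{q}$, the four-way case analysis on the inductive hypotheses includes one delicate case where both yield only the second disjunct. Three of the cases compose easily using a single subtraction-monotonicity step (e.g., when $\rafjudge{\delegcontext}{p}{q'}$ holds, $q - p \preceq q - q'$ in $\delegcontext$). For the delicate case, I would construct a factorization of $(p, q)$ in $\delegcontext$ from the minimal factorizations of $(p,q')$ and $(q',q)$, obtaining the bound $(q' - p) \wedge (q - q') \wedge q \succeq q - p$ by taking the $q_s$ component from the factorization of $(p,q')$ and showing that the product is $q$ by absorption. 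Voice monotonicity together with distributivity of $\voice{\cdot}$ over $\wedge$ then reduces the goal to combining the voice bounds already established by the two inductive hypotheses; the additional $\wedge q$ term is the subtle point, and completing this step cleanly is the chief technical hurdle.
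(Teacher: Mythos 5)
Your proposal takes a genuinely different route from the paper: you induct on the derivation of $\rafjudge{\delegcontext,\delexp{r}{t}}{p}{q}$ with a case analysis on the last rule, whereas the paper argues directly. The paper takes the minimal $\delegcontext$-factorization $(p,q_s,q_d)$ of $(p,q)$, observes that in the non-trivial case any derivation of $\rafjudge{\delegcontext,\delexp{r}{t}}{p}{q_d}$ must use $\delexp{r}{t}$ via \ruleref{R-Assume}, and then shows by a minimality-and-contradiction argument (on a hypothetical split of $q_d$) that $\rafjudge{\delegcontext}{t}{q_d}$ must already hold; monotonicity of $\voice{\cdot}$ and \ruleref{R-Trans} finish. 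That argument never has to compose authority gaps across a transitivity step, which is exactly where your induction gets stuck.

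There are two genuine gaps in your plan. First, the \ruleref{R-Trans} case is not completed, and it is the crux. The bound you propose, $(q'-p)\wedge(q-q')\wedge q \succeq q-p$, is trivially true because $q \succeq q-p$ already follows from the definition of factorization, so it buys nothing; what you actually need is $(q'-p)\wedge(q-q') \succeq q-p$ in $\delegcontext$, and establishing that requires exhibiting a $\delegcontext$-factorization of $(p,q)$ whose acts-for component is dominated by $p$. But $p$ need not act for the $q_s$ component of the factorization of $(q',q)$ in $\delegcontext$ --- only $q'$ does, and $\rafjudge{\delegcontext}{p}{q'}$ may fail in precisely the delicate subcase. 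Second, the \ruleref{R-Assume} case for a delegation $\delexp{p'}{q'}$ already present in $\delegcontext$ is not dischargeable the way you claim: its side condition $\rafjudge{\delegcontext,\delexp{r}{t}}{\voice{\confid{p'}}}{\voice{\confid{q'}}}$ is evaluated in the \emph{extended} context, so a previously unusable delegation can become usable only because of $\delexp{r}{t}$ --- precisely the cascading effect the paper's footnote identifies as the flaw in the original version of this lemma. The lemma assumes no well-formedness of $\delegcontext$, so you cannot appeal to the \ruleref{Assume} typing rule here; and applying your induction hypothesis to the side-condition subderivation bounds $\voice{\voice{\confid{q'}} - \voice{\confid{p'}}}$, not $\voice{q'-p'}$, with no supplied connection between the two. (Minor: there is no \ruleref{R-Proj} rule in Figure~\ref{fig:robrules}.) Until these are filled, the proposal does not establish the lemma.
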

\begin{proof} 
  
There are two cases: either $\rafjudge{\delegcontext}{p}{q}$ or $\notrafjudge{\delegcontext}{p}{q}$.
The case  where \rafjudge{\delegcontext}{p}{q} is trivial.
We prove the other case: if $\notrafjudge{\delegcontext}{p}{q}$ and $\rafjudge{\delegcontext, \delexp{r}{t}}{p}{q}$ then  $\rafjudge{\delegcontext}{\pc}{\voice{{q} - {p}}}$.

       Let $\delegcontext' = \delegcontext, \delexp{r}{t}$. Assume that
	\begin{align}
		\rafjudge{\delegcontext'}{p}{q} \label{eq:di1} \\
		\notrafjudge{\delegcontext}{p}{q} \label{eq:di12}
	\end{align}
        and
        Let $(p, q_s, q_d)$ be the minimal $\delegcontext$-factorization of $(p, q)$. So,
	\rafjudge{\delegcontext}{p}{q_s}
        Since $\notrafjudge{\delegcontext}{p}{q}$, this implies that \notrafjudge{\delegcontext}{p}{q_d}
	but from \eqref{eq:di1}, we have that
	\begin{equation}
		\rafjudge{\delegcontext'}{p}{q_d} \label{eq:di3}
	\end{equation}
	So any derivation of \eqref{eq:di3} must involve the derivation of
	$\rafjudge{\delegcontext'}{r}{t}$ via \ruleref{R-Assume}, since \ruleref{R-Assume} is the only rule that uses the contents of $\Pi$. 
        Therefore, without loss of generality, we can assume that either (a) $\rafjudge{\Pi}{t}{q_d}$,  or (b) for some $q₁$ and $q₂$ such that $\reqjudge{\Pi}{q_d}{q₁ ∧ q₂}$, 
        $\rafjudge{\Pi}{t}{q_1}$ and $\notrafjudge{\Pi}{t}{q_2}$; otherwise \rafjudge{\delegcontext'}{r}{t} is unnecessary to prove \eqref{eq:di3}.  %

        In fact, it must be the case that $\rafjudge{\Pi}{t}{q₁ ∧ q₂}$ (or
equivalently $\rafjudge{\Pi}{t}{q_d}$).  To see why, assume $\notrafjudge{\Pi}{t}{q₁ ∧ q₂}$: specifically, 
$\rafjudge{\Pi}{t}{q_1}$ and $\notrafjudge{\Pi}{t}{q_2}$. Since $\rafjudge{\Pi'}{p}{q_s \wedge (q₁ ∧ q₂)}$ but $\notrafjudge{\Pi}{t}{q_2}$, it must be that 
  $\notrafjudge{\Pi}{p}{q_s ∧ q₂}$, otherwise $\rafjudge{\Pi'}{p}{q_s \wedge (q₁ ∧ q₂)}$ wouldn't hold. %
	Therefore $(p, q_s \wedge q_2, q_1)$ is a $Π$-factorization of $(p,q)$. Since $(p, q_s, q_d)$ is a _minimal_ $Π$-factorization, we have that $\rafjudge{\delegcontext}{q_2}{q_d}$.
	From \ruleref{R-Trans}, we now have that $\rafjudge{\delegcontext}{t}{q_d}$, but since we assumed $\notrafjudge{\Pi}{t}{q_d}$, we have a contradiction. 
        Hence $\rafjudge{\Pi}{t}{q_d}$. 

        By the monotonicity of $\voice{·}$ with respect to $≽$ (Proven in Coq for FLAM~\cite{flam}), we have
        \rafjudge{\Pi}{\voice{t}}{\voice{q_d}}.  As shown above,
        \rafjudge{\Pi}{\pc}{\voice{t}}, so by \ruleref{R-Trans},
        \rafjudge{\Pi}{\pc}{\voice{q_d}}.  However, recall that \reqjudge{Π}{q
        - p}{q_d} (Definition~\ref{def:subtract})  and so
         $\rafjudge{\delegcontext}{\pc}{\voice{q - p}}$.
         Hence proved.
\end{proof}
\begin{corollary}
\label{cor:di}
  Suppose $\rafjudge{\delegcontext,\delexp{r_0}{t_0},...,\delexp{r_n}{t_n}}{\pc}{\voice{t_i}}$ for all $i ∈ [0,n]$.
  For all principals $p$ and $q$, if $\rafjudge{\delegcontext,\delexp{r_0}{t_0},...,\delexp{r_n}{t_n}}{p}{q}$
  then either \rafjudge{\delegcontext}{p}{q} or $\rafjudge{\delegcontext}{\pc}{\voice{{q} - {p}}}$.  
\end{corollary}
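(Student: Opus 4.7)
The plan is to proceed by induction on $n$, letting $\Pi_k \triangleq \delegcontext, \delexp{r_0}{t_0}, \ldots, \delexp{r_k}{t_k}$ denote the context extended by the first $k{+}1$ delegations. The base case $n = 0$ reduces to Lemma~\ref{lemma:di}, modulo a small reconciliation: the corollary supplies $\rafjudge{\Pi_0}{\pc}{\voice{t_0}}$ while the lemma literally assumes $\rafjudge{\delegcontext}{\pc}{\voice{t_0}}$. I would address this either by a case split on whether $\delexp{r_0}{t_0}$ is actually required to derive the voice judgment, or by strengthening the inductive statement so that the premises are restated in the context at the moment each delegation is added (which is always available by the same monotonicity argument that lets the Assume rule chain).

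For the inductive step, I would peel off the last delegation $\delexp{r_{n+1}}{t_{n+1}}$ by applying Lemma~\ref{lemma:di} with base context $\Pi_n$. The lemma yields two subcases: either $\rafjudge{\Pi_n}{p}{q}$ directly, or $\rafjudge{\Pi_n}{\pc}{\voice{q - p}}$ with the subtraction taken in $\Pi_n$. In the first subcase, the induction hypothesis applied to the pair $(p, q)$ over $\Pi_n$ immediately delivers the conclusion. In the second subcase, I would apply the induction hypothesis to the pair $(\pc, \voice{q - p})$: this produces either the desired $\rafjudge{\delegcontext}{\pc}{\voice{q - p}}$ (after reconciling the context in which the subtraction is computed) or an iterated form of the shape $\rafjudge{\delegcontext}{\pc}{\voice{\voice{q - p} - \pc}}$, which can be collapsed to the desired form by Lemma~\ref{lemma:agi} (Authority Gap Identity).

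The main technical obstacle is the bookkeeping across contexts. Principal subtraction is context-sensitive (Definition~\ref{def:subtract}): the minimal factorization of $(p,q)$ in $\Pi_n$ need not coincide with the one in $\delegcontext$, so the $\voice{q-p}$ appearing at successive stages of the induction is not literally the same principal. Reconciling these relies on the monotonicity of both minimal factorizations and $\voice{\cdot}$ with respect to the acts-for ordering, which is precisely the ingredient used in the final step of the proof of Lemma~\ref{lemma:di}. A secondary subtlety is that the voice premises are given in the maximal context $\Pi_n$ while the lemma wants them in smaller contexts; I expect to handle this either by iterating the same peeling argument on the voice judgments themselves (pushing each $\rafjudge{\Pi_n}{\pc}{\voice{t_i}}$ down to $\Pi_i$ before it is consumed) or by strengthening the induction hypothesis to accept the premises in the current extended context at each recursion level. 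Verifying that these pieces line up cleanly—so that the two disjuncts of the conclusion correspond exactly to the two cases produced by Lemma~\ref{lemma:di}—is the heart of the proof.
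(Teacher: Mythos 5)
Your proposal follows essentially the same route as the paper's proof: induction on $n$, peeling off the last delegation with Lemma~\ref{lemma:di}, applying the induction hypothesis to each of the two resulting disjuncts, and collapsing the iterated form $\rafjudge{\delegcontext}{\pc}{\voice{\voice{q-p}-\pc}}$ back to $\rafjudge{\delegcontext}{\pc}{\voice{q-p}}$ via Lemma~\ref{lemma:agi}. The only ingredient you leave implicit is that this collapse first requires noting that $\voice{q-p}$ is an integrity principal, so its voice is itself (Lemma~\ref{lemma:voiceOfInteg}), before Lemma~\ref{lemma:agi} applies; the context-reconciliation subtleties you flag (the voice premises living in the extended context, and the context-sensitivity of principal subtraction) are genuine but are glossed over by the paper's own proof as well.
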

\begin{proof}
  For $i=0$, we apply Lemma~\ref{lemma:di}.
  For the inductive case, assume that for all $p$ and $q$, $\rafjudge{\delegcontext,\delexp{r_0}{t_0},...,\delexp{r_{n-1}}{t_{n-1}}}{p}{q}$ implies
  either \rafjudge{\delegcontext}{p}{q} or $\rafjudge{\delegcontext}{\pc}{\voice{{q} - {p}}}$.  
  We want to prove that for all $p$ and $q$, $\rafjudge{\delegcontext,\delexp{r_0}{t_0},...,\delexp{r_{n}}{t_{n}}}{p}{q}$ implies
  either \rafjudge{\delegcontext}{p}{q} or $\rafjudge{\delegcontext}{\pc}{\voice{{q} - {p}}}$ also. 
  By applying Lemma~\ref{lemma:di} to $\rafjudge{\delegcontext,\delexp{r_0}{t_0},...,\delexp{r_{n}}{t_{n}}}{p}{q}$, we obtain that  
  either $\rafjudge{\delegcontext,\delexp{r_0}{t_0},...,\delexp{r_{n-1}}{t_{n-1}}}{p}{q}$ or $\rafjudge{\delegcontext,\delexp{r_0}{t_0},...,\delexp{r_{n-1}}{t_{n-1}}}{\pc}{\voice{{q} - {p}}}$. 
  In the first case, applying the inductive hypothesis gives us that either \rafjudge{\delegcontext}{p}{q} or $\rafjudge{\delegcontext}{\pc}{\voice{{q} - {p}}}$ holds.  
  In the second case,
  applying the inductive hypothesis gives us that either $\rafjudge{\delegcontext}{\pc}{\voice{{q} - {p}}}$ or $\rafjudge{\delegcontext}{\pc}{\voice{\voice{{q} - {p}} - pc}}$ holds.  
   If $\rafjudge{\delegcontext}{\pc}{\voice{{q} - {p}}}$ holds, then we are done.  Therefore, assume $\rafjudge{\delegcontext}{\pc}{\voice{\voice{{q} - {p}} - pc}}$.
  Observe that $\voice{{q} - {p}}$ is an integrity principal. Specifically, $\reqjudge{Π}{\voice{{q} - {p}}^{→}}{⊥}$), so by Definition~\ref{def:factorize} $\reqjudge{Π}{(\voice{{q} - {p}} - pc)^{→}}{⊥}$.
  The voice of an integrity principal is just the principal (Lemma~\ref{lemma:voiceOfInteg}), so $\reqjudge{Π}{\voice{\voice{{q} - {p}} - pc}}{(\voice{{q} - {p}} - pc)}$. 
  Then by Lemma~\ref{lemma:agi}, we know that $\rafjudge{\delegcontext}{\pc}{(\voice{{q} - {p}} - pc)}$ is equivalent to $\rafjudge{\delegcontext}{\pc}{\voice{{q} - {p}}}$. 
 \end{proof} 

If $e$ is a well-typed, closed, source-level FLAC program—in other
words $\TVal{Π; ∅; pc}{e}{τ}$, for some $Π$, $\pc$, and $τ$—then
Lemma~\ref{lemma:di} is sufficient to characterize the delegations
introduced in $e$.  Since $e$ is closed, any delegation
$\delexp{r}{t}$ is introduced by an "assume" term that types under a
sub-derivation of $\TVal{Π; ∅; pc}{e}{τ}$.  Since the $\pc$ only
becomes more restrictive in subexpressions of $e$, we know that 
$\rafjudge{\delegcontext}{\pc}{\voice{t}}$.

More generally, however, an open FLAC programs may receive as input
(non-source-level) values which are typed in higher-integrity
contexts, and thus may use delegations where
$\notrafjudge{\delegcontext}{\pc}{\voice{t}}$.  Such delegations are
constrained only by $\pcmost$ (since
$\rafjudge{\delegcontext}{\pcmost}{\voice{t}}$ for well-typed
"where"-terms).  However, as long as such delegations are
``compartmentalized,'' they cannot be used to downgrade arbitrarily.

To characterize how non-source-level terms can affect downgrading, 
it will be convenient to distinguish source-level programs from their inputs.
We generalize our substitution notation from a single substitution $\subst{}{x}{v}$ 
to a set $S$, where $S(x)=v$ encodes the substitution $\subst{}{x}{v}$ and the substitution 
of all free variables in $e$ that are defined by $S$ is written $e~S$.
\begin{definition}
Given $\TValGpc{e}{τ}$, a _well-typed substitution on $e$ for $Γ$ in $Π$_ is a substitution $S$
 where for each free variable $xᵢ$ of $e$ with $Γ(xᵢ) = τᵢ'$, we have $\TValGpc{S(xᵢ)}{τᵢ'}$ and $e~S$ is closed.
\end{definition}

Now we can state our Delegation Compartmentalization lemma.  As long
as all terms in a well-typed substitution are either source-level
terms or are at least as restrictive as $H^{π}$, then if a
source-level program $e$ evaluates to $\where{w'}{\delexp{r}{t}}$,
either $\rafjudge{\delegcontext}{\pc}{\voice{t}}$ or the result of the
program is also as restrictive as $H^{π}$.

\begin{lemma}[Delegation Compartmentalization] \label{lemma:dc}
	 Suppose \TValGpc{e}{\tau}. Let $S$ be a well-typed substitution on $e$ for $Γ$ in $Π$ where
         $e$ is a source-level term and for all entries $\subst{}{y}{w_y} ∈ S$ such that $\TValGpc{w_y}{Γ(y)}$, and  _either_
         \begin{enumerate}
            \item $w_{y}$ is a source-level term, or 
            \item $\rflowjudge{Π}{H^{π}}{Γ(y)^{π}}$
         \end{enumerate}
         Then if $(e~S) \stepsto \where{w'}{\delexp{r}{t}}$, _either_ 
         \begin{enumerate}[label=(\alph*)]
           \item \label{dc:a} $\rafjudge{\delegcontext}{\pc}{\voice{t}}$,  or 
           \item \label{dc:b} $\rflowjudge{Π}{H^{π}}{τ^{π}}$
         \end{enumerate}
\end{lemma}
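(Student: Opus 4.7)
The plan is to prove Lemma~\ref{lemma:dc} by induction on the length $n$ of the evaluation $(e~S) \stepstot{n} \where{w'}{\delexp{r}{t}}$, strengthened into an invariant on every ``where'' subterm in every intermediate term. Specifically, given any intermediate term $e_k$ (well-typed at $\tau$ by Lemma~\ref{lemma:subjred}), the invariant says that for each subterm $\where{w_0}{\delexp{r_0}{t_0}}$ inside $e_k$, either $\rafjudge{\Pi}{\pc}{\voice{t_0}}$ holds in the original context, or the type assigned to that subterm by the derivation of $\TValGpc{e_k}{\tau}$ is at least as restrictive as $H^\pi$. The lemma itself then follows by applying the invariant to the outermost ``where'' of the final term, whose type is $\tau$.

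For the base case, since $e$ is source-level it has no ``where'' subterms, so every such subterm in $e~S$ originates from some $S(y)$. By the hypothesis on $S$, each $S(y)$ is either source-level (and hence contains no ``where'' subterms) or satisfies $\rflowjudge{\Pi}{H^\pi}{\Gamma(y)^\pi}$. In the latter case a structural descent through $S(y)$ using the protection rules of Figure~\ref{fig:protect} (\ruleref{P-Pair}, \ruleref{P-Fun}, \ruleref{P-TFun}, \ruleref{P-Lbl}) shows that every enclosed where-subterm is assigned a type that still protects $H^\pi$, so clause (b) of the invariant holds.

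For the inductive step, new ``where'' subterms arise only via \ruleref{E-Assume} or the propagation rules of Figure~\ref{fig:whererules}. Propagation moves an existing where outward through a standard evaluation context without changing its delegation $t_0$, so clause (a) is preserved trivially, and a case analysis on the context combined with the protection rules shows clause (b) is also preserved. For \ruleref{E-Assume} applied to $\assume{\delexp{p}{q}}{e_1}$ typed at some local $\pc_k$ and $\Pi_k$: if the assume was exposed by beta-reducing a non-source substituted value $S(y)$, then its ambient type descends from $\Gamma(y)$ which protects $H^\pi$, so clause (b) holds for the new where; otherwise the assume descends from a source-level assume in $e$ and we must establish clause (a).

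The main obstacle is converting the premise $\rafjudge{\Pi_k}{\pc_k}{\voice{q}}$ of this source-level \ruleref{Assume} into $\rafjudge{\Pi}{\pc}{\voice{q}}$. Two observations close the gap. First, along source-level subexpressions the program counter is monotonic in restrictiveness---\ruleref{Lam}, \ruleref{TLam}, \ruleref{BindM}, and \ruleref{Case} only tighten it, while \ruleref{App} and \ruleref{TApp} require the outer $\pc$ to flow to the inner one---so $\pc \sqsubseteq \pc_k$; since $\voice{q}$ is pure integrity this yields $\rafjudge{\Pi_k}{\pc}{\voice{q}}$. Second, each extra delegation $\delexp{r_i}{t_i}$ in $\Pi_k \setminus \Pi$ was justified by an enclosing source-level \ruleref{Assume} whose voice premise similarly relocates to $\pc$. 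An iterated application of Corollary~\ref{cor:di} then collapses the derivation to one in $\Pi$, giving either $\rafjudge{\Pi}{\pc}{\voice{q}}$ directly or $\rafjudge{\Pi}{\pc}{\voice{\voice{q} - \pc}}$; the latter reduces to the former via Lemma~\ref{lemma:voiceOfInteg} and Lemma~\ref{lemma:agi}, establishing clause (a) and completing the induction.
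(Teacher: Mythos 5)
Your overall strategy—induction on the evaluation with a strengthened invariant over all \texttt{where}-subterms, discharged by $\pc$-monotonicity, Corollary~\ref{cor:di}, and Lemma~\ref{lemma:agi}—is close in spirit to the paper's case analysis on the provenance of $\delexp{r}{t}$, but two steps do not go through as written. First, your base-case claim that a ``structural descent through the protection rules'' shows every \texttt{where}-subterm of a non-source $S(y)$ has a type protecting $H^{\pi}$ is false: \ruleref{P-Lbl} makes $\says{\ell'}{\tau'}$ protect $H^{\pi}$ whenever $\rflowjudge{\Pi}{H^{\pi}}{\ell'^{\pi}}$, with no constraint on $\tau'$, so a \texttt{where}-subterm buried under $\returng{\ell'}{\cdot}$ (or in the body of a lambda whose argument type is unconstrained) can have an arbitrary, unprotected type while the whole value still satisfies $\rflowjudge{\Pi}{H^{\pi}}{\Gamma(y)^{\pi}}$. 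Your invariant is therefore already violated at $e~S$. The paper repairs exactly this with Lemma~\ref{lemma:protectwhere2}, which argues not about the subterm's own type but about the fact that such a subterm can only be \emph{exposed} by eliminating the enclosing protected constructor (\ruleref{E-BindM*} or \ruleref{E-App*}), which forces the surrounding computation---and hence the eventual result type $\tau$---to protect $H^{\pi}$. You need that contextual argument, or a weaker invariant that permits unprotected \texttt{where}-subterms as long as they sit under a protected constructor.

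Second, in the source-level \ruleref{Assume} case you assert that every delegation in $\Pi_k \setminus \Pi$ ``was justified by an enclosing source-level \ruleref{Assume}'' and then apply Corollary~\ref{cor:di}. But \texttt{where}-propagation moves \texttt{where} terms outward, so a source-level \texttt{assume} can come to be typed under delegations contributed by \texttt{where} terms that originated in a non-source $S(y)$; for those delegations $\delexp{r_i}{t_i}$ you have no guarantee that $\rafjudge{\Pi}{\pc}{\voice{t_i}}$, so the hypothesis of Corollary~\ref{cor:di} fails and clause (a) may simply be unprovable. The correct conclusion in that situation is clause (b), obtained again via the compartmentalization of the $S$-originated \texttt{where} terms (the paper's Case~2.1); your proposal forecloses that outcome by insisting that source-level \texttt{assume}s ``must establish clause (a).''
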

\begin{proof}
  From the subject reduction (Lemma~\ref{lemma:subjred}), we have $\TValGpc{\where{w'}{\delexp{r}{t}}}{\tau}$.
  From the typing rule  \ruleref{Where}, we have  $\TValGpc{v}{\aftype{r}{t}}$ for some $r$ and $t$, and
  $\TVal{\Pi'; \G; \pc}{w'}{\tau}$ for $\Pi' = \Pi, \delexp{r}{t}$. There are two cases.

\begin{description}
\item[Case 1:] Suppose $(e~S) \stepsto \where{w'}{\delexp{r}{t}}$ such that $\delexp{r}{t}$ belongs to $S$.
We know that all values substituted from $S$ are well-typed w.r.t $\Pi$ and $\pc$. Without loss of generality, let $\delexp{r}{t}$ be reduced from $w_y$. We have two cases:
\begin{description}
\item[Case $w_y$ is a source-level term:] We are given $\TValGpc{w_y}{Γ(y)}$. Since $w_y$ does not have any "where" terms, it must be the case that either $w_y$ itself is $\delexp{r}{t}$ expression or $\delexp{r}{t}$ appears in some "assume" term in $w_y$. The latter is only possible if $w_y$ is a lambda or type abstraction, because no other values can embed "assume" terms. From the monotonicity of $\pc$ (lemma~\ref{lemma:pcmonotone}), we have that any delegation $\delexp{r}{t}$ propagated from $w_y$ should satisfy $\rafjudge{\delegcontext}{\pc}{\voice{t}}$. Hence proved \ref{dc:a}
  
\item[Case $\rflowjudge{Π}{H^{π}}{Γ(y)^{π}}$:] Since \ref{dc:a} does
  not hold, we have that $\delexp{r}{t}$ appears in $S$. Without loss
  of generality, assume that $\delexp{r}{t}$ appears in $w_y$ and that
  $w_y$ is "where" term. Invoking Lemmas \ref{lemma:protectwhere} and
  \ref{lemma:protectwhere2} that state how delegations propagate from
 "where" terms , we have the required proof.
\end{description}
  
\item[Case 2:]
Suppose $(e~S) \stepsto \where{w'}{\delexp{r}{t}}$ such that $\delexp{r}{t}$ did not propagate from $S$.
Then, it must be the case that $(e~S) \stepsto E[\assume{\delexp{r}{t}}{e'}] \stepsto \where{w'}{\delexp{r}{t}}$.
From subject reduction (Lemma~\ref{lemma:subjred}), we have that
for some $\Pi'$, $\G'$, $\pc'$ and $\tau'$, $\TVal{\Pi'; \G'; \pc'}{\assume{\delexp{r}{t}}{e'}}{\tau'}$ and so $\rafjudge{\delegcontext'}{\pc'}{\voice{t}}$ (from the typing rule \ruleref{Assume}). Note that $E[\assume{\delexp{r}{t}}{e'}]$ gives us a valid $T[\assume{\delexp{r}{t}}{e'}]$ such that $T=E$ (Lemma~\ref{lemma:evalimpliessub}). Invoking the monotonicity of $\pc$ (Lemma~\ref{lemma:pcmonotone}) on \TValGpc{T[\assume{\delexp{r}{t}}{e'}]}{\tau}, we have that $ \rflowjudge{\Pi}{\pc}{\pc'}$. Since $\Pi'$ is an extension of $\Pi$ (Lemma~\ref{lemma:pimonotone}), we have $ \rflowjudge{\Pi'}{\pc}{\pc'}$.
Applying transitivity on $\rflowjudge{\Pi'}{\pc}{\pc'}$ and  $\rafjudge{\Pi'}{\pc'}{\voice{t}}$,  we have $\rafjudge{\Pi'}{\pc}{\voice{t}}$. Depending on the set difference $\Pi' - \Pi$, we have two more cases:
\begin{description}
\item[Case 2.1:] Case where some of the delegations in $\Pi' -\Pi$ have propagated from $S$. Then, going by the argument similar to the previous case, we have that either $\rafjudge{\Pi}{\pc}{\voice{t}}$ or $\rflowjudge{Π}{H^{π}}{τ^{π}}$.  Hence proved. 

\item[Case 2.2:] Case where none of the delegations in $\Pi' -\Pi$
  have propagated from $S$. Then, by \ruleref{Assume}, monotonicity of the $\pc$ (Lemma~\ref{lemma:pcmonotone}),  and \ruleref{R-Trans}, for each delegation $\delexp{a}{b}$ in
  $\Pi' - \Pi$, we have that $\rafjudge{\Pi'}{\pc}{\voice{b}}$.  Applying Corollary~\ref{cor:di} to $\rafjudge{\Pi'}{\pc}{\voice{t}}$ gives us 
  either $\rafjudge{\Pi}{\pc}{\voice{t}}$ or $\rafjudge{\Pi}{\pc}{\voice{\voice{t} - \pc}}$, which are equivalent by Lemma~\ref{lemma:agi}. 
\end{description}

\end{description}

\end{proof}

\subsection{Noninterference}
Lemmas~\ref{lemma:di} and~\ref{lemma:dc} are critical for our proof of _noninterference_,
a result that states that 
public and trusted output of a program cannot depend on restricted (secret or untrustworthy) information.  
Our proof of noninterference for FLAC programs relies on a proof of subject reduction under
a bracketed semantics, based on the proof technique of Pottier and
Simonet~\cite{ps03}. This technique is mostly standard, so we omit it here.
The complete proof of subject reduction and other results are found in Appendix~\ref{app:proofs}.

In other noninterference results based on bracketed semantics, including~\cite{ps03},
noninterference follows almost directly from the proof of subject reduction.
This is because the subject reduction proof shows that evaluating a term cannot
change its type.
In FLAC, however, subject reduction alone is insufficient; evaluation may 
enable flows from secret or untrusted inputs to public
and trusted types.

To see how, suppose $e$ is a well-typed program according to $\TValP{Γ,x\ty
τ;\pc}{e}{τ'}$.  Furthermore, let $H$ be a principal such that
$\protjudge*{H}{τ}$ and $\notprotjudge*{H}{τ'}$.  In other words, $x$ is a
``high'' variable (more restrictive; secret and untrusted), and $e$ evaluates to a
``low'' result (less restrictive; public and trusted).  In \cite{ps03},
executions that differ only in secret or untrusted inputs must evaluate to the
same value, since otherwise the value would not be well typed.  In FLAC,
however, if the $\pc$ has sufficient integrity, then an "assume" term could
cause $\protjudge{Π';\pc}{H}{τ'}$ to hold in a delegation context $Π'$
of a subterm of $e$.  Additionally, even if the $\pc$ is low integrity, an
input to the program may capture a delegation that the source program
could not assume directly.  For example, a dynamic hand-off term like those
discussed in Section~\ref{sec:handoff}, with type
$$\tfuncpc{X}{\pc}{(\func{\says{p}{X}}{\pc}{\says{q}{X}})}$$ 
could enable the same flows as a delegation $\delexp{p}{q}$, but without the condition
that $\rafjudge{Π}{\pc}{∇(q)}$.

The key to proving our result relies on using Lemma~\ref{lemma:di} to
constrain the delegations that can be added to $Π'$ by the source-level terms,
Lemma~\ref{lemma:dc} to specify how non-source-level terms must be
compartmentalized.  Thus
noninterference in FLAC is dependent on $H$ and its relationship to
$\pc$ and the type $τ'$. For confidentiality, most of this reasoning occurs in the proof of
Lemma~\ref{lemma:erasecons}, 
which does most of the heavy lifting
for the noninterference proof.  Specifically,
Lemma~\ref{lemma:erasecons} states that for a well-typed program
$e$, if the $\pc$ is insufficiently trusted to create new flows from
$H^{→}$ to $ℓ^{→}$, then if the portion of \outproj{e}{1} observable
to $ℓ^{→}$ under $Π$ is equal to the observable portion of
\outproj{e}{2}, then if $e \stepsto e'$, the observable portions of
$e'$ are still equal.

\begin{restatable}[Confidentiality Erasure Conservation]{lemma}{erasecons}
  \label{lemma:erasecons}
  Suppose \TValGpc{e}{\tau} and let $S$ be a well-typed substitution of $e$ for $Γ$ in $Π$.
  Then for some $H$ and $ℓ$ such that $\notrafjudge{Π}{ ℓ^{→}}{H^{→}}$ and  $\notrafjudge{Π}{\pc}{∇(H^{→} - ℓ^{→})}$,
  if $\observefc{\outproj{e~S}{1}}{Π}{ℓ^{→}} = \observefc{\outproj{e~S}{2}}{Π}{ℓ^{→}}$ and
    $e$ is a source-level term, and for all entries $\subst{}{y}{w_y} ∈ S$ with $\TValGpc{w_y}{Γ(y)}$, either $w_{y}$ is a
source-level term or \rflowjudge{Π}{H^{→} ∧ ⊤^{←}}{Γ(y)}
 then $(e~S) \stepsone e'$ implies $\observefc{\outproj{e'}{1}}{Π}{ℓ^{→}} = \observefc{\outproj{e'}{2}}{Π}{ℓ^{→}}$. 
\end{restatable}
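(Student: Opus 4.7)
The plan is to proceed by induction on the derivation $(e\,S) \stepsone e'$, decomposing via \ruleref{E-Eval} into an evaluation context $E$ and a top-level reduction, and then case-analyzing on the underlying reduction rule. The invariant I want to maintain through each case is that any observable difference between $\outproj{e'}{1}$ and $\outproj{e'}{2}$ under $\ell^{\rightarrow}$ would have required either an observable difference already present in $e\,S$ (contradicting the hypothesis) or a new delegation $\langle r \succcurlyeq t\rangle$ in some \texttt{where} term that bridges the authority gap from $H^{\rightarrow}$ to $\ell^{\rightarrow}$ (contradicting the $\pc$ hypothesis via Lemma~\ref{lemma:dc}).

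The routine cases are \ruleref{E-App}, \ruleref{E-TApp}, \ruleref{E-Unpair}, \ruleref{E-Case}, \ruleref{E-BindM}, \ruleref{E-UnitM}, and the \ruleref{W-*} propagation rules. For each, the observation function $\observe$ commutes structurally with the reduction and with $\outproj{\cdot}{i}$ on non-bracketed terms; Subject Reduction (Lemma~\ref{lemma:subjred}) preserves the typing needed to keep protected subterms protected. For \ruleref{E-BindM} in particular, the resulting $\octx{\ell}{\cdot}$ context is erased exactly when the \texttt{bind}'s sealed value was erased, so observability is preserved. For \ruleref{E-Case} on an unbracketed scrutinee both projections take the same branch, and for bracketed scrutinees we fall through to a bracket case.

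The interesting cases are \ruleref{E-Assume} and \ruleref{B-Step}. For \ruleref{E-Assume}, the step produces $\where{e_0}{\delexp{p}{q}}$; since $\observe$ on \texttt{where} simply drops the witness and recurses on $e_0$, the immediate observation is unchanged. What matters is that the new delegation cannot enable a later step to reveal information at $\ell^{\rightarrow}$. Here I invoke Lemma~\ref{lemma:dc} on $e$ with substitution $S$: option (a) gives $\rafjudge{\Pi}{\pc}{\voice{t}}$, but by the hypothesis $\notrafjudge{\Pi}{\pc}{\nabla(H^{\rightarrow} - \ell^{\rightarrow})}$ together with Lemma~\ref{lemma:di} / Corollary~\ref{cor:di} applied to the sequence of nested \texttt{where} witnesses, this rules out any $t$ that would make $\ell^{\rightarrow}$ act for $H^{\rightarrow}$; option (b) means the result type already protects $H^{\rightarrow}$, so the enclosing context is already erased and the observation is trivially preserved. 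For \ruleref{B-Step}, the reduction occurs inside $\bracket{e_1}{e_2}$; by \ruleref{Bracket} the type of the bracket protects $H^{\rightarrow}$, so by the definition of $\observe$ on protected subterms the bracket is erased to $\circ$ before and after the step, preserving the equality.

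The main obstacle will be the \ruleref{E-Assume} case, specifically handling non-source-level inputs supplied by $S$ that may themselves contain \texttt{assume} subterms firing during evaluation. Lemma~\ref{lemma:dc} is tailored precisely for this, but applying it requires carefully threading through (i) that $e$ remains a source-level term under evaluation contexts, (ii) that substitutions from $S$ land inside lexically bounded scopes whose $\pc$ is monotone, and (iii) that the authority-gap bound $\nabla(H^{\rightarrow} - \ell^{\rightarrow})$ propagates through the nested \texttt{where} hierarchy via Corollary~\ref{cor:di}. Once these bookkeeping details are in place, the inductive step closes uniformly across the reduction rules, and the lemma follows.
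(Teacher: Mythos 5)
Your overall strategy---induction on the step, routine structural cases, and an appeal to Lemmas~\ref{lemma:dc} and~\ref{lemma:di} to rule out delegations that close the gap between $H^{\rightarrow}$ and $\ell^{\rightarrow}$---matches the paper's in spirit, but you have misidentified where the hard work actually happens, and the case you omit is the crux. The paper's proof (and its proof sketch in the main text) singles out \ruleref{O-Ctx}, the step $\octx{\ell'}{w} \stepsone w$ in which a fully evaluated term \emph{escapes} its protection context. Before this step the entire subterm may be erased to $\circ$ because $\notrafjudge{\Pi}{\ell^{\rightarrow}}{\ell'^{\rightarrow}}$; after it, $w$ stands exposed, and if $w$ contains a bracket (i.e., the two executions genuinely differ) the equality of observations is in jeopardy. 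This is exactly the point where one must case-split on whether $w$ is $\bracket{\where{w_1}{v_1}}{\where{w_2}{v_2}}$ or $\where{\bracket{w_1}{w_2}}{v}$, use \ruleref{Bracket-Values} to get that the type protects $H^{\rightarrow}$, and---in the second shape, where the bracket sits \emph{under} a delegation witness---invoke Lemma~\ref{lemma:dc} and the contrapositive of Lemma~\ref{lemma:di} together with the hypothesis $\notrafjudge{\Pi}{\pc}{\nabla(H^{\rightarrow}-\ell^{\rightarrow})}$ before Lemma~\ref{lemma:correctobserve} can be applied. Your proposal never mentions \ruleref{O-Ctx}; you handle the \emph{introduction} of $\octx{\ell}{\cdot}$ in the \ruleref{E-BindM*} case but not its elimination, so the argument as written has a hole precisely where the theorem is nontrivial.

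Two smaller points. First, your \ruleref{B-Step} reasoning (``the bracket is erased to $\circ$ before and after the step'') does not follow from the definition of $\observe$: a bracket is erased only when one of its components erases to $\circ$, which is a property of the component terms, not of the bracket's type; the paper simply applies the induction hypothesis to the inner step here. Second, your worry in the \ruleref{E-Assume} case about the new delegation enabling ``a later step'' is misplaced for a single-step lemma---the immediate observation is unchanged because $\observe$ discards the witness, and the delayed effect of the delegation is exactly what resurfaces (and is discharged) in the missing \ruleref{O-Ctx} case. Redirect the machinery you describe for \ruleref{E-Assume} to \ruleref{O-Ctx} and the proof closes.
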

\begin{proof}[Proof Sketch]
  By induction on the evaluation of $e$. The most interesting case is the step \ruleref{O-Ctx} where the term escapes it protection context. Here we invoke the delegation invariance (Lemma~\ref{lemma:di}) to argue that the delegations in a well-typed program cannot close the authority gap between $H^\rightarrow$ and $\ell^{\rightarrow}$, and thus do not change the observability of a term.

  Detailed proof is presented in Appendix~\ref{app:erasecons}.
\end{proof}

Lemma~\ref{lemma:erasecons} holds for mutiple steps as well and is presented in the Corollary~\ref{lemma:eraseconssteps}.

\begin{corollary}[Erasure Conservation for Multiple Steps]\label{lemma:eraseconssteps}
  Suppose \TValGpc{e}{\tau} and let $S$ be a well-typed substitution of $e$ for $Γ$ in $Π$.
  Then for some $H$ and $ℓ$ such that $\notrafjudge{Π}{ ℓ^{→}}{H^{→}}$ and  $\notrafjudge{Π}{\pc}{∇(H^{→} - ℓ^{→})}$,
  if $\observefc{\outproj{e~S}{1}}{Π}{ℓ^{→}} = \observefc{\outproj{e~S}{2}}{Π}{ℓ^{→}}$ and
    $e$ is a source-level term, and for all entries $\subst{}{y}{w_y} ∈ S$ with $\TValGpc{w_y}{Γ(y)}$, either $w_{y}$ is a source-level term or \rflowjudge{Π}{H^{→}}{Γ(y)^{→}}
 then $(e~S) \stepsto e'$ implies $\observefc{\outproj{e'}{1}}{Π}{ℓ^{→}} = \observefc{\outproj{e'}{2}}{Π}{ℓ^{→}}$. 
\end{corollary}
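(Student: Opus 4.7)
The plan is to proceed by induction on the number of evaluation steps $n$ in $(e~S) \stepsto e'$. The base case $n = 0$ is immediate, since $e' = e~S$ makes the conclusion exactly the observability hypothesis. For the inductive step, decompose $(e~S) \stepsone e_1 \stepsto^{n-1} e'$, apply Lemma~\ref{lemma:erasecons} to the first step to obtain $\observefc{\outproj{e_1}{1}}{Π}{ℓ^{→}} = \observefc{\outproj{e_1}{2}}{Π}{ℓ^{→}}$, and then appeal to the inductive hypothesis on the remaining $n-1$ steps from $e_1$.

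The main obstacle is that invoking the inductive hypothesis on $e_1$ requires re-presenting $e_1$ in the form $\hat{e}~\hat{S}$ where $\hat{e}$ is a source-level term and $\hat{S}$ is a well-typed substitution satisfying the disjunctive condition on its entries. After one step, $e_1$ may contain "where" terms introduced by \ruleref{E-Assume} or sealed values substituted by rules like \ruleref{E-BindM}, neither of which are source-level constructs. The strategy is to abstract each non-source fragment of $e_1$ to a fresh variable $y_i$, yielding a source-level skeleton $\hat{e}$ together with a substitution $\hat{S}$ mapping each $y_i$ back to its fragment. Subject reduction (Lemma~\ref{lemma:subjred}) ensures $e_1$ remains well-typed at $τ$, and each substituted fragment inherits an appropriate type from the derivation.

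To discharge the substitution disjunct on each entry of $\hat{S}$, I would appeal to Delegation Compartmentalization (Lemma~\ref{lemma:dc}): for any $\where{w'}{\delexp{r}{t}}$ fragment in $e_1$, either $\rafjudge{Π}{\pc}{\voice{t}}$ holds (case \ref{dc:a}, so monotonicity of $\pc$ through reduction lets the fragment behave essentially as a source-level value), or $\rflowjudge{Π}{H^{→}}{Γ(y_i)^{→}}$ holds (case \ref{dc:b}), which is exactly the second disjunct required of $\hat{S}$. With $\hat{e}$ and $\hat{S}$ so constructed, the hypotheses of the corollary are re-established at $e_1$, the inductive hypothesis applies, and the induction closes.
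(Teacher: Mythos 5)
Your high-level strategy—induction on the number of steps, applying Lemma~\ref{lemma:erasecons} at each step—is exactly what the paper does; its entire proof is ``from the transitive closure of Lemma~\ref{lemma:erasecons}.'' You have also correctly spotted the issue that the paper glosses over: the single-step lemma's hypotheses (that $e$ is source-level and that every entry of $S$ is a source-level where-value or suitably protected) are not literally preserved after one step, so the ``transitive closure'' is not immediate.

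The gap is in your proposed repair. You want to re-express the intermediate term $e_1$ as $\hat{e}~\hat{S}$ by abstracting every non-source fragment into a fresh variable of $\hat{S}$. But a well-typed substitution maps variables to where-\emph{values} $w_y$ with $\TValGpc{w_y}{Γ(y)}$, whereas the non-source constructs introduced by evaluation frequently occur in \emph{evaluation position} and are not values: \ruleref{E-BindM*} produces $\octx{ℓ}{\subst{e}{x}{w}}$ with an arbitrary partially-evaluated body, and \ruleref{E-Assume} produces $\where{e'}{\delexp{p}{q}}$ where $e'$ is an arbitrary expression. Neither can be moved into $\hat{S}$ (they are not where-values), nor left in $\hat{e}$ (they are not source-level). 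So the inductive hypothesis cannot be re-established in the form you describe, and the appeal to Lemma~\ref{lemma:dc} does not help, since it speaks only about where-values that appear as substitution entries or as final results. The cleaner route is to observe that the proof of Lemma~\ref{lemma:erasecons} is itself by induction on the evaluation relation and already has cases for every intermediate form ($\octx{ℓ}{\cdot}$, where-propagation, brackets), so the right move is to strengthen the single-step statement into an invariant on well-typed intermediate terms—tracking, via Lemmas~\ref{lemma:di} and~\ref{lemma:dc}, which delegations those terms can have introduced—and iterate that invariant, rather than trying to force each intermediate term back into source-plus-substitution form.
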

\begin{proof}
  From from the transitive closure of Lemma~\ref{lemma:erasecons}.
\end{proof}

Theorem~\ref{thm:niconf} concerns programs with an input of type $τ'$ and an output of type 
$τ$. The input $x$ is secret, thus $τ'$ must protect $H^{→}$ (Condition~\ref{c:nihigh}).
The outputs, however, are public and therefore information derived from the inputs 
should not flow to the outputs (Condition~\ref{c:noflow}). 
Therefore, if $\pc$ is insufficiently trusted to create new flows from $H^{→}$ to $ℓ^{→}$ 
(Condition~\ref{c:nodown}) 
then executions of $e$
that differ only in the value of $τ'$-typed inputs values must produce traces that are indistinguishable.

\begin{theorem}[Confidentiality Noninterference]
  \label{thm:niconf}
  Let $\TValP{\G;x:τ';\pc}{e}{τ}$ for some $H$ and $ℓ$ such that 
\begin{enumerate}
\item $\rflowjudge{\Pi}{H^{→}}{τ'^{→}}$  \label{c:nihigh}
\item $\notrflowjudge{\Pi}{H^{→}}{ℓ^{→}}$  \label{c:noflow}
\item$\notrafjudge{Π}{\pc}{∇(H^{→} - ℓ^{→})}$\label{c:nodown} 
\end{enumerate}
If $e$ is a source-level term, and $S$ is a well-typed substitution
for $Γ$ in $Π$ where for all entries $\subst{}{y}{w_y} ∈ S$ with $\TValGpc{w_y}{Γ(y)}$, either $w_{y}$ is a
source-level term or \rflowjudge{Π}{H^{→} ∧ ⊤^{←}}{Γ(y)}. Then for all  $w_z, z \in \{ 1, 2 \}$ such that \TValGpc{w_z}{τ'}, if
$\subst{(e~S)}{x}{w_z} \stepstot{t_{z}} w'_{z}$, then $t_{1} ≈^{Π}_{ℓ^{→}} t_{2}$.
\end{theorem}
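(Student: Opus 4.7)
The plan is to reduce the 2-safety property to a single-execution reasoning step by lifting the two executions into a bracketed term, then appeal to the erasure conservation corollary at every step. Concretely, I would form the bracketed substitution $e_b = \subst{(e\,S)}{x}{\bracket{w_1}{w_2}}$ and first verify it is well-typed at $\tau$ with the same $\pc$. By Condition~\ref{c:nihigh} we have $\rflowjudge{\Pi}{H^{\rightarrow}}{\tau'^{\rightarrow}}$, hence $\protjudge{\Pi}{H^{\rightarrow}}{\tau'^{\rightarrow}}$, so the rule \ruleref{Bracket-Values} applies to $\bracket{w_1}{w_2}$ at type $\tau'$. Substituting into the well-typed $(e\,S)$ preserves typing by a standard substitution lemma, giving $\TValGpc{e_b}{\tau}$. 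Note that $\outproj{e_b}{1} = \subst{(e\,S)}{x}{w_1}$ and $\outproj{e_b}{2} = \subst{(e\,S)}{x}{w_2}$.

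Next, I would use Completeness (Lemma~\ref{lemma:complete}) applied to the two given evaluations $\subst{(e\,S)}{x}{w_z} \stepstot{t_z} w'_z$ to obtain a bracketed evaluation $e_b \stepstot{t_b} w'$ with $\outproj{w'}{z} = w'_z$. I also observe that the base case of indistinguishability holds: $\observefc{\outproj{e_b}{1}}{\Pi}{\ell^{\rightarrow}} = \observefc{\outproj{e_b}{2}}{\Pi}{\ell^{\rightarrow}}$, because $\bracket{w_1}{w_2}$ sits at a type $\tau'$ with $\rflowjudge{\Pi}{H^{\rightarrow}}{\tau'^{\rightarrow}}$ and $\notrflowjudge{\Pi}{H^{\rightarrow}}{\ell^{\rightarrow}}$ (Condition~\ref{c:noflow}), so the values substituted for $x$ are erased to $\circ$ on both sides and the surrounding skeleton is identical.

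Then I would invoke Corollary~\ref{lemma:eraseconssteps} along the bracketed trace $t_b$. Subject Reduction (Lemma~\ref{lemma:subjred}) gives that every intermediate term $t_b[k]$ is well-typed at $\tau$ under $\Pi, \pc$. The substitution hypothesis on $S$ in the theorem matches the hypothesis of the corollary verbatim, and Conditions~\ref{c:noflow} and~\ref{c:nodown} supply $\notrafjudge{\Pi}{\ell^{\rightarrow}}{H^{\rightarrow}}$ and $\notrafjudge{\Pi}{\pc}{\voice{H^{\rightarrow} - \ell^{\rightarrow}}}$. Therefore at every $k$ we have $\observefc{\outproj{t_b[k]}{1}}{\Pi}{\ell^{\rightarrow}} = \observefc{\outproj{t_b[k]}{2}}{\Pi}{\ell^{\rightarrow}}$.

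Finally, to conclude $t_1 \approx^{\Pi}_{\ell^{\rightarrow}} t_2$, I would argue that the deduplicated observable trace $\observef{t_b}{\Pi}{\ell^{\rightarrow}}$ coincides with both $\observef{t_1}{\Pi}{\ell^{\rightarrow}}$ and $\observef{t_2}{\Pi}{\ell^{\rightarrow}}$. The key point is that each $\stepsone$ on $e_b$ either arises from \ruleref{B-Step} (advancing exactly one projection while the other is unchanged) or from a bracket-propagation/unbracketed rule (advancing both projections in lockstep). In the first case, by Soundness (Lemma~\ref{lemma:sound}) one projection is fixed and the other takes a step, contributing one element to the corresponding $t_z$; in the second case, both $t_1$ and $t_2$ advance synchronously. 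In either case, using that at every $k$ the two projections agree under erasure, the $\ell^{\rightarrow}$-observation of $t_b$ produces the same deduplicated sequence whether we take the left projection or the right. The main obstacle I anticipate is precisely this bookkeeping step, tracking how \ruleref{B-Step} interleavings induce repeated observations that the trace-level erasure collapses via the deduplication clause, so that the observations of $t_1$ and $t_2$ really do match rather than just being equal at corresponding bracketed indices.
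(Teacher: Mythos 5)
Your proposal follows essentially the same route as the paper's proof: lift the two runs into a bracketed term typed via \ruleref{Bracket-Values} and a substitution lemma, use Completeness to obtain the bracketed trace, establish the base-case equality of observations on the substituted values (the paper does this by invoking Lemma~\ref{lemma:correctobserve} where you argue it directly), and then propagate indistinguishability along the trace with Corollary~\ref{lemma:eraseconssteps}. The only substantive difference is that you flag the projection/deduplication bookkeeping at the end as a potential obstacle, which the paper's proof passes over silently by asserting $\outproj{t}{z} = t_z$; your concern is legitimate but does not change the structure of the argument.
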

\begin{proof}
       From the soundness and completeness properties of the bracketed
       language (Lemmas~\ref{lemma:sound} and~\ref{lemma:complete}), we can
       construct a bracketed execution
        $\subst{(e~S)}{x}{\bracket{v_1}{v_2}} \stepstot{t} v'$
        such that $\outproj{v'}{z} = v'_{z}$ and $\outproj{t}{z} = t_{z}$ for $z = \{1, 2\}$. We
        will occasionally write $v$ or $v'$ as shorthand for
        $\bracket{v₁}{v₂}$ or $\bracket{v'₁}{v'₂}$.

       Since \TValGpc{v_z}{τ'} for $z ∈ \{1,2\}$, then we have
       \TValGpc{\bracket{v₁}{v₂}}{τ'} via \ruleref{Bracket-Values}.
       Therefore, by Lemma~\ref{lemma:vsubst} (Variable Substitution) of
       $\TValP{\G;x:τ';\pc}{e}{\says{ℓ}{τ}}$, we have
       $\TValP{\G;\pc}{\subst{e}{x}{\bracket{v₁}{v₂}}}{\says{ℓ}{τ}}$.
       Then by induction of the number of evaluation steps in $\subst{(e~S)}{x}{v}
       \stepstot{t} e'$ and subject reduction (Lemma~\ref{lemma:subjred}), we have $\TValP{\G;\pc}{e'}{\says{ℓ}{τ}}$.

        We now want to prove that $t_{1} ≈^{Π}_{ℓ^{→}} t_{2}$. First, consider $\subst{(e~S)}{x}{\bracket{v_1}{v_2}}$. 
        To prove that $\observefc{\subst{(e~S)}{x}{v_1}}{Π}{ℓ^{→}} =\observefc{\subst{(e~S)}{x}{v₂}}{Π}{ℓ^{→}}$,   
        it suffices to show that $\observefc{v_1}{Π}{ℓ^{→}} =\observefc{v₂}{Π}{ℓ^{→}}$.
        Note that conditions \ref{c:nihigh} and \ref{c:noflow} satisfy all the necessary conditions for invoking Lemma~\ref{lemma:correctobserve}.
        By Lemma~\ref{lemma:correctobserve} (erasure of projected protected values is equal), we have that
        $\observefc{v_1}{Π}{ℓ^{→}} =\observefc{v₂}{Π}{ℓ^{→}}$.
        Thus $\observefc{\outproj{\subst{e}{x}{v}}{1}}{Π}{ℓ^{→}} =\observefc{\outproj{\subst{e}{x}{v}}{2}}{Π}{ℓ^{→}}$. 
        Invoking erasure conservation (Corollary~\ref{lemma:eraseconssteps}), we
        have $\observefc{\outproj{e'}{1}}{Π}{ℓ^{→}} =\observefc{\outproj{e'}{2}}{Π}{ℓ^{→}}$.
        Thus  $t_{1} ≈^{Π}_{ℓ^{→}} t_{2}$.
\end{proof}

Specializing our noninterference results on confidentiality provides more
precision, but integrity versions of Lemma~\ref{lemma:erasecons} and
Theorem~\ref{thm:niconf} hold by similar arguments.  We present
their statements here, but not the corresponding proofs.

\begin{lemma}[Integrity Erasure Conservation]
  \label{lemma:ierasecons}
  Suppose \TValGpc{e}{\tau} and let $S$ be a well-typed substitution of $e$ for $Γ$ in $Π$.
  Then for some $H$ and $ℓ$ such that $\notrafjudge{Π}{H^{←}}{ ℓ^{←}}$ and  $\notrafjudge{Π}{\pc}{ℓ^{←}-H^{←}}$,
  if $\observefi{\outproj{e~S}{1}}{Π}{ℓ^{←}} = \observefi{\outproj{e~S}{2}}{Π}{ℓ^{←}}$ and
    $e$ is a source-level term, and for all entries $\subst{}{y}{w_y} ∈ S$ with $\TValGpc{w_y}{Γ(y)}$, either $w_{y}$ is a
source-level term or \rflowjudge{Π}{H^{←}}{Γ(y)^{←}}
 then $(e~S) \stepsone e'$ implies $\observefi{\outproj{e'}{1}}{Π}{ℓ^{←}} = \observefi{\outproj{e'}{2}}{Π}{ℓ^{←}}$. 
\end{lemma}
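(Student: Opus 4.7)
The plan is to carry out the integrity analogue of the proof of Lemma~\ref{lemma:erasecons}, proceeding by induction on the evaluation step $(e~S) \stepsone e'$. Most cases dispatch exactly as in the confidentiality proof: congruence and bracket-propagation steps preserve observability by the induction hypothesis together with the soundness of the bracketed semantics (Lemma~\ref{lemma:sound}); the where-propagation steps of Figure~\ref{fig:whererules} are immediate since $\observefi{\where{e}{v}}{\Pi}{\ell^{\leftarrow}} = \observefi{e}{\Pi}{\ell^{\leftarrow}}$ by definition; and steps taken underneath an erased context—the body of a lambda or type abstraction, or a protection context $\octx{\ell'}{\cdot}$ whose label $\ell'^{\leftarrow}$ does not flow to $\ell^{\leftarrow}$—leave the observation unchanged.

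The interesting case, as in the confidentiality proof, is \ruleref{O-Ctx}, where a fully-evaluated value escapes its protection context. Here I must argue that the newly-exposed value cannot encode a distinguishable dependence on substitutions in $S$ that were previously hidden behind the $H^{\leftarrow}$ barrier. This reduces to showing that the delegations accumulated during evaluation are insufficient to bridge the integrity gap between $H^{\leftarrow}$ and $\ell^{\leftarrow}$. I would split those delegations into two groups: those introduced by \ruleref{E-Assume} within the source-level program $e$, for which the \ruleref{Assume} premise together with monotonicity of the $\pc$ directly supplies the required hypothesis $\rafjudge{\Pi_{\text{ext}}}{\pc}{\voice{t}}$; and those wrapped around where-values originating from $S$, for which Lemma~\ref{lemma:dc} (Delegation Compartmentalization) either yields the same hypothesis or else directly witnesses $\rflowjudge{\Pi}{H^{\leftarrow}}{\tau^{\leftarrow}}$, in which case the overall result is itself erased by $\observefi{\cdot}{\Pi}{\ell^{\leftarrow}}$ and the desired equality is immediate.

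Applying Corollary~\ref{cor:di} to this collected set of delegations, any derivation of $\rafjudge{\Pi'}{\ell^{\leftarrow}}{H^{\leftarrow}}$ in the extended context would force $\rafjudge{\Pi}{\pc}{\voice{\ell^{\leftarrow} - H^{\leftarrow}}}$. Since $\ell^{\leftarrow} - H^{\leftarrow}$ is an integrity principal, its voice is itself (Lemma~\ref{lemma:voiceOfInteg}), so this contradicts the lemma's third hypothesis $\notrafjudge{\Pi}{\pc}{\ell^{\leftarrow} - H^{\leftarrow}}$ directly. Consequently no new flow from $H^{\leftarrow}$ to $\ell^{\leftarrow}$ can be enabled by the step, and the integrity erasure equality is preserved. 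The main obstacle, as in the confidentiality proof, lies in the careful accounting of delegations that enter via where-values from $S$ and in plumbing Lemma~\ref{lemma:dc} through the induction; the integrity case is actually slightly more uniform since no $\voice{\cdot}$ wrapper is needed around $\ell^{\leftarrow} - H^{\leftarrow}$, eliminating one layer of bookkeeping present in the confidentiality argument.
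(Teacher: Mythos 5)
Your outline is essentially the argument the paper intends but never writes down: the paper supplies no proof of this lemma, stating only that the integrity versions of Lemma~\ref{lemma:erasecons} and Theorem~\ref{thm:niconf} ``hold by similar arguments.'' Your proposal is the natural transposition of the confidentiality proof in Appendix~\ref{app:erasecons}: induction on the step, trivial congruence and \texttt{where}-propagation cases, and \ruleref{O-Ctx} as the crux, discharged by Delegation Compartmentalization (Lemma~\ref{lemma:dc}) splitting into the branch where the $\pc$ speaks for the delegated principal (handled by Lemma~\ref{lemma:di} and Corollary~\ref{cor:di}) and the branch where the result type is protected at $H^{\pi}$ (handled by correctness of the observation function, Lemma~\ref{lemma:correctobserve}). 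Your remark that $\voice{\ell^{\leftarrow} - H^{\leftarrow}} \equiv \ell^{\leftarrow} - H^{\leftarrow}$ by Lemma~\ref{lemma:voiceOfInteg} is precisely why the lemma statement drops the $\voice{\cdot}$ wrapper present in the confidentiality version.

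One direction needs fixing. Under the integrity projection, a value sealed at $H$ is observable to $\ell^{\leftarrow}$ exactly when $\rafjudge{\Pi}{H^{\leftarrow}}{\ell^{\leftarrow}}$, so the flow that must be ruled out in the extended context is $\rafjudge{\Pi'}{H^{\leftarrow}}{\ell^{\leftarrow}}$ --- the untrusted principal coming to act for the observer's integrity --- not $\rafjudge{\Pi'}{\ell^{\leftarrow}}{H^{\leftarrow}}$ as you wrote. Your stated premise is inconsistent with your own conclusion: applying Corollary~\ref{cor:di} to a derivation of $\rafjudge{\Pi'}{\ell^{\leftarrow}}{H^{\leftarrow}}$ would produce the gap $H^{\leftarrow} - \ell^{\leftarrow}$, whereas the gap you (correctly) name, $\ell^{\leftarrow} - H^{\leftarrow}$, is the one associated with $p = H^{\leftarrow}$, $q = \ell^{\leftarrow}$, and it is this gap that contradicts the hypothesis $\notrafjudge{\Pi}{\pc}{\ell^{\leftarrow} - H^{\leftarrow}}$. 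With the premise of that appeal corrected to $\rafjudge{\Pi'}{H^{\leftarrow}}{\ell^{\leftarrow}}$, the bookkeeping is consistent and the argument goes through.
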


\begin{theorem}[Integrity Noninterference]
  \label{thm:niinteg}
  Let $\TValP{\G;x:τ';\pc}{e}{τ}$ for some $H$ and $ℓ$ such that 
\begin{enumerate}
\item $\rflowjudge{\Pi}{H^{←}}{τ'}$  %
\item $\notrflowjudge{\Pi}{H^{←}}{ℓ^{←}}$  %
\item $\notrafjudge{Π}{\pc}{ℓ^{←}-H^{←}}$ %
\end{enumerate}
If $e$ is a source-level term, and $S$ is a well-typed substitution
for $Γ$ in $Π$ where for all entries $\subst{}{y}{w_y} ∈ S$ with $\TValGpc{w_y}{Γ(y)}$, either $w_{y}$ is a
source-level term or \rflowjudge{Π}{H^{←}}{Γ(y)^{←}}. Then for all  $w_z, z \in \{ 1, 2 \}$ such that \TValGpc{w_z}{τ'}, if
$\subst{(e~S)}{x}{w_z} \stepstot{t_{z}} w'_{z}$, then $t_{1} ≈^{Π}_{ℓ^{←}} t_{2}$.
\end{theorem}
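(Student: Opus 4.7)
The plan is to mirror the proof of Theorem~\ref{thm:niconf} (Confidentiality Noninterference), but using the integrity observation projection $\observefi{\cdot}{Π}{ℓ^{←}}$ and the integrity variant of erasure conservation (Lemma~\ref{lemma:ierasecons}) in place of their confidentiality counterparts. The overall strategy is to lift the two separate executions $\subst{(e~S)}{x}{w_1} \stepstot{t_1} w'_1$ and $\subst{(e~S)}{x}{w_2} \stepstot{t_2} w'_2$ into a single bracketed execution, then show that each individual step preserves integrity-indistinguishability of the projected traces, and finally conclude $t_1 \approx^{Π}_{ℓ^{←}} t_2$.

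First, I would invoke the completeness of the bracketed semantics (Lemma~\ref{lemma:complete}) to produce a combined bracketed execution of the form $\subst{(e~S)}{x}{\bracket{w_1}{w_2}} \stepstot{t} w'$ with $\outproj{t}{z} = t_z$ and $\outproj{w'}{z} = w'_z$. To establish that this bracketed term is well-typed, I would apply \ruleref{Bracket-Values} using condition~(1), $\rflowjudge{\Pi}{H^{←}}{τ'}$, which gives $\protjudge{Π}{H^{←}}{τ'^{←}}$ (the protection premise needed by \ruleref{Bracket-Values}). Variable substitution (Lemma~\ref{lemma:vsubst}) then yields $\TValP{\G;\pc}{\subst{(e~S)}{x}{\bracket{w_1}{w_2}}}{τ}$, and subject reduction (Lemma~\ref{lemma:subjred}) propagates well-typedness along the bracketed evaluation.

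Next, I would show that the initial bracketed term already satisfies $\observefi{\outproj{\subst{(e~S)}{x}{\bracket{w_1}{w_2}}}{1}}{Π}{ℓ^{←}} = \observefi{\outproj{\subst{(e~S)}{x}{\bracket{w_1}{w_2}}}{2}}{Π}{ℓ^{←}}$. The only point where the two projections can differ is at the substituted variable, but at that point the bracketed value $\bracket{w_1}{w_2}$ has type $τ'$ with $\rflowjudge{\Π}{H^{←}}{τ'}$, so the integrity analogue of Lemma~\ref{lemma:correctobserve} forces the two projections to have the same erasure at $ℓ^{←}$ (using conditions~(1) and~(2)). Iteratively applying the multi-step integrity erasure conservation corollary (analogous to Corollary~\ref{lemma:eraseconssteps}, derived from the transitive closure of Lemma~\ref{lemma:ierasecons}) along the bracketed trace then preserves this equality through every step. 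Since condition~(3) supplies precisely the hypothesis $\notrafjudge{Π}{\pc}{ℓ^{←}-H^{←}}$ required by Lemma~\ref{lemma:ierasecons}, and the well-typedness assumption on $S$ matches the substitution-compartmentalization hypothesis, every step's erasure is conserved.

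The main obstacle is the inductive erasure-conservation argument itself; fortunately, that work has already been discharged in Lemma~\ref{lemma:ierasecons}, which in turn relies on Delegation Invariance (Lemma~\ref{lemma:di}) and Delegation Compartmentalization (Lemma~\ref{lemma:dc}) to rule out the possibility that an \texttt{assume} inside $e$ (or a hand-off term imported via $S$) introduces a delegation closing the integrity gap $ℓ^{←} - H^{←}$. With erasure preservation established at each step, the conclusion $t_1 \approx^{Π}_{ℓ^{←}} t_2$ follows from the definition of trace indistinguishability and the fact that $\outproj{t}{z} = t_z$.
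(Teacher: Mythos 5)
Your proposal is correct and matches the paper's intended argument: the paper explicitly omits the proof of this theorem, stating only that the integrity version "holds by similar arguments" to Theorem~\ref{thm:niconf}, and your proof is precisely that argument—bracketed completeness, \ruleref{Bracket-Values} via condition~(1), variable substitution and subject reduction, the integrity analogue of Lemma~\ref{lemma:correctobserve} for the initial terms, and the multi-step integrity erasure conservation corollary (using condition~(3) to satisfy Lemma~\ref{lemma:ierasecons}) to propagate indistinguishability along the trace. No gaps.
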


Noninterference is a key tool for obtaining many of the security properties we
seek.  For instance, noninterference is essential for verifying the properties of
commitment schemes and bearer credentials discussed in Section~\ref{sec:ex}.  

Unlike some definitions of noninterference, our definition does not
prohibit all downgrading. Instead, Conditions~\ref{c:nihigh}
and~\ref{c:noflow} define relationships between principals $H$ and
$ℓ$, and Condition~\ref{c:nodown} ensures (via
Lemma~\ref{lemma:di}) those relationships remain unchanged by
delegations made within the program, and thus by Lemma~\ref{lemma:erasecons},
the trace of a program observable to an attack remains unchanged during evaluation.

\subsection{Robust declassification}

_Robust declassification_~\cite{zm01b} requires disclosures of secret information to be
independent of low-integrity information.  Robust declassification
permits some confidential information to be disclosed to an attacker,
but attackers can influence neither the decision to disclose
information nor the choice of what information is disclosed.
Therefore, robust declassification is a more appropriate security
condition than noninterference when programs are intended to disclose
information.

Following Myers et al.~\cite{msz06}, we extend our set of terms with a ``hole'' term $[•^{τ}]$
representing portions of a program that are under the control of an attacker.  Attackers may insert any
term of type $τ$ to complete the program.
We extend the type system with the following rule for holes, parameterized on the same $H$
used by the bracketed typing rules \ruleref{Bracket} and \ruleref{Bracket-Values}:
\par\nobreak
{\small
   \begin{mathpar}
     \Rule{Hole}{
      \rafjudge{Π}{H^{←}}{\pc^{←}} \\
      \rflowjudge{Π}{\pc^{→}}{Δ(H^{←})} \\
     }{\TValGpc{\bullet^{τ}}{τ}}
    \hfill 
    \end{mathpar}
}

Where $Δ(H^{←})$ is the _view_ of principal $H^{←}$. The view of a
principal is a dual notion to voice, introduced by Cecchetti \emph{et
al.}~\cite{nmifc} to represent the confidentiality observable to a principal.  
Given a principal in normal form $\confid q \tjoin \integ r$, the view 
of that principal is
\[
    Δ({\confid q \tjoin \integ r}) \triangleq \confid q \tjoin \confid r
\] In other words $Δ(H^{←})$ represents the confidentiality of information observable to the attacker $H^{←}$.
Therefore, the \ruleref{Hole} premises \rafjudge{Π}{H^{←}}{\pc^{←}} and \rflowjudge{Π}{\pc^{→}}{Δ(H^{←})} 
require that holes be inserted only in contexts that are controlled by and observable to the attacker.

We write $e[\vec{•}^{\vec{τ}}]$ to denote a program $e$ with holes.  Let an _attack_ be a
vector $\vec{a}$ of terms and $e[\vec{a}]$ be the program where $aᵢ$ is
substituted for $•ᵢ^{τᵢ}$.

\begin{definition}{$Π$-fair attacks}
  \label{def:fair}
An attack $\vec{a}$ is a $Π$-_fair attack_ on a
well-typed program with holes $\TValGpc{e[\vec{•}^{\vec{τ}^{*}}]}{τ}$ if the program $e[\vec{a}]$ is also well typed (thus $\TValGpc{e[\vec{a}]}{τ}$),
and furthermore, for each $aᵢ ∈ \vec{a}$, we have $\TValP{Γ^{*}_i;H^{←} ∧ Δ(H^{←})}{aᵢ}{τ^{*}_i}$ where each entry in $Γ^{*}_i$ has the form $y\ty \says{ℓ}{τ''}$  
and $\rflowjudge{Π}{ℓ^{→}}{Δ(H^{←})}$.
\end{definition}

By specifying the relationships between confidential information
labeled $H^{→}$ and the integrity of the attacker $H^{←}$, we can
precisely express the authority an attacker is able to wield in its 
$Π$-fair attacks against protected information.  Proposition~\ref{prop:niatk} 
states that as long as the attacker cannot observe secret information labeled $H^{→}$, or $\notrflowjudge{Π}{H^{→}}{Δ(H^{←})}$,  
then $Π$-fair attacks cannot reference secret variables directly.

\begin{proposition}[No free secrets]
  \label{prop:niatk}
  For $H$ and $Π$ such that $\notrflowjudge{Π}{H^{→}}{Δ(H^{←})}$,
  suppose $\vec{a}$ is a $Π$-fair attack on a program $e[\vec{•}^{\vec{τ}^{*}}]$ where $\TValP{\G,x\ty τ';\pc}{e[\vec{•}^{\vec{τ}^{*}}]}{τ}$. Then \rflowjudge{Π}{H^{→}}{τ'} implies
  $x$ is not a free variable of $aᵢ ∈ \vec{a}$ for all $i$.
\end{proposition}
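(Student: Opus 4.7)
The plan is to argue by contradiction. Suppose $x$ appears free in some attack term $a_i$. Since $\vec{a}$ is a $\Pi$-fair attack, Definition~\ref{def:fair} gives $\TValP{\G^{*}_i;H^{\leftarrow} \wedge \Delta(H^{\leftarrow})}{a_i}{\tau^{*}_i}$, so $x$ being free in $a_i$ forces $x$ to appear in $\G^{*}_i$. Moreover, because $e[\vec{a}]$ is required to be well-typed and the overall typing context records $x\ty\tau'$, the entry for $x$ in $\G^{*}_i$ must be $x\ty\tau'$ (the same variable cannot be simultaneously bound to incompatible types at the hole's position).

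Next I would apply the structural restriction on fair-attack contexts: every entry of $\G^{*}_i$ has the form $y\ty\says{\ell}{\tau''}$ with $\rflowjudge{\Pi}{\ell^{\rightarrow}}{\Delta(H^{\leftarrow})}$. Specializing this to $x$, there exist $\ell,\tau''$ with $\tau' = \says{\ell}{\tau''}$ and $\rflowjudge{\Pi}{\ell^{\rightarrow}}{\Delta(H^{\leftarrow})}$.

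Now I would combine this with the hypothesis that $\tau'$ protects $H^{\rightarrow}$ (i.e., $\protjudge*{H^{\rightarrow}}{\tau'}$, which is what $\rflowjudge{\Pi}{H^{\rightarrow}}{\tau'}$ denotes when the right-hand side is a type). Since $\tau' = \says{\ell}{\tau''}$, the only protection rule that can derive $\protjudge{\Pi}{H^{\rightarrow}}{\says{\ell}{\tau''}}$ is \ruleref{P-Lbl}, whose premise yields $\rflowjudge{\Pi}{H^{\rightarrow}}{\ell}$, and hence $\rflowjudge{\Pi}{H^{\rightarrow}}{\ell^{\rightarrow}}$. Transitivity with $\rflowjudge{\Pi}{\ell^{\rightarrow}}{\Delta(H^{\leftarrow})}$ then gives $\rflowjudge{\Pi}{H^{\rightarrow}}{\Delta(H^{\leftarrow})}$, contradicting the standing assumption $\notrflowjudge{\Pi}{H^{\rightarrow}}{\Delta(H^{\leftarrow})}$.

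The main obstacle I expect is the bookkeeping step of showing that if $x$ occurs free in $a_i$ then it must appear in $\G^{*}_i$ with exactly the type $\tau'$ from the surrounding program. This is morally a standard consistency property about typing contexts under substitution into holes, but because the definition of fair attack decouples $\G^{*}_i$ from the global $\G,x\ty\tau'$, one has to appeal explicitly to the requirement that $e[\vec{a}]$ remain well-typed to justify that $\G^{*}_i$ extends the context visible at hole $\bullet_i$, and in particular contains $x\ty\tau'$ whenever $x$ is in scope there. Once that identification is made, the contradiction above closes the argument immediately.
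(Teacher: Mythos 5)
Your proposal is correct and follows essentially the same route as the paper's (much terser) proof: both arguments use the $\Pi$-fairness constraint that every entry of $\Gamma^{*}_i$ is a \texttt{says} type whose label flows to $\Delta(H^{\leftarrow})$, together with $\protjudge{\Pi}{H^{\rightarrow}}{\tau'}$ and $\notrflowjudge{\Pi}{H^{\rightarrow}}{\Delta(H^{\leftarrow})}$, to conclude that $x$ cannot appear in $\Gamma^{*}_i$ and hence is not free in $a_i$. The bookkeeping issue you flag---that $x$'s entry in $\Gamma^{*}_i$ must carry the type $\tau'$ from the surrounding program---is left implicit in the paper's proof as well, and your explicit appeal to \ruleref{P-Lbl} and transitivity merely spells out the step the paper compresses into ``by the definition of $\Pi$-fair attacks.''
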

\begin{proof}
Since $\vec{a}$ is a $Π$-fair attack, we have $\TValP{Γ'ᵢ;H^{←} ∧ Δ(H^{←})}{aᵢ}{τ^{*}_i}$ for each $aᵢ$ in $\vec{a}$ and $τ^{*}ᵢ$ in $\vec{τ}^{*}$.  By assumption,
\rflowjudge{Π}{H^{→}}{τ'} and \notrflowjudge{Π}{H^{→}}{Δ(H^{←})}, so by the definition of $Π$-fair attacks, $x$ cannot be in $Γ'ᵢ$.
Therefore, $x$ is not free in $aᵢ$.
\end{proof}

In other words, $Π$-fair attacks must be well-typed on variables observable to the attacker in delegation context $Π$.
Unfair attacks give the attacker enough power to break
security directly by creating new declassifications without exploiting existing ones.  Restricting attacks to noninterfering 
$Π$-fair attacks also rule out nonsensical scenarios such as when the ``attacker'' has the authority to read the confidential information.
Fair attacks under the conditions of Proposition~\ref{prop:niatk} represent a reasonable upper-bound on the power of the attacker over
low-integrity portions of the program.  Intuitively, low-integrity portions of the program are treated as though they are executed on
a host controlled by the attacker.

Our robust declassification theorem is stated in terms of an attacker's delegation
context $Π_{H}$ which is used to specify the noninterfering $Π_{H}$-fair
attacks on a program. The program itself is typed under a distinct delegation
context $Π$ that may be extended by \texttt{assume} terms to permit
new flows of information, including flows observable to the attacker.  

\begin{restatable}[Robust declassification]{theorem}{robdecl}
  \label{thm:robdecl}
Suppose $\TValP{Γ,x\ty τ',Γ';\pc}{e[\vec{•}^{\vec{τ}^{*}}]}{τ}$.
For $Π_{H}$ and $H$ such that 
  \begin{enumerate}
    \item $\protjudge{Π_H}{H^{→}}{τ'}$ 
    \item $\notrflowjudge{Π_H}{H^{→}}{Δ(H^{←})}$\label{cond:rd2} 
    \item $\notrafjudge{Π_H}{H^{←}}{∇(H^{→})}$, \label{cond:rd3} 
  \end{enumerate}
Then for all $Π_{H}$-fair attacks $\vec{a₁}$ and $\vec{a₂}$  such that  $\TValP{Γ,x\ty τ',Γ';\pc}{e[\vec{aᵢ}]}{τ}$ and $\TValP{Γ;\pc}{vᵢ}{τ'}$, if $\subst{e[a_{j}]}{x}{v_{i}} \stepsone e'_{ij}$ for $i,j ∈ \{1,2\}$,
then for the traces $t_{ij}=(\subst{e[a_{j}]}{x}{v_{i}})·e'_{ij})$, we have $$t_{11} ≈^{Π}_{Δ(H^{←})} t_{21} \iff t_{12} ≈^{Π}_{Δ(H^{←})} t_{22}$$
\end{restatable}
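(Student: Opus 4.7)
The plan is to combine a bracketed semantics over the secret inputs with the Delegation Compartmentalization lemma (Lemma~\ref{lemma:dc}). By symmetry of the biconditional, it suffices to prove one direction: from $t_{11} \approx^{\Pi}_{\Delta(H^{←})} t_{21}$ deduce $t_{12} \approx^{\Pi}_{\Delta(H^{←})} t_{22}$; swapping the roles of $\vec{a_1}$ and $\vec{a_2}$ handles the converse.

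First I would form, for each attack $j \in \{1,2\}$, the secret-bracketed program $B_j = \subst{e[\vec{a_j}]}{x}{\bracket{v_1}{v_2}}$ and type it under the bracketed system at parameter $H$ with $\pi = {\rightarrow}$. Condition~1 of the theorem, $\protjudge{\Pi_H}{H^{→}}{\tau'}$, discharges (after applying the projection proposition from Section~\ref{sec:flac}) the protection premise of \ruleref{Bracket-Values}, so $\bracket{v_1}{v_2}$ and therefore $B_j$ are well-typed. By completeness of the bracketed semantics (Lemma~\ref{lemma:complete}), the bracketed execution of $B_j$ yields a bracketed trace whose left and right projections are exactly $t_{1j}$ and $t_{2j}$, and subject reduction (Lemma~\ref{lemma:subjred}) preserves the $H^{→}$-protecting type of the secret brackets throughout evaluation.

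The central step is an invariant stating that the observable-to-$\Delta(H^{←})$ portion of the bracketed trace of $B_j$ depends only on $e$ and on $\bracket{v_1}{v_2}$, not on which fair attack was substituted. This rests on Lemma~\ref{lemma:dc} and Corollary~\ref{cor:di} instantiated with $\pi = {\rightarrow}$. Fair attacks type at pc $H^{←} \wedge \Delta(H^{←})$, and condition~3 gives $\notrafjudge{\Pi_H}{H^{←}}{\voice{H^{→}}}$; hence no \ruleref{Assume} executed within an attack can satisfy the integrity premise required to close the authority gap $\voice{H^{→} - \ell^{→}}$ for any observer-visible $\ell^{→}$, while condition~2 excludes the degenerate case where this gap already vanishes. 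Consequently, no transitive chain of attack-introduced delegations---even when composed with program-introduced ones---can enable a flow from a secret-bracketed value to an attacker-observable position unless the declassification path already exists in $e$, and such a path is identical across $B_1$ and $B_2$.

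With this invariant in place, Confidentiality Erasure Conservation (Corollary~\ref{lemma:eraseconssteps}) applied to each $B_j$ gives $\observefc{t_{1j}}{\Pi}{\Delta(H^{←})} = \observefc{t_{2j}}{\Pi}{\Delta(H^{←})}$ precisely when $e$'s execution does not expose the bracketed secret at $\Delta(H^{←})$---a condition structural in $e$ and uniform in $j$. The biconditional follows immediately. The principal obstacle I anticipate is formalizing the compartmentalization invariant rigorously: showing that attack-introduced delegations remain genuinely compartmentalized as they interact transitively with the dynamically-growing delegation context during evaluation. This will require a careful inductive invariant pairing subject reduction with case~(a) of Lemma~\ref{lemma:dc} applied to each attack-value substitution, together with a treatment of how \ruleref{Where}-wrapped delegations carried by fair attacks compose with those introduced by \ruleref{E-Assume} steps from $e$ itself.
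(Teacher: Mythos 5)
Your plan has two genuine gaps, and they sit exactly where the paper's proof does its real work. First, the appeal to Confidentiality Erasure Conservation (Corollary~\ref{lemma:eraseconssteps}) is unavailable in this setting: that corollary carries the hypothesis $\notrafjudge{Π}{\pc}{∇(H^{→} - ℓ^{→})}$, i.e., the program counter is too untrusted to authorize new flows from $H^{→}$ to the observer. Robust declassification deliberately does \emph{not} assume this --- the program $e$ is typed at a $\pc$ that may be trusted enough to declassify (see the example program $p$ given after Theorem~\ref{thm:robdecl}, which runs at a high-integrity $\pc$ and performs a declassification). If that hypothesis did hold, Theorem~\ref{thm:niconf} would already force both sides of the biconditional to be true and the theorem would be vacuous. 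So you cannot extract from that corollary the claimed characterization that $\observef{t_{1j}}{Π}{Δ(H^{←})} = \observef{t_{2j}}{Π}{Δ(H^{←})}$ holds ``precisely when $e$ does not expose the secret'': the corollary gives only one implication, and only under a hypothesis that fails here. Second, your key invariant --- that whether the $v_1$- and $v_2$-runs are distinguishable is ``uniform in $j$'' --- is asserted rather than established. The fact that grounds it is Proposition~\ref{prop:niatk} (No free secrets): Condition~\ref{cond:rd2} together with the typing environment required of $Π_H$-fair attacks forces $x ∉ FV(aᵢ)$, so the secret values land only at occurrences of $x$ inside $e$ itself, identically for both attacks. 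Your proposal uses Condition~\ref{cond:rd2} only to ``exclude a degenerate case,'' and your delegation-compartmentalization reasoning constrains what \ruleref{Assume} terms inside an attack can authorize, but it says nothing about an attack that simply mentions $x$ directly --- which is precisely what would destroy uniformity across $j$.

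For comparison, the paper's proof is more elementary: it inducts on the single evaluation step in the statement (note the relation is $\stepsone$, so each trace $t_{ij}$ has exactly two elements), and in each case uses Proposition~\ref{prop:niatk} together with Lemma~\ref{lemma:valueequiv} to reduce $t_{1j} ≈^{Π}_{Δ(H^{←})} t_{2j}$ to the attack-independent condition that either $x$ does not occur free or $\observef{v_1}{Π}{Δ(H^{←})} = \observef{v_2}{Π}{Δ(H^{←})}$; the biconditional then follows by transporting that condition between $j=1$ and $j=2$. Neither Lemma~\ref{lemma:dc} nor Corollary~\ref{cor:di} is needed. If you add the No-free-secrets step and replace the erasure-conservation appeal with this substitution argument, your bracketed framing could likely be repaired, but as written the plan does not go through.
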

\begin{proof}[Proof Sketch]
 By induction on the evaluation of $e$. For detailed proof, refer to Section~\ref{app:robdecl}.
\end{proof}

Our formulation of robust declassification is in some sense more general than
previous definitions since it permits some endorsements. Previous definitions of
robust declassification~\cite{msz06,zm01b} forbid endorsement altogether.
_Qualified robustness_~\cite{msz06} treats endorsed values as having an arbitrary 
value, so executing a program with endorsements generates a \emph{set} of traces. Two
executions are considered indistinguishable under qualified robustness if, for each trace 
generated by one execution, there exists a low-equivalent trace that is generated by the 
other execution. 
Our definition of robust declassification does not apply to programs that endorse and then 
declassify values that may have been influenced by the attacker. Qualified robustness
does apply to such programs.  

For some programs, FLAC offers a stronger guarantee than the
possibilistic one offered by qualified robustness.  Consider
$H="bob"^{←} ∧ "alice"^{→}$ and two secret inputs
$\returng{"alice"^{→} ∧ "carol"^{←}} vᵢ$ for $i∈{1, 2}$.  Program
$p[•^{τ'}]$ declassifies "alice"'s secret to "bob", then passes the result
to a function that endorses it from "carol" to "bob".  The result is
passed to a function controlled by the attacker, $"bob"^{←}$.

\begin{align*}
  "declass" = & \lamc{y}{\says{"alice"^{→} ∧ "carol"^{←}}{τ}}{"alice"^{←}}{} & \\
                     &  \qquad {\assume{\delexp{"bob"}{"alice"}}{(\bind{y'}{{y}}{\returng{"bob"^{→} ∧ "carol"^{←}}{y'}})}} & \\
  "endorse" = &\lamc{y}{\says{"bob"^{→} ∧ "carol"^{←}}{τ}}{"carol"^{←}}{} & \\
                      & \qquad {\assume{\delexp{"bob"}{"carol"}}{(\bind{y'}{y}{\returng{"bob"^{→} ∧ "bob"^{←}}{y'}})}} & \\ 
  p[•^{τ'}] = &(\lamc{y}{\says{"bob"}{τ}}{"bob"^{←}}{[•^{τ'}]}) ~ ("endorse"~("declass"~x)) & \\
\end{align*}
\noindent $p$ is well typed in the following context: 
{\small
$$\TVal{\delexp{"alice"^{←}}{"bob"^{←}}, \delexp{"carol"^{←}}{"bob"^{←}}; x: \says{"alice"^{→} ∧ "carol"^{←}}{τ};("alice" ∧ "carol")^{←}}{p[•^{τ'}]}{\says{"bob"}{τ}}$$
}

The program $p$ and choice of $H$ fullfil the conditions of
Theorem~\ref{thm:robdecl} with the secret inputs substituted for the
variable $x$ and $Π_{H}=Π=\{\delexp{"alice"^{←}}{"bob"^{←}},
\delexp{"carol"^{←}}{"bob"^{←}}\}$.  Thus no fair attacks substituted
into the hole can violate robust declassification, for all traces
generated by different choices of attacks and inputs.  
Under the qualified robustness semantics, the value returned by
"endorse" would be treated as an arbitrary value and would guarantee
only that, out of all of the traces generated by each choice of return
value, there _exists_ some trace that satisfies robustness.

Theorem~\ref{thm:robdecl} also permits some declassifications that prior
definitions of robust declassification reject.  For example, our definition admits
declassifications of $x$ even if $\rflowjudge{Π}{H^{←}}{τ'}$. In other words, 
even though low-integrity attacks cannot influence declassification, 
it is possible to declassify a secret input that has low-integrity. Therefore, 
an attacker that is permitted to influence the secret input could affect how 
much information is revealed.
This is an example of a _malleable information flow_~\cite{nmifc}, which neither FLAC nor prior definitions (in the presence of endorsement)
prevent in general.  
Cecchetti \emph{et al.}~\cite{nmifc} present a language 
based on FLAC that replaces "assume" with restricted
declassification and endorsement terms to enforce nonmalleable
information flow.

\section{Examples revisited}
\label{sec:exre}
We now implement our examples from Section~\ref{sec:ex} in FLAC and discuss
their formal properties.
Using FLAC ensures that authority 
and information flow assumptions are explicit, and that programs using these
abstractions are secure with respect to those assumptions.
In this section, we discuss how FLAC types 
help enforce specific end-to-end security properties for commitment schemes and bearer credentials.

\subsection{Commitment Schemes}
\label{sec:commit}

\begin{figure}
{\small
\begin{flalign*}
& \texttt{commit}\ty \tfuncpc{N}{p^{←}}{\tfuncpc{X}{p^{←}}{\func{N}{p^{←}}{\func{\says{p^{→}}{X}}{p^{←}}{\says{p}{(N,X)}}}}} &\\
& \texttt{commit} = \tlam{N}{p^{←}}{\tlam{X}{p^{←}}{\lamc{n}{N}{p^{←}}{\lamc{x}{\says{p^{→}}{X}}{p^{←}}{}}}} &\\
&\quad \qquad \qquad  \assume{\delexp{⊥^{←}}{p^{←}}}{\bind{x'}{x}{\return{p}{(n,x')}}}
\end{flalign*}
\vspace{-1.5em}
\begin{flalign*}
& \texttt{reveal} \ty \tfuncpc{N}{∇(p^{→})}{\tfuncpc{X}{q^{←}}{\func{\says{p}{(N,X)}}{q^{←}}{\says{q^{→} ∧ p^{←}}{(N,X)}}}} &\\
& \texttt{reveal} = \tlam{N}{∇(p^{→}) ∧ p^{←}}{\assume{\delexp{∇(q^{→})}{∇(p^{→})}}{\assume{\delexp{q^{←}}{p^{←}}}{}}} \\
&\quad \qquad \qquad \assume{\delexp{q^{→}}{p^{→}}}{\tlam{X}{q^{←}}{\lamc{x}{\says{p}{(N,X)}}{q^{←}}{\bind{x'}{x}{\return{q^{→} ∧ p^{←}}{x'}}}}} \\ 
\end{flalign*}
\vspace{-1.5em}
\begin{flalign*}
& \texttt{open} \ty \tfuncpc{N}{q^{←}}{\tfuncpc{X}{q^{←}}{\func{(\tfuncpc{Y}{q^{←}}{\func{\says{p}{(N,Y)}}{q^{←}}{\says{q^{→} ∧ p^{←}}{(N,Y)}}})}{q^{←}}{}}} & \\
& \qquad  \qquad \func{\says{p}{(N,X)}}{q^{←}}{\says{q^{←}}{(\says{q^{→} ∧ p^{←}}{(N,X)})}} &\\
& \texttt{open} =\tlam{N}{q^{←}}{\tlam{X}{q^{←}}{\lamc{f}{(\tfuncpc{Y}{q^{←}}{\func{\says{p}{(N,Y)}}{q^{←}}{\says{q^{→} ∧ p^{←}}{(N,Y)}}})}{q^{←}}{}}} &\\
&\quad \qquad \qquad  \lamc{x}{\says{p}{(N,X)}}{q^{←}}{\return{q^{←}}{(f~X~x)}}
\end{flalign*}
}
\caption{FLAC implementations of commitment scheme operations.}
\label{fig:commit}
\end{figure}

Figure~\ref{fig:commit} contains the essential operations of a
commitment scheme—"commit", "reveal", and "open"—implemented in FLAC.
Principal $p$ commits to pairs of the form $(n,v)$ where $n$ is a
term that encodes a type-level integer $N$.
Any reasonable integer encoding is
permissible provided that each integer $N$ is a singleton type. For example, we could use pairs
to define the zero type as $\voidtype$ and the successor type as $\prodtype{\voidtype}{τ}$ where $τ$ is
any valid integer type. Thus an integer $N$ would be represented by $N$ nested pairs.
To prevent $q$ from influencing which committed values are revealed, 
$p$ commits to a single value for each integer type.

The "commit" operation takes a value of any type (hence $∀X$) with confidentiality $p^{→}$ and produces a
value with confidentiality and integrity $p$. In other words, $p$ _endorses_
\cite{zznm02} the value to have integrity $p^{←}$.

Attackers should not be able to influence whether principal $p$
commits to a particular value.  The $\pc$ constraint on "commit" ensures that only principal $p$ and
principals trusted with at least $p$'s integrity, $p^{←}$, may apply "commit" to
a value.
Furthermore, if the programmer omitted this constraint or
instead chose $⊥^{←}$, then "commit" would be rejected by the
type system. Specifically, the "assume" term would not type-check via rule
\ruleref{Assume} since the $\pc$ does not act for $∇(p^{←})$, which is equal to $p^{←}$. 

When $p$ is ready to reveal a previously committed value, it instantiates the function
"reveal" with the integer type $N$ of the committed value and sends the result to $q$.
The body of "reveal" creates delegations \delexp{∇(q^{→})}{∇(p^{→})} and \delexp{q^{→}}{p^{→}}, permitting
the inner function to relabel its argument from $p$ to $q^{→} ∧ p^{←}$.
The outer "assume" term establishes that principals speaking for $q^{→}$ 
also speak for $p^{→}$ by creating an integrity relationship between their voices. 
 With this relationship in place, the inner "assume" term may delegate $p$'s confidentiality to $q$.\footnote{
Specifically, it satisfies the \ruleref{Assume} premise $\rafjudge{Π}{\voice{\confid{p}}}{\voice{\confid{q}}}$.}

Only principals trusted by $∇(p^{→})$ and $p^{←}$ may instantiate "reveal" with the integer $N$ of the commitment, 
therefore $q$ cannot invoke "reveal" 
until it is authorized by $p$. If, for example, the type function had $\pc$ annotation $q$ instead of
$∇(p^{→})$, it would be rejected by the type system since the "assume" term would not check under the \ruleref{Assume} rule.
Note however, that once the delegation is established, the inner function may be invoked by $q$ since it
has the $\pc$ annotation $q^{←}$.  Without the delegation, however,
the "bind" term would fail to type-check under \ruleref{BindM}
since $p$ would not flow to $q^{→} ∧ p^{←}$.

Given a function of the type instantiated by "reveal" and a committed value 
of type $\says{p}{X}$, the "open" function may be 
invoked by $q$ to relabel the committed value to a type observable by $q$.
Because "open" may only be invoked by principals trusted by $q^{←}$, $p$ cannot
influence which value of type $\says{p}{(N,X)}$ "open" is applied to. If $q$ only accepts a
one value for each integer type $N$, then the argument must be the one originally 
committed to by $p$.
Furthermore, since the "reveal" function provided to $p$ is parametric with respect to $X$, it 
cannot return a value other than (a relabeled version of) its argument.
The return type $\says{q^{←}}{\says{q^{→} ∧ p^{←}}{X}}$ protects this relationship 
between the committed value and the opened one: it indicates that $q$ trusts 
that $\says{q^{→} ∧ p^{←}}{X}$ is the one originally committed to by $p$.
\footnote{The original commitment scheme presented in Arden and Myers~\cite{flac} 
contained a "receive" and "open" terms that are rejected by our type system because of the
updated \ruleref{UnitM} rule. The new premise $\rflowjudge{Π}{\pc}{ℓ}$ would require
these terms to be executed at a $\pc$ trusted by both $p$ and $q$. Since such a $\pc$ does not adequately 
model cryptographic commitment schemes where no trusted third party is required, we modified
the scheme to better fit the new type system requirements.}

The real power of FLAC is that the security guarantees of well-typed FLAC
functions like those above are compositional.  The FLAC type system ensures the
security of both the functions themselves and the programs that use them.  For
instance, the following code should be rejected because it would permit
$q$ to reveal $p$'s commitments.  
$$\tlam{N}{q^{←}}{\tlam{X}{q^{←}}{\lamc{x}{\says{p∧q^{←}}{X}}{q^{←}}{\assert{\delexp{q}{p}}{"reveal"~N~X~x }}}}$$
The $\pc$ constraints on this function all have the integrity of $q$, but the body of the function 
uses "assume" to create a new delegation from $p$ to $q$. If this "assume" was permitted by the type system, 
then $q$ could use the function to reveal $p$'s committed values.  Since \ruleref{Assume} requires the $\pc$ to speak for
$p$, this function is rejected.

FLAC's guarantees make it possible to state general security properties of all programs
that use the above commitment scheme, even if those programs are malicious.
For example, suppose we have a typing context that includes the "commit", "reveal", and "open"
functions from Figure~\ref{fig:commit}.
\begin{align*}
  Γ_{cro} = \; &"commit" \ty \tfuncpc{N}{p^{←}}{\tfuncpc{X}{p^{←}}{\func{N}{p^{←}}{\func{\says{p^{→}}{X}}{p^{←}}{\says{p}{(N,X)}}}}}, & \\
                  & "reveal" \ty \tfuncpc{N}{∇(p^{→})}{\tfuncpc{X}{q^{←}}{\func{\says{p}{(N,X)}}{q^{←}}{\says{q^{→} ∧ p^{←}}{(N,X)}}}} & \\
                  & "open"   \ty \tfuncpc{N}{q^{←}}{\tfuncpc{X}{q^{←}}{\func{(\tfuncpc{Y}{q^{←}}{\func{\says{p}{(N,Y)}}{q^{←}}{\says{q^{→} ∧ p^{←}}{(N,Y)}}})}{q^{←}}{}}} & 
\end{align*} 
We can consider programs under the control of principal $q$ by considering
source-level FLAC terms that type under $Γ_{cro}$ at $\pc = q^{←}$.  Note that
this category includes potentially malicious programs that attempt to abuse the commitment scheme operations
or otherwise attempt to access committed values.
Since we are interested in properties that hold for all principals $p$ and $q$, 
we want the properties to hold in an empty delegation context: $Π=∅$. Below, we omit
the delegation context altogether for brevity.

FLAC's noninterference guarantee helps rule out information that an attacker can influence or learn.
We instantiate the environment $Γ_{cro}$ with a well-typed substitution $S_{cro}$ that replaces "commit", "reveal", and "open"
with the terms defined in Figure~\ref{fig:commit}.
 We can now show that:
\begin{itemize}
  \item {\bfseries $q$ cannot learn a value that hasn't been revealed.}
    For simplicity, we instantiate the type variables $N$ and $X$ with $τ_N$ and $τ_X$.
    We can represent values of some type $τ_X$ observable to $q$ with the type  $\says{q^{→}}{τ_X}$.
    For any $e$ such that
     \begin{align*}
      & \TVal{Γ_{cro},x\ty \says{p}{(τ_N,τ_X)};q^{←}}{e}{\says{q^{→}}{τ}} &
      \end{align*}                                                                     
    and any committed values $v₁$ and $v₂$ where $\TVal{q^{←}}{v_i}{\says{p}{(τ_{N},τ_X)}}$,
    if  $\subst{e\;S_{cro}}{x}{v₁} \stepsto v₁'$ and  $\subst{e\;S_{cro}}{x}{v₂} \stepsto v₂'$, then 
    by Theorem~\ref{thm:niconf} with $H = p^{→} ∧ q^{←}$ and $ℓ =q^{→} ∧ p^{←}$ $t_{1} ≈^{∅}_{q^{→}} t_{2}$. 
    Note that the same approach can also be used to prove $q$ cannot learn $p$'s uncommitted secrets.
\vspace{1em}
  \item {\bfseries $p$ cannot cause $q$ to open modified value.}
    We represent a value opened by $q$ with the type
     $\says{q^{←}}{\says{q^{→} ∧ p^{←}}{(τ_{N}, τ_X)}}$, which represents
     an open value trusted by $q$. Only $q$ should have the authority to
     accept and open a commitment, so we want to prove that $p$ cannot
     produce a value of this type, even when the "open" operation is in
     scope. For any $e$ such that
     \begin{align*}
       & \TVal{Γ_{cro},x\ty \says{p}{(τ_N,τ_X)};p^{←}}{e}{\says{q^{←}}{(\says{q^{→} ∧ p^{←}}{(τ_{N}, τ_X)})}} &
     \end{align*}                                                                                                         
     and any committed values $v₁$ and $v₂$ where $\TVal{p^{←}}{v_i}{\says{p}{(τ_N,τ_X)}}$,
    if  $\subst{e\;S_{cro}}{x}{v₁} \stepsto v₁'$ and  $\subst{e\;S_{cro}}{x}{v₂} \stepsto v₂'$, then 
    by Theorem~\ref{thm:niinteg} with $H = q^{→} ∧ p^{←}$ and $ℓ =p^{→} ∧ q^{←}$ $t_{1} ≈^{∅}_{q^{←}} t_{2}$. 
\end{itemize}

For these properties we consider programs using our commitment scheme
that $q$ might invoke, hence we consider FLAC programs that type-check in the
$\G_{cro};\pc_q$ context. In the first property, we are concerned with programs
that produce values protected by policy $p$.  Since such programs
produce values with the integrity of $p$ but are invoked by $q$, we want to
ensure that no program exists that enables $q$ to obtain a value with $p$'s
integrity that depends on $x$, which is a value without $p$'s integrity.  The second
property concerns programs that produces values at $ℓ ⊓ q^{→}$ for any ℓ; these
are values readable by $q$. Therefore, we want to ensure that no program exists
that enables $q$ to produce such a value that depends on $x$ or $y$, which are
not readable by $q$.  

Each of these properties hold by a straightforward application of our
noninterference theorems (Theorems~\ref{thm:niconf} and~\ref{thm:niinteg}).  This result is
strengthened by our robust declassification theorem
(Theorem~\ref{thm:robdecl}), which ensures that attacks by $q$ on
$p$'s programs cannot subvert the intended declassifications.

\subsection{Bearer Credentials}
\label{sec:bearer}

We can also use FLAC to implement bearer credentials, our second example of a 
dynamic authorization mechanism.
We represent a bearer credential with authority $k$ in FLAC as a term with the type
$$\tfuncpc{X}{\pc}{\func{\says{\confid{k}}{X}}{\pc}{\says{\integ{k}}{X}}}$$
which we abbreviate as $\spkfor{\confid{k}}{\pc}{\integ{k}}$.  These terms
act as bearer credentials for a principal $k$ since they may be used as a proxy for $k$'s 
confidentiality and integrity authority. Recall that $k^{←} = k^{←} ∧ ⊥^{→}$ and $k^{→} = k^{→} ∧ ⊥^{←}$.
Then secrets protected by $\confid{k}$ can be declassified to $\confid{\bot}$, and untrusted 
data protected by $\integ{\bot}$ can be endorsed to $\integ{k}$. Thus this term
wields the full authority of $k$, and if $\pc=⊥^{←}$, the credential may be used in 
any context—any ``bearer'' may use it.  From such credentials, more restricted credentials can be derived.
For example, the credential $\spkfor{\confid{k}}{\pc}{⊥^{→}}$ grants the bearer authority to declassify
$k$-confidential values, but no authority to endorse values.

It is interesting to note that DCC terms analogous to FLAC terms with type
$\spkfor{\confid{k}}{\pc}{\integ{k}}$ would only be well-typed in DCC if $k$ is
equivalent to $⊥$.  This is because the function takes an argument with $k^{→}$
confidentiality and no integrity, and produces a value with $k^{←}$ integrity
and no confidentiality. Suppose $\L$ is a security lattice used to type-check DCC programs
with suitable encodings for $k$'s confidentiality and integrity.
If a DCC term has a type analogous to $\spkfor{\confid{k}}{}{\integ{k}}$, 
then $\L$ must have the property $k^{→}⊑⊥$ 
and $⊥⊑k^{←}$. This means that $k$ has no confidentiality and no integrity.
That FLAC terms may have this type for any principal $k$ makes it
straightforward to implement bearer credentials and demonstrates a useful
application of FLAC's extra expressiveness.

The $\pc$ of a credential $\spkfor{\confid{k}}{\pc}{\integ{k}}$ acts
as a sort of caveat: it restricts the information flow context in which the
credential may be used. We can add more general caveats to credentials by
wrapping them in lambda terms. To add a caveat $\phi$ to a credential with type
$\spkfor{\confid{k}}{\pc}{\integ{k}}$, we use a wrapper:
$$\lamc{x}{\spkfor{\confid{k}}{\pc}{\integ{k}}}{\pc}{\tlam{X}{\pc}{\lamc{y}{\phi}{\pc}{xX}}}$$
which gives us a term with type
$$\tfuncpc{X}{\pc}{\func{\phi}{\pc}{\func{\says{\confid{k}}{X}}{\pc}{\says{\integ{k}}{X}}}}$$
This requires a term with type $\phi$ (in which $X$ may occur) to be applied before the authority of $k$ can be used.
Similar wrappers allow us to chain multiple caveats; i.e., for caveats $\phi_1 \dots \phi_n$, we obtain the type
$$\tfuncpc{X}{\pc}{\func{\phi_1}{\pc}{\func{\dots}{\pc}{ \func{\phi_n}{\pc}{\func{\says{\confid{k}}{X}}{\pc}{\says{\integ{k}}{X}}}}}}$$
which abbreviates to
$$\spkfor{\confid{k}}{\phi_1\times\dots\times\phi_n;\pc}{\integ{k}}$$
We will also use the syntax $(\spkfor{\confid{k}}{\phi_1\times\dots\times\phi_n;\pc}{\integ{k}})~τ$ (suggesting a type-level application)
as an abbreviation for 
$$\func{\phi_1[X↦τ]}{\pc}{\func{\dots}{\pc}{ \func{\phi_n[X↦τ]}{\pc}{\func{\says{\confid{k}}{τ}}{\pc}{\says{\integ{k}}{τ}}}}}$$

Like any other FLAC terms, credentials may be protected by information flow
policies.  So a credential that should only be accessible to Alice might be
protected by the type
$\says{\Alice^{→}}{(\spkfor{\confid{k}}{\phi;\pc}{\integ{k}})}$.  This
confidentiality policy ensures the credential cannot accidentally be leaked to
an attacker.  A further step might be to constrain uses of this credential so
that only Alice may invoke it to relabel information.  If we require
$\pc=\Alice^{←}$, this credential may only be used in contexts trusted by Alice:
$\says{\Alice^{→}}{(\spkfor{\confid{k}}{\phi;\Alice^{←}}{\integ{k}})}$.  

A subtle point about the way in which we construct caveats is that the caveats
are polymorphic with respect to $X$, the same type variable the credential
ranges over.  This means that each caveat may constrain what types $X$ may be
instantiated with. For instance, suppose we want to encode a 
relation "isEduc" for specifying movie topics that are educational.  
One possible encoding is a polymorphic function of type
$$"isEduc" : \tfuncpc{X}{p}{\says{p}{(\func{X}{p}{X}, U)}}$$
where $p$ is the principal with movie classification authority, and $U$ is a unique
type-level integer (such as those described in Section~\ref{sec:commit}) 
associated with the "isEduc" relation (and no other). Values of type $\says{p}{(\func{τ}{p}{τ}, U)}$  
can be used as evidence that type $τ$ belongs to "isEduc", the relation associated with $U$.\footnote{Using 
an instantiated identity function type $\func{τ}{p}{τ}$ instead of
just $τ$ avoids having to produce a value of type $τ$ just to define
the relation. The $\pc$ label chosen for the identity function is
irrelevant.}

Only code that is at least as trusted as $\pc$ may apply this
function, therefore only authorized code may add types to the "isEduc"
relation. We might, for instance, apply $"isEduc"$ to types like
$"Biography"$ and $"Documentary"$, but not $"RomanticComedy"$.  Adding
$\says{\pc}{(\func{X}{\pc}{X}, U)}$
 as a caveat to a credential would mean that the bearer
of the credential could use the credential plus evidence of membership
in $"isEduc"$ to access biographies and documentaries. Since no term
of type
$\says{\pc}{(\func{"RomanticComedy"}{\pc}{"RomanticComedy"}, U)}$
exists (nor could be created by
the bearer), the bearer can only satisfy the caveat by instantiating
$X$ with "Biography" or "Documentary". Once $X$ is instantiated with
"Biography" or "Documentary", the credential cannot be used on a
"RomanticComedy" value.  Thus we have two mechanisms for constraining
the use of credentials: information flow policies to constrain
propagation, and caveats to establish prerequisites and constrain the
types of data covered by the credential.

As another example of using such credentials, suppose Alice hosts a file sharing
service. 
For a simpler presentation, we use free
variables
to refer to these files; for instance, $x₁\ty(\says{k₁}{"photo"})$ is a
variable that stores a photo (type "photo") protected by $k₁$.  For each such
variable $x₁$, Alice has a credential $\spkfor{k₁^{→}}{⊥^{←}}{k₁^{←}}$, and can
give access to users by providing this credential or deriving a more restricted one.
To access $x₁$, Bob does not need the full authority of Alice or $k₁$—a more restricted 
credential suffices.  Alice can provide Bob with a credential $c$ of type 
$(\spkfor{k₁}{\Bob^{←}}{k₁^{←}})\;"photo"$.  By applying this credential to $x₁$, 
Bob is able to access the result of type $\says{k₁^{←}}{"photo"}$ since its confidentiality is now public.

This example demonstrates an advantage of bearer credentials: access to $x₁$ can be provided to
principals other than $k₁$ in a decentralized way, without changing the policy
on $x₁$. Suppose Alice  has a credential with type
$\spkfor{k₁^{→}}{⊥^{←}}{k₁^{←}}$ and wants to issue the above credential to Bob. 
Alice can create such a credential using a wrapper that derives the more
constrained credential from her original one.
\begin{align*}
& \lamc{c}{\spkfor{k₁^{→}}{⊥^{←}}{k₁^{←}}}{\Alice^{←}}{} &\\
&\quad \lamc{y}{\says{k₁}{"photo"}}{\Bob^{←}}{}& \\
& \quad \quad \bind{y'}{y}{(c\;"photo")~(\return{k₁^{→}}{y'})}
\end{align*}
This function has type
$\func{(\spkfor{k₁^{→}}{⊥^{←}}{k₁^{←}})}{\Alice^{←}}{(\spkfor{k₁}{\Bob^{←}}{k₁^{←})~"photo"}}$:
given her root credential $\spkfor{k₁^{→}}{⊥^{←}}{k₁^{←}}$, Alice (or
someone she trusts) can create a restricted credential that allows Bob (or someone he trusts)
to access values of type "photo" protected under $k₁$.

Bob can also use this credential to share photos with friends.  For instance, the function
\begin{align*}
& \lamc{c}{(\spkfor{k₁}{\Bob^{←}}{k₁^{←}})~"photo"}{\Bob^{←}}{} &\\
 & \quad \assume{\delexp{\Carol^{←}}{\Bob^{←}}}{} &\\
 &\qquad \lamc{y}{\says{k₁}{"photo"}}{\Carol^{←}}{}& \\
& \qquad \quad \bind{y'}{y}{(c\;"photo")~(\return{k₁^{→}}{y'})}
\end{align*}
creates a wrapper around a Bob's credential that is invokable by Carol.  Using the "assume" term, 
Bob delegates authority to Carol so that his credential may be invoked at pc $\Carol^{←}$.

The properties of FLAC let us prove many general properties about such
bearer-credential programs. We first define the context we are interested in.
We can model the premise that only Alice has access to credentials by 
protecting the secrets and the credentials in the typing context:
\begin{align*}
 & Γ_{bc}=xᵢ\ty\says{kᵢ}{τᵢ},c_{i}\ty \says{kᵢ}{(\spkfor{kᵢ^{→}}{⊥^{←}}{kᵢ^{←}})} & 
\end{align*}
and delegating authority from principal $kᵢ$ to Alice in the delegation context: 
\begin{align*}
 & Π_{bc}=\delexp{\Alice}{kᵢ}&
\end{align*}
where $k_{i}$ is a primitive principal protecting the $i^{th}$ resource of type
$τ_{i}$, and $cᵢ$ is a credential for the $i^{th}$ resource and protected by
principal $kᵢ$, which delegates authority to Alice.
\begin{itemize}
  \item {\bfseries $p$ cannot access resources without a credential.}
    A concise way of representing values that are observable to $p$ (with any integrity) is
    the principal $p^{→}$, which is the most restrictive policy observable to $p$.  Since
    any less restrictive policy observable to $p$ flows to this policy, any output observable to $p$
    is protected at this type.  Suppose $e$ is a program that computes such an output.
    For any $e$ such that $\TVal{Γ_{bc};p^{←}}{e}{\says{p^{→}}{τ'}}$, for some $τ'$,
    we can show that the evaluation of $e$ is independent of the secret variables $xᵢ$.
    To see how, fix some index $n$ and define
    \begin{align*}
      & S_{bc} = [x_{j} ↦\returng{k_{j}}{v_{j}}][c_{i} ↦(\returng{kᵢ}{...})] &
    \end{align*}                                                                                                                                        
    for all $i$ and all $j$ where $j≠n$ and $\TVal{p^{←}}{v_{j}}{\says{k_{j}^{→}}{τ_{j}}}$.
    Additionally, suppose we have a two values $v¹_{n}$ and $v²_{n}$ such that 
    $\TVal{p^{←}}{v^{i}_{n}}{\says{k_{n}}{τ_{n}}}$ for $i ∈ \{1, 2\}$. 
    Then if $\subst{S_{bc}\;e}{x_{n}}{v¹_{n}} \stepstot{t₁} w¹_{n}$ and $\subst{S_{bc}\;e}{x_{n}}{v²_{n}} \stepstot{t₂} w²_{n}$,
    by Theorem~\ref{thm:niconf} with $H = k^{→}_{n} ∧ p^{←}$ (and $ℓ^{→} = p^{→}$), we have $t_{1} ≈^{Π_{bc}}_{p^{→}} t_{2}$.
    Since we left $n$ abstract, we can repeat this argument for all $xᵢ$.
\vspace{1em}

    \item {\bfseries Alice cannot accidentally disclose secrets by issuing credentials.}
    Suppose $e$ is a program that outputs a credential observable to $p$ that declassifies secrets protected at $k_{n}^{→}$.
    For any $e$ such that $\TVal{Γ_{bc},y\ty \says{\Alice}{τ}; k_{n}^{←}}{e}{\says{p^{→}}{\spkfor{k_{n}^{→}}{⊥^{←}}{p_{n}^{→}}}$, for some $τ$,
    we can show that the evaluation of $e$ does not depend on Alice's secret variable $y$.
    Define $S_{bc}$ as above and consider values $v¹$ and $v²$ such that $\TVal{p^{←}}{v^{i}}{\says{\Alice}{τ'}}$ for $i ∈ \{1, 2\}$. 
    Then if $\subst{S_{bc}\;e}{y}{v¹} \stepstot{t₁} w¹$ and $\subst{S_{bc}\;e}{y}}{v²} \stepstot{t₂} w²$,
    by Theorem~\ref{thm:niconf} with $H = \Alice^{→} ∧ p^{←}$ (and $ℓ^{→} = p^{→}$), we have $t_{1} ≈^{Π_{bc}}_{p^{→}} t_{2}$.
\end{itemize} 

These properties demonstrate the some of the power of FLAC's type system.  The
first ensures that credentials really are necessary for $p$ to access
protected resources, even indirectly.  If $p$ has no credentials, and
the type system ensures that $p$ cannot invoke a program that produces a value
$p$ can read (represented by $ℓ ⊓ p^{→}$) that depends on the value of any variable $xᵢ$.
The second property eliminates covert channels like the one discussed in Section~\ref{sec:ex2}. 
It implies that credentials issued by Alice do not leak information.
By implementing and using bearer credentials in FLAC, we can demonstrate these properties with relatively little effort
by appealing to Theorem~\ref{thm:niconf}.

\section{Related Work}
\label{sec:relwork}

Arden, Liu, and Myers~\cite{flam} introduced the Flow-Limited
Authorization Model for reasoning about trust relationships that may
be secret or untrustworthy.  FLAC's type systems and semantic results
rely on a fragment of this logic that reasons from the perspective of
the compiler.  In this fragment, all FLAM queries occur at compile
time, and the compiler does not depend on any (secret or public)
dynamic data to answer queries.  Furthermore, the compiler does
not communicate with other principals during derivation.  Our static
lattice rules (Figure~\ref{fig:static}) provide an alternate
axiomatization of the FLAM principal algebra, but we have proven them
equivalent in Coq for the ownership-free fragment of FLAM.

The original FLAC formalization~\cite{flac} contained several errors
both in the language definition and the proofs.  Some of the design
changes we made for the current formalization were to correct errors.
For example, we added a (necessary) $\pc$ annotation to type functions similar
to that for lambda values, and added missing "where"-propagation rules,
along with a progress result, Lemma~\ref{lemma:prog}.
Other changes were inspired by subsequent
work like DFLATE~\cite{dflate} and NMIFC~\cite{nmifc}.  For example,
protection contexts (Figure~\ref{fig:octx}) were inspired by the "TEE"
abstraction from DFLATE to properly handle the erasure of intermediate
expressions in the trace semantics, and both NMIFC and DFLATE first
adopted the more restrictive \ruleref{UnitM} and protection rules that
make the "says" modality non-commutative.  

The original noninterference and robust declassification results
were on output terms only, similar to the Pottier and
Simonet~\cite{ps03} noninterference result.  Our formalization
leverages a trace semantics and erasure function, similar to the
approach in DFLATE.  The new noninterference and robustness theorems
are thus stronger than the original statements since attackers not
only see the output of a program but also its (observable)
intermediate values.  We conjecture that most of the design changes
would be necessary to repair the proofs of the output-only versions of
our theorems (with suitable updates reflecting changes to the
Delegation Invariance statement), but we have not attempted to
distinguish which (if any) changes were necessary only for the
stronger results.

NMIFC and DFLATE each handle downgrading differently than FLAC.  NMIFC
does not permit arbitrary delegation via an "assume" term, but instead
uses "declassify" and "endorse" for downgrading confidentiality and
integrity explicitly.  DFLATE treats delegation contexts and "where" terms subtly differently
than FLAC.  First, DFLATE functions are explicitly annotated with the 
delegations they capture from the delegation context, and may only be applied 
in contexts where those delegations are valid.  In contrast, FLAC functions
implicitly capture delegations and two functions of the same type
may capture different delegations.
Second, rather than being purely formal bookkeeping mechanisms,
"where" terms serve as certificates of delegated authority at runtime and are
propagated between hosts.  One implication is that "where" delegations 
are directly observable, whereas FLAC's erasure function omits them.  

Flamio~\cite{flamio} is a course-grained, dynamic IFC language similar
to LIO~\cite{lio}, but using the FLAM distributed authorization logic
for dynamic label comparisons.  FLAC's type system also uses the FLAM
logic, but only a static, local fragment, which significantly
simplifies the information flows that occur due to authorization
checks.

Many languages and systems for authorization or access control have combined aspects
of information security and
authorization (e.g.,~\cite{wl04,hthz05,mk05,shtz06,mk06,RTI}) in dynamic settings.
However, almost all are susceptible to security vulnerabilities that
arise from the interaction of information flow and authorization~\cite{flam}.

DCC~\cite{ccd99,abadi06} has been used to model both authorization and
information flow, but not simultaneously.  DCC
programs are type-checked with respect to a static security lattice,
whereas FLAC programs can introduce new trust relationships during
evaluation, enabling more general applications.

Boudol~\cite{boudol08} defines an imperative language with terms that
enable or disable flows for a lexical scope—similar to "assume"
terms—but does not restrict their usage.  Rx \cite{shtz06} and
RTI~\cite{RTI} use labeled roles to represent information flow
policies. The integrity of a role restricts who may change policies.
However, information flow in these languages is not
robust~\cite{msz06}: attackers may indirectly affect how flows change
when authorized principals modify policies.

Some prior approaches have sought to reason about the information security of 
authorization mechanisms. Becker~\cite{becker2012information}
discusses _probing attacks_ that leak confidential information 
to an attacker. Garg and Pfenning~\cite{garg2006non} present a logic 
that ensures assertions made by untrusted
principals cannot influence the truth of statements made by other principals.

Previous work has studied information flow control with higher-order functions
and side effects.  In the SLam calculus~\cite{hr98}, implicit flows due to
side effects are controlled via _indirect reader_ annotations on types.
Zdancewic and Myers~\cite{zm02-hosc} and Flow Caml~\cite{ps03} control implicit
flows via $\pc$ annotations on function types. \lang also controls side effects
via a $\pc$ annotation, but here the side effects are changes in trust
relationships that define which flows are permitted.
Tse and Zdancewic~\cite{tz04} also extend DCC with a program-counter label but
for a different purpose:  their $\pc$ tracks information about the
protection context, permitting more terms to be typed.

DKAL$^{⋆}$~\cite{dkalstar} is an executable specification language for authorization
protocols, simplifying analysis of protocol implementations. FLAC
may be used as a specification language, but FLAC offers stronger
guarantees regarding the information security of specified protocols.
Errors in DKAL$^{⋆}$ specifications could lead to vulnerabilities.
For instance, DKAL$^{⋆}$ provides no intrinsic guarantees about confidentiality,
which could lead to authorization side channels or probing 
attacks.

The Jif programming language~\cite{myers-popl99,jif} supports 
dynamically computed labels through a simple dependent type system.
Jif also supports dynamically changing trust relationships through operations
on principal objects~\cite{sif07}. Because the signatures of principal operations
(e.g., to add a new trust relationship) are missing the constraints
imposed by FLAC, authorization can be used as a covert channel.

Aura~\cite{aura} embeds a DCC-based proof language and type system in a
dependently-typed general-purpose functional language.  As in DCC, Aura programs
may derive new authorization proofs using existing proof terms and a monadic
bind operator.  However, since Aura only tracks dependencies between proofs, it
is ill-suited for reasoning about the end-to-end information-flow properties of
authorization mechanisms.  
In general, dependently-typed languages (e.g., \cite{fstar, Coq:manual, agda, idris}) are
also expressive enough to encode relations and constraints like
those used in FLAC's type system, but all FLAC programs have the semantic
security guarantees presented in Section~\ref{sec:props} by construction.

\section{Discussion and Future Directions}
\label{sec:concl}

\OA{Reviewer 2:
No future work? No extension of the language would be interesting?
No limitations of the current work? More discussion is expected.
}
\OA{explore NMIFC (discuss limitations)?  further investigation of compartmentalization?
  quantification over principals?
  extensions to dependently-typed languages via FLAFOL?
 } 

Existing security models do not account fully for the interactions
between  authorization and information
flow. The result is that both the implementations and the uses of
authorization mechanisms can lead to insecure information flows that
violate confidentiality or integrity. The security of information flow
mechanisms can also be compromised by dynamic changes in trust.  This
paper has proposed FLAC, a core programming language that coherently
integrates these two security paradigms, controlling the
interactions between dynamic authorization and secure information
flow. FLAC offers strong guarantees and can serve as the foundation
for building software that implements and uses authorization securely.
Further, FLAC can be used to
reason compositionally about secure authorization and secure
information flow, guiding the design and implementation of future
security mechanisms.

We have already mentioned subsequent work like DFLATE~\cite{dflate},
which extends FLAC concepts to distributed TEE applications, and
NMIFC~\cite{nmifc}, which enforces nonmalleable information flow
control, a new downgrading semantic condition that generalizes robust
declassification to include integrity downgrading.  To simplify its
formalization, NMIFC removed the "assume" term from FLAC. Another
promising direction is extending FLAC to incorporate abstractions for
new mechanisms such as secure quorum replication, secure multi-party
computation, and other cryptographic mechanisms as security
abstractions in the language. Abstractions for secure mobile code
sharing, such as that supported by Mobile Fabric~\cite{oakland12},
would also be interesting since it would require considering the
provider of code in addition to its information flow properties.

Features such as quantification over principals and dynamic
principal values, as found in Jif~\cite{myers-popl99, zm04, jif} would
increase the usefulness of specifying security policies with FLAC
types.  Initial explorations of these features have been implemented
in Flame~\cite{flame}, an embedded Haskell DSL based on FLAC.

\ifblinded\relax
\else
\section*{Acknowledgments} 
We thank Mike George, Elaine Shi, and
Fred Schneider for helpful discussions, our anonymous reviewers for their 
comments and suggestions, and Jed Liu, Matt Stillerman, and Priyanka Mondal for feedback on 
early drafts and bugfixes.
We also wish to thank our anonymous reviewers, whose thorough and thoughtful feedback significantly
improved the clarity and rigor of our results.
This work was supported by grant
N00014-13-1-0089 from the Office of Naval Research, by MURI grant
FA9550-12-1-0400, by grant FA9550-16-1-0351 from
the Air Force Office of Scientific Research,  
by an NDSEG fellowship, %
by the Intelligence Advanced Research Projects Activity (2019-19-020700006), %
and by grants from the National Science Foundation
(CCF-0964409, CNS-1565387, CNS-1704845, CCF-1750060). %
\fi

\bibliographystyle{abbrvnat}
\bibliography{bibtex/pm-master,pm-master}

\appendices

\section{Proofs for Noninterference and Robust Declassification}
\label{app:proofs}

\subsection{FLAM and FLAC}
We formalize the relation between FLAM and FLAC.

\begin{lemma}[FLAC implies FLAM]
	Let $\mathcal{H}(c) = \{ ⟨p ≽ q ~|~ ⊥^{→}∧⊤^{←} ⟩ \sep ⟨p ≽ q⟩ ∈ Π \}$. If $\rafjudge{\delegcontext}{p}{q}$, then
	$\rafjudge{\mathcal{H};c;⊥^{→}∧⊤^{←};⊥^{→}∧⊤^{←}}{p}{q}$.
\end{lemma}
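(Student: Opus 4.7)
The plan is to prove this by induction on the derivation of $\rafjudge{\delegcontext}{p}{q}$, translating each application of a rule from Figure~\ref{fig:robrules} into a corresponding FLAM derivation under the context $\mathcal{H}(c); c; ⊥^{→}∧⊤^{←}; ⊥^{→}∧⊤^{←}$. The choice of labels is the key design decision: assigning the maximally public and maximally trusted principal $⊥^{→}∧⊤^{←}$ both to every translated delegation and to the query and result positions collapses FLAM's extra bookkeeping, because any side condition that asks whether the current context has sufficient confidentiality to observe a delegation, or sufficient integrity to be influenced by one, is trivially satisfied---$⊥^{→}$ flows to every principal and every principal acts for $⊤^{←}$.

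For the \ruleref{R-Static} case we appeal directly to the Coq-verified equivalence (cited in the discussion of Figure~\ref{fig:static}) between the static rules and the non-ownership fragment of FLAM's principal algebra; the resulting FLAM derivation uses no dynamic delegations and lifts unchanged into any query/result context. For the structural rules \ruleref{R-ConjL}, \ruleref{R-ConjR}, \ruleref{R-DisjL}, \ruleref{R-DisjR}, and \ruleref{R-Trans}, FLAM has analogous rules that preserve the judgment context, so the induction hypothesis supplies subderivations for each premise and the corresponding FLAM rule then applies.

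The hard case will be \ruleref{R-Assume}. Here FLAC consumes a delegation $\delexp{p}{q} \in \delegcontext$ under the side premise $\rafjudge{\delegcontext}{\voice{\confid{p}}}{\voice{\confid{q}}}$, and we must construct the FLAM derivation by using the lifted delegation $\langle p ≽ q \mid ⊥^{→}∧⊤^{←}\rangle \in \mathcal{H}(c)$ through FLAM's \textsc{Lift} rule (the connection between \ruleref{R-Assume} and \textsc{Lift} is noted in the text accompanying Figure~\ref{fig:robrules}). The induction hypothesis applied to the voice premise produces exactly the well-formedness side condition \textsc{Lift} needs, namely the judgment $\rafjudge{\mathcal{H}(c);c;⊥^{→}∧⊤^{←};⊥^{→}∧⊤^{←}}{\voice{\confid{p}}}{\voice{\confid{q}}}$; the remaining \textsc{Lift} side conditions---visibility of the delegation label by the query label and its integrity relative to the result label---are immediate from the choice $⊥^{→}∧⊤^{←}$. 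The main care needed is to align our labelling convention precisely with the statement of \textsc{Lift} in~\cite{flam}, but no new reasoning beyond the induction hypothesis should be required.
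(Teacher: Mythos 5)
Your proposal is correct and follows essentially the same route as the paper: induction on the derivation, with the structural and \ruleref{R-Static} cases dispatched directly and the \ruleref{R-Assume} case handled by combining the lifted delegation in $\mathcal{H}(c)$ with FLAM's \textsc{Lift} rule, using the induction hypothesis on the voice premise to discharge \textsc{Lift}'s side condition. The paper additionally inserts an explicit application of FLAM's \textsc{Weaken} rule before \textsc{Lift}, but this is the same bookkeeping you describe as being trivialized by the choice of $⊥^{→}∧⊤^{←}$ for all labels.
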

\begin{proof}
	Proof is by induction on the derivation of the robust assumption
	$\rafjudge{\delegcontext}{p}{q}$. Interesting case is \ruleref{R-Assume}. 
	\begin{description}
		\item[Case \ruleref{R-Assume}:] From the premises, we have that
	\begin{align}
		\delexp{p}{q} \in \delegcontext \label{eq:1}\\
		\rafjudge{\delegcontext;\pc; \ell}{\voice{\confid{p}}}{\voice{\confid{q}}} \label{eq:4}
	\end{align}

	From \eqref{eq:1} and \ruleref{Del}, we have that
	\[
		\afjudge{\mathcal{H};c;⊥^{→}∧⊤^{←};⊥^{→}∧⊤^{←}}{p}{q}
	\]

	From \ruleref[Weaken], we get that
	$\afjudge{\mathcal{H};c;  \wedge \voice{q})}{p}{q}$.
	From the \eqref{eq:4} and \ruleref{R-Lift} we thus have
	$\rafjudge{\mathcal{H};c;⊥^{→}∧⊤^{←};⊥^{→}∧⊤^{←}}{p}{q}$.

\item[Case R-Static:] Since $\stafjudge*{p}{q}$, we have from
	FLAM \ruleref{R-Static} that
	$\rafjudge{\mathcal{H};c;⊥^{→}∧⊤^{←};⊥^{→}∧⊤^{←}}{p}{q}$.

	\item[Case R-ConjR:] Trivial.
	\item[Case R-DisjL:] Trivial.
	\item[Case R-Trans:] Trivial.
	\item[Case R-Weaken:] Trivial.
	\end{description}
	
\end{proof}

\begin{lemma}[FLAM implies FLAC]
        For a trust configuration such that, for all $n≠c$, $\mathcal{H}(n)=∅$, and for all $⟨p,q,ℓ⟩ ∈ \mathcal{H}(c)$, $ℓ=⊥^{→}∧⊤^{←}$ and 
        $\rafjudge{\mathcal{H};c;⊥^{→}∧⊤^{←}; ⊥^{→}∧⊤^{←}}{∇(p→)}{∇(q→)}$,
        if $\rafjudge{\mathcal{H};c;⊥^{→}∧⊤^{←}; ⊥^{→}∧⊤^{←}}{p}{q}$, then $\rafjudge{\delegcontext}{p}{q}$.
\end{lemma}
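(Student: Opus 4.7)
The plan is to proceed by structural induction on the FLAM derivation of $\rafjudge{\mathcal{H};c;\bot^{→}\wedge\top^{←};\bot^{→}\wedge\top^{←}}{p}{q}$, taking the corresponding FLAC context to be $\Pi = \{\langle p ≽ q\rangle \mid \langle p, q, \bot^{→}\wedge\top^{←}\rangle \in \mathcal{H}(c)\}$. This $\Pi$ is well-defined because the lemma's hypothesis states $\mathcal{H}(n) = \emptyset$ for $n \neq c$ and all labels on entries in $\mathcal{H}(c)$ are $\bot^{→}\wedge\top^{←}$, so there are no labels or non-$c$ configurations that FLAC would need to distinguish.

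Most rule cases transfer almost mechanically. If FLAM applies its R-Static rule, then since the static judgment $\stafjudge{\L}{p}{q}$ does not depend on $\mathcal{H}$ or query labels, FLAC's \ruleref{R-Static} applies directly. Each FLAM structural rule (conjunction introduction and elimination, disjunction introduction and elimination, and transitivity) has a syntactically identical counterpart in Figure~\ref{fig:robrules}, so the inductive hypothesis applied to the subderivations immediately yields the desired FLAC derivation.

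The central case is R-Assume. Here the FLAM derivation uses some $\langle p, q, \bot^{→}\wedge\top^{←}\rangle \in \mathcal{H}(c)$, so by construction $\langle p ≽ q\rangle \in \Pi$, which matches the first premise of FLAC's \ruleref{R-Assume}. For the second premise, the lemma's hypothesis on $\mathcal{H}$ gives us a FLAM derivation of $\rafjudge{\mathcal{H};c;\bot^{→}\wedge\top^{←};\bot^{→}\wedge\top^{←}}{\voice{\confid{p}}}{\voice{\confid{q}}}$ with strictly fewer assumption-introductions than the original derivation (or at worst a parallel derivation we can induct on by well-founded induction on derivation size). Applying the inductive hypothesis produces $\rafjudge{\Pi}{\voice{\confid{p}}}{\voice{\confid{q}}}$, discharging the remaining premise.

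The main obstacle will be FLAM's label-manipulating rules (R-Weaken and R-Lift), which have no direct counterpart in Figure~\ref{fig:robrules} because FLAC implicitly fixes its query and result labels at $\bot^{→}\wedge\top^{←}$. My approach is to prove a preliminary normalization lemma: in the restricted configuration specified here, any FLAM derivation whose conclusion has query labels $\bot^{→}\wedge\top^{←};\bot^{→}\wedge\top^{←}$ can be transformed so that every subderivation also has these minimal labels. Intuitively, raising query labels beyond the bottom cannot expose new assumptions (all assumptions live at the bottom label) and cannot change whether static judgments hold. Once every sub-judgment is at the bottom, R-Weaken is an identity step that can be elided, and R-Lift reduces to an application of R-Assume whose label premise is trivially satisfied. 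With this normal form in hand, the induction above only has to consider the rule cases that do have FLAC counterparts, completing the proof.
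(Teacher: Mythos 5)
Your proposal is correct and follows essentially the same route as the paper's proof: both proceed by induction on the FLAM derivation and hinge on the observation that, because every delegation is local to $c$ and labeled $\bot^{\rightarrow}\wedge\top^{\leftarrow}$, the label-manipulating rules (\textsc{R-Weaken}, \textsc{R-Fwd}, \textsc{R-Lift}) can be eliminated or collapsed, after which each remaining FLAM rule maps onto its FLAC counterpart in Figure~\ref{fig:robrules}. Your explicit normalization lemma is simply a more detailed rendering of the paper's ``without loss of generality, assume the derivation contains no applications of \textsc{R-Weaken} or \textsc{R-Fwd}'' step, and your treatment of \textsc{R-Lift} as reducing to \textsc{R-Assume} matches the paper's lifting of \textsc{Del} occurrences.
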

\begin{proof}[Proof Sketch] By induction on the FLAM derivation. 
 Without loss of generality, we assume that the derivation of
$\rafjudge{\mathcal{H};c;⊥^{→}∧⊤^{←}; ⊤^{→}∧⊥^{←}}{p}{q}$ contains no
applications of \textsc{R-Weaken} or \textsc{R-Fwd}. Since all
delegations are local to $c$, public, and trusted, they are
unnecessary.

Next, observe that because delegations are public and trusted, any
non-robust FLAM derivation may be lifted to a robust one by adding an
application of FLAM's \textsc{R-Lift} rule wherever a \textsc{Del}
rule occurs, and applying relevant robust rules in place of the
non-robust rules.  Rules corresponding to each non-robust rule are
either part of the core robust rules or have been proven
admissible~\cite{flamtr}.  The rule for \textsc{R-Trans} is an
exception, since it requires an additional premise to be satisfied,
but since the query label is always $⊥^{→}∧⊤^{←}$ this is trivially
satisfied.
\end{proof}

\subsection{Proofs for Type Preservation (Lemma~\ref{lemma:subjred})}\label{app:subjred}
Before proving type preservation, we define the grammar for sub terms that is used in proving the monotonicity of the program counter, and prove few supporting lemmas.

\begin{figure}[t]
  \[
\begin{array}{rcl}
  T & ::= &  [\cdot] \sep T~e \sep e~T  \sep T~τ  \sep \pair{T}{e} \sep \pair{e}{T} \sep \proji{T} \sep \inji{T} \sep \return{ℓ}{T} \\[0.4em]
  & \sep &  \bind{x}{T}{e} \sep \bind{x}{e}{T} \sep \assume{T}{e} \sep \assume{e}{T}  \\[0.4em]
  & \sep  & \casexp{T}{x}{e}{e} \sep \casexp{e}{x}{T}{e}   \\[0.4em]
  & \sep & \casexp{e}{x}{e}{T} \sep \where{T}{v} \sep \where{e}{T} \\[0.4em]
 U &::=&  \pair{U}{w} \sep \pair{w}{U}\sep \returng{ℓ}{U} \sep \inji{U}  \\[0.4em]
  &&  \sep \lamc{x}{τ}{\pc}{U}  \sep \tlam{X}{\pc}{U} \\[0.4em]  
\end{array}
\]
\caption{Subterm language for expressions and values}
\label{fig:subterm}
\end{figure}

Every evaluation context can be represented as a subterm context but not the converse.
The following lemma is useful to convert evaluation contexts into subterm contexts.

\begin{lemma}[Evaluation Context implies Subterm Context]\label{lemma:evalimpliessub}
  For all $E$ and $e$, if \TValGpc{E[e]}{\tau} then exists $T$ such that  $T = E$ and  $\TValGpc{E[e]}{\tau}$.
\end{lemma}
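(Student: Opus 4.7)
The plan is to prove this by straightforward structural induction on the evaluation context $E$. Looking at Figure~\ref{fig:semantics} and Figure~\ref{fig:subterm}, every production available to $E$ is also available to $T$: the subterm grammar $T$ is a syntactic superset of the evaluation-context grammar $E$ (e.g., $T$ allows $\pair{e}{T}$ and $\pair{T}{e}$, while $E$ also allows $\pair{E}{e}$ and $\pair{w}{E}$ with the restriction that the left component be a value; likewise for "bind", "assume", and \texttt{where}). Consequently, the identity function on syntax trees is the desired translation.

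I would proceed as follows. First, I would set up induction on the derivation of $E$ as given by its BNF. The base case is $E = [\cdot]$, which is also a production of $T$, so take $T = [\cdot]$; then $T[e] = E[e]$ and the typing judgment is literally the hypothesis. For each inductive case, e.g., $E = w~E'$ (or $E = E'~e$, $\pair{E'}{e}$, $\pair{w}{E'}$, $\proj{i}{E'}$, $\inji{E'}$, $\return{\ell}{E'}$, $E'~\tau$, $\bind{x}{E'}{e}$, $\assume{E'}{e}$, $\casexp{E'}{x}{e_1}{e_2}$, $\where{E'}{v}$), apply the inductive hypothesis to obtain a subterm context $T'$ with $T' = E'$, and then form $T$ using the corresponding $T$-production (e.g., $T = w~T'$, which is valid because $w$ is in particular an expression $e$ and so matches $T = e~T$). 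The equality $T = E$ follows, and the typing judgment $\TValGpc{T[e]}{\tau} = \TValGpc{E[e]}{\tau}$ is the same statement as in the hypothesis.

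There is essentially no obstacle here: the lemma is a syntactic inclusion result, and the typing hypothesis carries over unchanged because $T[e]$ and $E[e]$ denote the same term. The only place to be careful is when $E$ requires a value in some position (e.g., $\pair{w}{E}$, $w~E$, $\where{E}{v}$), since $T$ permits an arbitrary expression there; this direction of the inclusion is exactly what makes the translation trivial—any value is an expression, and any "where" value $w$ is a "where" expression, so the $T$-production subsumes the $E$-production. The result will primarily be used downstream to pull a subterm context out of an evaluation context so that lemmas about $\pc$ monotonicity (stated over subterms) can be applied to the subterm sitting in the hole of an evaluation context, as invoked in the proof of Lemma~\ref{lemma:dc}.
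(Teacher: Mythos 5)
Your proof is correct and takes the same approach as the paper, which simply states ``Induction on the structure of $E$''; your write-up fleshes out the key observation that every $E$-production is subsumed by a corresponding $T$-production (since every $w$ is in particular an $e$), so the identity map on syntax trees witnesses the claim and the typing judgment carries over unchanged.
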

\begin{proof}
 Induction on the structure of $E$.
\end{proof}

\begin{lemma}[Robust Assumption] \label{lemma:robext}
	If
	$\rafjudge{\delegcontext}{\pc}{\voice{b}}$, then
	$\rafjudge{\delegcontext\delegconcat \delexp{a}{b}}{\pc}{\voice{b}}$ for any $a, b \in \L$.
\end{lemma}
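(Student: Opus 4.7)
The plan is to first generalize the statement to an unrestricted weakening property, then specialize. Specifically, I would prove the following stronger claim: for all principals $p, q \in \L$, if $\rafjudge{\Pi}{p}{q}$, then $\rafjudge{\Pi, \delexp{a}{b}}{p}{q}$ for any $a, b$. The stated lemma follows immediately by instantiating $p = \pc$ and $q = \voice{b}$. The generalization is necessary because the inductive argument traverses subderivations whose endpoints need not have the form $(\pc, \voice{b})$---most notably, the voice side-premise of \ruleref{R-Assume}.

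The proof would proceed by structural induction on the derivation of $\rafjudge{\Pi}{p}{q}$ using the rules in Figure~\ref{fig:robrules}. The driving observation is that every rule in that figure references $\Pi$ only positively: either as a membership side-condition (in \ruleref{R-Assume}) or by passing $\Pi$ unchanged to premise judgments. Consequently, extending $\Pi$ never invalidates applicability of a rule.

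Case by case: \ruleref{R-Static} uses only the static judgment $\stafjudge{\L}{p}{q}$, which is independent of $\Pi$, so the conclusion carries over unchanged. For \ruleref{R-Assume}, the membership $\delexp{p}{q} \in \Pi$ implies $\delexp{p}{q} \in \Pi, \delexp{a}{b}$, and the voice premise $\rafjudge{\Pi}{\voice{\confid{p}}}{\voice{\confid{q}}}$ is weakened by the inductive hypothesis on a strictly smaller derivation, giving $\rafjudge{\Pi, \delexp{a}{b}}{\voice{\confid{p}}}{\voice{\confid{q}}}$; the extended rule then applies. For the structural rules \ruleref{R-ConjL}, \ruleref{R-ConjR}, \ruleref{R-DisjL}, \ruleref{R-DisjR}, and \ruleref{R-Trans}, each premise derivation is weakened by the inductive hypothesis and reassembled by the same rule.

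There is no real obstacle here; the content of the lemma is simply that the robust acts-for relation is monotone in the delegation context, which is a direct syntactic consequence of how the rules in Figure~\ref{fig:robrules} treat $\Pi$. The only subtle point worth flagging is the need to generalize the induction loading, as remarked above, so that the \ruleref{R-Assume} case closes; once that is done the proof is entirely mechanical.
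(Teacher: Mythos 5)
Your proof is correct and matches the paper's approach: the paper's entire proof is ``by inspection of the rules in Figure~\ref{fig:robrules},'' and your structural induction (with the generalization to arbitrary $p$, $q$ so that the \ruleref{R-Assume} and \ruleref{R-Trans} cases close) is precisely the monotonicity-in-$\Pi$ argument that inspection reveals. No gaps.
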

\begin{proof} By inspection of the rules in Figure~\ref{fig:robrules}. \end{proof}

\begin{lemma}[Robust Protection] \label{lemma:protext}
	If
	$\protjudge{\delegcontext}{\pc}{\tau}$,  
	then
	$\protjudge{\delegcontext\delegconcat \delexp{a}{b}}{\pc}{\tau}$ for any $a, b \in \L$. 
\end{lemma}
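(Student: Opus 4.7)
The plan is to prove Lemma~\ref{lemma:protext} by straightforward structural induction on the derivation of $\protjudge{\delegcontext}{\pc}{\tau}$, using the rules of Figure~\ref{fig:protect}. At each step I would show that the derivation can be rebuilt verbatim under the extended context $\delegcontext\delegconcat\delexp{a}{b}$, since extending the delegation context can only add new acts-for and flow facts, never remove them.

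The base case \ruleref{P-Unit} is immediate: it has no premise depending on $\delegcontext$. The inductive cases \ruleref{P-Pair}, \ruleref{P-Fun}, and \ruleref{P-TFun} follow directly from the induction hypotheses applied to each subderivation, together with the extra premise $\protjudge{\delegcontext}{\ell}{\pc'}$ in the function cases (again handled by the IH). The only case that actually touches $\delegcontext$ in a nontrivial way is \ruleref{P-Lbl}, whose premise is $\rflowjudge{\delegcontext}{\ell}{\ell'}$.

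To discharge \ruleref{P-Lbl} I would invoke (or first state and prove, as a trivial subsidiary lemma) a weakening property for the flow/acts-for judgment: if $\rafjudge{\delegcontext}{p}{q}$ then $\rafjudge{\delegcontext\delegconcat\delexp{a}{b}}{p}{q}$, which lifts to $\sqsubseteq$ by the definition of $\sqsubseteq$ in terms of $\succcurlyeq$. This weakening is proven by an easy induction on the derivation in Figure~\ref{fig:robrules}: \ruleref{R-Static}, \ruleref{R-ConjL/R}, \ruleref{R-DisjL/R}, and \ruleref{R-Trans} do not inspect the shape of $\delegcontext$ beyond membership, and \ruleref{R-Assume} goes through because $\delexp{p}{q}\in\delegcontext$ implies $\delexp{p}{q}\in\delegcontext\delegconcat\delexp{a}{b}$, with the side-condition $\rafjudge{\delegcontext}{\voice{\confid{p}}}{\voice{\confid{q}}}$ discharged by the inductive hypothesis. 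Note that this same weakening property is already needed to state and prove Lemma~\ref{lemma:robext} (Robust Assumption), which appears immediately before this lemma and uses identical reasoning.

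I do not anticipate any serious obstacle here: the lemma is pure monotonicity, and all rules in both Figures~\ref{fig:robrules} and~\ref{fig:protect} are positive in $\delegcontext$. The only subtle point worth highlighting in the write-up is that adding $\delexp{a}{b}$ might cause the well-formedness side-condition $\rafjudge{\delegcontext}{\voice{\confid{p}}}{\voice{\confid{q}}}$ of \ruleref{R-Assume} to become derivable for previously unusable entries of $\delegcontext$, but this can only enlarge the set of derivable facts and so cannot invalidate the existing derivation being transported.
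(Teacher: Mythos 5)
Your proposal is correct and matches the paper's own argument: the paper proves this lemma ``by Lemma~\ref{lemma:robext} and inspection of the rules in Figure~\ref{fig:protect},'' which is precisely your induction on the protection derivation combined with weakening of the acts-for/flow judgment under delegation-context extension. Your write-up simply spells out the details (including the observation that \ruleref{R-Assume} only tests membership, so extension is monotone) that the paper leaves implicit.
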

\begin{proof} By Lemma~\ref{lemma:robext} and inspection of the rules in Figure~\ref{fig:protect}. \end{proof}

Monotnicity of $\pc$ is standard in many information flow control type systems. FLAC, too, has one.

\begin{lemma}[Monotonicity of $\pc$]\label{lemma:pcmonotone}
  Let \TValGpc{e}{\tau}. If $e = T[e']$ such that for some $\Pi', \varcontext', \pc'$, \TVal{\delegcontext';\varcontext';\pc'}{e'}{\tau'}, then  \rflowjudge{\Pi}{\pc}{\pc'}.
  \end{lemma}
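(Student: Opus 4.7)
The plan is to proceed by induction on the structure of the subterm context $T$ (as given in Figure~\ref{fig:subterm}), tracking how the program-counter label evolves as we descend into $T$. In the base case $T = [\cdot]$, we have $e' = e$ and so $\pc' = \pc$, and reflexivity (\ruleref{Refl} lifted through \ruleref{R-Static}) yields $\rflowjudge{\Pi}{\pc}{\pc}$.

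For the inductive step, I would case-analyze on the outermost constructor of $T$ and consult the corresponding typing rule in Figure~\ref{fig:ts}. In almost every case (\ruleref{App}, \ruleref{TApp}, \ruleref{Pair}, \ruleref{Unpair}, \ruleref{Inj}, \ruleref{UnitM}, the scrutinee position of \ruleref{Case}, the evidence and body positions of \ruleref{Assume} and \ruleref{Where}) the immediate subterm is typed at the same $\pc$ as the enclosing term, so the IH applied to the appropriate smaller context delivers the desired flow directly. The only rule that strictly raises the program counter is \ruleref{BindM}, which types the continuation at $\pc \sqcup \ell$. In that case, the IH gives a flow from $\pc \sqcup \ell$ to $\pc'$, and combined with the pure lattice fact $\pc \sqsubseteq \pc \sqcup \ell$ (derivable from the rules in Figure~\ref{fig:static} alone), transitivity via \ruleref{R-Trans} yields $\rflowjudge{\Pi}{\pc}{\pc'}$.

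The main subtlety I expect to handle carefully is that when $T$ descends under an \ruleref{Assume} or \ruleref{Where}, the delegation context is extended with a fresh assumption, so $\Pi' \supseteq \Pi$ strictly, and the natural IH gives the flow in the larger delegation context while the conclusion demands it in the original $\Pi$. A sloppy induction could wrongly invoke monotonicity of $\sqsubseteq$ in the wrong direction at this point. The cleanest fix is to strengthen the IH to record that $\pc'$ is always of the form $\pc \sqcup \ell_1 \sqcup \dots \sqcup \ell_n$ for labels $\ell_i$ arising from \ruleref{BindM} steps along the path from the root to $e'$. Since this witness is assembled exclusively from rules in Figure~\ref{fig:static}, it is valid in any delegation context, in particular in $\Pi$. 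Once this strengthened invariant is in place, the remaining cases reduce to routine bookkeeping, and the main obstacle — avoiding a spurious dependence on delegations accumulated along the path — is neutralized.
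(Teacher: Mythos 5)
Your proof is correct and follows essentially the same route as the paper's: induction on the structure of the subterm context $T$, with \ruleref{BindM} as the only case that raises the program counter, discharged by $\pc \sqsubseteq \pc \sqcup \ell$ and transitivity. The subtlety you flag about \ruleref{Assume}/\ruleref{Where} extending $\Pi$ is real and is glossed over in the paper's one-line proof; your strengthened invariant that $\pc'$ is a join $\pc \sqcup \ell_1 \sqcup \dots \sqcup \ell_n$ witnessed purely by the static rules of Figure~\ref{fig:static} (hence valid in $\Pi$ itself) is a sound way to close that gap.
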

\begin{proof}
  Induction on the structure of the $T$. The interesting case is when $T = \bind{x}{e_1}{T'}$. Consider the typing of $T[e']$; that is  \TValGpc{T[e']}{\tau}. From the typing rule \ruleref{BindM}, we have that
  \TVal{\Pi';\G, x\ty \tau'; \pc \sqcup \ell}{e'}{\tau} (assuming  \TValGpc{e_1}{\says{\ell}{\tau'}} for some $\ell$). Here $\pc' = \pc \sqcup \ell$ and so \rflowjudge{\Pi}{\pc}{\pc'}.
\end{proof}

We need few helper lemmas to state the properties of delegation contexts.
\begin{lemma} \label{lemma:voicetrans}
  If \rafjudge{\Pi}{q}{\voice{t}} and \rflowjudge{\Pi}{p}{q} then \rafjudge{\Pi}{p}{\voice{t}}.
\end{lemma}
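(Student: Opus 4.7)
The plan is to exploit the fact that $\voice{t}$ is an integrity-only principal, so the premise $q \succcurlyeq \voice{t}$ really constrains only the integrity projection of $q$, and the hypothesis $p \sqsubseteq q$ directly controls how the integrity projections of $p$ and $q$ relate.

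First I would unfold $p \sqsubseteq q$ to its definition $(p^{\leftarrow} \wedge q^{\rightarrow}) \succcurlyeq (q^{\leftarrow} \wedge p^{\rightarrow})$, apply \ruleref{Proj} with $\pi = \leftarrow$, then use \ruleref{ProjDistConj}, \ruleref{ProjIdemp}, and \ruleref{ProjBasis} to simplify both sides. The mixed-projection conjuncts $(q^{\rightarrow})^{\leftarrow}$ and $(p^{\rightarrow})^{\leftarrow}$ collapse to $\bot$, so after absorbing the $\bot$ summands via \ruleref{ConjL}/\ruleref{ConjR} together with \ruleref{Bot} and \ruleref{Refl}, I obtain $p^{\leftarrow} \succcurlyeq q^{\leftarrow}$ in context $\Pi$ (all of this stays within the static fragment, hence lifts to $\Pi$ via \ruleref{R-Static}).

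Next, because $\voice{t}$ is defined as the integrity projection of the normal form of $t$ (hence of the form $q_0^{\leftarrow} \wedge r_0^{\leftarrow}$), ProjDistConj and ProjIdemp give $(\voice{t})^{\leftarrow} \equiv \voice{t}$. Applying \ruleref{Proj} to the hypothesis $q \succcurlyeq \voice{t}$ then yields $q^{\leftarrow} \succcurlyeq \voice{t}$. Chaining with $p^{\leftarrow} \succcurlyeq q^{\leftarrow}$ via \ruleref{R-Trans} produces $p^{\leftarrow} \succcurlyeq \voice{t}$, and finally \ruleref{ProjR} gives $p \succcurlyeq p^{\leftarrow}$, so one last application of \ruleref{R-Trans} concludes $p \succcurlyeq \voice{t}$.

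There is really no hard step here: everything is a mechanical unfolding of definitions followed by applications of the static lattice rules of Figure~\ref{fig:static}, which are available in any delegation context via \ruleref{R-Static}. The only place where minor care is required is in confirming the equivalences $p^{\leftarrow} \wedge \bot \equiv p^{\leftarrow}$ and $(\voice{t})^{\leftarrow} \equiv \voice{t}$; both follow directly from \ruleref{Bot}, \ruleref{Refl}, \ruleref{ConjL}, \ruleref{ConjR}, \ruleref{ProjDistConj}, and \ruleref{ProjIdemp}, so no new machinery is needed.
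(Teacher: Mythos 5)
Your overall strategy---reduce both hypotheses to their integrity projections and chain $p \succcurlyeq p^{\leftarrow} \succcurlyeq q^{\leftarrow} \succcurlyeq \voice{t}$ via \ruleref{ProjR} and \ruleref{R-Trans}---is the right shape, and the paper's own justification is only the one-line ``follows from robust acts-for inference rules,'' so there is no detailed official derivation to compare against. However, your write-up has a genuine gap at its central step: you apply the static rule \ruleref{Proj} (and the other rules of Figure~\ref{fig:static}) directly to the \emph{robust} hypotheses $\rafjudge{\Pi}{p^{\leftarrow} \wedge q^{\rightarrow}}{q^{\leftarrow} \wedge p^{\rightarrow}}$ and $\rafjudge{\Pi}{q}{\voice{t}}$. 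Those rules derive only judgments of the form $\stafjudge{\L}{p'}{q'}$, and \ruleref{R-Static} imports only purely static facts into the robust system; it does not license applying a static rule to a judgment that may hold in $\Pi$ solely because of delegations used via \ruleref{R-Assume}. The rule set of Figure~\ref{fig:robrules} contains no projection rule, so ``$\rafjudge{\Pi}{a}{b}$ implies $\rafjudge{\Pi}{a^{\leftarrow}}{b^{\leftarrow}}$'' is not a rule application but an admissibility claim that must itself be proved; your parenthetical ``all of this stays within the static fragment'' is exactly where the argument breaks.

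The claim is in fact true for the integrity projection, but establishing it requires a separate induction on robust derivations that you omit. The only nontrivial case is \ruleref{R-Assume}: a delegation $\delexp{a}{b} \in \Pi$ yields $\rafjudge{\Pi}{a}{b}$, and the integrity projection $\rafjudge{\Pi}{a^{\leftarrow}}{b^{\leftarrow}}$ is precisely the rule's second premise $\rafjudge{\Pi}{\voice{\confid{a}}}{\voice{\confid{b}}}$, since $\voice{\confid{a}} \equiv a^{\leftarrow}$; the remaining cases follow from the induction hypothesis together with the static distributivity facts you already cite. With that admissible integrity-projection lemma in hand, your two projected facts $\rafjudge{\Pi}{p^{\leftarrow}}{q^{\leftarrow}}$ and $\rafjudge{\Pi}{q^{\leftarrow}}{\voice{t}}$ do follow, and the final chaining is correct as written. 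Note that the analogous statement for the confidentiality projection does not follow by this argument, because \ruleref{R-Assume} carries no corresponding premise for $\rightarrow$---so the restriction to $\pi = \leftarrow$ is doing real work and should be made explicit rather than treated as a mechanical unfolding.
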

\begin{proof}
  Follows from robust acts-for inference rules.
\end{proof}

\begin{lemma}[\delegcontext Extension]\label{lemma:dpiextension}
	If {\TValGpc{e}{\tau}} then \TVal{\delegcontext\delegconcat \delexp{p}{q};\varcontext;\pc}{e}{\tau} for any $p, q \in \L$.
\end{lemma}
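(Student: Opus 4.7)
The plan is to proceed by induction on the derivation of $\TValGpc{e}{\tau}$, supported by a small library of monotonicity facts about the judgments that consult $\Pi$. The core supporting claim is \emph{monotonicity of the acts-for judgment under context extension}: if $\rafjudge{\Pi}{p}{q}$ then $\rafjudge{\Pi, \delexp{a}{b}}{p}{q}$ for any $a,b \in \L$. This is proved by a straightforward induction on the derivation in Figure~\ref{fig:robrules}; the only non-trivial case is \ruleref{R-Assume}, where membership $\delexp{p'}{q'} \in \Pi$ is preserved by adding delegations, and the voice premise $\rafjudge{\Pi}{\voice{p'^{\rightarrow}}}{\voice{q'^{\rightarrow}}}$ is handled by the inductive hypothesis. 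Monotonicity of $\rflowjudge{\Pi}{\cdot}{\cdot}$ follows immediately since $\sqsubseteq$ is defined in terms of $\succcurlyeq$, and monotonicity of $\protjudge{\Pi}{\cdot}{\cdot}$ follows by a routine induction over Figure~\ref{fig:protect} using the previous two facts (this is essentially Lemma~\ref{lemma:protext}, stated there for a particular extension but with an identical proof).

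Armed with these, the induction on $\TValGpc{e}{\tau}$ is mostly mechanical. For each rule, $\Pi$ appears only (i) inside subsidiary typing judgments, handled by the inductive hypothesis, (ii) in premises of the form $\rafjudge{\Pi}{\cdot}{\cdot}$, $\rflowjudge{\Pi}{\cdot}{\cdot}$, or $\protjudge{\Pi}{\cdot}{\cdot}$, handled by the monotonicity lemmas just described, or (iii) as the context extended by \ruleref{Assume}/\ruleref{Where}. In the latter case the subderivation types its body under $\Pi, \delexp{p}{q}$; applying the inductive hypothesis to this subderivation yields a typing under $(\Pi, \delexp{p}{q}), \delexp{a}{b}$, which we treat as equal to $(\Pi, \delexp{a}{b}), \delexp{p}{q}$ since $\Pi$ is a set of delegations. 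The voice side-conditions of \ruleref{Assume}/\ruleref{Where} are stable under context extension by monotonicity of acts-for, so the rule can be re-applied in the extended context.

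I do not foresee a real obstacle: this is a pure weakening result, and the type system was designed so that extending $\Pi$ can only permit more derivations, never fewer. The only point worth flagging is that the well-formedness-style premise $\rafjudge{\Pi}{\voice{p^{\rightarrow}}}{\voice{q^{\rightarrow}}}$ on \ruleref{Assume} and \ruleref{Where} is precisely the kind of premise one might worry could be disturbed by enlarging the context; monotonicity of acts-for confirms that it is not. Consequently the proof is a textbook weakening argument, and its primary value is as a plumbing lemma supporting the substitution, subject reduction, and noninterference proofs that follow.
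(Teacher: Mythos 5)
Your proposal is correct and follows essentially the same route as the paper: the paper's proof cites exactly the two monotonicity facts you identify (its Lemma on robust assumption extension and Lemma~\ref{lemma:protext} on protection extension) and then appeals to ``inspection of the typing rules,'' which is the induction you spell out, including the reassociation of the extended context in the \ruleref{Assume}/\ruleref{Where} cases. Your write-up is just a more explicit rendering of the same argument.
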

\begin{proof} By Lemma~\ref{lemma:robext}, Lemma~\ref{lemma:protext} and inspection of the typing rules.\end{proof}

\begin{lemma}[Monotonicity of Delegation Context]\label{lemma:pimonotone}
	If \TValGpc{e}{\tau} then for all $T, e'$ such that $T[e'] = e$,  \TVal{\delegcontext';\varcontext';\pc'}{e'}{\tau'} such that $\Pi \subseteq \Pi'$.
\end{lemma}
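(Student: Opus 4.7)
The plan is to proceed by structural induction on the subterm context $T$ as defined in Figure~\ref{fig:subterm}.

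In the base case $T = [\cdot]$, we have $T[e'] = e' = e$, so we may take $\Pi' = \Pi$, $\Gamma' = \Gamma$, $\pc' = \pc$, and then $\Pi \subseteq \Pi'$ holds trivially. For the inductive step, each non-hole form of $T$ has the shape $T = C[T']$ where $C$ is a one-level constructor (application, pair, projection, case, bind, unit, assume, where, lambda, type lambda, injection, sealing, etc.). Invert the typing rule that derives $\TValGpc{C[T'[e']]}{\tau}$; this exposes a judgment $\TVal{\Pi_1;\Gamma_1;\pc_1}{T'[e']}{\tau_1}$ for the immediate subterm containing the hole. The induction hypothesis applied to $T'$ (with this judgment as its premise) yields $\TVal{\Pi';\Gamma';\pc'}{e'}{\tau'}$ with $\Pi_1 \subseteq \Pi'$, so it suffices to show $\Pi \subseteq \Pi_1$ in each case and then chain the inclusions.

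The key observation is that, by inspection of Figure~\ref{fig:ts} (together with the rules in Figures~\ref{fig:octx} and~\ref{fig:brackets} that extend the system), every typing rule types its immediate subterms under a delegation context $\Pi_1 \supseteq \Pi$. The vast majority of rules use exactly $\Pi$ for their subterms (application, pairing, projection, injection, case, bind, unit, sealed, lambda and type lambda bodies, type application). The only rules that genuinely enlarge the delegation context are \ruleref{Assume} and \ruleref{Where}: both type the evidence expression (the delegation value) under the original $\Pi$, and type the body under $\Pi \cup \{\langle p ≽ q\rangle\}$, which still satisfies $\Pi \subseteq \Pi_1$. Chaining gives $\Pi \subseteq \Pi_1 \subseteq \Pi'$ in every case.

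There is no substantial obstacle here; the proof is a routine induction powered by a single syntactic observation about the typing rules. The only point requiring care is keeping track of which position of a two-subterm constructor the hole occupies for \ruleref{Assume} and \ruleref{Where} (the delegation-value position inherits $\Pi$ unchanged, while the body position inherits the extension), but both positions satisfy $\Pi \subseteq \Pi_1$, so the argument is uniform.
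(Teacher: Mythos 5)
Your proof is correct and takes essentially the same route as the paper: the paper's own argument is a one-line sketch observing that only \ruleref{Assume} and \ruleref{Where} alter the delegation context of a subterm, and they only ever add delegations, so $\Pi \subseteq \Pi'$. Your version merely makes the underlying structural induction on $T$ and the chaining of inclusions explicit, which is a faithful elaboration rather than a different approach.
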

\begin{proof}[Proof Sketch]
The only sub terms that change the delegation context are "assume" and "where". That is, $T = \where{[\cdot]}{v}$ or $T = \assume{e}{[\cdot]}$. However, they add delegations. Thus, $\Pi \subseteq \Pi'$.
\end{proof}

The following lemma is required to prove the type preservation. It says that an  expression $e$ well-typed at $\pc$ is still well-typed at a reduced $\pc'$.

\begin{lemma}[PC Reduction]\label{lemma:pcred}
  Let \TValGpc{e}{\tau}.
  For all $\pc, \pc'$, such that 
  \rflowjudge{\delegcontext}{\pc'}{\pc}  then
  \TValP{\varcontext;pc'}{e}{\tau} holds.
\end{lemma}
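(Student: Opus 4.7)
The plan is to induct on the derivation of $\TValGpc{e}{\tau}$, checking for each typing rule that its premises remain satisfied when $\pc$ is replaced by $\pc'$ with $\rflowjudge{\Pi}{\pc'}{\pc}$. The cases that do not constrain the outer $\pc$ (\ruleref{Var}, \ruleref{Unit}, \ruleref{Del}, \ruleref{Sealed}; the abstraction rules \ruleref{Lam} and \ruleref{TLam}, whose body is typed at an independent annotation; and \ruleref{Where}, which uses $\pcmost$ in place of $\pc$) are immediate. The cases \ruleref{Pair}, \ruleref{Unpair}, and \ruleref{Inj} follow directly from the inductive hypotheses applied to subterms.

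For \ruleref{App}, \ruleref{TApp}, and \ruleref{UnitM}, the relevant premise has the form $\rflowjudge{\Pi}{\pc}{\ell}$; combining it with $\rflowjudge{\Pi}{\pc'}{\pc}$ via \ruleref{R-Trans} delivers the needed $\rflowjudge{\Pi}{\pc'}{\ell}$. The rules \ruleref{Case} and \ruleref{BindM} additionally require that protection is monotone in its first argument, i.e., if $\protjudge{\Pi}{\ell}{\tau}$ and $\rflowjudge{\Pi}{\ell'}{\ell}$ then $\protjudge{\Pi}{\ell'}{\tau}$. I would establish this as a short sublemma by induction on the protection derivation; the only nontrivial case is \ruleref{P-Lbl}, which is immediate from transitivity of $\sqsubseteq$. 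For \ruleref{BindM} we then apply the IH to the continuation at $\pc' \sqcup \ell \sqsubseteq \pc \sqcup \ell$.

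The main obstacle is \ruleref{Assume}, whose premise $\rafjudge{\Pi}{\pc}{\voice{q}}$ is phrased in terms of acts-for rather than flows-to, and weakening $\pc$ along $\sqsubseteq$ does not in general preserve arbitrary acts-for relationships (lowering $\pc$ in $\sqsubseteq$ decreases its confidentiality). The critical observation is that $\voice{q}$ is a pure integrity principal by construction. Unpacking $\rflowjudge{\Pi}{\pc'}{\pc}$ via the definition of $\sqsubseteq$ and projecting on integrity yields $\rafjudge{\Pi}{\pc'^{\leftarrow}}{\pc^{\leftarrow}}$, and by \ruleref{ProjBasis} together with \ruleref{ConjBasis} the statement $\rafjudge{\Pi}{\pc}{\voice{q}}$ is equivalent to $\rafjudge{\Pi}{\pc^{\leftarrow}}{\voice{q}}$. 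Transitivity via \ruleref{R-Trans} then recovers $\rafjudge{\Pi}{\pc'}{\voice{q}}$, as required. The other two premises of \ruleref{Assume} do not mention the outer $\pc$, and the IH applied to the subexpression in the extended delegation context completes the case.
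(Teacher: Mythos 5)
Your proposal is correct and follows essentially the same route as the paper's proof: induction on the typing derivation, discharging the $\pc$-dependent premises of \ruleref{App}, \ruleref{TApp}, and \ruleref{UnitM} by \ruleref{R-Trans}, handling \ruleref{BindM} via monotonicity of $\sqcup$, and observing that the \ruleref{Assume} premise survives because $\voice{q}$ is a pure integrity principal. You actually spell out two steps the paper leaves implicit—the monotonicity of protection in its first argument (needed for \ruleref{Case} and \ruleref{BindM}) and the integrity-projection argument for \ruleref{Assume}, which the paper simply asserts (it is the content of its Lemma~\ref{lemma:voicetrans})—so your write-up is, if anything, more complete.
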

\begin{proof} 
  Proof is by induction on the derivation of the typing judgment.
  \begin{description}
  \item[Case \ruleref{Var}: ] Straightforward from the corresponding typing judgment.
  \item[Case \ruleref{Unit}: ] Straightforward from the corresponding typing judgment.
  \item[Case \ruleref{Del}: ] Straightforward from the corresponding typing judgment.
  \item[Case \ruleref{Lam}: ] Straightforward from the corresponding typing judgment.
  \item[Case \ruleref{App}: ] 
                                         Given,
                         ${\TValGpc{e~e'}{τ}}$.
         From \ruleref{App}, we have
         \begin{align}
         \TValGpc{e}{\func{τ_1}{\pc''}{τ_2}} \\
	 \TValGpc{e'}{τ_1} \\
	 \rflowjudge{\delegcontext}{\pc}{\pc''}
         \end{align}
         Applying induction to the premises we have
                  \TValP{\varcontext;\pc'}{e}{\func{τ_1}{\pc''}{τ_2}} and
                  	 \TValP{\varcontext;\pc'}{e'}{τ_1}.
                   From \ruleref{R-Trans}, we have \rflowjudge{\delegcontext}{\pc'}{\pc''}.
                   Hence we have all the premises.
                   
  \item[Case \ruleref{TLam}: ] Straightforward from the corresponding typing judgment.
  \item[Case \ruleref{TApp}: ] Similar to App case.
  \item[Case \ruleref{Pair}: ] Straightforward from the corresponding typing judgment.
  \item[Case \ruleref{UnPair}: ] Straightforward from the corresponding typing judgment.
  \item[Case \ruleref{Inj}: ] Straightforward from the corresponding typing judgment.
  \item[Case \ruleref{Case}: ] Straightforward from the corresponding typing judgment.
  \item[Case \ruleref{UnitM}:] Given ${\TValGpc{\return{ℓ}{e}}{τ}}$, by \ruleref{UnitM} we have \rflowjudge{\delegcontext}{\pc}{ℓ} 
    and \TValP{\varcontext;\pc}{e}{τ}. By the induction hypothesis, we have \TValP{\varcontext;\pc'}{e}{τ}, and since 
\rflowjudge{\delegcontext}{\pc'}{\pc}, then by \ruleref{R-Trans}, we have \rflowjudge{\delegcontext}{\pc'}{ℓ}. Therefore by \ruleref{UnitM} we have {\TValP{\G;\pc'}{\return{ℓ}{e}}{τ}}.

  \item[Case \ruleref{Sealed}:] Straightforward from the corresponding typing judgment.
  \item[Case \ruleref{BindM}:] Given ${\TValGpc{\bind{x}{e}{e'}}{τ}}$, by \ruleref{BindM} we have 
     \begin{align}
        \TValP{\varcontext;\pc}{e}{\says{ℓ}{τ'}} \label{eq:pcredbind1} \\
        \TValP{\varcontext,x:τ';\pc ⊔ ℓ}{e'}{τ} \label{eq:pcredbind2}\\
        \protjudge*{\pc ⊔ ℓ}{τ}
      \end{align}
      Since \rflowjudge{\delegcontext}{\pc'}{\pc}. By the monotonicity of join with respect to $⊑$, we also have \rflowjudge{\delegcontext}{\pc' ⊔ ℓ}{\pc ⊔ ℓ}. 
      Therefore, by the induction hypothesis applied to \ref{eq:pcredbind1} and \ref{eq:pcredbind2}, we have 
     \begin{align}
        \TValP{\varcontext;\pc'}{e}{\says{ℓ}{τ'}} \\
        \TValP{\varcontext,x:τ';\pc' ⊔ ℓ}{e'}{τ}
      \end{align}
      and by \ruleref{R-Trans} we get \protjudge*{\pc ⊔ ℓ}{τ}.  Then via \ruleref{BindM} we get 
         $$\TValP{\G;\pc'}{\bind{x}{e}{e'}}{τ}$$

   \item[Case \ruleref{Assume}:] Given, {\TValGpc{\assume{e}{e'}}{τ}}, by \ruleref{Assume} we have  
         \begin{align}
                 \TValGpc{e}{\aftype{p}{q}} \label{eq:pcredassume1}\\
                 \rafjudge{\delegcontext}{\pc}{\voice{q}} \label{eq:pcredassume2}\\
                 \rafjudge{\delegcontext}{\voice{\confid{p}}}{\voice{\confid{q}}} \label{eq:pcredassume3}\\
                 \TVal{Π,\langle \aftypep{p}{q}\rangle; Γ;\pc}{e'}{τ} \label{eq:pcredassume5}
         \end{align}

         Applying induction hypothesis to \eqref{eq:pcredassume1} and \eqref{eq:pcredassume5} we have 
         \TValP{\G;\pc'}{e}{\aftype{p}{q}} and
                 \TVal{Π,\langle \aftypep{p}{q}\rangle; Γ;\pc'}{e'}{τ}.
                 Since $\rflowjudge{\delegcontext}{\pc'}{\pc}$, we have \rafjudge{\delegcontext}{\pc'}{\voice{q}}.
                 Combining, we have all the premises for \ruleref{Assume} and thus
                 \TValP{\G;\pc'}{\assume{e}{e'}}{τ}
                 
  \item[Case \ruleref{Where}:]
			Given, \TValGpc{\where{v}{e}}{τ}, by \ruleref{Where} we have 
	\begin{align}
		\TValGpc{v}{\aftype{p}{q}} \label{eq:pcredwhere1}\\
		\rafjudge{\delegcontext}{\pcmost}{\voice{q}} \label{eq:pcredwhere2}\\
		\rafjudge{\delegcontext}{\voice{\confid{p}}}{\voice{\confid{q}}} \label{eq:pcredwhere3}\\
		\TValGpc{e}{\tau} \label{eq:pcredwhere5}
	\end{align}
	
	Applying induction hypothesis to \eqref{eq:pcredwhere1} and \eqref{eq:pcredwhere5}, we have 
	\TValP{\G;\pc'}{v}{\aftype{p}{q}} and 
	\TVal{\delegcontext, \delexp{p}{q};\varcontext;\pc'}{e}{τ}. Then by \ruleref{Where},
        we have \TValP{\G;\pc'}{\where{v}{e}}{τ}.

   \item[Case \ruleref{Bracket}:] The premise \rflowjudge{\delegcontext}{H^\pi \sqcup \pc^\pi}{{\pc''}^\pi} implies \rflowjudge{\delegcontext}{H^\pi \sqcup \pc'^\pi}{{\pc''}^\pi}. The result follows from \ruleref{Bracket}.
                \item[Case \ruleref{Bracket-Values}:] Applying induction to the premises gives the required conclusion.
  \end{description}
\end{proof}

\begin{lemma}[Values PC]\label{lemma:wvaluespc}
If \TValGpc{w}{\tau}, then \TValP{\G;\pc'}{w}{τ} for any $\pc'$.
\end{lemma}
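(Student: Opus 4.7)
The plan is to prove this by structural induction on the where-value $w$. The central observation is that every typing rule whose conclusion produces a where-value — namely \ruleref{Unit}, \ruleref{Del}, \ruleref{Lam}, \ruleref{TLam}, \ruleref{Pair}, \ruleref{Inj}, \ruleref{Sealed}, \ruleref{Where}, and the bracketed \ruleref{Bracket-Values} — has no premise that directly mentions the outer $\pc$ label of the conclusion. All of the flow-limited coupling between expressions and $\pc$ (e.g.\ \ruleref{App}, \ruleref{UnitM}, \ruleref{BindM}, \ruleref{Assume}) occurs in rules whose conclusion is never a where-value, so a weakening-on-$\pc$ result for fully evaluated forms is essentially built into the design of the type system.

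I would first dispatch the atomic cases. For $w = ()$ and $w = \delexp{p}{q}$, rules \ruleref{Unit} and \ruleref{Del} have no hypotheses at all, so they apply verbatim at $\pc'$. For the closed abstractions $\lamc{x}{\tau_1}{\pc''}{e}$ and $\tlam{X}{\pc''}{e}$, rules \ruleref{Lam} and \ruleref{TLam} type the body at the annotation $\pc''$ (which is part of the value itself) rather than at the outer $\pc$, so the single premise is unaffected when the outer label is replaced by $\pc'$. The compositional value cases $\pair{w_1}{w_2}$, $\inji{w'}$, and $\returng{\ell}{w'}$ follow by applying the induction hypothesis to each where-value subterm (recall $w ::= v \mid \where{w}{v}$, so the components of these constructors are themselves where-values) and then re-applying the same typing rule at $\pc'$.

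The most interesting case is $w = \where{w''}{v}$ typed via \ruleref{Where}. The two voice premises $\rafjudge{\Pi}{\pcmost}{\voice{q}}$ and $\rafjudge{\Pi}{\voice{\confid{p}}}{\voice{\confid{q}}}$ are stated in terms of the fixed constant $\pcmost$ rather than the outer $\pc$, so they persist unchanged. The remaining premises are $\TValGpc{v}{\aftype{p}{q}}$ and $\TVal{\Pi,\langle\aftypep{p}{q}\rangle;\Gamma;\pc}{w''}{\tau}$, both of which the induction hypothesis retypes at $\pc'$ (since $v$ and $w''$ are both where-values), after which \ruleref{Where} reassembles the derivation. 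The bracketed \ruleref{Bracket-Values} case is analogous: its conclusion imposes no $\pc$ premise, so the two component where-values are handled by straightforward inductive appeal.

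I do not anticipate any substantive obstacle. The lemma essentially records a design invariant of the type system — that fully evaluated forms are $\pc$-oblivious — which is exactly what one expects in a language where the $\pc$ label governs side effects (changes to the delegation context via \ruleref{Assume}) that only non-value expressions can produce. The proof will not require any FLAM-level reasoning beyond what is already baked into the invariant premises of \ruleref{Where}.
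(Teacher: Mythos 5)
Your proof is correct and follows essentially the same route as the paper's: an induction over the (structure of the) typing derivation of $w$, keyed on the observation that the only rules whose premises constrain the ambient $\pc$ (\ruleref{App}, \ruleref{Case}, \ruleref{UnitM}, \ruleref{BindM}, \ruleref{Assume}) never have a where-value as their conclusion. Your case-by-case elaboration, including the handling of \ruleref{Where} via its $\pcmost$ premises, is just a more explicit rendering of the same argument.
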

\begin{proof}
  By induction on the typing derivation of $w$. Observe that only
\ruleref{App}, \ruleref{Case}, \ruleref{UnitM}, \ruleref{BindM}, and
\ruleref{Assume} contain premises that constrain typing based on the
judgment \pc, and these rules do not apply to $w$ terms.
\end{proof}

We now prove a bunch of substitution lemmas. These are necessary whenever a program variable or a type variable is substituted.

\begin{lemma}[Variable Substitution]\label{lemma:vsubst}
  If \TValP{\G,x:τ';\pc}{e}{\tau} and \TValGpc{w}{τ'}, then \TValGpc{e[x \mapsto w]}{\tau}. 
\end{lemma}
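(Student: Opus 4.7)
The plan is to prove the Variable Substitution lemma by induction on the derivation of $\TValP{\G,x\ty τ';\pc}{e}{\tau}$. At each step I will substitute $w$ for $x$ in the conclusion and show the resulting term remains well-typed in the unextended context $\G;\pc$, appealing to the induction hypothesis on subderivations and to two preparatory lemmas to reconcile contexts.

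The base case is \ruleref{Var}. If $e = x$ then $e[x\mapsto w] = w$ and the type associated with $x$ must be $τ'$, so the assumption $\TValGpc{w}{τ'}$ gives the conclusion directly. If $e = y$ for $y \neq x$, substitution leaves $e$ unchanged, and reapplying \ruleref{Var} with $x$ removed from the context suffices. For \ruleref{Unit}, \ruleref{Del}, \ruleref{Sealed}, and similar leaves, the substitution is the identity and the typing rules reapply unchanged.

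For the inductive cases that type subexpressions under the same $\pc$, $\G$, and $\Pi$ (e.g. \ruleref{App}, \ruleref{TApp}, \ruleref{Pair}, \ruleref{Unpair}, \ruleref{Inj}, \ruleref{UnitM}), I simply apply the induction hypothesis to each subderivation and reassemble via the same rule; the side conditions on $\pc$, well-formedness, and protection do not depend on $x$. Cases that extend $\G$ with a fresh binder, namely \ruleref{Lam}, \ruleref{TLam}, and the two branches of \ruleref{Case}, require the usual $\alpha$-renaming convention so that the new binder differs from $x$ and from the free variables of $w$; then the induction hypothesis applies with the binder added to $\G$.

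The main technical obstacle is the cases where subexpressions are typed under a modified $\pc$ or delegation context, since the hypothesis $\TValGpc{w}{τ'}$ fixes $w$ at the original $\pc$ and $\Pi$. This occurs in \ruleref{BindM} (where the body is typed at $\pc \sqcup \ell$), in \ruleref{Assume} and \ruleref{Where} (where the body is typed under $\Pi,\langle\aftypep{p}{q}\rangle$), and potentially in \ruleref{Case}. Here I will exploit two facts: $w$ is a where-value, so by Lemma~\ref{lemma:wvaluespc} it can be retyped at any $\pc'$, in particular $\pc \sqcup \ell$; and by Lemma~\ref{lemma:dpiextension} its typing derivation lifts to any extended $\Pi$. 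Together these let me feed the right typing of $w$ into the induction hypothesis for the subderivation, after which the original rule reassembles the conclusion. The \ruleref{Bracket} and \ruleref{Bracket-Values} cases are routine applications of the induction hypothesis to each component, using the same $\pc'$-reindexing of $w$ where needed.
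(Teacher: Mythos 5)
Your proposal is correct and follows essentially the same route as the paper's proof: induction on the typing derivation, with the cases that retype subterms under a larger $\pc$ or an extended delegation context handled by retyping $w$ via Lemma~\ref{lemma:wvaluespc} and Lemma~\ref{lemma:dpiextension}. The paper states exactly this observation once up front and dispatches all inductive cases as straightforward; your write-up just makes the case analysis explicit.
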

\begin{proof}
  Proof is by induction on the typing derivation of $e$. Observe that by Lemma~\ref{lemma:wvaluespc} and Lemma~\ref{lemma:dpiextension}, we have \TVal{\Pi';\G;\pc'}{w}{τ'} for any $\pc'$ and $\Pi'$ such that $\Pi \subseteq \Pi'$. 
  Therefore, each inductive case follows by straightforward application of the induction hypothesis.
\end{proof}

\begin{lemma}[Variable Substitution Under Contexts]\label{lemma:ctxtvarsubst}
  If \TValP{\G,x:τ';\pc}{\octx{e}{\pc'}}{\tau} and \TValGpc{w}{τ'}, then \TValGpc{\octx{e[x \mapsto w]}{\pc'}}{\tau}. \end{lemma}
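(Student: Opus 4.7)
The plan is to prove this by direct inversion of the typing rule \ruleref{Ctx} for protection contexts, followed by a straightforward combination of the already-established Variable Substitution lemma (Lemma~\ref{lemma:vsubst}) and the Values PC lemma (Lemma~\ref{lemma:wvaluespc}). No induction is needed: the outermost structure of the term is fixed by the $\octx{\cdot}{\cdot}$ form, so a single unfolding step suffices.

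First I would invert the judgment $\TValP{\G,x{:}τ';\pc}{\octx{\pc'}{e}}{τ}$ using the only applicable rule, \ruleref{Ctx}, to extract its two premises: $\TValP{\G,x{:}τ';\pc'}{e}{τ}$ and $\rflowjudge{\Pi}{\pc}{\pc'}$. In order to apply Lemma~\ref{lemma:vsubst} (Variable Substitution) to the inner judgment, I need a typing derivation of $w$ at the elevated label $\pc'$ rather than at $\pc$. This is where Lemma~\ref{lemma:wvaluespc} comes in: because $w$ belongs to the syntactic category of where-values (as indicated by the metavariable $w$ in Figure~\ref{fig:syntax}), its typing derivation is independent of the program-counter label, so from $\TValGpc{w}{τ'}$ I obtain $\TValP{\G;\pc'}{w}{τ'}$.

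With $\TValP{\G,x{:}τ';\pc'}{e}{τ}$ and $\TValP{\G;\pc'}{w}{τ'}$ in hand, Lemma~\ref{lemma:vsubst} yields $\TValP{\G;\pc'}{e[x \mapsto w]}{τ}$. Reapplying \ruleref{Ctx} together with the preserved side-condition $\rflowjudge{\Pi}{\pc}{\pc'}$ gives the desired conclusion $\TValGpc{\octx{\pc'}{e[x \mapsto w]}}{τ}$.

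There is no real obstacle to this argument; the only subtlety worth flagging is the need to re-type $w$ at the inner, more restrictive label $\pc'$, which is precisely the purpose of Lemma~\ref{lemma:wvaluespc}. Without that lemma one would have to invoke Lemma~\ref{lemma:pcred} instead, which goes in the wrong direction (it lowers the \pc, not raises it), so the reliance on the where-value property is essential.
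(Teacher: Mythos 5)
Your proof is correct and takes essentially the same route as the paper, which simply states that the result ``follows from Lemma~\ref{lemma:vsubst}''; you have just made explicit the inversion of \ruleref{Ctx} and the use of Lemma~\ref{lemma:wvaluespc} to re-type $w$ at $\pc'$, details the paper leaves implicit (indeed, the proof of Lemma~\ref{lemma:vsubst} itself already invokes Lemma~\ref{lemma:wvaluespc} for exactly this purpose).
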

\begin{proof}
  Follows from Lemma~\ref{lemma:vsubst}.
\end{proof}

\begin{lemma}[Type Substitution]\label{lemma:tsubst}
	Let $\tau'$ be well-formed in $\varcontext, X, \varcontext'$.
	If $\TValP{\varcontext, X, \varcontext';\pc}{e}{\tau}$ then $\TValP{\varcontext, \varcontext'[X \mapsto \tau'];\pc}{e[X \mapsto \tau']}{\tau [X \mapsto \tau']}$. 
\end{lemma}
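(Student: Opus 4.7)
The plan is to proceed by induction on the derivation of $\TValP{\varcontext, X, \varcontext';\pc}{e}{\tau}$, relying on two auxiliary facts dispatched up front. First, principals, delegation contexts $\delegcontext$, and the program counter $\pc$ contain no type variables, so the judgments $\rafjudge{\delegcontext}{p}{q}$ and $\rflowjudge{\delegcontext}{p}{q}$ that appear as premises of \ruleref{App}, \ruleref{TApp}, \ruleref{UnitM}, \ruleref{Assume}, \ruleref{Where}, and \ruleref{Bracket} are invariant under substituting $\tau'$ for $X$. Second, I would prove by a short induction on the derivation of $\protjudge{\delegcontext}{\ell}{\tau}$ that protection is preserved under type substitution: if $\protjudge{\delegcontext}{\ell}{\tau}$ holds then so does $\protjudge{\delegcontext}{\ell}{\tau[X \mapsto \tau']}$. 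Every protection rule recurses structurally on a subtype (\ruleref{P-Pair}, \ruleref{P-Fun}, \ruleref{P-TFun}) or depends only on labels (\ruleref{P-Unit}, \ruleref{P-Lbl}), so this goes through immediately.

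With these in hand, most inductive cases are routine. For \ruleref{Var} I would split on whether $x$ is bound in $\varcontext$ or in $\varcontext'$: in the former case the bound type was well-formed before $X$ was introduced and so contains no free occurrence of $X$, hence is unchanged by substitution; in the latter, substitution on $\varcontext'$ yields exactly the required type $\tau[X \mapsto \tau']$. The remaining structural rules --- \ruleref{Unit}, \ruleref{Del}, \ruleref{Pair}, \ruleref{Unpair}, \ruleref{Inj}, \ruleref{Case}, \ruleref{Lam}, \ruleref{App}, \ruleref{UnitM}, \ruleref{Sealed}, \ruleref{BindM}, \ruleref{Assume}, \ruleref{Where}, \ruleref{Bracket}, and \ruleref{Bracket-Values} --- apply the induction hypothesis to each typing premise (extending $\varcontext'$ with the appropriate variable binding so that substitution uniformly hits the right portion of the context) and then reassemble the derivation, using the two auxiliary facts to discharge the side conditions.

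The two interesting cases are \ruleref{TLam} and \ruleref{TApp}. For \ruleref{TLam} with binder $Y$, I would invoke the Barendregt convention to rename $Y$ so that $Y \neq X$ and $Y$ does not occur free in $\tau'$; with that choice, extending the type-variable part of the context with $Y$ commutes with the substitution $[X \mapsto \tau']$, and the induction hypothesis on the body yields the goal. For \ruleref{TApp}, the conclusion itself involves a substitution $\tau[Y \mapsto \tau'']$, so the main work is the standard commutativity of capture-avoiding substitution, $\tau[Y \mapsto \tau''][X \mapsto \tau'] = \tau[X \mapsto \tau'][Y \mapsto \tau''[X \mapsto \tau']]$, again justified by the variable convention. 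I also need to check that the well-formedness side condition on the argument type transports: if $\tau''$ is well formed in $\varcontext, X, \varcontext'$, then $\tau''[X \mapsto \tau']$ is well formed in $\varcontext, \varcontext'[X \mapsto \tau']$, which follows from a separate induction on the structure of types using well-formedness of $\tau'$ in $\varcontext$.

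The main obstacle is the usual bookkeeping around binders and well-formedness: managing alpha-renaming in \ruleref{TLam} and justifying iterated substitution in \ruleref{TApp} without capture. FLAC's extra annotations --- the $\pc$ labels decorating function and polymorphic arrows, and the delegation context $\delegcontext$ inside typing judgments --- do not themselves complicate matters, since labels carry no type variables and $\delegcontext$ is purely principal-level. The essential care is simply to keep $\delegcontext$ and $\varcontext$ separate and treat type substitution as the purely syntactic, capture-avoiding operation it is on types.
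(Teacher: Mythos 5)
Your proposal is correct and follows the same route as the paper, which proves this lemma by induction on the typing derivation of $\TValP{\varcontext, X, \varcontext';\pc}{e}{\tau}$ and gives no further detail. The auxiliary facts you isolate (invariance of the principal-level judgments and of protection under type substitution, plus the binder bookkeeping for \ruleref{TLam} and \ruleref{TApp}) are exactly the ingredients the paper's one-line proof leaves implicit.
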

\begin{proof}
	Proof is by the induction on the typing derivation of $\TValP{\varcontext, X, \varcontext';\pc}{e}{\tau}$.
\end{proof}

\begin{lemma}[Projection Preserves Types]\label{lemma:dproj}
	If \TValGpc{e}{\tau}, then \TValGpc{\outproj{e}{i}}{\tau} for $i = \{1, 2\}$.
\end{lemma}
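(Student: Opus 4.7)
The plan is to proceed by structural induction on the derivation of $\TValGpc{e}{\tau}$, leveraging the fact that the projection $\outproj{\cdot}{i}$ is defined homomorphically on every syntactic constructor except for brackets themselves.

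For every non-bracket typing rule (\ruleref{Var}, \ruleref{Unit}, \ruleref{Del}, \ruleref{Lam}, \ruleref{TLam}, \ruleref{App}, \ruleref{TApp}, \ruleref{Pair}, \ruleref{Unpair}, \ruleref{Inj}, \ruleref{Case}, \ruleref{UnitM}, \ruleref{Sealed}, \ruleref{BindM}, \ruleref{Assume}, \ruleref{Where}, and \ruleref{Ctx}), the term produced by the conclusion is a constructor applied to immediate subterms, and projection commutes through that constructor, e.g. $\outproj{(e~e')}{i} = \outproj{e}{i}~\outproj{e'}{i}$ and $\outproj{\lamc{x}{\tau}{\pc}{e}}{i} = \lamc{x}{\tau}{\pc}{\outproj{e}{i}}$. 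For each of these cases the induction hypothesis delivers typing for each projected subterm at the same $\pc$, $\Gamma$, and $\Pi$, and re-applying the very same rule (whose side conditions on $\Pi$-judgments, $\pc$-flows, and type well-formedness are unchanged by projection of bodies) yields $\TValGpc{\outproj{e}{i}}{\tau}$.

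The \ruleref{Bracket-Values} case is immediate: $\outproj{\bracket{w_1}{w_2}}{i} = w_i$, and the premises of \ruleref{Bracket-Values} already supply $\TValGpc{w_i}{\tau}$ directly.

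The main obstacle is the \ruleref{Bracket} case, where $\outproj{\bracket{e_1}{e_2}}{i} = e_i$ and the premises supply $\TValP{\Gamma;\pc'}{e_i}{\tau}$ together with $\rflowjudge{\Pi}{H^\pi \sqcup \pc^\pi}{\pc'^\pi}$ and $\protjudge{\Pi}{H^\pi}{\tau^\pi}$. I would transfer $e_i$ from $\pc'$ to $\pc$ by an appeal to PC Reduction (Lemma~\ref{lemma:pcred}): the bracket premise gives $\rflowjudge{\Pi}{\pc^\pi}{\pc'^\pi}$, and since the type system is parameterized by the fixed projection $\pi$ used to analyze (non)interference, only the $\pi$-component of the pc constrains the relevant typing premises that mention $\pc$. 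One may therefore, without loss of generality, take $\pc'$ to differ from $\pc$ only in its $\pi$-projection and appeal to Lemma~\ref{lemma:pcred} to retype $e_i$ at $\pc$. The delicate point is to justify this normalization step cleanly; if a direct appeal to PC Reduction is not immediate, an auxiliary lemma showing that typing only depends on $\pc^\pi$ (in the bracketed extension) provides the missing glue. With that in hand, $\TValGpc{e_i}{\tau}$ follows, completing the induction.
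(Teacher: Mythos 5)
Your proposal follows essentially the same route as the paper's proof: induction on the typing derivation, with all non-bracket cases handled homomorphically and the \ruleref{Bracket} case discharged by appealing to PC Reduction (Lemma~\ref{lemma:pcred}) using the premise $\rflowjudge{\Pi}{H^\pi \sqcup \pc^\pi}{\pc'^\pi}$. You are in fact more careful than the paper's one-line argument, since you explicitly flag that this premise only relates the $\pi$-projections of $\pc$ and $\pc'$ rather than the full principals, a subtlety the paper's proof glosses over.
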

\begin{proof}
	Proof is by induction on the typing derivation of \TValGpc{e}{\tau}. The interesting case is $e = \bracket{e₁}{e₂}$.  By \ruleref{Bracket}, we have  
\TValP{\G;pc'}{eᵢ}{\tau} for some $\pc'$ such that \rflowjudge{\delegcontext}{(H^\pi \sqcup \pc^\pi)}{{\pc'^\pi}}.
      Therefore, by Lemma~\ref{lemma:pcred}, we have \TValP{\G;pc}{eᵢ}{\tau}. 
\end{proof}

We expand the proof for the helper lemma necessary to prove the adequacy of bracketed terms.
\stuck*
\begin{proof}
	We prove by induction on the structure of $e$.
	\begin{description}
		\item[Case $w$:] No reduction rules apply to terms in the syntactic category $w$ (including $\bracket{w}{w'}$). Hence \outproj{x}{i} is stuck as well.
		\item[Case $x$:] No reduction rules apply to a variable. Hence \outproj{x}{i} is stuck as well.
		\item[Case $ \bracket{e_1}{e_2}$:] By \ruleref{B-Step}, $e$ is only gets stuck is if both $e_1$ and $e_2$ get stuck.
		\item[Case $e~e'$:] Since $e~e'$ is stuck, then \ruleref{B-App}, \ruleref{W-App}, \ruleref{E-App} are not applicable. 
                    It follows that either (1) $e$ is not of the form $\bracket{w}{w'}$, $\where{w}{v}$, or $\lamc{x}{\tau}{\pc'}{e}$
                    or (2) $e$ has the form $\lamc{x}{\tau}{\pc'}{e}$, but $e'$ is stuck.  For the first case, \outproj{e}{i} is also not of 
                    the form $\bracket{w}{w'}$, $\where{w}{v}$, or $\lamc{x}{\tau}{\pc'}{e}$, so $\outproj{e~e'}{i}$ is also stuck. 
                    For the second case, applying the induction hypothesis gives us that \outproj{e'}{i} is stuck for some $i \in \{1, 2\}$, 
                    so \outproj{\lamc{x}{\tau}{\pc'}{e}~e'}{i} is stuck for the same $i$.
		\item[Case $e~\tau$:] Since $e~e'$ is stuck, then \ruleref{B-TApp}, \ruleref{W-TApp}, \ruleref{E-TApp} are not applicable. 
                    It follows that $e$ is not of the form $\bracket{w}{w'}$, $\where{w}{v}$, or $\tlam{X}{\pc'}{e}$
                    Therefore, \outproj{e}{i} is also not of the form $\bracket{w}{w'}$, $\where{w}{v}$, or $\lamc{x}{\tau}{\pc'}{e}$, so $\outproj{e~e'}{i}$ is also stuck. 
		\item[Case $\return{\ell}{e}$:] Since $\return{\ell}{e}$ is stuck, then \ruleref{E-UnitM} is not applicable, so $e$ does not have the form $w$. Therefore, 
                  \ruleref{E-UnitM} is also not applicable to \outproj{\return{\ell}{e}}{i}.  Therefore, $e$ must be stuck.
                   Applying induction hypothesis, it follows that $\outproj{e}{i}$ is also stuck and so $\outproj{\return{\ell}{e}}{i}$ is also stuck.
		\item[Case $\proj{j}{e}$:]  Similar to the above case.
		\item[Case $\inj{j}{e}$:] Similar to the above case.
		\item[Case $\pair{e}{e}$:] Similar to the above case.
		\item[Case $\casexp{e}{x}{e_1}{e_2}$:] 
                    Since \ruleref{B-Case}, \ruleref{W-Case}, and \ruleref{E-Case} are not applicable, it follows that $e$ is not of the form 
                    $\bracket{w}{w'}$, $\where{w}{v}$, or $\inj{j}{v}$. It follows that $\outproj{\casexp{e}{x}{e_1}{e_2}}{i}$ is also stuck. 
		\item[Case $\bind{x}{v}{e'}$:] Similar to the above case.
		\item[Case $\assume{e}{e'}$:] Similar to the above case.
		\item[Case $\where{e}{v}$:] Similar to the above case.
	\end{description}
\end{proof}

We are now ready to prove subject reduction.
\subjred*
\begin{proof}

  \begin{description}
  \item[Case \ruleref{E-App}:] Given $e = (\lamc{x}{τ}{\pc'}{e})~w$ and
    \TValGpc{(\lamc{x}{τ}{\pc'}{e})~v}{τ_2}. From \ruleref{App} we have,
    \begin{align} 
    \TValP{\G,x:τ';\pc'}{e}{τ} \\
    \TValGpc{w}{τ'} \\
    \rflowjudge{\delegcontext}{pc'}{pc}
   \end{align} 

    Therefore, via PC reduction (Lemma~\ref{lemma:pcred}) and variable substitution (Lemma~\ref{lemma:vsubst}), we have that
     \TValP{\G;\pc'}{\subst{e}{x}{w}}{τ}.
        
  \item[Case \ruleref{E-TApp}:] Similar to above case, but using Lemma~\ref{lemma:tsubst}.
  \item[Case \ruleref{E-Case1}:] Given
    $e = \casexp{(\ione{w})}{x}{e_1}{e_2}$  and $e' = \subst{e_1}{x}{v}$.
    Also, 
    \TValP{\G; \pc}{\casexp{(\ione{w})}{x}{e_1}{e_2}}{τ}. From the premises we have,
    \TValP{\G; \pc}{\ione{v}}{τ' + τ''} and
    \TValP{\G,x:τ';\pc}{e_1}{τ}. From \ruleref{Inj}, we have 
    \TValP{\G; \pc}{v}{τ'}.
    Invoking variable substitution lemma (Lemma~\ref{lemma:vsubst}), we have
    \TValP{\G; \pc}{\subst{e_1}{x}{w}}{τ}.

  \item[Case \ruleref{E-Case2}:] Similar to above.
  
  \item[Case \ruleref{E-UnitM}:] Given $e = \return{\ell}{w}$ and $e' = \returng{\ell}{w}$.
    Also, 
    \TValGpc{\return{\ell}{w}}{\says{\ell}{τ}}.
    From the premises it follows that
    \TValGpc{\returng{\ell}{w}}{\says{\ell}{τ}}.

  \item[Case \ruleref{E-BindM}:] Given $e = \bind{x}{w}{e'} $ and $e' = \subst{e'}{x}{w}$
    Also,
   \TValGpc{\bind{x}{w}{e'}}{\tau}. From the premises, we have the following:
    \begin{align}
      \TValGpc{w}{\tau'} \label{eq:sbindm1} \\
      \TValP{\G, x:\tau'; \pc \sqcup \ell}{e'}{\tau} \label{eq:sbindm2} \\
      \protjudge{\delegcontext}{\pc \sqcup \ell}{\tau} \label{eq:sbindm4} \\
      \rafjudge{\delegcontext}{p}{\pc}
    \end{align}

    We have to prove that
    \TValGpc{\subst{e'}{x}{w}}{\tau}.
    Since we have that $\rafjudge{\delegcontext}{p}{\pc}$,
    applying PC reduction (Lemma~\ref{lemma:pcred}) to the premise \eqref{eq:sbindm2}, we have
    \TValP{\G, x:\tau'; \pc}{e'}{\tau}.

    Invoking variable substitution lemma (Lemma~\ref{lemma:vsubst}), we thus have
    \TValGpc{\subst{e'}{x}{w}}{\tau}.
    
  \item[Case \ruleref{E-Assume}:]
 Given $e = \assume{\delexp{p}{q}}{e'}$ and $e' = \where{e'}{\delexp{p}{q}}$. Also,
 	\TValP{\G; \pc}{\assume{\delexp{p}{q}}{e'}}{\tau}. From \ruleref{Assume}, we have
        \begin{align}
                \TValGpc{\delexp{p}{q}}{\aftype{p}{q}} \\
		\TVal{\delegcontext,\delexp{p}{q}; \G; \pc}{e'}{\tau} \\
		\rafjudge{\delegcontext}{\pc}{\voice{q}} \\
		\rafjudge{\delegcontext}{\voice{\confid{\pl}}}{\voice{\confid{q}}}
       \end{align}
       We need to prove:
       \[
              \TValGpc{\where{e'}{\delexp{p}{q}}}{\tau}
       \]
       Comparing with the given premises, we already have the required premises.
        \begin{align}
		\TValGpc{\delexp{p}{q}}{\aftype{p}{q}} \\
		\TVal{\delegcontext, \delexp{p}{q}; \G; \pc}{e}{\tau} \\
		\rafjudge{\delegcontext}{\pcmost}{\voice{q}} \\
		\rafjudge{\delegcontext}{\voice{\confid{p}}}{\voice{\confid{q}}}
       \end{align}
       Hence proved.
       
  \item[Case \ruleref{E-Eval}:] 
     For $e = E[e]$ and $e' = E[e']$ where $\TValGpc{E[e]}{\tau}$, we have $\TVal{Π';\G';\pc'}{e}{\tau'}$ 
     for some $\pc'$, $τ'$, $Π'$, and $\G'$ such that $Π' ⊇ Π$, $\G' ⊇ \G$. By the induction hypothesis we have 
     $\TVal{Π';\G';\pc'}{e'}{\tau'}$.  Observe that with the exception of \ruleref{Sealed} and \ruleref{Where}, the premises
     of all typing rules use terms from the syntactic category $e$.  Therefore if a derivation for $\TValGpc{E[e]}{\tau}$ exists,
     it must be the case that derivation for $\TValGpc{E[e']}{\tau}$ exists where the derivation of $\TVal{Π';\G';\pc'}{e}{\tau'}$ 
      is replaced with $\TVal{Π';\G';\pc'}{e'}{\tau'}$.  Rules \ruleref{Sealed} and \ruleref{Where} have premises that use terms from the
      syntactic category $v$, but since these are fully evaluated, $e$ cannot be equal to a $v$ term since no $e'$ exists such that 
      $v \stepsto e'$.

  \item[Case \ruleref{W-App}:]
    Given $e = {(\where{w}{\delexp{p}{q}})~e}$ and 
    $e' = \where{(w~e)}{\delexp{p}{q}}$. We have to prove that 
    \[
    \TValGpc{\where{(w~e)}{\delexp{p}{q}}}{\tau}
    \]
    From \ruleref{App} we have: 
    \begin{align}
    \TValGpc{\where{w}{\delexp{p}{q}}}{\func{τ_1}{\pc'}{τ}}\label{eq:dsubwapp1}  \\
    \TValGpc{e}{τ₁}\label{eq:dsubwapp2} \\
    \rflowjudge{\delegcontext}{\pc}{\pc'} \label{eq:dsubwapp3} \\
       \end{align}
    Rule \ruleref{Where} gives us the following:
    \begin{align}
    \TValGpc{\delexp{p}{q}}{\aftype{p}{q}} \label{eq:dsubwapp4}\\
    \rafjudge{\delegcontext}{\pcmost}{\voice{q}} \label{eq:dsubwapp5} \\
    \rafjudge{\delegcontext}{\voice{\confid{p}}}{\voice{\confid{q}}} \label{eq:dsubwapp6}\\
    \TVal{\delegcontext, \delexp{p}{q}; \G; \pc}{w}{\func{τ_1}{\pc'}{τ}} \label{eq:dsubwapp7}
    \end{align}
    We now want to show that $e'$ is well typed via \ruleref{Where}. The key premise is to show that the 
    subexpression $(w~e)$ is well-typed via \ruleref{App}. That is,
    \begin{equation}
    \TVal{\delegcontext, \delexp{p}{q}; \G; \pc }{w~e}{τ} \label{eq:dsubwapp10}
    \end{equation}

    Applying Lemma~\ref{lemma:dpiextension} (extending delegation contexts for well-typed terms) to \eqref{eq:dsubwapp2} and   Lemma~\ref{lemma:robext} (extending delegation contexts for assumptions) to \eqref{eq:dsubwapp3}, we have:
    \begin{align}
    \TVal{\delegcontext, \delexp{p}{q}; \G; \pc}{e}{τ₁} \\
    \rflowjudge{\delegcontext, \delexp{p}{q}}{\pc}{\pc'} 
    \end{align}
    Combining with \eqref{eq:dsubwapp7}, we have \eqref{eq:dsubwapp10} which when combined with remaining premises (\eqref{eq:dsubwapp4}, \eqref{eq:dsubwapp5} and \eqref{eq:dsubwapp6})  give us
    \TValGpc{\where{(w~e)}{\delexp{p}{q}}}{\tau}.

    \item[Case \ruleref{W-TApp}:]
    Given $e = {(\where{w}{\delexp{p}{q}})~\tau}$ and 

    $e' = \where{(w~\tau')}{\delexp{p}{q}}$. We have to prove that 
    \[
    \TValGpc{\where{(v~\tau')}{\delexp{p}{q}}}{\tau[X \mapsto \tau']}
    \]
    From \ruleref{TApp} we have: 
    \begin{align}
    \TValGpc{\where{w}{\delexp{p}{q}}}{\tfuncpc{X}{\pc'}{τ}}\label{eq:dsubwtapp1}  \\
    \rflowjudge{\delegcontext}{\pc}{\pc'} \label{eq:dsubwtapp2}
    \end{align}
    Rule \ruleref{Where} gives us the following:
    \begin{align}
    \TValGpc{\delexp{p}{q}}{\aftype{p}{q}} \label{eq:dsubwtapp4}\\
    \rafjudge{\delegcontext}{\pcmost}{\voice{q}} \label{eq:dsubwtapp5} \\
    \rafjudge{\delegcontext}{\voice{\confid{p}}}{\voice{\confid{q}}} \label{eq:dsubwtapp6}\\
    \TVal{\delegcontext, \delexp{p}{q}; \G;\pc}{v}{\tfuncpc{X}{\pc'}{τ}} \label{eq:dsubwtapp7}
    \end{align}
    We now want to show that $e'$ is well typed via \ruleref{Where}. The key premise is to show that the 
    subexpression $(v~\tau')$ is well-typed via \ruleref{TApp}. That is,
    \begin{equation}
    \TVal{\delegcontext, \delexp{p}{q}; \G; \pc}{w~\tau'}{τ[X \mapsto \tau']} \label{eq:dsubwtapp10}
    \end{equation}

    Applying Lemma~\ref{lemma:robext} (extending delegation context for well-typed terms) to \eqref{eq:dsubwtapp2}, we get:	
    \begin{align}
    \rflowjudge{\delegcontext, \delexp{p}{q}}{\pc}{\pc'} 
    \end{align}
    Combining with \eqref{eq:dsubwtapp7}, we have \eqref{eq:dsubwtapp10} which when
    combined with remaining premises (\eqref{eq:dsubwtapp4}, \eqref{eq:dsubwtapp5} and \eqref{eq:dsubwtapp6}) give
       \TValGpc{\where{(w~\tau')}{\delexp{p}{q}}}{\tau [X \mapsto \tau']}.
       
    \item[Case \ruleref{W-UnPair}:]
    Given $e = {\proj{i}{(\where{\pair{w_1}{w_2}}{\delexp{p}{q}})}}$ and $e' = {\where{(\proj{i}{\pair{w_1}{w_2}})}{\delexp{p}{q}}}$. We have to prove that 
		\[
			\TValGpc{\where{(\proj{i}{\pair{w_1}{w_2}})}{\delexp{p}{q}}}{\tau_i}
		\]
		From \ruleref{UnPair}, we have:
		\begin{align}
			\TValGpc{\where{\pair{w_1}{w_2}}{\delexp{p}{q}}}{\prodtype{τ_1}{τ_2}} \label{eq:dwunpair1} 
		\end{align}
		From \eqref{eq:dwunpair1} and \ruleref{Where}, we have:
                \begin{align}
			\TValGpc{\delexp{p}{q}}{\aftype{p}{q}} \label{eq:dwunpair2}\\
			\rafjudge{\delegcontext}{\pcmost}{\voice{q}} \label{eq:dwunpair3}\\
			\rafjudge{\delegcontext}{\voice{\confid{p}}}{\voice{\confid{q}}} \label{eq:dwunpair4}\\
			\TVal{\delegcontext, \delexp{p}{q}; \G; \pc}{\pair{w_1}{w_2}}{\prodtype{\tau_1}{\tau_2}} \label{eq:dwunpair5}
                \end{align}
		From \eqref{eq:dwunpair5} and \ruleref{UnPair}, we have:
			\TVal{\delegcontext, \delexp{p}{q}; \G; \pc}{\proj{i}{\pair{w_1}{w_2}}}{\tau_i}.
		Combining with remaining premises (\eqref{eq:dwunpair2} to \eqref{eq:dwunpair5}) we have
			\TValGpc{\where{\proj{i}{\pair{w_1}{w_2}}}{\delexp{p}{q}}}{\tau_i}.
                        
     \item[Case \ruleref{W-Case}:]
			Given $$e = \casexp{(\where{w}{\delexp{p}{q}})}{x}{e_1}{e_2} $$ and $$e'= \where{(\casexp{w}{x}{e_1}{e_2})}{\delexp{p}{q}}$$
We have to prove that 
		\[
			\TValGpc{\where{(\casexp{w}{x}{e_1}{e_2})}{\delexp{p}{q}}}{\tau}
		\]
                From \ruleref{Case} we have: 
                \begin{align}
			\TValGpc{\where{w}{\delexp{p}{q}}}{τ_1 + τ_2} \label{eq:dcase1} \\
			\protjudge{\delegcontext}{\pc \sqcup \ell}{\tau} \label{eq:dcase2} \\
			\rflowjudge{\delegcontext}{\pc}{\ell} \label{eq:dcase3} \\
			\TValP{\varcontext,~x:τ_1; \pc \sqcup \ell }{e_1}{τ} \label{eq:dcase4} \\
			\TValP{\varcontext,~x:τ_2; \pc \sqcup \ell}{e_2}{τ} \label{eq:dcase5} 
		\end{align}
	From \eqref{eq:dcase1} and rule \ruleref{Where}, we get the following:
                \begin{align}
			\TValGpc{\delexp{p}{q}}{\aftype{p}{q}} \label{eq:dcase6}\\
			\rafjudge{\delegcontext}{\pcmost}{\voice{q}} \label{eq:dcase7}\\
			\rafjudge{\delegcontext}{\voice{\confid{p}}}{\voice{\confid{q}}} \label{eq:dcase8}\\
			\TValGpc{w}{\sumtype{\tau_1}{\tau_2}} \label{eq:dcase9}
                \end{align}
		The key premise to prove is that:
		\[
			\TVal{\delegcontext, \delexp{p}{q}; \G; \pc}{(\casexp{w}{x}{e_1}{e_2})}{\tau}
		\]
		which follows from \eqref{eq:dcase9} and extending delegations in equations \eqref{eq:dcase2} to \eqref{eq:dcase5} (Lemma~\ref{lemma:dpiextension}).
		Combining with remaining premises, we have
		$\TValGpc{\where{(\casexp{w}{x}{e_1}{e_2})}{\delexp{p}{q}}}{\tau}$.
                
                \item[Case \ruleref{W-BindM}:]
			Given $e = {\bind{x}{(\where{w}{\delexp{p}{q}})}{e}}$ and $e' = {\where{(\bind{x}{w}{e})}{\delexp{p}{q}}}$. We have to prove that 
		\[
			\TValGpc{\where{(\bind{x}{w}{e})}{\delexp{p}{q}}}{\tau}
		\]
        From \ruleref{BindM} we have: 
                \begin{align}
			\TValGpc{(\where{w}{\delexp{p}{q}})}{\says{\ell}{\tau'}} \label{eq:dbind1} \\
			\TValP{\G, x:\tau'; \pc \sqcup \ell}{e}{\tau} \label{eq:dbind2} \\
			\protjudge{\delegcontext}{\pc \sqcup \ell }{\tau} \label{eq:dbind4}
	\end{align}
	From \eqref{eq:dbind1} and rule \ruleref{Where}, we get the following:
                \begin{align}
			\TValGpc{\delexp{p}{q}}{\aftype{p}{q}} \label{eq:dbind5}\\
			\rafjudge{\delegcontext}{\pcmost}{\voice{q}} \label{eq:dbind6}\\
			\rafjudge{\delegcontext}{\voice{\confid{p}}}{\voice{\confid{q}}} \label{eq:dbind7}\\
			\TVal{\delegcontext, \delexp{p}{q}; \G; \pc}{w}{\says{\ell}{\tau'}} \label{eq:dbind8}
                        \end{align}
        We now want to show that $e'$ is well typed via \ruleref{Where}. That is, we need the following premises, 
	\begin{align}
		\TVal{\delegcontext, \delexp{p}{q}; \G; \pc}{\bind{x}{w}{e}}{\tau} \label{eq:dbind12} \\
	       \TValGpc{\delexp{p}{q}}{\aftype{p}{q}} \label{eq:dbind13}\\
			\rafjudge{\delegcontext}{\pcmost}{\voice{q}} \label{eq:dbind14}\\
			\rafjudge{\delegcontext}{\voice{\confid{p}}}{\voice{\confid{q}}} \label{eq:dbind15}
       \end{align}
     Extending the delegation context (Lemma~\ref{lemma:dpiextension}) in the premises  \eqref{eq:dbind2}, \eqref{eq:dbind4} and  from \eqref{eq:dbind8} we have \eqref{eq:dbind12}.
                
      We already have \eqref{eq:dbind13} from \eqref{eq:dbind5}; \eqref{eq:dbind14} from \eqref{eq:dbind6}; \eqref{eq:dbind15} from \eqref{eq:dbind7}.
       Combining, we have 
		\TValGpc{\where{(\bind{x}{w}{e})}{\delexp{p}{q}}}{\tau}.

  \item[Case \ruleref{W-Assume}:]
      Given $e = \assume{\where{w}{\delexp{a}{b}}}{e}$ and $e' =\where{\assume{w}{e}}{\delexp{a}{b}}$.
      From \ruleref{Assume}, we have
      \begin{align}
                \TValGpc{\where{v}{\delexp{a}{b}}}{\aftype{r}{s}} \\
		\TVal{\delegcontext, \delexp{r}{s}; \G; \pc}{e}{\tau} \label{eq:ddwassume2} \\
		\rafjudge{\delegcontext}{\pc}{\voice{s}} \label{eq:ddwassume3} \\
		\rafjudge{\delegcontext}{\voice{\confid{r}}}{\voice{\confid{s}}}  \label{eq:ddwassume4}
      \end{align}
      Expanding the first premise using \ruleref{Where}, we have
      \begin{align}
      		\TValGpc{\delexp{a}{b}}{\aftype{a}{b}}  \label{eq:ddwassume5} \\
		\TVal{\delegcontext, \delexp{a}{b}; \G; \pc}{v}{\aftype{r}{s}}  \label{eq:ddwassume6}\\
		\rafjudge{\delegcontext}{\pcmost}{\voice{b}} \label{eq:ddwassume7} \\
		\rafjudge{\delegcontext}{\voice{\confid{a}}}{\voice{\confid{b}}} \label{eq:ddwassume8}
      \end{align}
      We want to show
      \begin{align}
         \TValGpc{\delexp{a}{b}}{\aftype{a}{b}}  \label{eq:ddwassume9} \\
         \TVal{\delegcontext, \delexp{a}{b}; \G; \pc }{\assume{v}{e}}{\tau}  \label{eq:ddwassume10}\\
          \rafjudge{\delegcontext}{\pcmost}{\voice{b}} \label{eq:ddwassume11} \\
          \rafjudge{\delegcontext}{\voice{\confid{a}}}{\voice{\confid{b}}} \label{eq:ddwassume12}
      \end{align}
Extending delegation context (Lemma~\ref{lemma:dpiextension}) in the premises \eqref{eq:ddwassume2}, \eqref{eq:ddwassume3}, \eqref{eq:ddwassume8} and combining with topmost premise we have the \eqref{eq:ddwassume10}. Remaining premises follow from \eqref{eq:ddwassume5}, \eqref{eq:ddwassume7} and \eqref{eq:ddwassume8}.
            
  \item[Case \ruleref{B-Step}:]
  		Given $e  = \bracket{e_1}{e_2}$ and $e' = \bracket{e'_1}{e'_2}$. Also $\TVal{\delegcontext;\varcontext; \pc}{\bracket{e_1}{e_2}}{\tau}$. 
		We have to prove
		\[
			\TVal{\delegcontext;\varcontext; \pc}{\bracket{{e'}_1}{{e'}_2}}{\tau} 
		\]
		Without loss of generality, let $i = 1$. Thus from the premises of \ruleref{B-Step}, we have $e_1 \stepsone {e'}_1$ and $ {e'}_2 = e_2$.
                Since sealed values can not take a step, inverting the well-typedness of bracket is only possible through \ruleref{Bracket} and not through \ruleref{Bracket-Values}.
		From \ruleref{Bracket}, we have
		\begin{align}
			\TVal{\delegcontext; \varcontext;\pc'}{e_1}{\tau} \label{eq:dbracket1}\\
			\TVal{\delegcontext; \varcontext; \pc'}{e_2}{\tau} \label{eq:dbracket2}\\
			\rflowjudge{\delegcontext}{( H^\pi \sqcup \pc^\pi)}{{{\pc'}^\pi}} \label{eq:dbracket3}\\
			\protjudge{\delegcontext}{H^{\pi}}{\tau^\pi} \label{eq:dbracket4}
		\end{align}
		Since \eqref{eq:dbracket1} holds, applying induction to the premise $e_1 \stepsone e'_1$, we have that
			$\TVal{\delegcontext;\varcontext;\pc'}{e'_1}{\tau}$. 
			Combining with remaining premises (\eqref{eq:dbracket2} to \eqref{eq:dbracket4}) we thus have that
			$\TVal{\delegcontext;\varcontext;\pc}{\bracket{e'_1}{e'_2}}{\tau}$. 

         \item[Case \ruleref{B-App}:] Given $e = \bracket{w_1}{w_2}~w'$ and $e'= \bracket{w_1~\outproj{w'}{1}}{w_2~\outproj{w}{2}}$. Also given that $\TValGpc{ \bracket{w_1}{w_2}~w'}{\tau_2}$ is well-typed, from \ruleref{App}, we have the following:
         \begin{align}
                 & \TValGpc{ \bracket{w_1}{w_2}}{\func{\tau_1}{\pc''}{\tau_2}}  \label{eq:bappsubred1} \\
                & \TValGpc{w'}{\tau_1}  \label{eq:bappsubred2}\\
                & \rflowjudge{\delegcontext}{\pc}{\pc''}  \label{eq:bappsubred3}
         \end{align}
         Thus from \ruleref{Bracket-Values}, we have $\rflowjudge{\delegcontext}{H^\pi}{(\func{\tau_1}{\pc''}{\tau_2})^\pi}$. That is, from the definition of type projection (Figure~\ref{fig:typeproj}), we have $\rflowjudge{\delegcontext}{H^\pi}{\func{\tau_1}{{\pc''}^\pi}{\tau_2^\pi}}$.  From \ruleref{P-Fun}, we thus have
         \begin{align}
                & \rflowjudge{\delegcontext}{H^\pi}{\tau_2^\pi} \label{eq:bappsubred4} \\ 
                & \rflowjudge{\delegcontext}{H^\pi}{{\pc''}^\pi}  \label{eq:bappsubred5}
         \end{align}
         We need to prove
         \[
          \TValGpc{\bracket{w_1~\outproj{w'}{1}}{w_2~\outproj{w'}{2}}}{\tau_2}
         \]
         That is we need the following premises of \ruleref{Bracket}.
         \begin{align}
                \TValP{\G;\pc'}{w_1~\outproj{w'}{1}}{\tau_2}  \label{eq:bappsubred6} \\
                \TValP{\G;\pc'}{w_2~\outproj{w'}{2}}{\tau_2}   \label{eq:bappsubred7}\\
                \rflowjudge{\delegcontext}{H^\pi \sqcup \pc^\pi}{{\pc'}^\pi}  \label{eq:bappsubred8} \\
                \rflowjudge{\delegcontext}{H^\pi }{\tau_2^\pi}  \label{eq:bappsubred9}
         \end{align}

         Let $\pc' = \pc''$.
         We have  \eqref{eq:bappsubred8} from   \eqref{eq:bappsubred3} and \eqref{eq:bappsubred5}.
         We already have  \eqref{eq:bappsubred9} from  \eqref{eq:bappsubred4}.
         To prove \eqref{eq:bappsubred6}, we need the following premises:
            \begin{align}
                \TValP{\G;\pc'}{w_1}{\func{\tau_1}{\pc''}{\tau_2}}  \label{eq:bappsubred10} \\
                \TValP{\G;\pc'}{\outproj{w'}{1}}{\tau_2}  \label{eq:bappsubred11} \\
                \rflowjudge{\delegcontext}{\pc'}{\pc''} \label{eq:bappsubred12}
            \end{align}
            The last premise \ \eqref{eq:bappsubred12} holds trivially (from reflexivity).
            Applying Lemma~\ref{lemma:wvaluespc} (values can be typed under any \pc) to  \eqref{eq:bappsubred1} we have \eqref{eq:bappsubred10}.
            Applying Lemma~\ref{lemma:wvaluespc} (values can be typed under any \pc) and  Lemma~\eqref{lemma:dproj} (projection preserves typing) to \eqref{eq:bappsubred2} we have \eqref{eq:bappsubred11}.
            Thus from \ruleref{App}, we have  \eqref{eq:bappsubred6}. Similarly,  \eqref{eq:bappsubred7} holds.
            Hence proved.

	\item[Case \ruleref{B-TApp}:] Similar to above (\ruleref{B-App}) case.
	\item[Case \ruleref{B-UnPair}:]
		Given $e =\proj{i}{\bracket{\pair{w_{11}}{w_{12}}}{\pair{w_{21}}{w_{22}}}}$ and $e'  = \bracket{w_{1i}}{w_{2i}}$.
		Also 
			\TValGpc{\proj{i}{\bracket{\pair{w_{11}}{w_{12}}}{\pair{w_{21}}{w_{22}}}}}{\tau_i} 
		We have to prove
		\[
			\TValGpc{\bracket{w_{1i}}{w_{2i}}}{\tau_i} 
		\]
		From \ruleref{UnPair}, we have:
		\begin{align}
			\TValGpc{\bracket{\pair{w_{11}}{w_{12}}}{\pair{w_{21}}{w_{22}}}}{\prodtype{\tau_1}{\tau_2}} \label{eq:dbunpair1}
		\end{align}
                Since they are already values, they can be inverted using \ruleref{Bracket-Values}. This approach is more conservative.
		\begin{align}
			\TValGpc{\pair{w_{11}}{w_{12}}}{\prodtype{\tau_1}{\tau_2}} \label{eq:dbunpair2}\\
			\TValGpc{\pair{w_{21}}{w_{22}}}{\prodtype{\tau_1}{\tau_2}} \label{eq:dbunpair3}\\
			\protjudge{\delegcontext}{H^{\pi}}{(\prodtype{\tau_1}{\tau_2})^\pi} \label{eq:dbunpair5}
                        \end{align}	
		From \eqref{eq:dbunpair2}, \eqref{eq:dbunpair3} and \ruleref{UnPair}, we have
			$\TValGpc{w_{1i}}{\tau_i}$ and $\TValGpc{w_{2i}}{\tau_i}$  for $i = \{1, 2\}$. 
		From \eqref{eq:dbunpair5},  type projection (Figure~\ref{fig:typeproj}) and \ruleref{P-Pair}, we have
		$\protjudge{\delegcontext}{H^\pi}{\tau_i^\pi}$.
		Combining with other premises, 
                $\TValGpc{\bracket{w_{1i}}{w_{2i}}}{\tau_i}$ follows from \ruleref{Bracket-Values}.
                
	\item[Case \ruleref{B-BindM}:] 
		Given $e = {\bind{x}{\bracket{\returng{\ell}{w_1}}{\returng{\ell}{w_2}}}{e}}$.
		We have that:
		$$e' = \bracket{\bind{x}{\returng{\ell}{w_1}}{\outproj{e}{1}}}{\bind{x}{\returng{\ell}{w_2}}{\outproj{e}{2}}}$$
		Also $\TValGpc{\bind{x}{\bracket{\returng{\ell}{w_1}}{\returng{\ell}{w_2}}}{e}}{\tau}$. 
		From \ruleref{BindM}, we have
		\begin{align}
			&\TValGpc{\bracket{\returng{\ell}{w_1}}{\returng{\ell}{w_2}}}{\says{\ell}{\tau'}} \label{eq:db1bindm1} \\
			&\TValP{\G, x:\tau'; \pc \sqcup \ell}{e}{\tau} \label{eq:db1bindm2} \\
                      	& \protjudge{\delegcontext}{\pc \sqcup \ell}{\tau} \label{eq:db1bindm4}
		\end{align}
		From \eqref{eq:db1bindm1} and \ruleref{Bracket-Values}, we have
		\begin{align}
			\TValGpc{\returng{\ell}{w_1}}{\says{\ell}{\tau'}} \label{eq:db1bracket1}\\
                        \TValGpc{\returng{\ell}{w_2}}{\says{\ell}{\tau'}} \label{eq:db1bracket2}\\
	                \rflowjudge{\delegcontext}{H^\pi}{\ell^\pi} \label{eq:db1bindm3}
		\end{align}
		We have to prove that
		\[
                \TValGpc{\bracket{\bind{x}{\returng{\ell}{w_1}}{\outproj{e}{1}}}{\bind{x}{\returng{\ell}{w_2}}{\outproj{e}{2}}}}{\tau}
	        \]
		For some $\widehat{\pc}$ we need the following premises to satisfy \ruleref{Bracket}:
		\begin{align}
			\TValP{\G;\widehat{\pc}}{\bind{x}{\returng{\ell}{w_1}}{\outproj{e}{1}}}{\tau} \label{:eq:db1bracket8}\\
			\TValP{\G; \widehat{\pc}}{\bind{x}{\returng{\ell}{w_2}}{\outproj{e}{2}}}{\tau} \label{:eq:db1bracket9}\\
			\rflowjudge{\delegcontext}{(H^\pi \sqcup \pc^\pi)}{{\widehat{\pc}^\pi}} \label{:eq:db1bracket10}\\
			\protjudge{\delegcontext}{H^\pi}{\tau^\pi} \label{:eq:db1bracket11}
		\end{align}
		A natual choice for $\widehat{pc}$ is $\pc \sqcup \ell$. 
                From Lemma~\ref{lemma:wvaluespc} (values  can be typed under any $\pc$), we have
                \[
        	\TValP{\G;\widehat{\pc}}{\returng{\ell}{w_i}}{\tau'}
	        \]
               Applying Lemma~\ref{lemma:dproj} (bracket projection preserves typing) to \eqref{eq:db1bindm2}, we have 
                \[
        	\TValP{\G, x:\tau';\widehat{\pc}}{\outproj{e}{i}}{\tau}
	        \]
                From \ruleref{BindM}, we therefore have \eqref{:eq:db1bracket8} and \eqref{:eq:db1bracket9}.
		Applying \ruleref{R-Trans} to \eqref{eq:db1bindm3} and \eqref{eq:db1bindm4}, we have \eqref{:eq:db1bracket11}.
                Thus we have all required premises.

  \item[Case \ruleref{B-Case}:] Does not occur. Not well-typed.
  \item[Case \ruleref{B-Assume}:] Does not occur. Not well-typed.
    
  \end{description}
\end{proof}

\subsection{Proof for Progress (Lemma~\ref{lemma:prog})}\label{app:prog}
\prog*

\begin{proof}
  Proof is by induction on the derivation of the typing judgment.
  \begin{description}
  \item[Case Var:] Does not occur as $e$ is closed, and $\G$ is empty.
  \item[Case Unit:] Already a value.
  \item[Case Del:] Already a value.
  \item[Case Lam:] Already a value.
  \item[Case TLam:] Already a value.
  \item[Case App:] Given  $\TVal{\G;\pc}{e~e'}{τ}$. From \ruleref{App}, we have the following
    \begin{align}
      \TValGpc{e}{(\func{τ₁}{\pc'}{τ₂})} \label{eq:prog1}\\
      \TValGpc{e'}{τ₁}  \label{eq:prog2}\\
      \rflowjudge{Π}{\pc}{\pc'}
    \end{align}
    Applying IH to \eqref{eq:prog1} and \eqref{eq:prog2}, we have that $e$ takes a step or is some value such that the type is $\func{\tau₁}{\pc'}{\tau₂}$.  Similarly, $e'$ either takes a step or is some value $w$. When both $e$ and $e'$ are values, we have the following cases: $e = \lamc{x}{\tau₁}{\pc'}{e^{'''}}$ or $e=\where{\lamc{x}{\tau₁}{\pc'}{e^{'''}}}{v}$.
    \begin{description}
    \item[Case 1:]. Let $e =\lamc{x}{\tau₁}{\pc'}{e^{'''}}$.  Then by \ruleref{E-App}, we have that $\lamc{x}{\tau₁}{\pc'}{e^{'''}}~w \stepsone e^{'''}[ x\mapsto w]$.
      \item[Case 2:] Let  $e=\where{\lamc{x}{\tau₁}{\pc'}{e^{'''}}}{v}$. Then by \ruleref{W-App}, we have that $\where{\lamc{x}{\tau₁}{\pc'}{e^{'''}}}{v}~w \stepsone \where{\lamc{x}{\tau₁}{\pc'}{e^{'''}}~w}{v} $.
      \end{description}
  \item[Case TApp:] Similar to \ruleref{App} case except that the argument is a type and thus does not take a step by itself. 
  \item[Case Pair:] Applying IH to the premises of \ruleref{Pair}, we have that either $e_i$ takes a step or is already a value for $i \in \{1, 2\}$.
  \item[Case UnPair]  Given  $\TVal{\G;\pc}{\proji{e}}{τ}$. Applying I.H. to the premises of \ruleref{UnPair}, we have that either $e$ takes a step or is already a value. If $e$ is a value then from the well-typedness of $e$, we have that $e = \pair{w₁}{w₂}$ or $e=\where{\pair{w₁}{w₂}}{v}$. In the former case, $\proj{i}{e}$ takes a step according to \ruleref{E-UnPair}. In the latter case, it takes a step according to \ruleref{W-UnPair}.
    
    \item[Case Inj:] Given  $\TVal{\G;\pc}{\inji{e}}{τ}$. Applying I.H. to the premises of \ruleref{InjI}, we have that either $e$ takes a step or is already a value. If $e$ is a value we have that $\inji{e}$ is also a value.
    \item[Case Case] Given  $\TVal{\G;\pc}{\casexp{e}{x}{e₁}{e₂}}{τ}$. Applying I.H. we have that $e$ either takes a step or is already some value. By well-typedness, we have that either $e = \inji{w}$ or $e = \where{\inji{w}}{v}$. In the former case, it takes a step following \ruleref{E-Case}. In the latter case, it takes a step according to \ruleref{W-Case}.
    \item[Case UnitM:] Applying I.H. to the premises of \ruleref{UnitM} we have that $e$ either takes a step or is already a some value $w$. In the latter case, it takes a step as per \ruleref{E-UnitM}.
    \item[Case Sealed:] Already a value.
    \item[Case BindM:] Applying I.H. to the premises of \ruleref{BindM}, $e$ takes a step or is already a value. In the latter case, from the well-typedness of $e$ we have that either $e = \returng{\ell}{w}$ or $e = \where{\returng{\ell}{w}}{v}$. In the former case, it takes a step according to \ruleref{E-BindM}. In the latter case, it takes a step as per \ruleref{W-BindM}.
    \item[Case Assume:] Applying I.H. to the premises of \ruleref{Assume}, we have that either $e$ takes a step or is some value. In the latter case, from the well-typedness, we have that either $e = \delexp{p}{q}$ or $e = \where{\delexp{p}{q}}{v}$. In the former case, it takes a step according to \ruleref{E-Assume}. In the latter case, it takes a step according to \ruleref{W-Assume}.
      \item[Case Where:] Applying I.H. to the premises of \ruleref{Where}, we have that either $e$ takes a step or is some value. In the latter case, the entire term is a value.
  \end{description}
\end{proof}

\subsection{Proofs for Erasure Conservation (Lemma~\ref{lemma:erasecons})}\label{app:erasecons}

To prove erasure conservation, we first prove that substituting program or type variable and evaluation context substitution preserves erasure.
\begin{lemma}[Substitution preserves erasure]
  \label{lemma:substerase}
  Let \TValP{\G, x:\tau'; \pc}{e}{\tau} and \TValP{\G, \pc}{w}{\tau'}. For all $pc', \ell$, if
  $\observef{\outproj{e}{1}}{Π}{ℓ} =\observef{\outproj{e}{2}}{Π}{ℓ}$ and
   $\observef{\outproj{w}{1}}{Π}{ℓ} =\observef{\outproj{w}{2}}{Π}{ℓ}$
  then
      $\observef{\outproj{\subst{e}{x}{w}}{1}}{Π}{ℓ} =\observef{\outproj{\subst{e}{x}{w}}{2}}{Π}{ℓ}$.
\end{lemma}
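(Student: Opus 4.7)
My plan is to proceed by structural induction on $e$. A preliminary observation I would establish is that projection commutes with substitution: $\outproj{\subst{e}{x}{w}}{i} = \subst{\outproj{e}{i}}{x}{\outproj{w}{i}}$, since both projection and substitution distribute homomorphically through each syntactic constructor (and through brackets, where projection simply selects one side). This reduces the goal to showing that the two substituted projections have equal observations.

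In the base cases the argument should be direct: when $e = x$, $\subst{e}{x}{w} = w$ and the conclusion is exactly the hypothesis on $w$; when $e$ is a different variable, a unit, or a delegation expression, substitution is the identity and the hypothesis on $e$ immediately yields the conclusion. For the inductive cases, I would case-split on whether the observation $\observef{\outproj{e}{i}}{\Pi}{\ell}$ collapses to $\circ$. When it does not, the observation function distributes homomorphically over the constructor, so the equality hypothesis gives componentwise equalities on the observations of the projected subterms, and the induction hypothesis finishes the case. When the observation collapses to $\circ$ because of an outer restricted label (on $\returng{\ell'}{\cdot}$, $\lamc{\cdot}{\tau}{\pc''}{\cdot}$, $\tlam{X}{\pc''}{\cdot}$, or $\octx{\ell'}{\cdot}$), that label is a syntactic feature preserved by substitution, so both substituted projections again erase to $\circ$.

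The main obstacle will be the compound constructors, such as pairs, applications, $\inji{\cdot}$, and assume, whose observation collapses to $\circ$ precisely when a subterm does; here the $\circ$ on side~$1$ may be caused by a different subterm than the $\circ$ on side~$2$, so componentwise equalities on subterms need not hold. To handle this I would first prove a companion auxiliary: if $\observef{e'}{\Pi}{\ell} = \circ$, then $\observef{\subst{e'}{x}{w'}}{\Pi}{\ell} = \circ$ for any $w'$. This auxiliary should be a short induction on $e'$, since no clause of the observation function can turn a $\circ$-erased subterm into a non-$\circ$ one under substitution, and the variable case never produces $\circ$. Combining this auxiliary with the homomorphic analysis handles the collapse-by-subterm cases, after which the remaining structural cases (bind, case, projection, where, and bracket) follow mechanically from the induction hypothesis applied to each subterm.
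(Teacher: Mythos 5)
Your proposal is correct and follows essentially the same route as the paper: structural induction on $e$, with the variable case discharged by the hypothesis on $w$ and the labelled constructors ($\returng{\ell'}{\cdot}$, abstractions, protected contexts) handled by noting that substitution preserves the label governing erasure. Your auxiliary claim---that $\observef{e'}{\Pi}{\ell} = \circ$ is preserved under substitution because no clause of the observation function can un-erase a subterm and the variable clause never yields $\circ$---is sound and in fact makes rigorous the compound-constructor cases (pairs, applications, injections, \texttt{assume}) that the paper dismisses as a ``straight forward application of inductive hypothesis,'' where the two projections may indeed collapse to $\circ$ for different structural reasons.
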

\begin{proof}
  Proof is by induction on the structure of the expression $e$.
  \begin{description}
  \item[Case Var:] If $e = x$ then $\outproj{\subst{e}{x}{w}}{i} = \outproj{w}{i}$. We are already given that   $\observef{\outproj{w}{1}}{Π}{ℓ} =\observef{\outproj{w}{2}}{Π}{ℓ}$. Unfolding the definition of \observe function, we thus have that 
  $\observef{\outproj{\subst{e}{x}{w}}{1}}{Π}{ℓ} =\observef{\outproj{\subst{e}{x}{w}}{2}}{Π}{ℓ}$
    \medskip
    If $e \ne x$ then $\outproj{\octx{pc'}{\subst{e}{x}{w}}}{i} = \outproj{\octx{pc'}{w}}{i}$. We are already given that $\observef{\outproj{e}{1}}{Π}{ℓ} =\observef{\outproj{e}{2}}{Π}{ℓ}$.

  \item[Case Unit:] Substitution does not affect $e$.
    \item[Case \delexp{p}{q}:] Substitution does not affect $e$.
    \item[Case \return{\ell'}{e'}:]
      We are given that
      $\observef{\outproj{\return{\ell'}{e'}}{1}}{Π}{ℓ} =\observef{\outproj{\return{\ell'}{e'}}{2}}{Π}{ℓ}$.
      We have to prove
      \[
      \observef{\outproj{\subst{\return{\ell'}{e'}}{x}{w}}{1}}{Π}{ℓ} = \observef{\outproj{\subst{\return{\ell'}{e'}}{x}{w}}{2}}{Π}{ℓ}
      \]
      This implies, from the projection definition, we have to prove
      \[
      \observef{\return{\ell'}{\outproj{\subst{e'}{x}{w}}{1}}}{Π}{ℓ} = \observef{\return{\ell'}{\outproj{\subst{e'}{x}{w}}{2}}}{Π}{ℓ}
      \]
      If $\ell'$ is higher than $\ell$, it is trivial as both sides are holes. If not, we have to prove that
      \[
      \returng{\ell'}{\observef{\outproj{\subst{e'}{x}{w}}{1}}{Π}{ℓ}} = \returng{\ell'}{\observef{\outproj{\subst{e'}{x}{w}}{2}}{Π}{ℓ}}
      \]
       Our argument proceeds as follows. Applying the projection to the given terms, we have
      \[
      \observef{\outproj{\return{\ell'}{e'}}{i}}{Π}{ℓ} =\observef{\return{\ell'}{\outproj{e'}{i}}}{Π}{ℓ}
      \]
      Unfolding the definition of \observe function, the interesting case is when $\ell'$ is not higher (in lattice) than $\ell$.  That is,
      \[
      \observef{\return{\ell'}{\outproj{e'}{i}}}{Π}{ℓ} = \returng{\ell'}{\observef{\outproj{e'}{i}}{Π}{ℓ}}
      \]
      
      Applying I.H to $e'$, we have that,
      \[
      \observef{\outproj{\subst{e'}{x}{w}}{1}}{Π}{ℓ} = \observef{\outproj{\subst{e'}{x}{w}}{2}}{Π}{ℓ}
      \]
   Thus,
      \[
      \returng{\ell'}{\observef{\outproj{\subst{e'}{x}{w}}{1}}{Π}{ℓ}} = \returng{\ell'}{\observef{\outproj{\subst{e'}{x}{w}}{2}}{Π}{ℓ}} \] Hence proved.
   \item[Case \returng{\ell'}{w'}:] Similar to above case.
   \item[Other:] The argument for other cases is a straight forward application of inductive hypothesis.
   \end{description}
\end{proof}

\begin{lemma}[Type substitution preserves erasure]
  \label{lemma:typesubsterase}
  Let \TValP{\G, X; \pc}{e'}{\tau'}. For all $pc', \ell$, if  $\observef{\outproj{e'}{1}}{Π}{ℓ} =\observef{\outproj{e'}{2}}{Π}{ℓ}$  then
      $\observef{\outproj{\subst{e'}{X}{\tau}}{1}}{Π}{ℓ} =\observef{\outproj{\subst{e'}{X}{\tau}}{2}}{Π}{ℓ}$.
\end{lemma}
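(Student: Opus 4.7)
The proof plan is to proceed by induction on the structure of $e'$, closely paralleling the proof of the preceding lemma (Lemma~\ref{lemma:substerase}). The crucial observation that makes this case even more straightforward than the term-variable case is that the observation function $\observe$ defined in Figure~\ref{fig:observe} never inspects the structure of types directly. It examines only principal-valued labels, namely the $\ell$ on sealed values $\returng{\ell}{w}$ and on protected contexts $\octx{\ell}{e}$, and the $\pc$ on lambda and type-lambda values. Each of these is a FLAM principal, and principals cannot contain type variables, so the flow judgments $\rflowjudge{\delegcontext}{\ell^{π}}{p^{π}}$ and $\rflowjudge{\delegcontext}{\pc^{π}}{p^{π}}$ that determine whether a subterm is erased to $\circ$ yield identical verdicts before and after the substitution $\subst{\cdot}{X}{\tau}$.

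With that observation in hand, each inductive case reduces to pushing the substitution through the syntactic structure, invoking the induction hypothesis on the subterms, and reassembling. For base cases ($x$, $()$, $\delexp{p}{q}$) the substitution is a no-op at the level of observation. For the monadic cases $\return{\ell}{e}$ and $\returng{\ell}{w}$, the label $\ell$ is untouched by type substitution, so the observation function takes the same branch on both sides, and the induction hypothesis applied to the body finishes the argument. The abstraction cases $\lamc{x}{\tau_1}{\pc}{e}$, $\tlam{X'}{\pc}{e}$ (with $X' \neq X$ by alpha-renaming), and $\octx{\ell}{e}$ are handled identically: the type annotation $\tau_1$ is rewritten by substitution, but $\pc$ and $\ell$ are unaffected, so the erasure decision is preserved and the induction hypothesis on the body applies. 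Compound cases (application, type application, pairing, case, bind, assume, where) simply distribute the substitution over subexpressions and apply the hypothesis to each.

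The main subtlety, which is really only cosmetic, is the type application case $e_0~\tau''$: here $\subst{(e_0~\tau'')}{X}{\tau} = (\subst{e_0}{X}{\tau})~(\subst{\tau''}{X}{\tau})$, and although $\tau''$ itself is rewritten, $\observe$ on a type application only recurses into the term position, so the induction hypothesis on $e_0$ suffices. There is no analogue of the variable case from Lemma~\ref{lemma:substerase} that required the auxiliary hypothesis on the substituted term, since a type is not a term and does not carry its own observation. The chief risk of error is simply an oversight in one of the many syntactic cases, and the typing hypothesis $\TValP{\G, X; \pc}{e'}{\tau'}$ plus the well-formedness condition on $\tau$ in \ruleref{TApp} ensure that the substitution is well defined at every point. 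Hence the proof is routine, with no essential obstacle beyond careful enumeration of cases mirroring Lemma~\ref{lemma:substerase}.
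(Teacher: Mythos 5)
Your proposal is correct and takes essentially the same approach as the paper, which proves this lemma by induction on the structure of the expression together with the observation that $\observe$ never inspects types. Your elaboration---that erasure decisions depend only on principal-valued labels, which cannot mention type variables---is exactly the substance behind the paper's remark.
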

\begin{proof}
Proof by inducting on the structure of expression. Moreover, \observe function does not erase types.
\end{proof}

\begin{lemma}[Evaluation Context substitution preserves erasure]
  \label{lemma:contextsubsterase}
Let   \TValGpc{E[e]}{\tau}  \label{eq:eceval2}. Then,
    $\observef{\outproj{E[e]}{1}}{Π}{ℓ} = \observef{\outproj{E[e]}{2}}{Π}{ℓ}$
   $\iff$
   $\observef{\outproj{e}{1}}{Π}{ℓ} = \observef{\outproj{e}{2}}{Π}{ℓ}$
\end{lemma}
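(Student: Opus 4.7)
The plan is to prove this lemma by straightforward structural induction on the evaluation context $E$. The base case $E = [\cdot]$ is immediate, since $E[e] = e$ and the biconditional collapses to a tautology. Throughout the induction I will use the fact that projections commute with context filling, i.e., $\outproj{E[e]}{i} = \outproj{E}{i}[\outproj{e}{i}]$ where $\outproj{E}{i}$ is the context obtained by projecting the non-hole subexpressions of $E$.

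For each inductive case, I proceed by case analysis on the outermost constructor of $E$. The strategy is to unfold the clause of $\observef{\cdot}{\Pi}{\ell}$ (Figure~\ref{fig:observe}) that matches the outer constructor, which exposes $\observef{\outproj{E'[e]}{i}}{\Pi}{\ell}$ as a subterm of the observation of $\outproj{E[e]}{i}$ (the remaining subterms being $i$-indexed projections of fixed subexpressions of $E$ that are equal on both sides). Applying the IH on $E'$ then establishes both directions of the biconditional. This treatment handles all the compositional cases: $E'\,e_0$, $w\,E'$, $E'\,\tau$, $\pair{E'}{e_0}$, $\pair{w}{E'}$, $\proji{E'}$, $\inji{E'}$, $\return{\ell'}{E'}$, $\bind{x}{E'}{e_0}$, and $\casexp{E'}{x}{e_1}{e_2}$.

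Two families of cases require extra care. For $E = \where{E'}{v}$, the observation function is defined by $\observef{\where{e}{v}}{\Pi}{\ell} = \observef{e}{\Pi}{\ell}$, so the \texttt{where}-wrapper is transparent to the observer and the biconditional reduces immediately to the IH on $E'$. For the contexts whose observation clause includes a collapse to $\circ$ (namely $\assume{E'}{e_0}$ and $\octx{\ell'}{E'}$ when $\ell'\not\sqsubseteq \ell$), I split on whether the surrounding structure erases: if the observation is compositional, the IH on $E'$ suffices directly; if the outer constructor forces both projections to $\circ$, the LHS of the biconditional becomes trivially true, and I appeal to the typing premise $\TValGpc{E[e]}{\tau}$ (via \ruleref{Ctx} for protection contexts, and via the corresponding typing rule for \texttt{assume}) to argue that the hole $e$ is typed at a label/context that also forces $\observef{\outproj{e}{i}}{\Pi}{\ell}$ to collapse consistently, so the RHS holds as well.

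The main technical obstacle will be precisely this erasure-aware argument for $\octx{\ell'}{E'}$: establishing both directions of the biconditional simultaneously requires showing that erasure triggered at the outer level is consistent with the observability of the hole under the well-typedness assumption, and not merely that one direction follows from the other. The proof closely parallels the term substitution case (Lemma~\ref{lemma:substerase}), but lifted to contexts, so I expect the routine cases to go through with essentially identical bookkeeping.
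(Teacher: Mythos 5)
Your proposal matches the paper's proof in approach: both proceed by structural induction on $E$, unfolding the matching clause of the observation function so that the observation of the hole appears as a component of the observation of $E[e]$, and closing each case with the induction hypothesis (the paper writes out only the $[\cdot]$ and $E~e'$ cases for the forward direction and dismisses the rest as "similar"). Your extra attention to the collapsing cases ($\assume{E'}{e_0}$, $\octx{\ell'}{E'}$, and the $\circ$-propagation in applications and pairs) is where the real work lies and goes beyond what the paper records; just note that the consistency you need there comes from the protection premises on brackets (\ruleref{Bracket}/\ruleref{Bracket-Values}) — differing projections force a bracketed subterm whose type is protected and hence erased — rather than from the $\pc$ at which the hole is typed, since the observation function erases based on the syntactic labels of sealed values, lambdas, and protection contexts, not on typing contexts.
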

\begin{proof}
  Consider the forward direction. We have to prove
  \[
   \observef{\outproj{E[e]}{1}}{Π}{ℓ} = \observef{\outproj{E[e]}{2}}{Π}{ℓ} \implies
   \observef{\outproj{e}{1}}{Π}{ℓ} = \observef{\outproj{e}{2}}{Π}{ℓ}
   \]
   Proof is by induction on the structure of the evaluation context.
   \begin{description}
     \item[Case $\hole$:] Given $E=\hole$ and so $E[e] = e$. Thus
  \[
   \observef{\outproj{E[e]}{1}}{Π}{ℓ} = \observef{\outproj{E[e]}{2}}{Π}{ℓ} \implies
   \observef{\outproj{e}{1}}{Π}{ℓ} = \observef{\outproj{e}{2}}{Π}{ℓ}
   \]

 \item[Case $E~e'$:] Given
    \[
   \observef{\outproj{E[e]~e'}{1}}{Π}{ℓ} = \observef{\outproj{E[e]~e'}{2}}{Π}{ℓ}
   \]
   This implies,
   \begin{align}
   \observef{\outproj{E[e]}{1}}{Π}{ℓ} = \observef{\outproj{E[e]}{2}}{Π}{ℓ} \label{eq:ctxterase1} \\
   \observef{\outproj{e'}{1}}{Π}{ℓ} = \observef{\outproj{e'}{2}}{Π}{ℓ}  \label{eq:ctxterase2}
   \end{align}

   From \ruleref{App}, we have $\TValGpc{E[e]}{\func{\tau_1}{\pc'}{\tau_2}}$ and $\TValGpc{e'}{\tau_1}$.
   Using  \eqref{eq:ctxterase1} applying I.H to the former, we have 
    \begin{equation}
   \observef{\outproj{e}{1}}{Π}{ℓ} = \observef{\outproj{e}{2}}{Π}{ℓ}  \label{eq:ctxterase3}
   \end{equation}
    Hence proved.
    \item[Other:] Argument similar to previous cases follows.
   \end{description}
\end{proof}

The root of a "where" term is the  term obtained after peeling off all the outer delegations.
\begin{definition}[Root term]\label{def:whereroot}
  \begin{align*}
  \whereroot{e} = {\begin{cases}
      \whereroot{e'} & \mbox{ if } e = \where{e'}{v} \\
      e          & \mbox{ o.w }
      \end{cases}
  }
  \end{align*}
\end{definition}

We have the property that erasing a "where" term is equal to erasing its root.
\begin{lemma}[\observe of "where" term]\label{lemma:whereobserve}
For all $e, \Pi$ and $p$, \observef{e}{\Pi}{p} = \observef{\whereroot{e}}{\Pi}{p}
\end{lemma}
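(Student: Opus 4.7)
The plan is to prove this by straightforward induction on the structure of $e$, appealing primarily to the observation function rule for "where" terms, which gives $\observef{\where{e'}{v}}{\Pi}{p} = \observef{e'}{\Pi}{p}$ (from Figure~\ref{fig:observe}), together with the defining recursion of $\whereroot{\cdot}$ from Definition~\ref{def:whereroot}.

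I would split on whether $e$ is a "where" term. In the base case, when $e$ is not of the form $\where{e'}{v}$, we have $\whereroot{e} = e$ by definition, so the equality $\observef{e}{\Pi}{p} = \observef{\whereroot{e}}{\Pi}{p}$ holds trivially by reflexivity. In the inductive case, when $e = \where{e'}{v}$, I would apply the "where"-clause of the observation function (Figure~\ref{fig:observe}) to obtain $\observef{\where{e'}{v}}{\Pi}{p} = \observef{e'}{\Pi}{p}$, and then apply the induction hypothesis on $e'$ to get $\observef{e'}{\Pi}{p} = \observef{\whereroot{e'}}{\Pi}{p}$. Since $\whereroot{\where{e'}{v}} = \whereroot{e'}$ by Definition~\ref{def:whereroot}, chaining these equalities yields the desired result.

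There is essentially no obstacle here: both $\observe$ and $\whereroot$ treat the outermost "where"-wrapper identically (by discarding it and recursing on the body), so the lemma reduces to this syntactic alignment. The only mild subtlety is formalizing the induction measure---since $\whereroot$ is defined by recursion only on "where" terms, it suffices to induct on the number of nested outer "where" constructors in $e$, or equivalently on the structure of $e$ with the base case subsuming every non-"where" constructor.
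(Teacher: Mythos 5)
Your proposal is correct and matches the paper's argument, which simply notes the result is immediate from the definitions of $\observe$ and $\whereroot{\cdot}$; your induction on the number of outer \texttt{where} wrappers is just a more explicit rendering of that same observation.
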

\begin{proof}[Proof Sketch]
  Immediate from the definitions of \observe (Figure~\ref{fig:observe}) and \whereroot{e} (Definition \ref{def:whereroot}).
\end{proof}

The following lemma is a sanity check on the correctness of the
erasure function (Figure~\ref{fig:observe}). Intuitively, two
well-typed values should be observationally equivalent with respect to
an observer that cannot observe secretsor untrusted values. In the
below lemma, $p^\pi$ is the observer and secret or untrusted data is
labelled at $H^\pi$. This is crucial to proving the erasure
conservation lemma.

\begin{lemma}[Correctness of \observe]\label{lemma:correctobserve}
  For all $\Pi', \G,  \pc$ and  $i \in \{1,2\}$, if \TVal{\Pi';\G;\pc}{w_i}{\tau}  such that \protjudge{\Pi}{H^\pi}{\tau^\pi} and  \notrflowjudge{\Pi}{H^\pi}{p^\pi}, then $\observef{w_1}{\Pi}{p}$ = $\observef{w_2}{\Pi}{p}$. 
\end{lemma}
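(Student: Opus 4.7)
The plan is to proceed by induction on the type $\tau$, using the typing of $w_1$ and $w_2$ to determine their possible syntactic forms, and exploiting the hypothesis $\protjudge{\Pi}{H^\pi}{\tau^\pi}$ to force either componentwise equality on subterms or wholesale erasure of the whole value. The where-wrapped case can be handled uniformly at the outset: by Lemma~\ref{lemma:whereobserve}, $\observef{w_i}{\Pi}{p} = \observef{\whereroot{w_i}}{\Pi}{p}$, and inverting \ruleref{Where} repeatedly shows the root has the same type $\tau$, possibly in an enlarged delegation context $\Pi'' \supseteq \Pi'$. Since the protection hypothesis is stated in the outer context $\Pi$, independently of the typing context, it suffices to prove the lemma for where-values whose root is not itself a where-term.

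Next I would case-split on the form of $\tau$. Since $w_i$ is a value of type $\tau$, its shape is determined by $\tau$ up to subterms, and the protection hypothesis rules out sum, acts-for, and type-variable types (none of which have a protection rule in Figure~\ref{fig:protect}). The easy cases are $\tau = \voidtype$, where both roots are $()$, and $\tau = \prodtype{\tau_1}{\tau_2}$, where \ruleref{P-Pair} yields $\protjudge{\Pi}{H^\pi}{\tau_j^\pi}$ for $j \in \{1,2\}$ and the induction hypothesis applied componentwise gives equality of observations. The interesting cases are those where the entire value should be erased: for $\tau = \func{\tau_1}{\pc'}{\tau_2}$, rule \ruleref{P-Fun} forces $\rflowjudge{\Pi}{H^\pi}{{\pc'}^\pi}$, and combining this with $\notrflowjudge{\Pi}{H^\pi}{p^\pi}$ via the contrapositive of \ruleref{R-Trans} yields $\notrflowjudge{\Pi}{{\pc'}^\pi}{p^\pi}$, which is exactly the side condition under which the erasure function of Figure~\ref{fig:observe} collapses a lambda value to $\circ$. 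The cases $\tau = \tfuncpc{X}{\pc'}{\tau'}$ (via \ruleref{P-TFun}) and $\tau = \says{\ell}{\tau'}$ (via \ruleref{P-Lbl}, which forces $\rflowjudge{\Pi}{H^\pi}{\ell^\pi}$ and thus $\notrflowjudge{\Pi}{\ell^\pi}{p^\pi}$) are analogous: both $w_1$ and $w_2$ erase to $\circ$ regardless of their internal structure, and the observations agree trivially.

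The main subtlety is the careful treatment of where-wrapped subcomponents. For instance, in the pair case the components $w_{ij}$ may themselves be where-wrapped, enlarging their typing contexts; however, because the induction hypothesis universally quantifies over the typing context $\Pi'$, this poses no obstacle. A second minor point is that the projection in Figure~\ref{fig:typeproj} applies only to the outermost label of $\says{\ell}{\tau'}$ and to the $\pc$ component of function and type-function types, so the inductive hypothesis on subterm types is applied to the unprojected type, which still satisfies the protection premise by the same protection derivation. Overall, no step seems to require deep ingenuity: the proof is a routine structural analysis driven by the protection rules together with the contrapositive of robust transitivity.
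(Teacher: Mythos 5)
Your proposal is correct and follows essentially the same argument as the paper's proof: a structural induction in which the protection rules (\ruleref{P-Fun}, \ruleref{P-TFun}, \ruleref{P-Lbl}) combined with the contrapositive of \ruleref{R-Trans} force erasure of labeled, function, and type-function values to $\circ$, while \ruleref{P-Pair} supports a componentwise inductive step, and unprotected types (sums, acts-for, type variables) are excluded. Your choice to strip \texttt{where}-wrappers up front via Lemma~\ref{lemma:whereobserve} rather than handling them as an inductive case is a minor organizational difference that, if anything, cleans up the paper's nested case analysis and makes explicit the ``same structure'' assumption the paper only footnotes.
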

\begin{proof}
  Proof is by induction on the structure of the value $\bracket{w_1}{w_2}$. We only show valid in which $w_1$ and $w_2$ has same type and structure.\footnote{Ideally, the requirement that $w_1$ and $w_2$ have the same structure does not follow from the fact that $w_1$ and $w_2$ have same type, and thus should be stated in the lemma. For simplicity, we are not stating it. However, the callers of this lemma do maintain that property.}
  Note that in all the below cases, the term $\bracket{w_1}{w_2}$ is well-typed by the typing rule \ruleref{Bracket-Values}.
  \begin{description}
  \item[Unit:] Given $\bracket{w_1}{w_2} = \bracket{()}{()}$. From the definition of \observe (Figure~\ref{fig:observe}), we have $\observef{()}{\Pi}{p}$ = $()$, and hence $\observef{\outproj{w}{1}}{\Pi}{p}$ = $\observef{\outproj{w}{2}}{\Pi}{p}$.
    \item[\delexp{p}{q}:] Similar to above
  \item[UnitM:] Given $\bracket{w_1}{w_2}= \bracket{\returng{\ell}{w'_1}}{\returng{\ell}{w'_2}}$ and   \TVal{\Pi';\G;\pc}{\returng{\ell}{w'_i}}{\tau} such that
    $\tau = \says{\ell}{\tau'}$.
    From the definition of type projection in Figure~\ref{fig:typeproj}, we have that \protjudge{\Pi}{H^\pi}{\tau^\pi} is $\protjudge{\Pi}{H^\pi}{\says{\ell^\pi}{\tau'}}$.
    From \ruleref{P-Lbl}, we thus have $\rflowjudge{\Pi}{H^\pi}{\ell^\pi}$.
     This combined with the given premise  \notrflowjudge{\Pi}{H^\pi}{p^\pi} gives  $\notrflowjudge{\Pi}{\ell^{\pi}}{p^\pi}$.
    And so, $\observef{w_i}{\Pi}{p}$ = $\circ$.
  \item[Lam:]  Given $\bracket{w_1}{w_2}= \bracket{\lamc{x}{\tau_1}{\pc'}{e_1}}{\lamc{x}{\tau_1}{\pc'}{e_2}}$ such that $\tau = \func{\tau_1}{\pc'}{\tau_2}$. Since we have \protjudge{\Pi}{H^\pi}{(\func{\tau_1}{\pc'}{\tau_2})^\pi}, from the definition of type projection in Figure~\ref{fig:typeproj}, it follows that \protjudge{\Pi}{H^\pi}{\func{\tau_1}{\pc'^\pi}{\tau_2^\pi}}.  From \ruleref{P-Fun} we have that \rflowjudge{\Pi}{H^\pi}{\pc'^\pi}. This combined with the given premise  \notrflowjudge{\Pi}{H^\pi}{p^\pi} gives    \notrflowjudge{\Pi}{\pc'^\pi}{p^\pi}.
    From the definition of \observe (Definition~\ref{fig:observe}), we have $\observef{ \lamc{x}{\tau_1}{\pc'}{e_i}}{\Pi}{p}$ = $\circ$ if $\notrflowjudge{\Pi}{\pc'^\pi}{p^\pi}$, and thus $\observef{w_1}{\Pi}{p}$ = $\observef{w_2}{\Pi}{p}$.
  \item[TLam:] Similar to the above case.
  \item[Inji:] This is an invalid case since we have $\bracket{w_1}{w_2} = \bracket{\inji{w_1}}{\inji{w_2}}$, however, $w$ cannot be well-typed.
  \item[Pair:] Given $\bracket{w_1}{w_2} = \bracket{\pair{w_{11}}{w_{12}}}{\pair{w_{21}}{w_{22}}}$. We have that $\TVal{\Pi';\G; \pc}{\pair{w_{i1}}{w_{i2}}}{\tau_1 \times \tau_2}$. We have to prove
    \[
    \observef{\pair{w_{11}}{w_{12}}}{\Pi}{p} = \observef{\pair{w_{21}}{w_{22}}}{\Pi}{p}
    \]
    It suffices to prove the following.
    \begin{align*}
      \observef{w_{11}}{\Pi}{p} = \observef{w_{21}}{\Pi}{p} \\
      \observef{w_{12}}{\Pi}{p} = \observef{w_{22}}{\Pi}{p}
    \end{align*}
    
    Consider the terms from the pair projection $w_{11}$ and $w_{21}$, and  $w_{12}$ and $w_{22}$.
    Since these are well-typed from the typing rule  \ruleref{Pair},
    applying I.H. using the corresponding typing judgments of the terms yields the required proof.
    
  \item[Where:] We have $\bracket{w_1}{w_2}= \bracket{\where{w'_1}{v_1}}{\where{w'_2}{v_2}}$.
    From the definition of \observe (Definition~\ref{fig:observe}), we have that $\observef{\where{w_i}{v_i}}{\Pi}{p}$ =  $\observef{w_i}{\Pi}{p}$. We have that the type of $w_i$ (i.e., $\tau$) is protected.
    \medskip

    Let the root value of $w_i$ be $v'_i$. Then from the definition of \whereroot{w_i} and Lemma~\ref{lemma:whereobserve}, we have that
    \[
    \observef{w_i}{\Pi}{p} = \observef{v'_i}{\Pi}{p}
    \]
    Also, the type of $v'_i$ is protected w.r.t $\Pi$ (given from \ruleref{Bracket-Values}). Note that applying \observe on $v'_i$ yields a hold. The argument uses induction on the structure of values.
    \medskip
    
    We now proceed with the case analysis on the structure of the values. That is,  $v'_i$ is either $()$, $\delexp{r}{t}$, $\lamc{x}{\tau_1}{\pc'}{e'}$, $\tlam{X}{\pc'}{e'}$,or $\pair{w'_{i1}}{w'_{i2}}$. Applying  \observe function to the all the values except the pair, yields a hole.

    For the pair, we invoke the inductive argument similar to the one proved above. We only sketch the high-level proof: the root values for  $w'_{i1}$ and $w'_{i2}$  are protected and are thus erased yielding a hole.

    \medskip
    Since     $    \observef{\where{w_i}{v_i}}{\Pi}{p} =    \observef{w_i}{\Pi}{p}$ $= \observef{v'_i}{\Pi}{p} = \circ$, we have the required proof that
    $\observef{\where{w_1}{v_1}}{\Pi}{p} = \observef{\where{w_i}{v_2}}{\Pi}{p}$.
   \end{description}
 Hence proved. 
\end{proof}

The following three lemmas are useful for establishing the properties of "where" propagation. 

\begin{lemma}[Correctness of Where propagation]\label{lemma:correctwhere}
  Let \TVal{\Pi;\G, x:\tau';\pc}{e}{\tau} and  \TVal{\Pi;\G;\pc}{w}{\tau'} such that \protjudge{\Pi}{H^\rightarrow  \wedge \top^{\leftarrow}}{\tau'} and $\delexp{r}{t} \not \in e$.
  If $e[x \mapsto w]$ $\stepsone$ $\where{w'}{\delexp{r}{t}}$, then \protjudge{\Pi}{H^\rightarrow \wedge \top^{\leftarrow}}{\tau}.
\end{lemma}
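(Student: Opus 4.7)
The plan is to proceed by case analysis on the single evaluation step that produces $\where{w'}{\delexp{r}{t}}$, using the hypothesis $\delexp{r}{t}\notin e$ to constrain where the top-level delegation originated. Since $\delexp{r}{t}$ is absent from $e$ but appears in the reduct, it must have been introduced via the substitution $[x\mapsto w]$, and the step must have lifted it from inside $w$ to the top of the term.

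First I would rule out most cases. The \ruleref{E-Assume} rule would require the assumed value to equal $\delexp{r}{t}$; because delegation expressions are closed, this forces $w = \delexp{r}{t}$ and hence $\tau' = \aftype{r}{t}$, which has no protection rule and contradicts $\protjudge{\Pi}{H^{\rightarrow}\wedge\top^{\leftarrow}}{\tau'}$. A nontrivial \ruleref{E-Eval} context $E$ would have to be of the form $\where{E'}{\delexp{r}{t}}$, but the outer $v$ in a where-context must be a value, and since principals are closed this again forces $w = \delexp{r}{t}$ and yields the same contradiction. Analogously, \ruleref{W-Case} and \ruleref{W-Assume} would force $\tau'$ to be a sum type or an acts-for type, neither of which admits a protection rule. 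Thus the step must be one of \ruleref{W-App}, \ruleref{W-TApp}, \ruleref{W-UnPair}, or \ruleref{W-BindM}, fired at the top of $e[x\mapsto w]$, with $e$ having the shape $C[x]$ for the corresponding head context $C$ (e.g.\ $x\;e_2$ for \ruleref{W-App}, $\proji{x}$ for \ruleref{W-UnPair}, $\bind{x'}{x}{e_b}$ for \ruleref{W-BindM}) and $w$ of the form $\where{w_0}{\delexp{r}{t}}$.

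In each remaining case, I would invert the typing derivation of $e$ to relate $\tau$ to $\tau'$, then apply the matching protection rule to transfer protection from $\tau'$ to $\tau$. For \ruleref{W-App}, \ruleref{App} forces $\tau' = \func{\tau_1}{\pc'}{\tau}$ and \ruleref{P-Fun} delivers $\protjudge*{H^{\rightarrow}\wedge\top^{\leftarrow}}{\tau}$ directly. \ruleref{W-TApp} and \ruleref{W-UnPair} are analogous via \ruleref{P-TFun} and \ruleref{P-Pair}, with the former requiring a small auxiliary observation that protection is preserved under type substitution (which holds because no protection rule inspects a type variable). \ruleref{W-BindM} is the most delicate: it gives $\tau' = \says{\ell}{\tau_0}$, so \ruleref{P-Lbl} yields $H^{\rightarrow}\wedge\top^{\leftarrow}\sqsubseteq\ell$, and the same \ruleref{BindM} derivation provides $\protjudge*{\pc\sqcup\ell}{\tau}$; combining these via monotonicity of $\protjudge*{\cdot}{\tau}$ in its principal argument yields the goal.

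The main obstacle I expect is exactly this monotonicity step for \ruleref{W-BindM}: although natural, it is not stated explicitly in the paper, so I would prove it as a short supporting lemma by induction on the derivation of $\protjudge*{\ell}{\tau}$, using transitivity of $\sqsubseteq$ in the \ruleref{P-Lbl} case and the lattice equivalences of Figure~\ref{fig:static} in the others. A minor bookkeeping step is verifying that the case enumeration is exhaustive given $\delexp{r}{t}\notin e$; this rules out the ``already decorated'' scenario where $e$ itself contains a where-term with delegation $\delexp{r}{t}$, and relies only on the observation that principals contain no free variables and hence cannot appear at a value position as the result of substituting $w$ for $x$.
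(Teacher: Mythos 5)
Your proposal is correct and follows essentially the same route as the paper's proof: a case analysis on which rule produced the top-level \texttt{where} (equivalently, on the head shape of $e$), using the absence of $\delexp{r}{t}$ from $e$ to force the delegation to originate from the substituted $w$, dismissing the \textsc{E-Assume}, \textsc{W-Case}, and \textsc{W-Assume} possibilities because acts-for and sum types admit no protection rule, and transferring protection from $\tau'$ to $\tau$ in the surviving cases via typing inversion plus the matching \textsc{P-*} rule. You are in fact somewhat more careful than the paper in three spots it glosses over: you make explicit the monotonicity of protection in its principal argument needed for the \textsc{W-BindM} case (which the paper simply asserts), you cover \textsc{W-TApp} and the accompanying preservation of protection under type substitution (which the paper folds into ``other: not valid cases''), and you dispatch the \texttt{assume} case by unprotectability of acts-for types rather than the paper's puzzling claim that $\tau=\tau'$ there.
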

\begin{proof}
  Proof is by induction on the structure of $e$.
  \begin{description}

  \item[Var:] Not a valid expression, since $x$ is substituted with a value, it cannot take a step. However, note that \TVal{\Pi;\G, x:\tau';\pc}{x}{\tau} implies $\tau = \tau'$ and thus we have \protjudge{\Pi}{H^\rightarrow}{\tau^\rightarrow}.
  \item[App:] Given $e[x \mapsto \where{w}{\delexp{r}{t}}]$ $\stepsone$ $\where{w'}{\delexp{r}{t}}$ such that $e = e_1~e_2$. From the relevant rules  \ruleref{E-Eval}, \ruleref{E-App} and \ruleref{W-App}, only \ruleref{W-App} is possible. It implies that $e_1 = x = \where{w''}{\delexp{r}{t}}$ such that $\tau' = \func{\tau_1}{\pc'}{\tau_2}$ and $\tau = \tau_2$. Since $\protjudge{\Pi}{H^\rightarrow}{\tau'}$, from \ruleref{P-fun}, we have $\protjudge{\Pi}{H^\rightarrow}{\tau_2}$. Hence proved.
    \item[Proj:] Given $e[x \mapsto \where{w}{\delexp{r}{t}}]$ $\stepsone$ $\where{w'}{\delexp{r}{t}}$ such that $e = \proji{e'}$. From the relevant rules \ruleref{E-Eval}, \ruleref{E-Proj} and \ruleref{W-Proj}, only \ruleref{W-Proj} is relevant.  Thus $e' = x$ and $ e = \proji{x}$ and $e'[x \mapsto {\where{w'}{\delexp{r}{t}}}]  = \proji{\where{w'}{\delexp{r}{t}}}$ such that $\tau' = \tau_1 \times \tau_2$ and $\tau = \tau_i$ for $i \in \{1, 2\}$. Since $\protjudge{\Pi}{H^\rightarrow}{\tau'}$, from rule \ruleref{P-Prod}, we have that Since $\protjudge{\Pi}{H^\rightarrow}{\tau_i}$. Hence proved.
    \item[Case:] Given $e[x \mapsto \where{w}{\delexp{r}{t}}]$ $\stepsone$ $\where{w'}{\delexp{r}{t}}$ such that such that      $$ e = \casexp{e'}{y}{e_1}{e_2}$$
      From the relevant rules \ruleref{E-Eval}, \ruleref{E-Case} and \ruleref{W-Case}, only \ruleref{W-Case} is relevant.
      Thus $$e = (\casexp{x}{y}{e_1}{e_2}) $$ and $$e [x \mapsto {\where{w'}{\delexp{r}{t}}}] = (\casexp{(\where{w''}{\delexp{r}{t}})}{y}{e_1}{e_2})$$
      Not a valid case since $x$ has to be of sum type, and sum type cannot be protected.
    \item[BindM:] Given $e[x \mapsto \where{w}{\delexp{r}{t}}]$ $\stepsone$ $\where{w'}{\delexp{r}{t}}$ such that such that $e = (\bind{y}{e_1}{e_2}) $.
      From the relevant rules \ruleref{E-Eval}, \ruleref{E-BindM} and \ruleref{W-BindM}, only \ruleref{W-BindM} is relevant.
   Thus   $e = (\bind{y}{x}{e_2}) $ and $e [x \mapsto {\where{w}{\delexp{r}{t}}}] = (\bind{y}{(\where{w}{\delexp{r}{t}})}{e_2})$. Without loss of generality, assume that $\tau' = \says{\ell}{\tau''}$. Then, from \ruleref{BindM}, we have that $\protjudge{\Pi}{\pc \sqcup \ell}{\tau}$. Since $\protjudge{\Pi}{H^\rightarrow \wedge \top^{\leftarrow}}{\tau'}$ it implies that  $\protjudge{\Pi}{H^\rightarrow \wedge \top^{\leftarrow}}{\tau}$. Hence proved.
    \item [Assume:] Given $e[x \mapsto \where{w}{\delexp{r}{t}}]$ $\stepsone$ $\where{w'}{\delexp{r}{t}}$ such that such that  $e = (\assume{e_1}{e_2}) $. From the relevant rules \ruleref{E-Eval}, \ruleref{E-Assume} and \ruleref{W-Assume}, only \ruleref{W-Assume} is relevant. Thus $e = (\assume{x}{e_1}) $ and $e [x \mapsto {\where{w'}{\delexp{r}{t}}}] = (\assume{(\where{w''}{\delexp{r}{t}})}{e_1})$. From \ruleref{Assume} we have $\tau = \tau'$. Hence proved.
  \item[Other:] Not valid cases.
  \end{description}

\end{proof}

\begin{lemma}[Protected Where terms]\label{lemma:protectwhere}
  Let \TVal{\Pi;\G, x:\tau';\pc}{e_0}{\tau} and \TVal{\Pi;\G;\pc}{\where{w}{v}}{\tau'} such that \protjudge{\Pi}{H^\rightarrow  \wedge \top^{\leftarrow}}{\tau'},  $v=\delexp{r}{t}$, $v \not \in e_0$ and $\notrafjudge{\Pi}{\pc}{\voice{t}}$.
  If $e_0[x \mapsto \where{w}{v}] \stepsto \where{w'}{v}$, then \protjudge*{H^{π}}{τ^{π}}.
\end{lemma}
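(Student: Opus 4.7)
The plan is to proceed by induction on the number of evaluation steps of $e_0[x \mapsto \where{w}{v}] \stepsto \where{w'}{v}$, using Lemma~\ref{lemma:correctwhere} (Correctness of Where Propagation) to handle the single-step case and Subject Reduction (Lemma~\ref{lemma:subjred}) together with Delegation Invariance (Lemma~\ref{lemma:di}) to lift the result to multiple steps. The central observation is that the outermost $v = \delexp{r}{t}$ on the right-hand side cannot have been introduced by any \texttt{assume} occurring within $e_0$: by hypothesis $\notrafjudge{\Pi}{\pc}{\voice{t}}$, and by monotonicity of $\pc$ under subterms (Lemma~\ref{lemma:pcmonotone}), no subexpression of $e_0$ can type-check an \texttt{assume} that adds the delegation $\delexp{r}{t}$ to $\Pi$ via \ruleref{Assume}. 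Hence the $v$ at the top of $\where{w'}{v}$ must have propagated from the substituted value $\where{w}{v}$ via the rules in Figure~\ref{fig:whererules}.

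The base case is a zero- or single-step reduction. If zero steps, then $e_0[x \mapsto \where{w}{v}] = \where{w'}{v}$ and since $v \not\in e_0$, the only way for this to occur is $e_0 = x$, in which case $\tau = \tau'$ and the conclusion is immediate from the hypothesis on $\tau'$. For a single step, the reduction must involve a where-propagation rule (\ruleref{W-App}, \ruleref{W-TApp}, \ruleref{W-UnPair}, \ruleref{W-Case}, \ruleref{W-BindM}, or \ruleref{W-Assume}) with the substituted $\where{w}{v}$ occurring at the head of a relevant elimination form; Lemma~\ref{lemma:correctwhere}'s structural case analysis then applies directly to yield $\protjudge*{H^{\pi}}{\tau^{\pi}}$, by inverting the appropriate typing rule (\ruleref{App}, \ruleref{TApp}, \ruleref{UnPair}, \ruleref{BindM}, or \ruleref{Assume}) and exploiting the corresponding protection rule in Figure~\ref{fig:protect} that transfers protection from $\tau'$ to $\tau$.

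For the inductive step, consider the first reduction $e_0[x \mapsto \where{w}{v}] \stepsone e_1$. There are two subcases. If this step does not interact with the outer $v$ from the substitution (e.g., a step occurring entirely within some redex not adjacent to the $v$), then by Subject Reduction we have $\TVal{\Pi;\G;\pc}{e_1}{\tau}$, and we can identify a decomposition $e_1 = e_0'[x \mapsto \where{w}{v}]$ (or, more generally, such that $v \not\in e_0'$ by Delegation Invariance), so that the inductive hypothesis applies. If instead the step propagates $v$ outward by one level via a rule in Figure~\ref{fig:whererules}, then by the reasoning of Lemma~\ref{lemma:correctwhere} the intermediate type continues to protect $H^{\to} \wedge \top^{\leftarrow}$, and we again apply the inductive hypothesis.

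The main obstacle will be making rigorous the claim that throughout the reduction sequence, $v$ never ceases to come from the original substitution. This requires a careful invariant: at every intermediate term, any occurrence of a \texttt{where} with justification $v$ must be traceable back to the substituted value. The argument leans on Delegation Invariance (Lemma~\ref{lemma:di}) applied at each \texttt{assume}-introduced extension of $\Pi$, combined with the hypothesis $\notrafjudge{\Pi}{\pc}{\voice{t}}$ which prevents any \texttt{assume} in $e_0$ (or any of its reducts, since $\pc$ only grows more restrictive by Lemma~\ref{lemma:pcmonotone}) from legitimately introducing $\delexp{r}{t}$. Once this invariant is in place, the inductive step reduces to repeated invocations of Lemma~\ref{lemma:correctwhere}, and the final type $\tau$ inherits protection at $H^{\pi}$ from $\tau'$.
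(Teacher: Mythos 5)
Your proposal follows essentially the same route as the paper's proof: both track the outward propagation of the substituted \texttt{where}-term through the reduction sequence, invoke Lemma~\ref{lemma:correctwhere} at each propagation step, and carry typing along with Subject Reduction; the paper merely organizes this as a without-loss-of-generality decomposition of the trace into phases rather than an induction on the step count. Your explicit justification (via $\pc$-monotonicity and Delegation Invariance) that the outer $v$ must originate from the substitution is precisely the part the paper leaves implicit in its ``without loss of generality,'' so there is no gap.
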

\begin{proof}
  Without loss of generality, let
  \begin{enumerate}
  \item  \label{pw:1} $e_0[x \mapsto \where{w}{v}]$ $\stepsto E[e]$ such that $e = T[\where{w}{v}]$
  \item \label{pw:2} $E[e] \stepsone$ $E[\where{e'}{v}]$
  \item  \label{pw:3} $ E[\where{e'}{v}] \stepsto E'[\where{w'}{v}]$
  \item  \label{pw:4} $E'[\where{w'}{v}] \stepsone \where{w'}{v}$
  \end{enumerate}

  Consider item~\ref{pw:1}. Since $e₀$ and $\where{w}{v}$ are well typed, by variable substitution (Lemma~\ref{lemma:vsubst}) so is $e_0[x \mapsto \where{w}{v}]$.
  Then, by subject reduction (Lemma~\ref{lemma:subjred}), we know that $E[e]$ is well-typed.
  Therefore, $e$ is also well-typed (for some context and type), thus \TVal{\Pi';\G';\pc'}{e}{\tau''}.
  \medskip

  Consider item~\ref{pw:2}.
  We are given   $\protjudge{\Pi}{H^{π}}{\tau'^{π}}$, and so
    \protjudge{\Pi'}{H^\rightarrow \wedge \top^{\leftarrow}}{\tau''}
  Also, applying subject reduction(Lemma~\ref{lemma:subjred}), we have   \TVal{\Pi';\G';\pc'}{\where{e'}{v}}{\tau''}.
  Applying Lemma ~\ref{lemma:correctwhere}, we have that \TVal{\Pi';\G';\pc'}{\where{e'}{v}}{\tau''} such that
  \protjudge{\Pi'}{H^\rightarrow \wedge \top^{\leftarrow}}{\tau''}. (Note that $\Pi \subseteq \Pi'$.)
  \medskip

  Consider item~\ref{pw:3}.
  From subject reduction (Lemma~\ref{lemma:subjred}),
  we have that
  \TVal{\Pi'';\G'';\pc''}{\where{w'}{v}}{\tau'''}.
  \medskip

  Consider item~\ref{pw:4}.
  From subject reduction (Lemma~\ref{lemma:subjred}), we have that
  \TVal{\Pi;\G;\pc}{\where{w'}{v}}{\tau}.
  Thus $\Pi'' = \Pi, \G'' = \G, \pc'' = \pc'$.
  Also, $E' = [\cdot]$ or $E' = \octx{\pc''}{[\cdot]}$. This is because, there are no more W-* steps.
  \medskip
  
  Consider item~\ref{pw:3} again.
  We have that
  \TVal{\Pi;\G;\pc}{\where{w'}{v}}{\tau'}.
  Repeatedly applying Lemma~\ref{lemma:correctwhere}, we have that 
    \protjudge{\Pi}{H^\rightarrow \wedge \top^{\leftarrow}}{\tau'}.
    Hence proved.
\end{proof}

However, what happens if the type of $\where{w}{v}$ (from $S$) is not protected? The following lemma states that even in such cases, propagation of $v$ ensures that the type of wrapped term is protected. The key insight is that $\where{w}{v}$ must be a subterm of a protected expression, and operating on protected expressions occurs only in protected contexts.

\begin{lemma}[Protected Where terms-2]\label{lemma:protectwhere2}
  Let \TVal{\Pi;\G, x:\tau';\pc}{e_0}{\tau} and \TVal{\Pi;\G;\pc}{U[\where{w}{v}]}{\tau'} such that the following hold.
  \begin{enumerate}
  \item $\tau'$ is protected. That is, $\protjudge{\Pi}{H^\rightarrow  \wedge \top^{\leftarrow}}{\tau'}$
  \item the type of $\where{w}{v}$ is not protected. That is, $ \TVal{\Pi';\G';\pc'}{\where{w}{v}}{\tau''}$ such that \protjudge{\Pi'}{H^\rightarrow  \wedge \top^{\leftarrow}}{\tau''}, for some $\Pi', \G'$ and $\pc'$.
  \item $v=\delexp{r}{t}$ for some $r$ and $t$, and $v \not \in e_0$
  \item $\notrafjudge{\Pi}{\pc}{\voice{t}}$
  \end{enumerate}
  If $e_0[x \mapsto U[\where{w}{v}]] \stepsto \where{w'}{v}$, then \protjudge{\Pi}{H^\rightarrow \wedge \top^{\leftarrow}}{\tau}.
\end{lemma}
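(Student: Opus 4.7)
I plan to prove this lemma by structural induction on the subterm context $U$. The base case $U = [\cdot]$ reduces directly to Lemma~\ref{lemma:protectwhere}, since $U[\where{w}{v}]$ is then exactly $\where{w}{v}$ at the protected type $\tau'$. For each inductive shape of $U$, the strategy is to trace the reduction from $e_0[x \mapsto U[\where{w}{v}]]$ to $\where{w'}{v}$, identify the step at which $v$ first escapes $U$'s outermost constructor, and then apply the induction hypothesis on $U'$ in the resulting context.

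For most constructor shapes of $U$, protection of $\tau'$ propagates structurally: by \ruleref{P-Pair}, \ruleref{P-Fun}, or \ruleref{P-TFun} the type of the immediate subterm containing $\where{w}{v}$ inherits protection, and the corresponding elimination (\ruleref{UnPair}, \ruleref{App}, or \ruleref{TApp}) preserves that protection across the escape step. Sum types have no protection rule in Figure~\ref{fig:protect}, so the case $U = \inji{U'}$ is vacuous---$\tau'$ could not have been protected. The essential case is $U = \returng{\ell}{U'}$: here $\tau' = \says{\ell}{\tau_0}$ and \ruleref{P-Lbl} only constrains $\ell$, so $\tau_0$ itself may be unprotected---precisely the scenario this lemma is designed to cover. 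For $v$ to ultimately surface, the reduction must eventually consume the returng via a monadic bind, introducing a protected context $\octx{\ell}{\cdot}$ through \ruleref{E-BindM*}. The \ruleref{BindM} typing rule forces the bind's output type $\tau$ to satisfy $\protjudge{\Pi}{\pc \sqcup \ell}{\tau}$, and combined with $\rflowjudge{\Pi}{H^{\rightarrow}}{\ell}$ from \ruleref{P-Lbl} this yields $\protjudge{\Pi}{H^{\rightarrow} \wedge \top^{\leftarrow}}{\tau}$; the induction hypothesis on $U'$ then carries the protection through the remaining reduction. For nested $U = \where{U'}{v'}$, propagate $v'$ outward via the W-$*$ rules of Figure~\ref{fig:whererules} and invoke the induction hypothesis on $U'$, extending $\Pi$ via Lemma~\ref{lemma:dpiextension} as needed.

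The main obstacle I anticipate is the returng case. One must carefully verify that the protection introduced by the bind step aligns with $H^{\rightarrow} \wedge \top^{\leftarrow}$ and not merely some weaker label, and that the protected context $\octx{\ell}{\cdot}$ does not obstruct the subsequent reduction or the application of the induction hypothesis. A subtler issue is formulating the right induction principle: pure induction on $U$ may require strengthening the statement to cover intermediate evaluation states, so an alternative is to mimic the four-phase trace decomposition of Lemma~\ref{lemma:protectwhere}'s proof, classifying the ``escape step'' by $U$'s outermost constructor. Throughout, subject reduction (Lemma~\ref{lemma:subjred}), monotonicity of $\pc$ and $\Pi$ (Lemmas~\ref{lemma:pcmonotone} and~\ref{lemma:pimonotone}), and the $\pc$-integrity constraint $\notrafjudge{\Pi}{\pc}{\voice{t}}$ inherited from the hypotheses must all be carried through the argument so that the terminating invocation of Lemma~\ref{lemma:protectwhere} (on a residual top-level $\where{\cdot}{v}$) remains available.
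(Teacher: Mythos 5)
Your proposal takes essentially the same route as the paper: the paper likewise observes that the boundary between the protected type $\tau'$ and the unprotected type of $\where{w}{v}$ can only occur at a $\returng{\ell}{U'}$ or a $\lamc{y}{\tau_1}{\ell}{U'}$ constructor, traces evaluation to the bind (resp.\ application) step that releases the where-term into a protected context $\octx{\ell}{\cdot}$, uses the \ruleref{BindM} protection premise (resp.\ \ruleref{App} together with \ruleref{P-Fun}) to re-establish protection of the resulting type, and closes by invoking Lemma~\ref{lemma:protectwhere} on the residual reduction. The paper forgoes the explicit induction on $U$ in favor of the direct two-case trace decomposition you identify as the fallback; the only slip in your sketch is the base case $U=[\cdot]$, which is vacuous here (hypotheses 1 and 2 would make $\tau'$ both protected and unprotected) rather than an instance of Lemma~\ref{lemma:protectwhere}.
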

\begin{proof}
  Since the type of $\where{w}{v}$ is unprotected but is a subterm of a term with protected type,
  the only possibility is $U[\where{w}{v}] = \return{\ell}{U'[\where{w}{v}]}$ such that $\ell$ protects $H^\rightarrow \wedge \top^{\leftarrow}$ or  $U[\where{w}{v}] = \lamc{y}{\tau_1}{\ell}{U'[\where{w}{v}]}$ such that $\func{\tau_1}{\ell}{\tau_2}$ protects  $H^\rightarrow \wedge \top^{\leftarrow}$.  Then, we have the following two possibilities.
  \begin{enumerate}
    \item
  \begin{enumerate}
  \item  \label{pw2:1} $e_0[x \mapsto U[\where{w}{v}]]$ $\stepsto E[\bind{y}{\return{\ell}{U'[\where{w}{v}]}}{e}]$
  \item  \label{pw2:2} $E[\bind{y}{\return{\ell}{U'[\where{w}{v}]}}{e'}]$ $\stepsone E[\octx{\ell}{e'[y \mapsto U'[\where{w}{v}]]}]$
  \item \label{pw2:3} $ E[\octx{\ell}{e'[y \mapsto U'[\where{w}{v}]]}] \stepsto$ $E[\octx{\ell}{\where{w''}{v}}]$
  \item \label{pw2:4}  $E[\octx{\ell}{\where{w''}{v}}] \stepsone E[\where{w''}{v}]$ 
  \item  \label{pw2:5} $ E[\where{w''}{v}] \stepsto E'[\where{w'}{v}]$
  \item  \label{pw2:6} $E'[\where{w'}{v}] \stepsone \where{w'}{v}$
  \end{enumerate}

  Consider item \ref{pw2:2}. From the typing rule \ruleref{BindM}, we have that the type of $e'[y \mapsto U'[\where{w}{v}]]$ protects $\ell$, and thus protects  $H^\rightarrow \wedge \top^{\leftarrow}$.
  
\medskip
Consider ~\ref{pw2:3}. Applying subject reduction (Lemma~\ref{lemma:subjred}), we thus have that the type of $\where{w''}{v}$ is protected.

\medskip
Consider steps from \ref{pw2:4} to \ref{pw2:6}. Invoking Lemma~\ref{lemma:protectwhere}, we have that the type of $\where{w'}{v}$ is protected.

\item
  \begin{enumerate}
  \item  \label{pw3:1} $e_0[x \mapsto U[\where{w}{v}]]$ $\stepsto E[\lamc{y}{\tau_1}{\ell}{U'[\where{w}{v}]}~w_2]$
  \item  \label{pw3:2} $E[\lamc{y}{\tau_1}{\ell}{U'[\where{w}{v}]}~w_2]$ $\stepsone E[\octx{\ell}{U'[\where{w}{v}][y \mapsto w_2]}]$
  \item \label{pw3:3} $ E[\octx{\ell}{U'[\where{w}{v}][y \mapsto w_2]}] \stepsto$ $E[\octx{\ell}{\where{w''}{v}}]$
  \item \label{pw3:4}  $E[\octx{\ell}{\where{w''}{v}}] \stepsone E[\where{w''}{v}]$ 
  \item  \label{pw3:5} $ E[\where{w''}{v}] \stepsto E'[\where{w'}{v}]$
  \item  \label{pw3:6} $E'[\where{w'}{v}] \stepsone \where{w'}{v}$
  \end{enumerate}

  Consider item \ref{pw3:1}.
  From the typing rule \ruleref{App}, we have that the type of $U'[\where{w}{v}]$ (in which $y$ is free) is $\tau_2$. From variable substitution (Lemma~\ref{lemma:vsubst}), the type of  $U'[\where{w}{v}][y \mapsto w_2]$ is $\tau_2$.

  \medskip
  Consider item \ref{pw3:2}.
  We are given that $\func{\tau_1}{\ell}{\tau_2}$ protects  $H^\rightarrow \wedge \top^{\leftarrow}$.
  From \ruleref{P-Fun},  this implies $\tau_2$ protects $H^\rightarrow \wedge \top^{\leftarrow}$.
  That is, the type of   $U'[\where{w}{v}][y \mapsto w_2]$ protects $H^\rightarrow \wedge \top^{\leftarrow}$.
  
 \medskip
 Consider ~\ref{pw3:3}. Applying subject reduction (Lemma~\ref{lemma:subjred}) to the evaluation steps inside the evaluation context $E$, we thus have that the type of $\where{w''}{v}$ protects $H^\rightarrow \wedge \top^{\leftarrow}$.

 \medskip
 Consider steps from \ref{pw3:4} to \ref{pw3:6}. Invoking Lemma~\ref{lemma:protectwhere}, we have that the type of $\where{w'}{v}$ protects $H^\rightarrow \wedge \top^{\leftarrow}$.

\end{enumerate}
  
Hence proved.
\end{proof}

\begin{lemma}[$\observe$ is preserved under monotonic $\Pi$] \label{lemma:observefew}
  If  $\observef{w_1}{Π,v}{p} =  \observef{w_2}{Π,v}{p}$ then  $\observef{w_1}{Π}{p} =  \observefc{w_2}{Π}{p}$.
\end{lemma}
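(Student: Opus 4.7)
The plan is to proceed by structural induction on $w_1$, leveraging a single key fact: the acts-for judgment is monotone in $\Pi$. Inspection of Figure~\ref{fig:robrules} shows $\Pi$ occurs only positively, so $\rflowjudge{\Pi}{\ell^{\pi}}{p^{\pi}}$ implies $\rflowjudge{\Pi,v}{\ell^{\pi}}{p^{\pi}}$; contrapositively, any label check that fails under $\Pi,v$ also fails under $\Pi$. Consequently $\observe$ erases at least as much under the smaller context $\Pi$ as under $\Pi,v$, and to prove the lemma I only need to show these additional erasures are applied consistently to $w_1$ and $w_2$.

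Inspection of Figure~\ref{fig:observe} shows that every clause falls into one of three patterns: (a) a conditional top-level erasure gated by a check of the form $\rflowjudge{\Pi}{\ell^{\pi}}{p^{\pi}}$, as in $\returng{\ell}{w}$, $\lamc{x}{\tau}{\pc}{e}$, $\tlam{X}{\pc}{e}$, and $\octx{\ell}{e}$; (b) conjunctive collapse to $\circ$ whenever any immediate subterm erases to $\circ$, as in pairs, applications, injections, and $\assume{\cdot}{\cdot}$; or (c) structural recursion with no label check, as in projections, $\bind{x}{\cdot}{\cdot}$, $\casexp{\cdot}{x}{\cdot}{\cdot}$, and the transparent $\where{\cdot}{v}$. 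At each inductive step the hypothesis $\observef{w_1}{\Pi,v}{p} = \observef{w_2}{\Pi,v}{p}$ forces one of two situations at the top: either both $w_i$ erase to $\circ$ under $\Pi,v$ --- in which case monotonicity immediately gives $\observef{w_i}{\Pi}{p} = \circ$ for both --- or neither does, in which case the two terms necessarily share the same outer constructor and the same label annotation, with corresponding subterms whose $\Pi,v$-observations are equal. The induction hypothesis then equates the $\Pi$-observations of those subterms, and since the shared top-level label either still passes the check under $\Pi$ (yielding structurally equal results) or fails for both (collapsing both to $\circ$), equality is preserved.

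I do not anticipate any genuine obstacle, only routine case enumeration. The most delicate bookkeeping concerns the conjunctive-collapse forms: if a subterm that was visible under $\Pi,v$ becomes $\circ$ under $\Pi$, the enclosing pair, application, or injection collapses to $\circ$ in both $w_1$ and $w_2$ simultaneously, and the deduplicated equality still holds. Bracketed values $\bracket{w}{w'}$ fit the conjunctive template, $\where{\cdot}{v}$ is transparent so the hypothesis passes directly to the wrapped term, and the base cases ($\void$, $\delexp{p}{q}$, variables) are immediate since $\observe$ is the identity on them.
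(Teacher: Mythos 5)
Your proposal is correct and follows essentially the same route as the paper: the paper's proof is a two-sentence sketch that inducts on the cases of $\observe$ and observes that, since the acts-for judgment is monotone in $\Pi$, anything erased to a hole under $\Pi,v$ is still erased under $\Pi$. Your write-up just fills in the case bookkeeping (conditional erasure, conjunctive collapse, transparent \texttt{where}) that the paper leaves implicit.
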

\begin{proof}[Proof Sketch]
  Induction on the cases of $\observe$.
  Attacker $p$ cannot enable more information flows under $\Pi$ than under $\Pi, v$. Thus, from the definition of $\observe$, in each case terms erased to holes under $\Pi, v$ still get erased to holes under $\Pi$.
\end{proof}

We now present the proof of erasure conservation for confidentiality (Lemma~\ref{lemma:erasecons}):  evaluation step preserves erasure with respect to the confidentiality projections of an observer. An analogous lemma holds for integrity.

\erasecons*
\begin{proof}
By induction on the evaluation of $e$, using the definition of $\observe$ (Figure~\ref{fig:observe}) and delegation invariance (Lemma~\ref{lemma:di}).
  \begin{description}
  \item[Case \ruleref{E-App*}:] We have $\observefc{\outproj{\lamc{x}{τ}{\pc'}{e'}~w}{1}}{Π}{ℓ^{→}} =\observefc{\outproj{\lamc{x}{τ}{\pc'}{e'}~w}{2}}{Π}{ℓ^{→}}$. We have to show that
      $\observefc{\outproj{\octx {pc'}{\subst{e'}{x}{w}}}{1}}{Π}{ℓ^{→}} =\observefc{\outproj{\octx {pc'}{\subst{e'}{x}{w}}}{2}}{Π}{ℓ^{→}}$.
    Depending on the observability of $\pc'$, we have two subcases.
    \begin{description}
      \item[Case  $\rafjudge{Π}{ℓ^{→}}{\pc'^{→}}$:]
      We have
      $\observefc{\outproj{e'}{1}}{Π}{ℓ^{→}} =\observefc{\outproj{e'}{2}}{Π}{ℓ^{→}}$ and
      $\observefc{\outproj{w}{1}}{Π}{ℓ^{→}} =\observefc{\outproj{w}{2}}{Π}{ℓ^{→}}$
      Since substitution under context preserves observation function (Lemma~\ref{lemma:substerase}),
      we thus have $\observefc{\outproj{\octx {pc'}{\subst{e'}{x}{w}}}{1}}{Π}{ℓ^{→}} =\observefc{\outproj{\octx {pc'}{\subst{e'}{x}{w}}}{2}}{Π}{ℓ^{→}}$.
      \item[Case  $\notrafjudge{Π}{ℓ^{→}}{\pc'^{→}}$:]
       We have to show that if $\observefc{\outproj{\lamc{x}{τ}{\pc'}{e'}}{i}}{Π}{ℓ^{→}} = \circ$
        then $\observefc{\outproj{\octx {pc'}{\subst{e'}{x}{w}}}{i}}{Π}{ℓ^{→}} = \circ$.  This holds by the definition of $\observe$ since 
        $\observefc{\outproj{\lamc{x}{τ}{\pc'}{e'}}{i}}{Π}{ℓ^{→}} = \circ$ implies that $\notrafjudge{Π}{ℓ^{→}}{\pc'^{→}}$.
        \end{description}
 \item[Case \ruleref{E-TApp*}:] Similar to the \ruleref{E-App*} case but using Lemma~\ref{lemma:typesubsterase}.
 \item[Case \ruleref{E-BindM*}:]
   Given  $\bind{x}{\returng{ℓ'}{w}}{e'} \stepsone \octx{\ell'}{\subst{e'}{x}{w}}$.
   Also given, 
   $$\observefc{\outproj{\bind{x}{\returng{ℓ'}{w}}{e'}}{1}}{Π}{ℓ^{→}} =\observefc{\outproj{\bind{x}{\returng{ℓ'}{w}}{e'}}{2}}{Π}{ℓ^{→}}$$
   \begin{description}
   \item[Case $\rafjudge{Π}{ℓ^{→}}{ℓ'^{→}}$:]
     We have
     $\observefc{\outproj{e'}{1}}{Π}{ℓ^{→}} =\observefc{\outproj{e'}{2}}{Π}{ℓ^{→}}$ and
     $\observefc{\outproj{w}{1}}{Π}{ℓ^{→}} =\observefc{\outproj{w}{2}}{Π}{ℓ^{→}}$.
                Since substitution under context preserves observation function (Lemma~\ref{lemma:substerase}),
   we thus have $\observefc{\outproj{\octx{\ell'}{\subst{e'}{x}{w}}}{1}}{Π}{ℓ^{→}} =\observefc{\outproj{\octx{\ell'}{\subst{e'}{x}{w}}}{2}}{Π}{ℓ^{→}}$.
   \item[Case $\notrafjudge{Π}{ℓ^{→}}{ℓ'^{→}}$:]
      It remains to show that if $\observefc{\outproj{\returng{ℓ'}{w}}{i}}{Π}{ℓ^{→}} = \circ$ then $\observefc{\outproj{\octx {\ell'}{\subst{e'}{x}{w}}}{i}}{Π}{ℓ^{→}} = \circ$. 
      This holds by the definition of $\observe$ since 
      $\observefc{\outproj{\returng{ℓ'}{w}}{i}}{Π}{ℓ^{→}} = \circ$ implies that $\notrafjudge{Π}{ℓ^{→}}{ℓ'^{→}}$.
   \end{description}
   
 \item[Case \ruleref{O-Ctx}:]  Given $\octx{ℓ'}{w} \stepsone w$.
   Also given $\observefc{\outproj{\octx{ℓ'}{w}}{1}}{Π}{ℓ^{→}} = \observefc{\outproj{\octx{ℓ'}{w}}{2}}{Π}{ℓ^{→}}$. We have to prove that
   \[
   \observefc{\outproj{w}{1}}{Π}{ℓ^{→}} = \observefc{\outproj{w}{2}}{Π}{ℓ^{→}}
   \]
   We have two cases: either $\observefc{\outproj{\octx{ℓ'}{w}}{i}}{Π}{ℓ^{→}} \ne \circ$ or $\observefc{\outproj{\octx{ℓ'}{w}}{i}}{Π}{ℓ^{→}} = \circ$.
      If $\observefc{\outproj{\octx{ℓ'}{w}}{i}}{Π}{ℓ^{→}} \ne \circ$, then the proof is straightforward (since the \observe function is invoked on same term $w$).
      If $\observefc{\outproj{\octx{ℓ'}{w}}{i}}{Π}{ℓ^{→}} = \circ$,
      we have to show that $$\observefc{\outproj{w}{1}}{Π}{ℓ^{→}} = \observefc{\outproj{w}{2}}{Π}{ℓ^{→}}$$

   By the typing rule \ruleref{Ctx}, we know \TValGpc{w}{τ}. 
   If $\outproj{\octx {ℓ'}{w}}{1}  ≠\outproj{\octx {ℓ'}{w}}{2}$, then $w$ contains a bracket subexpression. That is either  $w = \bracket{\where{w_1}{v_1}}{\where{w_2}{v_2}}$ or $ w = \where{\bracket{w_1}{w_2}}{v}$. Note that $v$ cannot be bracketed as per \ruleref{B-Assume}.
   Consider the former case where $w = \bracket{\where{w_1}{v_1}}{\where{w_2}{v_2}}$. From \ruleref{Bracket-Values}
   \begin{align}
           \protjudge{\Pi}{H^\rightarrow}{\tau^\rightarrow} \label{eq:ec0}\\
           \TVal{\Pi;\G;\pc}{\where{w_i}{v_i}}{\tau} \label{eq:ec1} \\
           \TVal{\Pi;\G;\pc}{v_i}{\delexp{r}{t}} 
   \end{align}

        From  \eqref{eq:ec0}, we have that $\tau$ is protected.
        Invoking Lemma~\ref{lemma:correctobserve} (erasure on protected expressions) on $w$, we thus have
       $\observefc{\outproj{w}{1}}{Π}{ℓ^{→}} =  \observefc{\outproj{w}{2}}{Π}{ℓ^{→}}$.
        \medskip
        
        In the latter case, we have that $ w = \where{\bracket{w_1}{w_2}}{v}$. From the well-typedness of \ruleref{Where}, we have
        \begin{align}
        \TVal{\Pi,v;\G;\pc}{\bracket{w_1}{w_2}}{\tau} \label{eq:ec00}
        \end{align}
        From \ruleref{Bracket-Values}
        \begin{align}
                \protjudge{\Pi,v}{H^\rightarrow}{\tau^\rightarrow} \label{eq:ec10}\\
                \TVal{\Pi,v;\G;\pc}{w_i}{\tau} \label{eq:ec11}
        \end{align}
        From the soundness of the bracketed semantics (Lemma~\ref{lemma:sound}), we have that     
        $\octx{ℓ'}{w} \stepsone w$ implies  $\outproj{\octx{ℓ'}{w}}{i} \stepsone \outproj{w}{i}$ such that $\outproj{w}{i} = \where{w_i}{v}$.
        Invoking  delegation compartmentalization (Lemma~\ref{lemma:dc}) on the step $\outproj{\octx{ℓ'}{w}}{i} \stepsone$ $\where{w_i}{v}$,   we have two cases:

        \begin{description}
        \item[Subcase  \rafjudge{\Pi}{\pc}{\voice{t}}:] Invoking delegation invariance (Lemma~\ref{lemma:di}), we have that \notrafjudge{\Pi}{\ell^\rightarrow}{H^\rightarrow} and
          \notrafjudge{\Pi}{\pc}{\voice{\ell^\rightarrow -  H^\rightarrow}} implies the following. \footnote{Contrapositive of Lemma~\ref{lemma:di}.}
          \[
          \notrafjudge{\Pi,v}{\ell^\rightarrow}{H^\rightarrow}
          \]
          Invoking Lemma~\ref{lemma:correctobserve} (Correctness of \observe) on \eqref{eq:ec11} and the above judgment, we have that    $\observefc{w_1}{\{Π,v\}}{ℓ^{→}} =  \observefc{w_2}{\{Π,v\}}{ℓ^{→}}$.
          Invoking Lemma~\ref{lemma:observefew} that preserves $\observe$  under a restricted $\Pi$, we  have the required  $\observefc{w_1}{Π}{ℓ^{→}} =  \observefc{w_2}{Π}{ℓ^{→}}$. Hence proved.
          
          \item[Subcase $\protjudge{\Pi}{\confid{H}}{\tau^\rightarrow}$:]
            We have that  $\protjudge{\Pi}{\confid{H}}{\tau^\rightarrow}$.
            Using $\protjudge{\Pi}{\confid{H}}{\tau^\rightarrow}$ and \notrflowjudge{\Pi}{H^\rightarrow}{\ell^\rightarrow},   invoke Lemma~\ref{lemma:correctobserve} (correctness of \observe)
            on \eqref{eq:ec11} (with $\Pi' = \Pi, v$), yields the required $\observefc{w_1}{Π}{ℓ^{→}} =  \observefc{w_2}{Π}{ℓ^{→}}$.
             Hence proved.
        \end{description}
  \item[Case \ruleref{Assume}:]
    Given $\assume{\delexp{p}{q}}{e} \stepsone \where{e}{\delexp{p}{q}}$. Also,
    given that $\observefc{\outproj{\assume{\delexp{p}{q}}{e}}{1}}{Π}{ℓ^{→}} = \observefc{\outproj{\assume{\delexp{p}{q}}{e}}{2}}{Π}{ℓ^{→}}$. We have to prove that
   \[
   \observefc{\outproj{\where{e}{\delexp{p}{q}}}{1}}{Π}{ℓ^{→}} =    \observefc{\outproj{\where{e}{\delexp{p}{q}}}{2}}{Π}{ℓ^{→}}
   \]
   From the given conditions and definition of \observe, we already have
   $\observefc{\outproj{e}{1}}{Π}{ℓ^{→}} =    \observefc{\outproj{e}{2}}{Π}{ℓ^{→}}$.
   Hence     $\observefc{\outproj{\where{e}{\delexp{p}{q}}}{1}}{Π}{ℓ^{→}} =    \observefc{\outproj{\where{e}{\delexp{p}{q}}}{2}}{Π}{ℓ^{→}}$.
   
    \item[Case \ruleref{W-App}:] %
    Given $(\where{w}{v}){e} \stepsone \where{w~e}{v}$. Also
    given that $\observefc{\outproj{(\where{w}{v}){e}}{1}}{Π}{ℓ^{→}} = \observefc{\outproj{(\where{w}{v}){e}}{2}}{Π}{ℓ^{→}}$. We have to prove that
   \[
   \observefc{\outproj{\where{w~e}{v}}{1}}{Π}{ℓ^{→}} =    \observefc{\outproj{\where{w~e}{v}}{2}}{Π}{ℓ^{→}}
   \]
  Note that by virtue of \ruleref{B-Assume}, $v$ cannot be a bracket value. Thus
   $\outproj{v}{1} =   \outproj{v}{2}$.
  From the given conditions, we have
  \begin{align}
   \observefc{\outproj{\where{w}{v}}{1}}{Π}{ℓ^{→}} &=    \observefc{\outproj{\where{w}{v}}{2}}{Π}{ℓ^{→}} \label{eq:erwapp1}\\
   \observefc{\outproj{e}{1}}{Π}{ℓ^{→}} &=    \observefc{\outproj{e}{2}}{Π}{ℓ^{→}} \label{eq:erwapp2}
  \end{align}
   Expanding the \observe function in \eqref{eq:erwapp1}, we have
   \begin{equation}
   \where{\observefc{\outproj{w}{1}}{Π}{ℓ^{→}}}{\outproj{v}{1}} = \where{\observefc{\outproj{w}{2}}{Π}{ℓ^{→}}}{\outproj{v}{2}} \label{eq:erwapp3}
   \end{equation}

   Combining \eqref{eq:erwapp2} and  \eqref{eq:erwapp3} we have 
   \[
   \where{\observefc{\outproj{w~e}{1}}{Π}{ℓ^{→}}}{\outproj{v}{1}} = \where{\observefc{\outproj{w~e}{2}}{Π}{ℓ^{→}}}{\outproj{v}{2}}
   \]
   From the \observe function this is equal to
   \[
   \observefc{\outproj{\where{w~e}{v}}{1}}{Π}{ℓ^{→}} =    \observefc{\outproj{\where{w~e}{v}}{2}}{Π}{ℓ^{→}}
   \]
   Hence proved.
 \item[Case \ruleref{W-TApp}:]  Similar to \ruleref{W-App} case. 
 \item[Case \ruleref{W-UnPair}:]  Similar to \ruleref{W-App} case. 
 \item[Case \ruleref{W-Case}:]  Similar to \ruleref{W-App} case. 
 \item[Case \ruleref{W-BindM}:]  Similar to \ruleref{W-App} case. 
 \item[Case \ruleref{W-Assume}:]  Similar to \ruleref{W-App} case. 
 \item[Case \ruleref{B-Step}:] Straightforward application of induction hypothesis.
 \item[Case \textsc{B-*}:] Observe that, with the exception of \ruleref{B-Step}, all \textsc{B-*} rules step from $e \stepsto e'$ 
                                         such that $\outproj{e}{i}=\outproj{e'}{i}$.  Therefore the lemma trivially holds. 
     
 \item[E-Unpair:] Trivial. 
 \item[E-Case:]  Trivial.
 \item[E-UnitM:] Trivial.
 \item[E-Eval:]
   Given the following:
   \begin{align}
   &  E[e] \stepsone E[e']  \label{eq:eceval1}\\
   &  \TValGpc{E[e]}{\tau}  \label{eq:eceval2} \\
   &  \observefc{\outproj{E[e]}{1}}{Π}{ℓ^{→}} = \observefc{\outproj{E[e]}{2}}{Π}{ℓ^{→}} \label{eq:eceval3}
   \end{align}
   We need to show
   \begin{equation}
     \observefc{\outproj{E[e']}{1}}{Π}{ℓ^{→}} = \observefc{\outproj{E[e']}{2}}{Π}{ℓ^{→}} \label{eq:eceval4}
   \end{equation}
   
   Applying   Lemma~\ref{lemma:contextsubsterase}  to \eqref{eq:eceval3}, we have 
    \begin{equation}
        \observef{\outproj{e}{1}}{Π}{ℓ^{→}} = \observef{\outproj{e}{2}}{Π}{ℓ^{→}} \label{eq:eceval5}
    \end{equation}
    Since we have \eqref{eq:eceval3}, we can apply induction hypothesis to the premise of \ruleref{E-Eval} yielding
    \begin{equation}
        \observef{\outproj{e'}{1}}{Π}{ℓ^{→}} = \observef{\outproj{e'}{2}}{Π}{ℓ^{→}} \label{eq:eceval6}
    \end{equation}
   Applying   Lemma~\eqref{lemma:contextsubsterase}  to \eqref{eq:eceval6}, we thus have \eqref{eq:eceval4}
  \end{description}
\end{proof}

\section{Proofs for Robust Declassification (Lemma~\ref{thm:robdecl})}
\label{app:robdecl}

\begin{lemma}[Substitution preserves $\observe$] \label{lemma:valueequiv}
 Let \TValP{\G, x:\tau', \G';\pc}{e}{\tau} and \TValP{\G;\pc}{v_i}{\tau'} for $i \in \{1, 2 \}$.
 If        $\subst{e}{x}{v_1} ≈^{Π}_{p^{\pi}}  \subst{e}{x}{v_2}$ then
   either $e$ does not have free occurrence of $x$ or $v_1 ≈^{Π}_{p^{\pi}} v_2$.
\end{lemma}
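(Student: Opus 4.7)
The plan is to proceed by structural induction on $e$, splitting on whether $x$ occurs free in $e$. If $x$ is not free in $e$, then $\subst{e}{x}{v_1} = e = \subst{e}{x}{v_2}$, and the first disjunct of the conclusion holds trivially. So for the remainder I focus on the case where $x$ occurs free in $e$, and aim to derive $v_1 \approx^{\Pi}_{p^{\pi}} v_2$ from the hypothesis $\subst{e}{x}{v_1} \approx^{\Pi}_{p^{\pi}} \subst{e}{x}{v_2}$.

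The base case $e = x$ is immediate: the substitution yields $v_i$ on each side, so the hypothesis is literally the conclusion. The remaining inductive cases are handled by unfolding the definition of $\observe$ from Figure~\ref{fig:observe} on each constructor and applying the induction hypothesis to the subterm(s) in which $x$ occurs free. For constructors whose $\observe$-image is always of the same shape as the input (e.g., pairs, injections, applications, type applications, monadic bind, "return", projections, and "where"), erasure propagates homomorphically into the subterms, so the hypothesis yields equality on the $\observe$-images of the relevant substituted subterms, and the IH applies directly.

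The main obstacle lies in the constructors whose $\observe$-image can collapse to $\circ$ based on a label check, namely $\returng{\ell}{\cdot}$, $\lamc{y}{\tau}{\pc'}{\cdot}$, $\tlam{X}{\pc'}{\cdot}$, and $\octx{\ell}{\cdot}$. Here one must observe that the label check depends only on $\ell$ or $\pc'$, which are syntactically fixed and unaffected by substitution of $x$. Consequently the check has the same outcome on both sides: if both sides are erased to $\circ$, then we may unfold the equation ``$\circ = \circ$'' by applying the IH to the enclosed subterm, which still yields equality of $\observe$-images (since both collapse to $\circ$ through the same reason in the surrounding context), reducing to an IH application on the immediate subterm; if neither side is erased, the constructor-preserving clause applies and the IH on the subterm discharges the obligation.

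The remaining case analysis for "case" expressions requires some care, since the free occurrence of $x$ may be in the scrutinee or in either branch, but each occurrence yields the corresponding subterm equality via $\observe$'s homomorphism. In every subcase the IH on the subterm containing the free $x$ produces $v_1 \approx^{\Pi}_{p^{\pi}} v_2$, completing the induction.
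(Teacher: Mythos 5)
Your overall strategy---structural induction on $e$, with the base case $e=x$ immediate and the homomorphic cases discharged by the induction hypothesis---is the same route the paper takes (its entire proof is the phrase ``induction on the structure of $e$''). The genuine problem sits exactly in the cases you flag as ``the main obstacle,'' and your treatment of them does not go through. Consider $e = \lamc{y}{\tau_1}{\pc'}{e''}$ with $x$ free in $e''$ and $\pc'^{\pi}$ not flowing to $p^{\pi}$. Then $\observef{\subst{e}{x}{v_1}}{\Pi}{p} = \circ = \observef{\subst{e}{x}{v_2}}{\Pi}{p}$ holds unconditionally, so the hypothesis of the lemma degenerates to $\circ = \circ$ and carries no information about the subterm. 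Your sentence ``we may unfold the equation $\circ=\circ$ by applying the IH to the enclosed subterm, which still yields equality of $\observe$-images'' is circular: the induction hypothesis needs as \emph{input} that $\observef{\subst{e''}{x}{v_1}}{\Pi}{p} = \observef{\subst{e''}{x}{v_2}}{\Pi}{p}$, and nothing in the outer equation $\circ=\circ$ supplies that. The same failure occurs for $\tlam{X}{\pc'}{e''}$, for $\octx{\ell}{e''}$ and $\returng{\ell}{w}$ with $\ell$ unobservable, and also for the collapsing clauses of pairs, injections, and applications: e.g.\ $e=\pair{x}{\returng{\ell}{w}}$ with $\ell$ unobservable erases to $\circ$ on both sides regardless of $v_1,v_2$.

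In these positions the conclusion $v_1 \approx^{\Pi}_{p^{\pi}} v_2$ genuinely does not follow from the stated hypotheses: take $\tau'$ to be a sum type and $v_1,v_2$ the left and right injections of $()$; they are observationally distinct, yet both substituted terms erase to $\circ$, so the lemma's hypothesis is satisfied while both disjuncts of its conclusion fail. So either the statement needs a side condition excluding free occurrences of $x$ under unobservable binders and seals (or a third disjunct such as ``$\observef{\subst{e}{x}{v_1}}{\Pi}{p}=\circ$''), or the induction must be restricted to the occurrences actually exercised by the lemma's callers in the robust-declassification proof. As written, your argument asserts an IH application that is not available; ``both collapse to $\circ$ for the same reason'' establishes the hypothesis of the lemma in that case, not its conclusion.
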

\begin{proof}
 Induction on the structure of $e$.
\end{proof}

\begin{lemma}[Voice of Integrity Principal] \label{lemma:voiceOfInteg} 
  \reqjudge{\Pi}{∇(ℓ^{←})}{ℓ^{←}}
\end{lemma}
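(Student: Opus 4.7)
The plan is to compute $\nabla(\ell^\leftarrow)$ directly from the definition via normal form, then appeal to the static lattice rules of Figure~\ref{fig:static} to discharge the resulting equivalence. First I would observe that $\ell^\leftarrow$ is equivalent to the normal-form expression $\bot^\rightarrow \wedge \ell^\leftarrow$; this is exactly the remark made in Section~\ref{sec:flam} that a projected principal is equivalent to its conjunction with the bottom element of the missing projection. One direction ($\bot^\rightarrow \wedge \ell^\leftarrow \succcurlyeq \ell^\leftarrow$) is immediate from \ruleref{ConjL}. For the reverse direction, \ruleref{ConjR} reduces the goal to showing $\ell^\leftarrow \succcurlyeq \ell^\leftarrow$ (by \ruleref{Refl}) and $\ell^\leftarrow \succcurlyeq \bot^\rightarrow$; the latter follows from $\ell^\leftarrow \succcurlyeq \bot$ (by \ruleref{Bot}) together with $\bot \succcurlyeq \bot^\rightarrow$ (by \ruleref{ProjR}) and \ruleref{Trans}.

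With $\ell^\leftarrow \equiv \bot^\rightarrow \wedge \ell^\leftarrow$ in hand, I apply the definition of voice from Section~\ref{sec:flam}: $\nabla(\bot^\rightarrow \wedge \ell^\leftarrow) = \bot^\leftarrow \wedge \ell^\leftarrow$. It remains to show $\bot^\leftarrow \wedge \ell^\leftarrow \equiv \ell^\leftarrow$. The direction $\bot^\leftarrow \wedge \ell^\leftarrow \succcurlyeq \ell^\leftarrow$ follows directly from \ruleref{ConjL}. For $\ell^\leftarrow \succcurlyeq \bot^\leftarrow \wedge \ell^\leftarrow$, \ruleref{ConjR} reduces the goal to $\ell^\leftarrow \succcurlyeq \ell^\leftarrow$ (by \ruleref{Refl}) and $\ell^\leftarrow \succcurlyeq \bot^\leftarrow$, where the latter follows from $\ell \succcurlyeq \bot$ (by \ruleref{Bot}) and \ruleref{Proj}. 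Chaining these equivalences yields $\nabla(\ell^\leftarrow) \equiv \ell^\leftarrow$.

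There is no substantive obstacle here: the entire argument is a direct unfolding of the normal-form definition of $\nabla$ plus a handful of straightforward manipulations using \ruleref{Bot}, \ruleref{Proj}, \ruleref{ProjR}, \ruleref{ConjL}, \ruleref{ConjR}, and \ruleref{Trans}. The only minor subtlety worth noting is that $\nabla$ is defined on normal-form representatives, so implicitly we rely on the monotonicity of $\nabla$ with respect to $\succcurlyeq$ (proved in Coq for FLAM~\cite{flam}) to justify computing voice via the equivalent normal-form expression $\bot^\rightarrow \wedge \ell^\leftarrow$ rather than $\ell^\leftarrow$ itself.
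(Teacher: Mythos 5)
Your proof is correct, but it is worth noting that the paper itself gives no pen-and-paper argument for this lemma at all: its entire ``proof'' is an appeal to the Coq formalization of the FLAM principal algebra~\cite{SAFproof}. What you have done is reconstruct an explicit derivation from the static rules of Figure~\ref{fig:static}, and the reconstruction is sound. The normalization step $\ell^{\leftarrow} \equiv \bot^{\rightarrow} \wedge \ell^{\leftarrow}$ goes through exactly as you say (\ruleref{ConjL} one way; \ruleref{ConjR} with \ruleref{Refl}, \ruleref{Bot}, \ruleref{ProjR}, and \ruleref{Trans} the other), the definition of $\nabla$ on the normal form $q^{\rightarrow} \wedge r^{\leftarrow}$ with $q = \bot$, $r = \ell$ yields $\bot^{\leftarrow} \wedge \ell^{\leftarrow}$, and collapsing the trivial conjunct via \ruleref{Bot} and \ruleref{Proj} finishes the job. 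Your closing caveat is also the right one to flag: since $\nabla$ is defined only on normal-form representatives, the argument implicitly needs $\nabla$ to be well defined on $\succcurlyeq$-equivalence classes, which is exactly the property the paper only establishes in Coq. So your proof buys an inspectable, self-contained derivation where the paper offers only a citation, at the cost of leaning on that same machine-checked fact about $\nabla$ for the one step that cannot be discharged purely by the rules of Figure~\ref{fig:static}.
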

\begin{proof}
  Proven in Coq~\cite{SAFproof}.
\end{proof}

\begin{lemma}[Duality of Voice and View] \label{lemma:voiceviewdual} 
  For all $\ell$, \view{\voice{\confid{\ell}}} $\equiv \confid{\ell}$.
\end{lemma}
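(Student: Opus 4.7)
The plan is to prove this by direct computation using the syntactic definitions of $\voice{\cdot}$ and $\view{\cdot}$ together with the normal-form identity $p^{\pi} \equiv p^{\pi} \wedge \bot^{\pi'}$ (for $\pi \neq \pi'$) noted just above Figure~\ref{fig:static}. Since both operators are defined by cases on the normal form $\confid{q} \tjoin \integ{r}$, the lemma reduces to a short chain of equivalences in the principal lattice, justified by the static rules in Figure~\ref{fig:static}.

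First I would rewrite $\confid{\ell}$ in normal form as $\confid{\ell} \wedge \integ{\bot}$, so that the confidentiality component is $\ell$ and the integrity component is $\bot$. Applying the definition of voice then yields
\[
  \voice{\confid{\ell}} \;=\; \voice{\confid{\ell} \wedge \integ{\bot}} \;=\; \integ{\ell} \wedge \integ{\bot}.
\]
Next I would use \ruleref{Bot} (together with \ruleref{Proj}) to conclude that $\integ{\bot} \preceq \integ{\ell}$, so $\integ{\ell} \wedge \integ{\bot} \equiv \integ{\ell}$ by \ruleref{ConjL}/\ruleref{ConjR}. Thus $\voice{\confid{\ell}} \equiv \integ{\ell}$.

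Then I would put $\integ{\ell}$ in normal form as $\confid{\bot} \wedge \integ{\ell}$ (now the confidentiality component is $\bot$ and the integrity component is $\ell$) and apply the definition of view:
\[
  \view{\integ{\ell}} \;=\; \view{\confid{\bot} \wedge \integ{\ell}} \;=\; \confid{\bot} \wedge \confid{\ell}.
\]
By the same \ruleref{Bot}/\ruleref{Proj} argument, $\confid{\bot} \wedge \confid{\ell} \equiv \confid{\ell}$. Chaining the two equivalences gives $\view{\voice{\confid{\ell}}} \equiv \confid{\ell}$, which is the conclusion. There is no real obstacle here; the only subtlety is being careful to pass through the normal-form representation before each application of $\voice{\cdot}$ and $\view{\cdot}$, since the definitions are given for normal-form principals and would otherwise be ill-specified on the bare projected forms.
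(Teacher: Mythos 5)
Your proof is correct and follows essentially the same route as the paper's: compute $\voice{\confid{\ell}} \equiv \integ{\ell}$ and then $\view{\integ{\ell}} \equiv \confid{\ell}$. The paper's version simply states these two equalities directly, while you additionally make explicit the normal-form rewriting ($\confid{\ell} \equiv \confid{\ell} \wedge \integ{\bot}$, etc.) and the absorption of the $\bot$ components, which is a harmless elaboration consistent with the paper's stated conventions.
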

\begin{proof}
  We have the following:
  \begin{align}
    \voice{\confid{\ell}} = \integ{\ell} \\
    \view{\voice{\confid{\ell}}} = \view{\integ{\ell}} = \confid{\ell} \label{eq:voiceviewdual1}
  \end{align}
       From \eqref{eq:voiceviewdual1}, we thus have \view{\voice{\confid{\ell}}} $\equiv \confid{\ell}$.
  \end{proof}

\robdecl*
\begin{proof}
  By induction on the evaluation of $e$. A note on the notation: whenever the context is clear, we elide the type annotation on the hole. 
  \begin{description}
  \item[Case \ruleref{E-App*}]
    Since $\vec{a_1}$ and $\vec{a_2}$ are $\Pi_{H}$-fair attacks, we have that the attacks are  well-typed.
    That is, for each $a_{rj} \in \vec{a_j}$,
       \begin{equation}
         \TVal{\Pi_H; Γ_{rj};\pc}{a_{rj}}{τ_{rj}} \text{ for } j \in \{1, 2 \}  \label{eq:rob22}
       \end{equation}
       Without loss of generality, assume that  $ e[\vec{a}_j] = (e'[a_{kj}] e''[a_{lj}])$.
       Given $e[\vec{a}_j][x \mapsto v_i] \stepsone  e'_{ij}$. That is, $(e'[a_{kj}] e''[a_{lj}])[x \mapsto v_i] \stepsone e'_{ij}$.
    Inverting the last condition, we  have the following.
    \begin{equation}
    \begin{array}{l  l  l}
      e'[\vec{a}_{kj}]   &\triangleq (\lamc{y}{τ''}{pc'}{e_{\lambda j}}) & \text{ for some } y, \tau'', pc' \text{ and } e_{\lambda j} \text{ corresponding to attack } j \in \{1, 2\}  \\
      e''[\vec{a}_{lj}] & \triangleq  v_{aj}  & \text{ for some value } v_{aj} \text{ corresponding to attack } j \in \{1, 2\}  \\
    \end{array} \label{eq:rob0}
    \end{equation}
     At a high-level, \eqref{eq:rob0} says that the attack is either an argument to the function or the function itself or both.
      We are also given that the terms are well-typed. From \ruleref{T-App}, we have
      \begin{align}
        \TValP{Γ,x\ty τ',Γ';\pc}{e'[\vec{a}_{k1}]~e''[\vec{a}_{l1}]}{τ} \label{eq:rob26} \\
        \TValP{Γ,x\ty τ',Γ';\pc}{e'[\vec{a}_{k2}]~e''[\vec{a}_{l2}]}{τ} \label{eq:rob27}
      \end{align}

      Since we are given that  $\TValP{Γ,x\ty τ',Γ';\pc}{e[\vec{•}^{\vec{τ}^{*}}]}{τ}$, we have
      \begin{align}
      \TValP{Γ,x\ty τ',Γ';\pc}{e'[\vec{\bullet}]}{\tau} \label{eq:rob20}\\
      \TValP{Γ,x\ty τ',Γ';\pc}{e''[\vec{\bullet}]}{\tau''}  \label{eq:rob21}
      \end{align}
      Note that we have elided the type annotations on holes to enhance readability.
      Substituting  holes in \eqref{eq:rob20} and \eqref{eq:rob21} with $\vec{a}_1$, we have
      \begin{align}
        \TValP{Γ,x\ty τ',Γ';\pc}{(\lamc{y}{τ''}{\pc'}{e_{\lambda 1}})~v_{a1}}{τ} \label{eq:rob6} \\
        \TValP{Γ,x\ty τ',Γ';\pc}{(\lamc{y}{τ''}{\pc'}{e_{\lambda 2}})~v_{a2} }{τ} \label{eq:rob7}
      \end{align}

      From \ruleref{E-App*}, we have the following step.
      \begin{align}
       \subst{\big ( (\lamc{y}{τ''}{\pc'}{e_{\lambda j}})~v_{aj}\big )}{x}{v_i} \stepsone \subst{\octx{pc'}{e_{\lambda j}[ y ↦ v_{aj}]}}{x}{vᵢ} \label{eq:rob2}
      \end{align}
      Using the above relations, we are now ready to prove the "if" direction.
      Assume that the left hand holds.
      $$
      \subst{e[\vec{a}_{1}]}{x}{v₁} · e'_{11}   ≈^{Π}_{Δ(H^{←})} \subst{e[\vec{a}_{1}]}{x}{v₂} ·  e'_{21}
      $$
      Substituting $e'_{11}$ and $e'_{21}$ with the result of step in \eqref{eq:rob2}  we have,
      \begin{align}
        & \subst{\big ( (\lamc{y}{τ''}{\pc'}{e_{\lambda 1}})~v_{a1} \big )}{x}{v_1}  \cdot  \subst{\octx{pc'}{\subst{e_{\lambda 1}}{y}{v_{a1}}}}{x}{v_1}  \nonumber \\
        & \qquad \qquad \qquad \qquad \qquad \qquad ≈^{Π}_{Δ(H^{←})} \nonumber  \\
        &  \subst{\big ( (\lamc{y}{τ''}{\pc'}{e_{\lambda 1}})~v_{a1}\big )}{x}{v_2} \cdot  \subst{\octx{pc'}{\subst{e_{\lambda 1}}{y}{v_{a1}}}}{x}{v_2} \label{eq:rob3}
      \end{align}
      Unfolding the definition of trace equivalence, we have
      \begin{align}
          \subst{\big ( (\lamc{y}{τ''}{\pc'}{e_{\lambda 1}})~v_{a1} \big )}{x}{v_1} &≈^{Π}_{Δ(H^{←})}  \subst{\big ( (\lamc{y}{τ''}{\pc'}{e_{\lambda 1}})~v_{a1}\big )}{x}{v_2}   \label{eq:rob4} \\
        \subst{\octx{pc'}{\subst{e_{\lambda 1}}{y}{v_{a1}}}}{x}{v_1}   &≈^{Π}_{Δ(H^{←})} \subst{\octx{pc'}{\subst{e_{\lambda 1}}{y}{v_{a1}}}}{x}{v_2}   \label{eq:rob5}
      \end{align}
      Without loss of generality, let  $\vec{a}_2 \triangleq \vec{a}_{k2} \cdot \vec{a}_{l2} $.
       Following \eqref{eq:rob0},  we have the following when the application is filled with second attack.
       \begin{align}
        e'[\vec{a}_{k2}]  &\triangleq \lamc{y}{τ''}{\pc'}{e_{\lambda 2}} \\
        e''[\vec{a}_{l2}] &\triangleq v_{a2} 
       \end{align}
       
       We have to prove the following:
       \begin{align}
          \subst{\big ( (\lamc{y}{τ''}{\pc'}{e_{\lambda 2}})~v_{a2} \big )}{x}{v_1} &≈^{Π}_{Δ(H^{←})}  \subst{\big ( (\lamc{y}{τ''}{\pc'}{e_{\lambda 2}})~v_{a2}\big )}{x}{v_2}     \label{eq:rob8}
      \end{align}

       From Proposition~\ref{prop:niatk}, we have that $\vec{a}_{k1}$ and $\vec{a}_{l1}$ do not have any free occurrences of the variable $x$. Thus $x$ only occurs in $e'[\bullet]~e''[\bullet]$.
             Let  $x$ be some arbitrary occurrence in $e'[\vec{a}_{k1}]~e''[\vec{a}_{l1}]$     such that     $\TVal{\Pi; \G', x\ty \tau', \G''; \pc}{e'[\vec{a}_{k1}]~e''[\vec{a}_{l1}]}{\tau}$. Since $x$ is substituted with $v_1$ and $v_2$ and substitution preserves observation equivalence (Lemma~\ref{lemma:valueequiv}), from \eqref{eq:rob4} we have
      \begin{equation}
      \observef{v_1}{\Pi}{Δ(H^{←})} = \observef{v_2}{\Pi}{Δ(H^{←})} \label{eq:rob29}
      \end{equation}
      The crucial insight here is that though $x$ may occur in a sub term that is well-typed under different delegation context $\Pi'$, however, the \observe function still uses the original $\Pi$ under which the program is well-typed. Hence \eqref{eq:rob29} uses the original delegation context $\Pi$, rather than $\Pi'$.

      \medskip
      Again, from Proposition~\ref{prop:niatk}, we have that $\vec{a}_2$ cannot contain $x$. That is,  $\vec{a}_{k2}$ and $\vec{a}_{l2}$ do not have free occurrences of $x$. Thus $x$ only occurs in $e'[\bullet]~e''[\bullet]$.
      Similar to the argument under $\vec{a}_1$, let  $x$ be some arbitrary occurrence in $e'[\vec{a}_{k2}]~e''[\vec{a}_{l2}]$     such that     $\TVal{\Pi; \G_2^{'}, x\ty \tau', \G_{2}^{''}; \pc''}{x}{\tau'}$. Again, $x$ is substituted with $v_1$ and $v_2$, to prove \eqref{eq:rob8}, it suffices to prove
      \begin{equation}
      \observef{v_1}{\Pi}{Δ(H^{←})} = \observef{v_2}{\Pi}{Δ(H^{←})} \label{eq:rob30}
      \end{equation}
      This follows trivially  from \eqref{eq:rob29}.
      \medskip
      We still have to prove that the step preserves erasure of terms:
           \[
      \subst{\octx{pc'}{\subst{e_{\lambda 2}}{y}{v_{a2}}}}{x}{v_1}  ≈^{Π}_{Δ(H^{←})}  \subst{\octx{pc'}{\subst{e_{\lambda 2}}{y}{v_{a2}}}}{x}{v_2}
      \]
      That is,
      \begin{equation}
      \observefc{\subst{\octx{pc'}{\subst{e_{\lambda 2}}{y}{v_{a2}}}}{x}{v_1}}{\Pi}{Δ(H^{←})} = \observefc{\subst{\octx{pc'}{\subst{e_{\lambda 2}}{y}{v_{a2}}}}{x}{v_2}}{\Pi}{Δ(H^{←})} \label{eq:rob35}
      \end{equation}
      
Applying type preservation (Lemma~\ref{lemma:subjred}) to \eqref{eq:rob6} and \eqref{eq:rob7}, we have that
      \begin{align}
        \TValP{Γ,x\ty τ',Γ';\pc}{\octx{pc'}{e_{\lambda 1} [y \mapsto v_{a1}] }}{τ} \label{eq:rob8} \\
        \TValP{Γ,x\ty τ',Γ';\pc}{\octx{pc'}{e_{\lambda 2}[ y \mapsto v_{a2}] }}{τ} \label{eq:rob9}
      \end{align}
      From variable substitution under a context (Lemma~\ref{lemma:ctxtvarsubst}), \eqref{eq:rob8} and \eqref{eq:rob9} lead to
      \begin{align}
        \TValP{Γ,Γ';\pc}{\octx{pc'}{\subst{e_{\lambda 2}[ y \mapsto v_{a2}] }{x}{v_1}}}{τ} \label{eq:rob37} \\
        \TValP{Γ,Γ';\pc}{\octx{pc'}{\subst{e_{\lambda 2}[ y \mapsto v_{a2}] }{x}{v_2}}}{τ} \label{eq:rob38}
      \end{align}
      It suffices to prove the following
      \begin{equation}
      \observef{v_1}{\Pi}{Δ(H^{←})} = \observef{v_2}{\Pi}{Δ(H^{←})} 
      \end{equation}
      This is already proven  in \eqref{eq:rob29}. Hence proved.
      \medskip
      
      The other "if" direction follows using analogous argument.
     
     \item[Case \ruleref{E-TApp*}:] Similar to \ruleref{E-App*}.
     \item[Case \ruleref{O-Ctx}:] Given $\octx{ℓ}{w[\vec{a}_{1}]}[x \mapsto v_1] \stepsone w_1[\vec{a}_{1}] $ and  $\octx{ℓ}{\vec{w[a_{1}}]}[x \mapsto v_2] \stepsone w_2[\vec{a}_{1}] $. We have to prove that
       \begin{align*}
         & \octx{\ell}{w[\vec{a}_{1}]}[x \mapsto v_1] \cdot w_1[\vec{a}_{1}]   ≈^{Π}_{Δ(H^{←})} \octx{\ell}{w[\vec{a}_{1}]}[x \mapsto v_2] \cdot w_2[\vec{a}_{1}] \\
         & \iff \\
         & \octx{\ell}{w[\vec{a}_{2}]}[x \mapsto v_1] \cdot w_1[\vec{a}_{2}]  ≈^{Π}_{Δ(H^{←})} \octx{\ell}{w[\vec{a}_{2}]}[x \mapsto v_2] \cdot w_2[\vec{a}_{2}]
       \end{align*}
       We will prove one direction of the implication. The other direction is analagous.
       We are given that  $\TValP{Γ,x\ty τ',Γ';\pc}{e[\vec{•}^{\vec{τ}^{*}}]}{τ}$, we 
      \begin{align}
      \TValP{Γ,x\ty τ',Γ';\pc}{\octx{\ell}{w[\vec{\bullet}]}}{\tau} \label{eq:octx1}
      \end{align}
      Substituting the hole in \eqref{eq:octx1}  with $\vec{a_1}$, we have
      \begin{align}
        \TValP{Γ,x\ty τ',Γ';\pc}{\octx{\ell}{w[\vec{a}_{1}]}}{τ} \label{eq:octx2}
      \end{align}

      Assume
      \[
      \octx{\ell}{w[\vec{a}_{1}]}[x \mapsto v_1] \cdot w_1[\vec{a}_{1}]   ≈^{Π}_{Δ(H^{←})} \octx{\ell}{w[\vec{a}_{1}]}[x \mapsto v_2] \cdot w_2[\vec{a}_{1}]
      \]
      From the erasure definition, this implies
      \begin{align}
        \octx{\ell}{w[\vec{a}_{1}]}[x \mapsto v_1]  &≈^{Π}_{Δ(H^{←})}   \octx{\ell}{w[\vec{a}_{1}]}[x \mapsto v_2]   \label{eq:octx3} \\
        w_1[\vec{a}_{1}] &≈^{Π}_{Δ(H^{←})}  w_2[\vec{a}_{1}]  \label{eq:octx4}
      \end{align}

      From Proposition~\ref{prop:niatk}, we have that $\vec{a}_2$ cannot contain $x$.
      Thus $x$ only occurs in $\octx{\ell}{w[\vec{•}]}$.
      Let  $x$ be some arbitrary occurrence in $\octx{\ell}{w[\vec{•}]}$     such that     $\TVal{\Pi; \G', x\ty \tau', \G''; \pc'}{\octx{\ell}{w[\vec{a}_2]}}{\tau'}$. Since $x$ is substituted with $v_1$ and $v_2$ and substitution preserves equivalence (Lemma~\ref{lemma:valueequiv}), from \eqref{eq:octx3} we have
      \begin{equation}
      \observef{v_1}{\Pi}{Δ(H^{←})} = \observef{v_2}{\Pi}{Δ(H^{←})} \label{eq:octx5}
      \end{equation}
      Again, from Proposition~\ref{prop:niatk}, we have that $\vec{a}_2$ cannot contain $x$.
      Thus $x$ only occurs in $\octx{\ell}{w[\vec{•}]}$.

       We need to prove,
      \[
      \octx{\ell}{w[\vec{a}_{2}]}[x \mapsto v_1] \cdot w_1[\vec{a}_{2}]   ≈^{Π}_{Δ(H^{←})} \octx{\ell}{w[\vec{a}_{2}]}[x \mapsto v_2] \cdot w_2[\vec{a}_{2}]
      \]
      That is, we need to prove the following.
      \begin{align}
        \octx{\ell}{w[\vec{a}_{2}]}[x \mapsto v_1]  &≈^{Π}_{Δ(H^{←})}   \octx{\ell}{w[\vec{a}_{2}]}[x \mapsto v_2]   \label{eq:octx6} \\
        w_1[\vec{a}_{2}] &≈^{Π}_{Δ(H^{←})}  w_2[\vec{a}_{2}]  \label{eq:octx7}
      \end{align}
      
      Similar to the argument under $\vec{a}_1$, let  $x$ be some arbitrary occurrence in $w[\vec{a}_{2}]$.
      Again, $x$ is substituted with $v_1$ and $v_2$, to prove \eqref{eq:octx6}, it suffices  to prove
      \begin{equation}
      \observef{v_1}{\Pi}{Δ(H^{←})} = \observef{v_2}{\Pi}{Δ(H^{←})}
      \end{equation}
      And, judgment \eqref{eq:octx5} already gives this.
      \medskip

      We will now focus on proving \eqref{eq:octx7}. Note that $w_1[\vec{\bullet}] = w[\vec{\bullet}][x \mapsto v_1]$ and $w_2[\vec{\bullet}] = w[\vec{\bullet}][x \mapsto v_2]$.
      From \eqref{eq:octx6} and the definition of erasure function for contexts, we have that 
      \[
              w[\vec{a}_{2}][x \mapsto v_1]  ≈^{Π}_{Δ(H^{←})} w[\vec{a}_{2}][x \mapsto v_2]
      \]
      Thus \eqref{eq:octx7} follows trivially from \eqref{eq:octx6}. Hence proved.

    \item[Case \ruleref{E-Case}:] Given $(\casexp{\injk{w}}{y}{e_1}{e_2})[\vec{a}_j][x \mapsto v_i] \stepsone e_k[\vec{a}_j][y \mapsto w]$ for $i,j,k \in \{1,2\}$. We have to prove that
      \begin{align*}
        & (\casexp{\injk{w}}{y}{e_1}{e_2})[\vec{a}_1][x \mapsto v_1] \cdot e_k[\vec{a}_j][x \mapsto v_1][y \mapsto w] \\
        & \qquad \qquad \qquad ≈^{Π}_{Δ(H^{←})} \\
        & (\casexp{\injk{w}}{y}{e_1}{e_2})[\vec{a}_1][x \mapsto v_2] \cdot e_k[\vec{a}_j][x \mapsto v_2][y \mapsto w] \\
        & \iff \\
        & (\casexp{\injk{w}}{y}{e_1}{e_2})[\vec{a}_2][x \mapsto v_1] \cdot e_k[\vec{a}_1][x \mapsto v_1][y \mapsto w] \\
        & \qquad \qquad \qquad ≈^{Π}_{Δ(H^{←})} \\
        & (\casexp{\injk{w}}{y}{e_1}{e_2})[\vec{a}_2][x \mapsto v_2] \cdot e_k[\vec{a}_2][x \mapsto v_2][y \mapsto w]
      \end{align*}
      We will prove only one direction. Assume 
      \begin{align*}
        & (\casexp{\injk{w}}{y}{e_1}{e_2})[\vec{a}_1][x \mapsto v_1] \cdot e_k[\vec{a}_1][x \mapsto v_1][y \mapsto w] \\
        & \qquad \qquad \qquad ≈^{Π}_{Δ(H^{←})} \\
        & (\casexp{\injk{w}}{y}{e_1}{e_2})[\vec{a}_2][x \mapsto v_2] \cdot e_k[\vec{a}_2][x \mapsto v_2][y \mapsto w]
      \end{align*}
        
        That implies
      \begin{align*}
        (\casexp{\injk{w}}{y}{e_1}{e_2})[\vec{a}_1][x \mapsto v_1] & \\
         ≈^{Π}_{Δ(H^{←})} & \\
        (\casexp{\injk{w}}{y}{e_1}{e_2})[\vec{a}_1][x \mapsto v_2] &
      \end{align*}
      and
      \begin{align*}
        e_k[\vec{a}_2][x \mapsto v_1][y \mapsto w] ≈^{Π}_{Δ(H^{←})} e_k[\vec{a}_2][x \mapsto v_2][y \mapsto w]
      \end{align*}

      We need to prove
      \begin{align}
        (\casexp{\injk{w}}{y}{e_1}{e_2})[\vec{a}_2][x \mapsto v_1]  \nonumber \\
        ≈^{Π}_{Δ(H^{←})} \nonumber \\ 
        (\casexp{\injk{w}}{y}{e_1}{e_2})[\vec{a}_2][x \mapsto v_2]  \label{eq:rdcase1}
      \end{align}
      and
      \begin{align}
        e_k[\vec{a}_2][x \mapsto v_1][y \mapsto w] &≈^{Π}_{Δ(H^{←})} e_k[\vec{a}_2][x \mapsto v_2][y \mapsto w]  \label{eq:rdcase2}
      \end{align}
      Let  $x$ be some arbitrary occurrence in $\casexp{\injk{w}}{y}{e_1}{e_2}[\vec{a}_1]$     such that     $\TVal{\Pi; \G', x\ty \tau', \G''; \pc'}{\casexp{\injk{w}}{y}{e_1}{e_2}[\vec{a}_1]}{\tau'}$. Since $x$ is substituted with $v_1$ and $v_2$ and substitution preserves observational equivalence (Lemma~\ref{lemma:valueequiv}), we have
      \begin{equation}
      \observef{v_1}{\Pi}{Δ(H^{←})} = \observef{v_2}{\Pi}{Δ(H^{←})} \label{eq:rdcase3}
      \end{equation}
      
      Again, from Proposition~\ref{prop:niatk}, we have that $\vec{a}_2$ cannot contain $x$.
      Thus $x$ only occurs in

      $\casexp{\injk{w}}{y}{e_1}{e_2})[\vec{\bullet}]$.
      Similar to the argument under $\vec{a}_1$, let  $x$ be some arbitrary occurrence in $\casexp{\injk{w}}{y}{e_1}{e_2}[\vec{a}_{2}]$     such that     $\TVal{\Pi; \G_2^{'}, x\ty \tau', \G_2^{''}; \pc}{\casexp{\injk{w}}{y}{e_1}{e_2}[\vec{a}_{2}]}{\tau'}$. Again, $x$ is substituted with $v_1$ and $v_2$.     
      Following an argument similar to previous cases, it suffice to  prove
      \begin{equation}
      \observef{v_1}{\Pi}{Δ(H^{←})} = \observef{v_2}{\Pi}{Δ(H^{←})}
      \end{equation}
      And, judgment \eqref{eq:rdcase3} already gives us the required proof.
      \medskip

      We now focus on proving \eqref{eq:rdcase2}. From \eqref{eq:rdcase1}, we have
      \begin{align}
        \injk{w}[\vec{a}_2][x \mapsto v_1] &≈^{Π}_{Δ(H^{←})}  \injk{w}[\vec{a}_2][x \mapsto v_2] \label{eq:rdcase5} \\
        e_1[\vec{a}_2][x \mapsto v_1] &≈^{Π}_{Δ(H^{←})}  e_1[\vec{a}_2][x \mapsto v_2] \label{eq:rdcase6} \\
        e_2[\vec{a}_2][x \mapsto v_1] &≈^{Π}_{Δ(H^{←})}  e_2[\vec{a}_2][x \mapsto v_2] \label{eq:rdcase7}
      \end{align}
      We are given that
      \[
      \TValP{\Gamma, x:\tau', \Gamma'; \pc}{(\casexp{\injk{w}}{y}{e_1}{e_2})[\vec{a}_1]}{\tau}
      \]
      Inverting the rule \ruleref{Case}, we have
      \[
      \TValP{\Gamma, x:\tau', \Gamma', y:\tau_1 + \tau_2; \pc}{(e_k)[\vec{a}_1]}{\tau}
      \]
      Since variable substitution is preserved under contexts (Lemma~\ref{lemma:ctxtvarsubst}), from \eqref{eq:rdcase5}, \eqref{eq:rdcase6}, we have \eqref{eq:rdcase2}. Hence proved.

    \item[Case \ruleref{E-BindM*}:]
      Given $(\bind{y}{\returng{\ell}{w}}{e_2})[\vec{a}_j][x \mapsto v_i]      \stepsone e_2[\vec{a}_j][y \mapsto w]$ for $i,j \in \{1,2\}$. We have to prove that
                 \begin{align*}
        & (\bind{y}{\returng{\ell}{w}}{e_2})[\vec{a}_1][x \mapsto v_1] \cdot e_2[\vec{a}_j][x \mapsto v_1][y \mapsto w] \\
        & \qquad \qquad \qquad ≈^{Π}_{Δ(H^{←})} \\
        & (\bind{y}{\returng{\ell}{w}}{e_2})[\vec{a}_1][x \mapsto v_2] \cdot e_2[\vec{a}_j][x \mapsto v_2][y \mapsto w] \\
        & \iff \\
        & (\bind{y}{\returng{\ell}{w}}{e_2})[\vec{a}_2][x \mapsto v_1] \cdot e_2[\vec{a}_j][x \mapsto v_1][y \mapsto w] \\
        & \qquad \qquad \qquad ≈^{Π}_{Δ(H^{←})} \\
        & (\bind{y}{\returng{\ell}{w}}{e_2})[\vec{a}_2][x \mapsto v_2] \cdot e_2[\vec{a}_j][x \mapsto v_2][y \mapsto w]
      \end{align*}
     The argument is similar to the above case.
     \item[Case \ruleref{E-Assume}:]
      Given $(\assume{\delexp{p}{q}}{e_2})[\vec{a}_j][x \mapsto v_i]      \stepsone (\where{e_2}{\delexp{p}{q}})[\vec{a}_j][x \mapsto v_i]$ for $i,j \in \{1,2\}$. We have to prove that
                 \begin{align*}
        & (\assume{\delexp{p}{q}}{e_2})[\vec{a}_1][x \mapsto v_1] \cdot (\where{e_2}{\delexp{p}{q}})[\vec{a}_j][x \mapsto v_1] \\
        & \qquad \qquad \qquad ≈^{Π}_{Δ(H^{←})} \\
        & (\assume{\delexp{p}{q}}{e_2})[\vec{a}_1][x \mapsto v_2] \cdot (\where{e_2}{\delexp{p}{q}})[\vec{a}_j][x \mapsto v_2] \\
        & \iff \\
        & (\assume{\delexp{p}{q}}{e_2})[\vec{a}_2][x \mapsto v_1] \cdot (\where{e_2}{\delexp{p}{q}})[\vec{a}_j][x \mapsto v_1] \\
        & \qquad \qquad \qquad ≈^{Π}_{Δ(H^{←})} \\
        & (\assume{\delexp{p}{q}}{e_2})[\vec{a}_2][x \mapsto v_2] \cdot (\where{e_2}{\delexp{p}{q}})[\vec{a}_j][x \mapsto v_2]
      \end{align*}
      The argument is similar to the previous cases. To prove in one direction, we assume the following.
        \begin{align*}
        & (\assume{\delexp{p}{q}}{e_2})[\vec{a}_1][x \mapsto v_1] \cdot (\where{e_2}{\delexp{p}{q}})[\vec{a}_1][x \mapsto v_1] \\
        & \qquad \qquad \qquad ≈^{Π}_{Δ(H^{←})} \\
          & (\assume{\delexp{p}{q}}{e_2})[\vec{a}_1][x \mapsto v_2] \cdot (\where{e_2}{\delexp{p}{q}})[\vec{a}_1][x \mapsto v_2]
        \end{align*}
        This gives us
        \begin{align}
         (\assume{\delexp{p}{q}}{e_2})[\vec{a}_1][x \mapsto v_1]  &≈^{Π}_{Δ(H^{←})} (\assume{\delexp{p}{q}}{e_2})[\vec{a}_1][x \mapsto v_2] \label{eq:rdass1} \\
          (\where{e_2}{\delexp{p}{q}})[\vec{a}_1][x \mapsto v_1]   &≈^{Π}_{Δ(H^{←})} (\where{e_2}{\delexp{p}{q}})[\vec{a}_1][x \mapsto v_2] \label{eq:rdass2}
        \end{align}
        Consider \eqref{eq:rdass1}.
      Let  $x$ be some arbitrary occurrence in $(\assume{\delexp{p}{q}}{e_2})[\vec{a}_1]$     such that     $\TVal{\Pi; \G', x\ty \tau', \G''; \pc'}{(\assume{\delexp{p}{q}}{e_2})[\vec{a}_1]}{\tau'}$. Since $x$ is substituted with $v_1$ and $v_2$ and substitution preserves observational equivalence (Lemma~\ref{lemma:valueequiv}), we have
      \begin{equation}
      \observef{v_1}{\Pi}{Δ(H^{←})} = \observef{v_2}{\Pi}{Δ(H^{←})} \label{eq:rdass3}
      \end{equation}
      
      Again, from Proposition~\ref{prop:niatk}, we have that $\vec{a}_2$ cannot contain $x$.
      Thus $x$ only occurs in $ (\assume{\delexp{p}{q}}{e_2})[\vec{\bullet}]$.
      Similar to the argument under $\vec{a}_1$, let  $x$ be some arbitrary occurrence in $(\assume{\delexp{p}{q}}{e_2})[\vec{a}_{2}]$     such that     $\TVal{\Pi; \G_2^{'}, x\ty \tau', \G_2^{''}; \pc''}{(\assume{\delexp{p}{q}}{e_2})[\vec{a}_2]}{\tau'}$. Again, $x$ is substituted with $v_1$ and $v_2$.     
      Following an argument similar to one in \ruleref{E-App*} case, it suffices prove the following
      \begin{equation}
      \observef{v_1}{\Pi}{Δ(H^{←})} = \observef{v_2}{\Pi}{Δ(H^{←})} \label{eq:rdass4}
      \end{equation}
      Judgment \eqref{eq:rdass3} already gives us the above judgment.
      Hence
      \[
         (\assume{\delexp{p}{q}}{e_2})[\vec{a}_2][x \mapsto v_1]  ≈^{Π}_{Δ(H^{←})} (\assume{\delexp{p}{q}}{e_2})[\vec{a}_2][x \mapsto v_2] 
      \]
      A similar argument applied to \eqref{eq:rdass2} gives us
      \[
                (\where{e_2}{\delexp{p}{q}})[\vec{a}_2][x \mapsto v_1]   ≈^{Π}_{Δ(H^{←})} (\where{e_2}{\delexp{p}{q}})[\vec{a}_2][x \mapsto v_2] 
      \]
      Hence proved.
      \item[Case \ruleref{W-App}:] Similar to the "assume" case.
    \item[Case \ruleref{W-TApp}:] Similar to the "assume" case.
    \item[Case \ruleref{W-UnPair}:] Trivial.
    \item[Case \ruleref{W-Case}:] Similar to the "assume" case.
    \item[Case \ruleref{W-BindM}:] Similar to the "assume" case.
    \item[Case \ruleref{W-Assume}:] Similar to the "assume" case.
   \end{description}
\end{proof}

\end{document}